\numberwithin{equation}{section}
\newcommand{\mr}{\mathrm}
\newcommand{\BE}{\begin{equation}}
\newcommand{\EE}{\end{equation}}
\newcommand{\BS}{\begin{subequations}}
\newcommand{\ES}{\end{subequations}}
\newcommand{\UH}{\mr{H}}   
\newcommand{\UT}{\mr{T}}   
\newcommand{\ampa}{\rm AMP.A}
\newcommand{\amps}{\rm AMP.S}
\newcommand{\alphah}{\hat{\alpha}}
\newcommand{\tauh}{\hat{\tau}}
\newcommand{\deltaGlobal}{\delta_{\mr{global}}}   
\newcommand{\deltaAMP}{\delta_{\mr{AMP}}}   
\newcommand{\Mydef}{\overset{  \scriptscriptstyle \Delta  }{=}}
\newtheorem{theorem}{Theorem}
\newtheorem{definition}{Definition}
\newtheorem{remark}{Remark}
\newtheorem{lemma}{Lemma}
\newtheorem{finding}{Finding}
\def\qand{\ \text{and} \ } 
\begin{document}
\title{Optimization-based AMP for Phase Retrieval:
 The Impact of Initialization and $\ell_2$-regularization}

\author{Junjie Ma\thanks{Department of Statistics, Columbia University.  \texttt{jm4520@columbia.edu}\;\;} \and Ji Xu\thanks{Department of Computer Science, Columbia University. \texttt{jixu@cs.columbia.edu}\;\;} \and Arian Maleki\thanks{Department of Statistics, Columbia University. \texttt{arian@stat.columbia.edu}}}

\maketitle

\begin{abstract}
We consider an $\ell_2$-regularized non-convex optimization problem for recovering signals from their noisy phaseless observations. We design and study the performance of a message passing algorithm that aims to solve this optimization problem. We consider the asymptotic setting $m,n \rightarrow \infty$, $m/n \rightarrow \delta$ and obtain sharp performance bounds, where $m$ is the number of measurements and $n$ is the signal dimension. We show that for complex signals the algorithm can perform accurate recovery with only $m=  \left(\frac{64}{\pi^2}-4\right)n \approx 2.5n$ measurements. Also, we provide sharp analysis on the sensitivity of the algorithm to noise. We highlight the following facts about our message passing algorithm: (i) Adding $\ell_2$ regularization to the non-convex loss function can be beneficial. (ii) Spectral initialization has marginal impact on the performance of the algorithm. The sharp analyses in this paper, not only enable us to compare the performance of our method with other phase recovery schemes, but also shed light on designing better iterative algorithms for other non-convex optimization problems. 
\end{abstract}


\tableofcontents
\newpage 

\section{Introduction}\label{Sec:introduction}

\subsection{Notations}
$\bar{a}$ denotes the conjugate of a complex number $a$. $\angle a$ denotes the phase of $a$. We use bold lower-case and upper case letters for vectors and matrices respectively. For a matrix $\bm{A}$, $\bm{A}^{\UT}$ and $\bm{A}^{\UH}$ denote the transpose of a matrix and its Hermitian respectively. Throughout the paper, we also use the following two notations: $\mathbf{1}\Mydef[1,\ldots,1]^{\UT}$ and $\mathbf{0}\Mydef[0,\ldots,0]^{\UT}$. $\phi(x)$ and $\Phi(x)$ are used for the probability density function and cumulative distribution function of the standard Gaussian random variable. A random variable $a$ said to be circularly-symmetric Gaussian, denoted as $a\sim\mathcal{CN}(0,\sigma^2)$, if $a=a_R+\mr{i}a_I$ and $a_R$ and $a_I$ are two independent real Gaussian random variables with mean zero and variance $\sigma^2/2$. Finally, we define $\langle \bm{a},\bm{b} \rangle \Mydef \sum_{i=1}\bar{a}_i b_i$ for $\bm{a},\bm{b}\in\mathbb{C}^{d}$ .

\subsection{Informal statement of our results}

Phase retrieval refers to the task of recovering a signal $\bm{x}_*\in\mathbb{C}^{n\times 1}$ from its $m$ phaseless linear measurements:
\BE
y_a = \bigg| \sum_{i=1}^n A_{ai}x_{*,i} \bigg| +w_a ,\quad a = 1,2,\ldots,m,
\EE
where $x_{*,i}$ is the $i$th component of $\bm{x}_*$ and $w_a\sim\mathcal{CN}(0,\sigma^2_w)$ a Gaussian noise. The recent surge of interest \cite{Candes2013,Eetrapalli2013,Eldar2014,CaLiSo15,ChenCandes17,Wang2016,Zhang2016reshaped,Goldstein2016phasemax,Aahmani2016,Cai2016,sun2016geometric,Soltanolkotabi2017,Duchi2017,Lu17,Davis2017,Soltanolkotabi2017,Tan2017phase,Jeong2017,Zeng2017,Mondelli2017,Dhifallah17,Dhifallah2017phase,Abbasi2017,Qu2017} has led to a better understanding of the theoretical aspects of this problem. Thanks to such research we now have access to several algorithms, inspired by different ideas, that are theoretically guaranteed to recover $\bm{x}_*$ exactly in the noiseless setting. Despite all this progress, there is still a gap between the theoretical understanding of the recovery algorithms and what practitioners would like to know. For instance, for many algorithms, including Wirtinger flow \cite{CaLiSo15,ChenCandes17} and amplitude flow \cite{Wang2016,Zhang2016reshaped}, the exact recovery is guaranteed with either $c n \log n$ or $cn$ measurements, where $c$ is often a fixed but large constant that does not depend on $n$. In both cases, it is often claimed that the large value of $c$ or the existence of $\log n$ is an artifact of the proving technique and the algorithm is expected to work with $cn$ for a reasonably small value of $c$. Such claims have left many users wondering 
\begin{enumerate}
\item [Q.1] Which algorithm should we use? Since the theoretical analyses are not sharp, they do not shed any light on the relative performance of different algorithms. Answering this question through simulations is very challenging too, since many factors including the distribution of the noise, the true signal  $\bm{x}_*$, and the number of measurements may have impact on the answer.

\item[Q.2] When can we trust the performance of these algorithms in the presence of noise? Suppose for a moment that we know the minimum number of measurements that is required for the exact recovery through simulations. Should we collect the same number of measurements in the noisy settings too? 

\item[Q.3] What is the impact of initialization schemes, such as spectral initialization? Can we trust these initialization schemes in the presence of noise? How should we compare different initialization schemes? 
\end{enumerate}

Researchers have developed certain intuition based on a combination of theoretical and empirical results,  to give heuristic answers to these questions. However, as demonstrated in a series of papers in the context of compressed sensing, such folklores are sometimes inaccurate \cite{Zheng17}. To address Question Q.1, several researchers have adopted the asymptotic framework $m,n \rightarrow \infty$, $m/n \rightarrow \delta$, and provided sharp analyses for the performance of several algorithms \cite{Dhifallah17, Dhifallah2017phase,Abbasi2017}. This line of work studies recovery algorithms that are based on convex optimization. In this paper, we adopt the same asymptotic framework and study the following popular non-convex problem, known as amplitude-based optimization \cite{Zhang2016reshaped,Wang2016,wang2017solving}: 
\BE\label{Eqn:amplitude}
\underset{\bm{x}}{\min}\quad  \sum_{a=1}^m \left(y_a-|(\bm{Ax})_a|\right)^2 + \frac{\mu_k}{2} \|\bm{x}\|_2^2.
\EE
where $(\bm{Ax})_a$ denotes the $a$-th entry of $\bm{Ax}$. Note that compared to the optimization problem discussed in \cite{Zhang2016reshaped,Wang2016}, \eqref{Eqn:amplitude} has an extra $\ell_2$-regularizer. Regularization is known to reduce the variance of an estimator and hence is expected to be useful when $\bm{w} \neq \bm{0}$. However, as we will try to clarify later in this section, since the loss function  $ \sum_{a=1}^m \left(y_a-|(\bm{Ax})_a|\right)^2$ is non-convex, regularization can help the iterative algorithm that aims to solve \eqref{Eqn:amplitude} even in the noiseless settings. \\

Since \eqref{Eqn:amplitude} is a non-convex problem, the algorithm to solve it matters. In this paper, we study a message passing algorithm that aims to solve \eqref{Eqn:amplitude}. As a result of our studies we 
\begin{enumerate}
\item present sharp characterization of the mean square error (even the constants are sharp) in both noiseless and noisy settings.
\item present a quantitative characterization of the gain initialization and regularization can offer to our algorithms. 
\end{enumerate}
Furthermore, the sharpness of our results enables us to present a quantitative and accurate comparison with convex optimization based recovery algorithms \cite{Dhifallah17, Dhifallah2017phase,Abbasi2017} and give partial answers to Question Q.1 mentioned above. Below we introduce our message passing algorithm and informally state some of our main results. The careful and accurate statements of our results are postponed to Section \ref{Sec:contributions}. \\

 Following the steps proposed in \cite{Rangan11}, we obtain the following algorithm called, \textit{Approximate Message Passing for Amplitude-based optimization} (AMP.A). Starting from an initial estimate $\bm{x}^0\in\mathbb{C}^{n\times1}$, AMP.A proceeds as follows for $t\ge0$:

\BS 
\begin{align*}
\bm{p}^t &= \bm{Ax}^t - \frac{\lambda_{t-1}}{\delta}\cdot\frac{g(\bm{p}^{t-1},\bm{y})}{-\mr{div}_p(g_{t-1})  },\\
\bm{x}^{t+1}&=\lambda_t\cdot \left(\bm{x}^t +\bm{A}^{\UH} \frac{g(\bm{p}^t,\bm{y})}{- \mr{div}_p(g_{t})  } \right). 
\end{align*}
In these iterations
\[
g({p},{y})=y\cdot \frac{p}{|p|}-p,
\]
and
\[
\begin{split}
\lambda_t&=\frac{-\mr{div}_p(g_t) }{-  \mr{div}_p(g_t) +\mu_k \left(\tau_t +\frac{1}{2}\right)},\\
\tau^t &=\frac{1}{\delta}\frac{\tau^{t-1} + \frac{1}{2}}{ - \mr{div}_p(g_{t-1}) }\cdot \lambda_{t-1}.
\end{split}
\]
\ES
In the above, $p/|p|$ at $p=0$ can be any fixed number and does not affect the performance of AMP.A. Further, the ``divergence'' term $\mr{div}_p(g_{t}) $ is defined as
\BE \label{Eqn:partial_p_complex}
\begin{split}
\mr{div}_p(g_{t})  &\Mydef  \frac{1}{m}\sum_{a=1}^m  \frac{1}{2}\left( \frac{\partial g(p_a^t,y_a)}{\partial p_a^R} -\mr{i} \frac{\partial g(p_a^t,y_a)}{\partial p_a^I}\right)\\
&= \frac{1}{m}\sum_{a=1}^m\frac{y_a}{2|p_a^t|}-1,
\end{split}
\EE
where $p_a^R$ and $p_a^I$ denote the real and imaginary parts of $p_a^t$ respectively (i.e., $p_a^t=p_a^R+\mr{i} p_a^I$). 
For readers' convenience, we include the derivations of AMP.A in Appendix \ref{Sec:AMP_derivations}. 

The first point that we would like to discuss here is the effect of the regularizer on AMP.A. For the moment suppose that the noise $\bm{w}$ is zero. Does including the regularizer in \eqref{Eqn:amplitude} benefit AMP.A? Clearly, any regularization may introduce unnecessary bias to the solution. Hence, if the final goal is to obtain $\bm{x}_*$ exactly we should set $\mu_k =0$. However, the optimization problem in \eqref{Eqn:amplitude} is non-convex and iterative algorithms intended to solve it can get stuck at bad local minima. In this regard, regularization can still help AMP.A to escape bad local minima through continuation. Continuation is popular in convex optimization for improving the convergence rate of iterative algorithms \cite{hale2008fixed}, and has been applied to the phase retrieval problem in \cite{Balan2012reconstruction}. In continuation we start with a value of $\mu_k$ for which AMP.A is capable of finding the global minimizer of \eqref{Eqn:amplitude}. Then, once $\ampa$ converges we will either decrease or increase $\mu_k$ a little bit (depending on the final value of $\mu$ for which we want to solve the problem) and use the previous fixed point of AMP.A as the initialization for the new AMP.A. We continue this process until we reach the value of $\mu_k$ we are interested in.  For instance, if we would like to solve the noiseless phase retrieval problem then $\mu_k$ should eventually go to zero so that we do not introduce unnecessary bias. The rationale behind continuation is the following. Let $\mu_k$ and $\mu'_k$ be two different values of  the regularization parameter, and they are close to each other. Suppose that the global minimizer of \eqref{Eqn:amplitude} with regularization parameter $\mu'_k$ is $\bm{x}(\mu'_k)$ and is given to the user. Suppose further that the user would like to find the global minimizer of \eqref{Eqn:amplitude} with $\mu_k$. Then, it is conceivable that the global minimizer of the new problem is close to $\bm{x}(\mu'_k)$.\footnote{Given the sometimes complex geometry of non-convex problems, this might not always be the case. } Hence, the user can initialize AMP.A with  $\bm{x}(\mu'_k)$ and hope that the algorithm may converge to the global minimizer of \eqref{Eqn:amplitude} for $\mu_k$.

A more general version of the continuation idea we discussed above is to let $\mu_k$ change at every iteration (denoted as $\mu^t_k$), and set $\lambda_t$ according to $\mu^t_k$:
\BE \label{eq:continuationstra}
\lambda_t=\frac{- \mr{div}_p(g_t) }{- \mr{div}_p(g_t) +\mu_k^t \left(\tau_t +\frac{1}{2}\right)},
\EE
This way we can not only automate the continuation process, but also let AMP.A decide which choice of $\mu_k$ is appropriate at a given stage of the algorithm. 
Our discussion so far has been heuristic. It is not clear whether and how much the generalized continuation can benefit the algorithm. To give a partial answer to this question we focus on the following particular continuation strategy: $\mu_k^t=\frac{1+2\mr{div}_p(g_t)}{1+2\tau_t}$ and obtain the following version of AMP.A:
\BS \label{Eqn:AMP_complex}
\begin{align}
\bm{p}^t &= \bm{Ax}^t -\frac{2}{\delta}g(\bm{p}^{t-1},\bm{y}), \\
\bm{x}^{t+1} &= 2\left[-\mr{div}_p(g_t)\cdot \bm{x}^t +\bm{A}^{\UH} g(\bm{p}^t,\bm{y})\right].
\label{Eqn:AMP_complex_b}
\end{align}
\ES

Below we informally discuss some of the results we will prove in this paper. \\

\noindent \textbf{Informal result 1.} Consider the AMP.A algorithm for complex-valued signals with $\mu_k^t=\frac{1+2\mr{div}_p(g_t)}{1+2\tau_t}$. Under the noiseless setting, if $\delta > \frac{64}{\pi^2} - 4 \approx 2.5$, then $\bm{x}^t$ ``converges to'' $\bm{x}_{\ast}$ as long as the initial estimate $\bm{x}^0$ is not orthogonal to $\bm{x}_{\ast}$ and $\|\bm{x}^0\|=\|\bm{x}_*\|$. When $2<\delta < \frac{64}{\pi^2} - 4$,  $\ampa$ has a fixed point at $\bm{x}_{\ast}$. However, it has to be initialized very carefully to reach $\bm{x}_*$. \\

Before we discuss and explain the implications of this result, let us expand the scope of our results. This extension enables us to compare our results with existing work \cite{Dhifallah17, Dhifallah2017phase,Abbasi2017}. So far, we have discussed the case $\bm{x}_* \in \mathbb{C}^n$. However, in some applications, such as  astronomical imaging, we are interested in real-valued signals $\bm{x}_* \in \mathbb{R}^n$. In Section \ref{sec:real_valued}, we will introduce a real-valued version of $\ampa$. The following informal result summarizes the performance of this algorithm. \\

\noindent \textbf{Informal result 2.} Consider the AMP.A algorithm for real-valued signals with $\mu_k^t=\frac{2+2\mr{div}_p(g_t)}{1+2\tau_t}$. Under the noiseless setting, if $\delta > \frac{\pi^2}{4} -1 \approx 1.5$, then $\bm{x}^t$  ``converges to'' $\bm{x}_{\ast}$ as long as the initialization is not orthogonal to $\bm{x}_{\ast}$. When $1+ \frac{4}{\pi^2}<\delta < \frac{\pi^2}{4} - 1$,  $\ampa$ has a fixed point at $\bm{x}_{\ast}$.  However, it has to be initialized very carefully to reach $\bm{x}_*$. \\

We would like to make the following remarks about these two results:

\begin{enumerate}
\item As is clear from our second informal result, when $\delta< 1+ \frac{4}{\pi^2}$, $\ampa$ cannot converge to $\bm{x}_*$. This value of $\delta$ is different from the information theoretic lower bound $\delta =1$. This discrepancy is in fact due to the type of continuation we used in this paper. Note that this issue does not happen in the complex-valued $\ampa$. The search for a better continuation strategy for the real-valued $\ampa$ is left as future research.

\item Simulation results presented in our forthcoming paper \cite{Plan} show that for real-valued signals, AMP.A with $\mu_k=0$ can only recover when $\delta> 2.5$. As mentioned in our second informal result, continuation has improved the threshold of correct recovery to $ \delta \approx 1.5$.    

\item  How much does spectral initialization improve the performance of AMP.A? To answer this question, let us focus on the real-valued signals. As discussed in our second Informal result, two values of $\delta$ are important for AMP.A: $\delta =\frac{\pi^2}{4}-1\approx1.5$ and $\delta = 1+ \frac{4}{\pi^2} \approx 1.4$. If $\delta > 1.5$, then AMP.A recovers $\bm{x}_*$ exactly as long as the initialization is not orthogonal to $\bm{x}_{\ast}$. In this case spectral method helps, since it offers an initialization that is not orthogonal to $\bm{x}_{\ast}$. However, if the mean of $\bm{x}_*$ is not zero, a simple initial estimate $\mathbf{1} = [1,1, \ldots, 1]^\UT$ can work as well as the spectral initialization. Hence, in this case spectral initialization does not offer a major improvement. A more important question is whether spectral initialization can help AMP.A to perform exact recovery for $\delta < 1.5$. Our forthcoming paper \cite{Plan} shows that the answer to this question is negative. Hence, as long as the final estimate of AMP.A is concerned, the impact of spectral initialization seems to be marginal. 

\end{enumerate}

Now let us discuss the performance of AMP.A under noisy settings. We assume that the measurement noise is Gaussian and small. Clearly, in this setting exact recovery is impossible, hence we study the asymptotic mean square error defined as the following almost sure limit ($\theta_t \Mydef \angle \frac{1}{n}\langle \bm{x}_*,\bm{x}^t \rangle$)
\BE\label{Eqn:AMSE_def}
{\rm AMSE}(\delta, \sigma^2_w) \triangleq \lim_{t \rightarrow \infty} \frac{\|\bm{x}^t -e^{\mr{i}\theta_t} \bm{x}_*\|_2^2}{n},
\EE

\noindent \textbf{Informal result 3.} Consider the AMP.A algorithm for complex-valued signals with $\mu_k^t=\frac{1+2\mr{div}_p(g_t)}{1+2\tau_t}$. Let $\delta > \frac{64}{\pi^2} - 4 \approx 2.5$, then 
\begin{equation} \label{eq:PRNoiseSens1COMPLEX}
\lim_{\sigma^2_w \rightarrow 0}  \frac{{\rm AMSE}(\delta, \sigma^2_w)}{\sigma^2_w} = \frac{4}{1 - \frac{2}{\delta}}.
\end{equation}
Notice that the above result was derived based under the assumption $\mathbb{E}[|A_{ai}|^2]=1/m$. To interpret the above result correctly, we should discuss the signal to noise ratio of each measurement. Suppose that $\frac{1}{n} \|\bm{x}_*\|^2 =1$. Then the signal to noise ratio of each measurement is $\mathbb{E} \big[\left|\sum_i A_{ai} x_{*,i} \right|^2\big]/\sigma^2_w = \frac{1}{\delta \sigma_w^2}$. In other words, as we increase the number of measurements or equivalently $\delta$, then we reduce the signal to noise ratio of each measurement too. This causes some issues when we compare the ${\rm AMSE}(\delta, \sigma_w^2)$ for different values of $\delta$. One easy fix is to assume that the variance of the noise is $\sigma_w^2 = \frac{\tilde{\sigma}_w^2}{\delta}$, where $\tilde{\sigma}_w^2$ is a fixed number. Then we can define the noise sensitivity as
\[
{\rm NS} (\tilde{\sigma}_w^2, \delta) =  \frac{{\rm AMSE}(\delta, \sigma^2_w)}{\tilde{\sigma}^2_w}.
\]
It is straightforward to use \eqref{eq:PRNoiseSens1COMPLEX} to show that ${\rm NS} (\tilde{\sigma}_w, \delta) =  \frac{4}{\delta-2}$.  Note that if we use $\ampa$ with $\delta \approx \deltaAMP$, then the noise sensitivity is approximately $8$. If this level of noise sensitivity is not acceptable for an application, then the user should collect more measurements to reduce the noise sensitivity.  Noise sensitivity can also be calculated for real-valued AMP.A: \\

\noindent \textbf{Informal result 4.} Consider the AMP.A algorithm for real-valued signals with $\mu_k^t=\frac{2+2\mr{div}_p(g_t)}{1+2\tau_t}$. Let $\delta > \frac{\pi^2}{4} -1 \approx 1.5$, then 
\[
\lim_{\sigma^2_w \rightarrow 0}  \frac{{\rm AMSE}(\delta, \sigma^2_w)}{\sigma^2_w} = \frac{1}{\left(1+\frac{4}{\pi^2}\right)^{-1} - \frac{1}{\delta} }.
\]

\subsection{Related work}\label{ssec:relwork}

\subsubsection{Existing theoretical work}

Early theoretical results on phase retrieval, such as PhaseLift \cite{Candes2013} and PhaseCut \cite{Waldspurger2015}, are based on semidefinite relaxations. For random Gaussian measurements, a variant of PhaseLift can recover the signal exactly (up to global phase) in the noiseless setting using $O(n)$ measurements \cite{Candes2014solving}. However, PhaseLift (or PhaseCut) involves solving a semidefinite programming (SDP) and is computationally prohibitive for large-scale applications. A different convex optimization approach for phase retrieval, which has the same $O(n)$ sample complexity, was independently proposed in \cite{Goldstein2016phasemax} and \cite{Aahmani2016}. This method is formulated in the natural signal space and does not involve lifting, and is therefore computationally more attractive than SDP-based counterparts. However, both methods require an anchor vector that has non-zero correlation with the true signal, and the quality of the recovery highly depends on the quality of the anchor.

Apart from convex relaxation approaches, non-convex optimization approaches attract considerable recent interests. These algorithms typically consist of a carefully designed initialization step (usually accomplished via a spectral method \cite{Eetrapalli2013}) followed by iterations that refine the estimate. An early work in this direction is the alternating minimization algorithm proposed in \cite{Eetrapalli2013}, which has sub-optimal sample complexity. Another line of work includes the Wirtinger flow algorithm \cite{CaLiSo15,Ma2017implicit}, truncated Wirtinger flow algorithm \cite{ChenCandes17}, and other variants\cite{Cai2016,Zhang2016reshaped,Wang2016,wang2017solving,Soltanolkotabi2017}. Other approaches include Kaczmarz method \cite{Wei2015,Chi2016,Tan2017phase,Jeong2017}, trust region method \cite{sun2016geometric}, coordinate decent \cite{Zeng2017},  prox-linear algorithm \cite{Duchi2017} and Polyak subgradient method \cite{Davis2017}.

All the above theoretical results guarantee successful recovery with $m=\delta n$ measurements (or more) where $\delta$ is a fixed often large constant. However, such theories are not capable of providing fair comparison among different algorithms. To resolve this issue researchers have started studying the performance of different algorithms under the asymptotic setting $m/n\to\delta$ and $n\to\infty$. An interesting iterative projection method was proposed in\cite{Li2015phase}, whose dynamics can be characterized exactly under this asymptotic setting. However, \cite{Li2015phase} does not analyze the number of measurements required for this algorithm to work. The work in \cite{Lu17} provides sharp characterization of the spectral initialization step (which is a key ingredient to many of the above algorithms). The analysis in \cite{Lu17} reveals a phase transition phenomenon: spectral method produces an estimate not orthogonal to the signal if and only if $\delta$ is larger than a threshold (called ``weak threshold'' in \cite{Mondelli2017}). Later, \cite{Mondelli2017} derived the information-theoretically optimal weak threshold (which is $0.5$ for the real-valued model and $1$ for the complex-valued model) and proved that the optimal weak threshold can be achieved by an optimally-tuned spectral method. Using the non-rigorous replica method from statistical physics, \cite{Dhifallah17} analyzes the exact threshold of $\delta$ (for the real-value setting) above which the PhaseMax method in \cite{Goldstein2016phasemax} and \cite{Aahmani2016} achieves perfect recovery. The analysis in \cite{Dhifallah17} shows that the performance of PhaseMax highly depends on initialization (see Fig. 1 of \cite{Dhifallah17}), and the required $\delta$ is lower bounded by $2$ for real-valued models. The analysis in \cite{Dhifallah17} was later rigorously proved in \cite{Dhifallah2017phase} via the Gaussian min-max framework \cite{thrampoulidis2016precise,thrampoulidis2015regularized}, and a new algorithm called PhaseLamp was proposed. The PhaseLamp method has superior
recovery performance over PhaseMax, but again it does not work when $\delta<2$ for real-valued models. A recent paper \cite{salehi2018precise} extends the asymptotic analysis of \cite{Dhifallah2017phase} to the complex-valued setting, and it was shown that PhaseMax cannot work for $\delta<4$. On the other hand, AMP.A proposed in this paper achieves perfect recovery when $\delta>1.5$ and $\delta>2.5$, for the real and complex-valued models respectively. Further, \cite{Dhifallah17,Dhifallah2017phase} focus on the noiseless scenario, while in this paper we also analyze the noise sensitivity of AMP.A. Finally, a recent paper \cite{Abbasi2017} derived an upper bound of $\delta$ such that PhaseLift achieves perfect recovery. The exact value of this upper bound can be derived by solving a three-variable convex optimization problem and empirically \cite{Abbasi2017} shows that $\delta\approx3$ for real-valued models.

\subsubsection{Existing work based on AMP}
Our work in this paper is based on the approximate message passing (AMP) framework \cite{DoMaMo09,Bayati&Montanari11}, in particular the generalized approximate message passing (GAMP) algorithm developed and analyzed in \cite{Rangan11,Javanmard2013}. A key property of AMP (including GAMP) is that its asymptotic behavior can be characterized exactly via the state evolution platform \cite{DoMaMo09,Bayati&Montanari11,Rangan11,Javanmard2013}. 

For phase retrieval, a Bayesian GAMP algorithm has been proposed in \cite{Schniter2015}. However, \cite{Schniter2015} did not provide rigorous performance analysis, partly due to the heuristic treatments used in the algorithm (such as damping and restart). Another work related to ours is the recent paper \cite{barbier2017phase} (appeared on Arxiv while we are preparing this paper), which analyzed the phase transitions of the Bayesian GAMP algorithms for a class of nonlinear acquisition models. For the phase retrieval problem, a phase transition diagram was shown in \cite[Fig.~1]{barbier2017phase} under a Bernoulli-Gaussian signal prior. The numerical results in \cite{barbier2017phase} indeed achieve state-of-the-art reconstruction results for real-valued models. However, \cite{barbier2017phase} did not provide the analysis of their results and in particular did not mention how they handle a difficulty related to initialization. Further, the algorithm in \cite{barbier2017phase} is based on the Bayesian framework which assumes that the signal and the measurements are generated according to some known distributions. 
Contrary to \cite{Schniter2015} and \cite{barbier2017phase}, this paper considers a version of GAMP derived from solving the popular optimization problem \eqref{Eqn:amplitude}. We provide rigorous performance analysis of our algorithm for both real and complex-valued models. Note that the advantages and disadvantages of Bayesian and optimization-based techniques have been a long debate in the field of Statistics. Hence, we do not repeat those debates here. Given our experience in the fields of compressed sensing and phase retrieval, it seems that the performance of Bayesian algorithms are more sensitive to their assumptions than the optimization-based schemes. Furthermore, performance analyses of Bayesian algorithms are often very challenging under ``non-ideal'' situations which the algorithms are not designed for.

Here, we emphasize another advantage of our approach. Given the fact that the most popular schemes in practice are iterative algorithms derived for solving non-convex optimization problems, the detailed analyses of $\ampa$ presented in our paper may also shed light on the performance of these algorithms and suggest new ideas to improve their performances.

\subsubsection{Fundamental limits}
It the literature of phase retrieval, it is well known that to make the signal-to-observation mapping injective one needs at least $m=4n$ measurements  \cite{Bandeira2014} (or $m=2n$ \cite{Balan2006signal} in the case of real-valued models).
On the other hand, the measurement thresholds obtained in this paper are $\delta=\frac{64}{\pi^2}-4\approx2.5$ and $\delta=\frac{\pi^2}{4}-1\approx1.5$ respectively. In fact, our algorithm can in principal recover the signal when $\delta>2$ and $\delta>1+\frac{4}{\pi^2}$ (or $\delta>1$ if continuation is not applied) for complex and real-valued models, provided that the algorithm is initialized close enough to the signal (though no known initialization strategy can accomplish this goal). Hence, our threshold are even smaller than the injectivity bounds. We emphasize that this is possible since the injectivity bounds derived in \cite{Balan2006signal,Bandeira2014} are defined for \textit{all} $\bm{x}_{*}$ (which can depend on $\bm{A}$ in the worst case scenario). This is different from our assumption that $\bm{x}_{*}$ is independent of $\bm{A}$, which is more relevant in applications where one has some freedom to randomize the sampling mechanism. In fact, several papers have observed that their algorithm can operate at the injectivity thresholds $\delta=2$ for real-valued models \cite{Wang2016,Duchi2017}. These two different notions of thresholds were discussed in \cite{Jalali2016}. In the context of phase retrieval, the reader is referred to the recent paper \cite{bakhshizadeh2017}, which showed that by solving a compression-based optimization problem, the required number of observations for recovery is essentially the information dimension of the signal (see \cite{bakhshizadeh2017} for the precise definition). For instance, if the signal is $k$-sparse and complex-valued, then $2k$ measurements suffice.

\subsection{Organization of the paper}
The structure of the rest of the paper is as follows: Section \ref{Sec:contributions} mentions the asymptotic framework of the paper, and summarizes our main results on the asymptotic analysis of $\ampa$. Section \ref{sec:real_valued} discusses the real-valued $\ampa$ algorithm and its analysis. Section \ref{sec:proofmainresults} presents the proofs of our main results. 

\section{Asymptotic analysis of $\ampa$}\label{Sec:contributions}

In this section, we present the asymptotic platform under which $\ampa$ is studied, and we derive a set of equations, known as state evolution (SE), that capture the performance of $\ampa$ under the asymptotic analysis.

\subsection{Asymptotic framework and state evolution}\label{ssec:rigorousSE}

Our analysis of $\ampa$ is carried out based on a standard asymptotic framework developed in \cite{Bayati&Montanari11,Bayati&Montanari12}. In this framework, we let $m,n \rightarrow \infty$, while $m/n \rightarrow \delta$.  Within this section, we will write $\bm{x}_*$, $\bm{x}^t$, $\bm{w}$ and $\bm{A}$ as $\bm{x}_*(n)$, $\bm{x}^t(n)$, $\bm{w}(n)$ and $\bm{A}(n)$ to make explicit their dependency on the signal dimension $n$. In this section we focus on the complex-valued AMP. We postpone the discussion of the real-valued AMP until Section \ref{sec:real_valued}. Following \cite{Mousavi2015}, we introduce the following definition of converging sequences.
\begin{definition}\label{Def:converge_seq}
The sequence of instances $\{\bm{x}_{*}(n),\bm{A}(n),\bm{w}(n)\}$ is said to be a converging sequence if the following hold:
\begin{itemize}
\item[--] $\frac{m}{n}\to\delta\in(0,\infty)$, as $n\to\infty$.
\item[--] $\bm{A}(n)$ has i.i.d. Gaussian entries where $A_{ij}\sim\mathcal{CN}(0,1/m)$.
\item[--] The empirical distribution of $\bm{x}_*(n)\in\mathbb{C}^n$ converges weakly to a probability measure $p_X$ with bounded second
moment. Further, $\frac{1}{n}\|\bm{x}_*(n)\|^2\to\kappa^2$ where $\kappa^2\in(0,\infty)$ is the second moment of $p_X$. For convenience and without loss of generality, we assume $\kappa=1$.\footnote{Otherwise, we can introduce the following normalized variables: $\tilde{\bm{y}}= \bm{y}/\kappa$, $\tilde{\bm{x}}=\bm{x}/\kappa$, $\tilde{\bm{w}}=\bm{w}/\kappa$, $\tilde{\bm{x}}^t=\bm{x}^t/\kappa$ and $\tilde{\bm{p}}^t=\bm{p}^t/\kappa$. One can verify that the AMP.A algorithm defined in \eqref{Eqn:AMP_complex} for these normalized variables remains unchanged. Therefore, we can view that our analyses are carried out for these normalized variables; we don't need to actually change the algorithm though.}
\item[--] The empirical distribution of $\bm{w}(n)\in\mathbb{C}^n$ converges weakly to $\mathcal{CN}(0,\sigma^2_w)$. 
\end{itemize}
\end{definition}

Under the asymptotic framework introduced above, the behavior of $\ampa$ can be characterized exactly. Roughly speaking, the estimate produced by $\ampa$ in each iteration is approximately distributed as the (scaled) true signal $+$ additive Gaussian noise; in other words, ${\bm x}^t$ can be modeled as $\alpha_t \bm{x}_* + \sigma_t \bm{h}$, where $\bm{h}$ behaves like an iid standard complex normal noise. We will clarify this claim in Theorem \ref{The:SE_complex} below. The scaling constant $\alpha_t$ and the noise standard deviation $\sigma_t$ evolve according to a known deterministic rule, called the state evolution (SE), defined below. 
\begin{definition}\label{Def:SE_map_complex}
Starting from fixed $(\alpha_0,\sigma^2_0)\in\mathbb{C}\times\mathbb{R}_+\backslash(0,0)$, the sequences $\{\alpha_t\}_{t\ge1}$ and $\{\sigma^2_t\}_{t\ge1}$ are generated via the following
 recursion:
 \BE \label{Eqn:SE_complex}
\begin{split}
\alpha_{t+1} &= \psi_{1}(\alpha_t,\sigma_t^2),\\
\sigma^2_{t+1} &= \psi_{2}(\alpha_t,\sigma_t^2;\delta,\sigma^2_w),
\end{split}
\EE
where $\psi_{1}:\mathbb{C}\times\mathbb{R}_+\mapsto\mathbb{C}$ and $\psi_{2}:\mathbb{C}\times\mathbb{R}_+\mapsto\mathbb{R}_+$ are respectively  given by
\[
\begin{split}
\psi_1(\alpha,\sigma^2)&= 2\cdot\mathbb{E} \left[ \partial_z g (P, Y) \right] =\mathbb{E}\left[ \frac{\bar{Z} P}{|Z|\,|P|} \right],\\
\psi_2(\alpha,\sigma^2;\delta,\sigma^2_w)&= 4\cdot\mathbb{E} \left[| g (P, Y)|^2 \right] =4\cdot\mathbb{E}\left[ \left( |P| - |Z|-W\right)^2\right].
\end{split}
\]
In the above equations, the expectations are over all random variables involved: $Z\sim\mathcal{CN}(0,1/\delta)$, $P=\alpha Z + \sigma B$ where $B\sim\mathcal{CN}(0,1/\delta)$ is independent of $Z$, and $Y=|Z|+W$ where $W\sim\mathcal{CN}(0,\sigma^2_w)$ is independent of both $Z$ and $B$.
Further, the partial Wirtinger derivative $\partial_z g(p, |z|+w)$ is defined as: 
\[
\begin{split}
\partial_z g(p, |z|+w) &\Mydef  \frac{1}{2}\left[ \frac{\partial }{\partial z_R}g(p, |z|+w) -\mr{i}  \frac{\partial }{\partial z_I}g(p, |z|+w)\right],
\end{split}
\]
where $z_R$ and $z_I$ are the real and imaginary parts of $z$ (i.e., $z=z_R+\mr{i}z_I$).
\end{definition}

\begin{remark}
 The functions $\psi_1$ and $\psi_2$ are well defined except when both $\alpha$ and $\sigma^2$ are zero. 
 \end{remark}
 \begin{remark}
Most of the analysis in this paper is concerned with the noiseless case. For brevity, we will often write $\psi_2(\alpha,\sigma;\delta,0)$ (where $\sigma^2_w=0$) as $\psi_2(\alpha,\sigma;\delta)$. Further, when our focus is on $\alpha$ and $\sigma^2$ rather than $\delta$, we will simply write $\psi_2(\alpha,\sigma^2;\delta)$ as $\psi_{2}(\alpha,\sigma^2)$. 
\end{remark}

In Appendix \ref{Sec:SE_derivations_complex}, we simplify the functions $\psi_1(\cdot)$ and $\psi_2(\cdot)$ into the following expressions (with $\theta_{\alpha}$ being the phase of $\alpha$):
\BS \label{Eqn:map_expression_complex}
\begin{align}
\psi_{1}(\alpha,\sigma^2)  &=e^{\mr{i}\theta_{\alpha}}\cdot\int_0^{\frac{\pi}{2}}\frac{|\alpha|\sin^2\theta}{\left(|\alpha|^2\sin^2\theta +\sigma^2 \right)^{\frac{1}{2}}}\mr{d}\theta \label{Eqn:map_expression_complex_a},\\
\psi_{2}(\alpha,\sigma^2;\delta,\sigma^2_w) &= \frac{4}{\delta}\left(|\alpha|^2+\sigma^2+1- \int_0^{\frac{\pi}{2}} \frac{ 2|\alpha|^2\sin^2\theta +\sigma^2}{ \left(| \alpha|^2\sin^2\theta +\sigma^2\right)^{\frac{1}{2}} }\mr{d}\theta\right)+4\sigma^2_w.\label{Eqn:map_expression_complex_b}
\end{align}
\ES 
The above expressions for $\psi_1$ and $\psi_2$ are more convenient for our analysis.

The state evolution framework for generalized AMP (GAMP) algorithms \cite{Rangan11} was first introduced and analyzed in \cite{Rangan11} and later formally proved in \cite{Javanmard2013}. As we will show later in Theorem \ref{The:SE_complex}, SE characterizes the macroscopic behavior of $\ampa$. To apply the results in \cite{Rangan11,Javanmard2013} to AMP.A, however, we need two generalizations. First, we need to extend the results in \cite{Rangan11,Javanmard2013} to complex-valued models. This is straightforward by applying a complex-valued version of the conditioning lemma introduced in \cite{Rangan11,Javanmard2013}. Second, existing results in \cite{Rangan11,Javanmard2013} require the function $g$ to be smooth. Our simulation results in case of complex-valued $\ampa$ show that SE predicts the performance of $\ampa$ despite the fact that $g$ is not smooth. Since our paper is long, we postpone the proof of this claim to another paper. Instead we use the smoothing idea discussed in \cite{Zheng17} to connect the SE equations presented in \eqref{Eqn:SE_complex} with the iterations of $\ampa$ in \eqref{Eqn:AMP_complex}. Let $\epsilon>0$ be a small fixed number. Consider the following smoothed version of $\ampa$:
\BS
\begin{align}
\bm{p}^t &= \bm{Ax_\epsilon}^t -\frac{2}{\delta}g_{ \epsilon}(\bm{p}^{t-1},\bm{y}), \nonumber\\
\bm{x_\epsilon}^{t+1} &= 2\left[-\mr{div}_p(g_{ t,\epsilon})\cdot \bm{x_\epsilon}^t +\bm{A}^{\UH} g_{\epsilon} (\bm{p}^t,\bm{y})\right], \nonumber
\end{align}
\ES
where $g_{\epsilon}(\bm{p}^{t-1},\bm{y})$ refers to a vector produced by applying $g_\epsilon:\mathbb{C}\times\mathbb{R}_+\mapsto\mathbb{C}$ below component-wise:
\[
g_{\epsilon} (p,y)\Mydef y\cdot h_\epsilon(p)-p,
\]
where for $p= p_1+ \mr{i} p_2$, $h_\epsilon (p)$ is defined as
\[
h_\epsilon(p) \Mydef
\frac{p_1+ \mr{i}p_2}{\sqrt{p_1^2+p_2^2+ \epsilon}} .
\]
Note that as $\epsilon \rightarrow 0$, $g_{t, \epsilon} \rightarrow g_t$ and hence we expect the iterations of smoothed-$\ampa$ converge to the iterations of $\ampa$. 

\begin{theorem}[asymptotic characterization]\label{The:SE_complex}
Let $\{\bm{x}_*(n),\bm{A}(n),\bm{w}(n)\}$ be a converging sequence of instances. For each instance, let $\bm{x}^0(n)$ be an initial estimate independent of $\bm{A}(n)$. Assume that the following hold almost surely
\BE\label{Eqn:def_alpha_sigma}
\lim_{n\to\infty} \frac{1}{n}\langle \bm{x}_*, \bm{x}^0 \rangle=\alpha_0\quad\text{and}\quad \lim_{n\to\infty} \frac{1}{n}\|\bm{x}^0 \|^2=\sigma^2_0+|\alpha_0|^2.
\EE
Let $\bm{x}_\epsilon^t(n)$ be the estimate produced by the smoothed $\ampa$ initialized by $\bm{x}^0(n)$ (which is independent of $\bm{A}(n)$) and $\bm{p}^{-1}(n)=\mathbf{0}$. Let $\epsilon_1, \epsilon_2, \ldots$ denote a sequence of smoothing parameters for which $\epsilon_i \rightarrow 0$ as $i \rightarrow \infty$ Then, for any iteration $t\ge1$, the following holds almost surely
\BE \label{Eqn:AWGN_property}
\lim_{j \rightarrow \infty} \lim_{n\to\infty}\frac{1}{n}\sum_{i=1}^n | x_{\epsilon_j, i}^{t}(n)-e^{\mr{i}\theta_t}\,x_{*,i}|^2=\mathbb{E}\left[| X^{t}-e^{\mr{i}\theta_t}X_*|^2\right]=\big|1-|\alpha_t|\big|^2+\sigma^2_t,
\EE
where $\theta_{t}=\angle \alpha_t$, $X^{t}=\alpha_t X_* +\sigma_t H$ and $X_*\sim p_{X}$ is independent of $H\sim\mathcal{CN}(0,1)$. Further, $\{\alpha\}_{t\ge1}$ and $\{\sigma_t^2\}_{t\ge1}$ are determined by \eqref{Eqn:SE_complex} with initialization $\alpha_0$ and $\sigma_0^2$. 
\end{theorem}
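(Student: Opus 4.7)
The plan is to reduce Theorem \ref{The:SE_complex} to the rigorous state-evolution (SE) framework for GAMP established in \cite{Rangan11, Javanmard2013}, by combining two ingredients: (i) a complex-valued extension of the conditioning lemma, and (ii) the smoothing argument that replaces the non-smooth $g$ by the smooth family $g_\epsilon$. For each fixed $\epsilon>0$, $g_\epsilon(p,y) = y\, h_\epsilon(p)-p$ is $C^\infty$ in $(p_R,p_I)$ and Lipschitz uniformly on compacts, so the hypotheses of \cite{Javanmard2013} (adapted to $\mathbb{C}$) are satisfied. The overall strategy is first to analyze the smoothed iteration for fixed $\epsilon$, obtaining an SE driven by functions $\psi_{1,\epsilon}, \psi_{2,\epsilon}$ defined as in Definition \ref{Def:SE_map_complex} but with $g$ replaced by $g_\epsilon$, and then to pass $\epsilon\to 0$ to recover the stated $\psi_1,\psi_2$.

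\textbf{Step 1 (Complex conditioning).} I would rewrite AMP.A as a GAMP instance by splitting real and imaginary parts of $\bm A$ and $\bm x$ into a $2m\times 2n$ real problem, or, more cleanly, by reproving the Bolthausen-type conditioning lemma over $\mathbb{C}$: conditional on the sigma-algebra generated by past iterates, the unconditioned part of $\bm A$ acts as a fresh complex Gaussian matrix independent of the history. This yields the same recursive distributional description of $(\bm p^t, \bm x^t)$ as in the real case, with the appropriate Wirtinger derivatives replacing the usual real partial derivatives in the Onsager correction (this is exactly what produces the factor $1/2$ present in the definitions of $\psi_1$ and $\mathrm{div}_p(g_t)$).

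\textbf{Step 2 (SE for the smoothed algorithm).} For fixed $\epsilon>0$, applying the complex GAMP SE gives, almost surely,
\begin{equation*}
\lim_{n\to\infty}\frac{1}{n}\sum_{i=1}^n \big|x_{\epsilon,i}^t(n) - e^{\mathrm{i}\theta_t^\epsilon} x_{*,i}\big|^2 = \big|1-|\alpha_t^\epsilon|\big|^2 + (\sigma_t^\epsilon)^2,
\end{equation*}
where $(\alpha_t^\epsilon,\sigma_t^\epsilon)$ obey the recursion $\alpha_{t+1}^\epsilon=\psi_{1,\epsilon}(\alpha_t^\epsilon,(\sigma_t^\epsilon)^2)$ and $(\sigma_{t+1}^\epsilon)^2=\psi_{2,\epsilon}(\alpha_t^\epsilon,(\sigma_t^\epsilon)^2;\delta,\sigma_w^2)$ with identical initialization $(\alpha_0,\sigma_0^2)$. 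The equality $|1-|\alpha_t^\epsilon||^2+(\sigma_t^\epsilon)^2$ follows from expanding $\mathbb{E}|X^t-e^{\mathrm{i}\theta_t^\epsilon}X_*|^2$ for the Gaussian surrogate $X^t=\alpha_t^\epsilon X_*+\sigma_t^\epsilon H$, using $\mathbb{E}|X_*|^2=1$ and independence of $H$ and $X_*$.

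\textbf{Step 3 (Passing $\epsilon\to 0$).} I would prove by induction on $t$ that $\alpha_t^\epsilon\to\alpha_t$ and $\sigma_t^\epsilon\to\sigma_t$ as $\epsilon\to 0$, where $(\alpha_t,\sigma_t^2)$ are produced by the unsmoothed SE \eqref{Eqn:SE_complex}. The base case is trivial since the same initialization is used. For the induction step, the integrands defining $\psi_{i,\epsilon}$ differ from those of $\psi_i$ only through the factor $h_\epsilon(P)$, which converges pointwise to $P/|P|$ on $\{P\ne 0\}$ and is uniformly bounded by $1$. With $P=\alpha Z+\sigma B$ jointly complex Gaussian (non-degenerate when $(\alpha,\sigma^2)\ne(0,0)$), the event $\{P=0\}$ has probability zero, and the integrands are dominated by integrable functions of $(|Z|,|P|,|W|)$; dominated convergence yields $\psi_{i,\epsilon}\to\psi_i$. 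Combined with continuity of $\psi_i$ in $(\alpha,\sigma^2)$ (clear from the closed-form expressions \eqref{Eqn:map_expression_complex}), this gives convergence of $(\alpha_t^\epsilon,\sigma_t^\epsilon)\to(\alpha_t,\sigma_t)$. Taking the iterated limit $j\to\infty$ of the identity in Step 2 then delivers \eqref{Eqn:AWGN_property}.

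The main obstacle is ensuring that the $\epsilon\to 0$ limit is uniform enough to pass inside both the SE recursion (which composes $\psi_{i,\epsilon}$ $t$ times) and the finite-$n$ SE conclusion. Concretely, one must rule out any degeneracy $\sigma_t^\epsilon\to 0$ along the trajectory for small $\epsilon$, which would make $P=\alpha Z+\sigma B$ degenerate and destroy the dominated-convergence argument. As long as the base pair $(\alpha_0,\sigma_0^2)$ is non-degenerate and $\sigma_w^2\ge 0$, inspection of \eqref{Eqn:map_expression_complex_b} shows $\sigma_{t+1}^2>0$ whenever $(\alpha_t,\sigma_t^2)\ne(0,0)$, so an inductive non-degeneracy bound closes the argument. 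The complex conditioning lemma in Step 1 is conceptually the most delicate piece but follows the now-standard template.
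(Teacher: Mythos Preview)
Your proposal is correct and follows essentially the same approach as the paper's own proof: apply the rigorous GAMP state-evolution framework of \cite{Rangan11,Javanmard2013} (extended to the complex-valued setting via the conditioning lemma) to the smoothed algorithm for fixed $\epsilon$, and then pass $\epsilon\to 0$ by an induction argument on $t$ showing $(\alpha_t^\epsilon,\sigma_t^\epsilon)\to(\alpha_t,\sigma_t)$, citing \cite{Zheng17} for this last step. Your write-up is in fact more detailed than the paper's proof in Section~\ref{Sec:The:SE_complex_proof}, which is quite terse and simply points to \cite[Lemma~13]{Mondelli2017} and \cite[Theorem~1]{Zheng17} for the two main ingredients.
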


The proof of Theorem~\ref{The:SE_complex} is given in Section \ref{Sec:The:SE_complex_proof}.

\subsection{Convergence of the SE for noiseless model}\label{Sec:asym_framework}
We now analyze the dynamical behavior of the SE. Before we proceed, we point out that in phase retrieval, one can only hope to recover the signal up to global phase ambiguity \cite{Eetrapalli2013,Candes2013,CaLiSo15}, for generic signals without any structure. In light of \eqref{Eqn:AWGN_property}, AMP.A is successful if $|\alpha_t|\to 1$ and $\sigma_0^2\to0$ as $t\to\infty$. 

Let us start with the following interesting feature of the state evolution, which can be seen from \eqref{Eqn:map_expression_complex}.

\begin{lemma}\label{Lem:psi_phase_invariance_0}
For any $(\alpha_0,\sigma^2_0)\in\mathbb{C}\times\mathbb{R}_+\backslash(0,0)$, $\psi_1$ and $\psi_2$ satisfy the following properties:
\begin{enumerate}
\item [(i)] $\psi_1(\alpha,\sigma^2)=\psi_1(|\alpha|,\sigma^2)\cdot e^{\mr{i}\theta_{\alpha}}$, with $e^{\mr{i}\theta_{\alpha}}$ being the phase of $\alpha$;
\item[(ii)] $\psi_2(\alpha,\sigma^2)=\psi_2(|\alpha|,\sigma^2)$.
\end{enumerate}
Hence, if $\theta_{t}$ denotes the phase of $\alpha_t$, then $\theta_t = \theta_0$.  
\end{lemma}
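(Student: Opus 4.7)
The plan is to read both claims directly off the simplified expressions \eqref{Eqn:map_expression_complex_a} and \eqref{Eqn:map_expression_complex_b}, which were derived in Appendix \ref{Sec:SE_derivations_complex} and may be used here as given. Since the lemma is essentially a structural observation about $\psi_1$ and $\psi_2$, almost no computation is needed beyond pointing at the right formulas; the only work is to arrange the bookkeeping of phases correctly.

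For part (i), I would simply note that the integral
\[
I(|\alpha|,\sigma^2) \;\Mydef\; \int_0^{\pi/2} \frac{|\alpha|\sin^2\theta}{\bigl(|\alpha|^2\sin^2\theta+\sigma^2\bigr)^{1/2}}\,\mr{d}\theta
\]
depends on $\alpha$ only through $|\alpha|$, and is a nonnegative real number (strictly positive whenever $|\alpha|>0$). By \eqref{Eqn:map_expression_complex_a}, $\psi_1(\alpha,\sigma^2)=e^{\mr{i}\theta_\alpha}I(|\alpha|,\sigma^2)$, while substituting $\alpha\mapsto |\alpha|$ (so $\theta_\alpha=0$) gives $\psi_1(|\alpha|,\sigma^2)=I(|\alpha|,\sigma^2)$. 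This yields the decomposition $\psi_1(\alpha,\sigma^2)=\psi_1(|\alpha|,\sigma^2)\,e^{\mr{i}\theta_\alpha}$ directly. For part (ii), I would observe that every occurrence of $\alpha$ on the right-hand side of \eqref{Eqn:map_expression_complex_b} already appears inside $|\alpha|^2$; hence $\psi_2(\alpha,\sigma^2)$ is a function of $(|\alpha|,\sigma^2)$ alone, proving $\psi_2(\alpha,\sigma^2)=\psi_2(|\alpha|,\sigma^2)$.

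For the trajectory statement $\theta_t=\theta_0$, I would argue by induction on $t$. The base case is vacuous. For the inductive step, given the SE recursion $\alpha_{t+1}=\psi_1(\alpha_t,\sigma_t^2)$, apply part (i) to obtain $\alpha_{t+1}=\psi_1(|\alpha_t|,\sigma_t^2)\,e^{\mr{i}\theta_{\alpha_t}}$. Since $\psi_1(|\alpha_t|,\sigma_t^2)=I(|\alpha_t|,\sigma_t^2)\ge 0$ is real, the phase of $\alpha_{t+1}$ equals $\theta_{\alpha_t}$, i.e., $\theta_{t+1}=\theta_t$, and the induction closes. The only minor caveat is the degenerate case $|\alpha_t|=0$: then $\psi_1(|\alpha_t|,\sigma_t^2)=0$, so $\alpha_{t+1}=0$, the phase is undefined, but the recursion is absorbed at $\alpha=0$ and the statement $\theta_t=\theta_0$ is vacuous from the first index where $|\alpha_t|$ vanishes. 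Under the standing assumption $(\alpha_0,\sigma_0^2)\neq(0,0)$, if additionally $\alpha_0\neq 0$, then $I(|\alpha_0|,\sigma_0^2)>0$ and the phase is well-defined at every step (propagated by the same argument).

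There is no real obstacle: the lemma is essentially a restatement of the structure of the integrals in \eqref{Eqn:map_expression_complex}. The only subtlety worth a sentence in the write-up is to confirm positivity of $I(|\alpha|,\sigma^2)$ whenever $|\alpha|>0$, which follows because the integrand is continuous, nonnegative, and strictly positive on a set of positive measure in $(0,\pi/2)$.
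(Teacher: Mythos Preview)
Your proposal is correct and matches the paper's own treatment: the paper states that the lemma ``can be seen from \eqref{Eqn:map_expression_complex}'' and offers no further argument in the main text, which is exactly what you do. The paper does contain, in Appendix~\ref{Sec:SE_derivations_complex} (Lemma~\ref{Lem:psi_phase_invariance}), an alternative proof working directly from the probabilistic definitions in Definition~\ref{Def:SE_map_complex} via the rotational invariance of circularly-symmetric Gaussians---this is in fact how \eqref{Eqn:map_expression_complex} was obtained in the first place---but once those formulas are available, your reading-off argument is the intended one.
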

In light of this lemma, we can focus on real and nonnegative values of $\alpha_t$. In particular, we assume that $\alpha_0\ge0$ and we are interested in whether and under what conditions can the SE converge to the fixed point $(\alpha,\sigma^2)=(1,0)$. The following two values of $\delta$ will play critical roles in the analysis of SE:
\BE
\begin{split}
\deltaAMP&\Mydef\frac{64}{\pi^2}-4 \approx 2.5, \\
\deltaGlobal &\Mydef 2.
\end{split}\nonumber
\EE

Our next theorem reveals the importance of $\deltaAMP$. The proof of this theorem detailed in Section \ref{ssec:proofTheo:PhaseTransition_complex}.

\begin{theorem}[convergence of SE]\label{Theo:PhaseTransition_complex}
Consider the noiseless model where $\sigma^2_w=0$. If $\delta>\deltaAMP$, then for any $0<|\alpha_0|\le1$ and $\sigma_0^2\le1$, the sequences $\{\alpha_t\}_{t\ge1}$ and $\{\sigma^2_t\}_{t\ge1}$ defined in \eqref{Eqn:SE_complex} converge to
\[
\lim_{t\to\infty} |\alpha_t|=1\quad\text{and}\quad\lim_{t\to\infty} \sigma_t^2=0.
\] 
\end{theorem}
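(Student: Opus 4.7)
The plan is to analyze the two-dimensional deterministic recursion \eqref{Eqn:SE_complex} directly, reducing it to a scalar system via the phase invariance supplied by Lemma \ref{Lem:psi_phase_invariance_0}. Invoking that lemma, I may assume $\alpha_0 > 0$ is real, so $\alpha_t \geq 0$ for all $t$. With $\sigma^2_w = 0$, factoring $\alpha$ out of the denominator of \eqref{Eqn:map_expression_complex_a} shows that $\psi_1(\alpha, \sigma^2)$ depends on $(\alpha, \sigma)$ only through $\gamma \Mydef \sigma/\alpha$; explicitly,
\begin{equation*}
\alpha_{t+1} = F(\gamma_t), \qquad F(\gamma) \Mydef \int_0^{\pi/2}\frac{\sin^2\theta}{\sqrt{\sin^2\theta + \gamma^2}}\,\mr{d}\theta,
\end{equation*}
with $F$ continuous, strictly decreasing on $[0,\infty)$, $F(0)=1$, and $F(\infty)=0$. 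In particular $\alpha_t \in (0,1]$ for every $t \geq 1$, so it suffices to study the dynamics inside $\mathcal{R} \Mydef \{(\alpha,\sigma^2):\ 0 < \alpha \leq 1,\ \sigma^2 \geq 0\}$.

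Next I would classify all fixed points of \eqref{Eqn:SE_complex} lying in $\mathcal{R}$. Introducing $G(\gamma) \Mydef \int_0^{\pi/2} \frac{2\sin^2\theta + \gamma^2}{\sqrt{\sin^2\theta + \gamma^2}}\,\mr{d}\theta$ and $H(\gamma) \Mydef \int_0^{\pi/2} \frac{\mr{d}\theta}{\sqrt{\sin^2\theta + \gamma^2}}$, and using the decomposition $G(\gamma) = 2F(\gamma) + \gamma^2 H(\gamma)$, substitution of $\alpha^\star = F(\gamma^\star)$ into $\sigma^2 = \psi_2(\alpha,\sigma^2;\delta,0)$ collapses the fixed-point system to a single implicit equation in the scalar $\gamma^\star$. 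One checks directly that $(\alpha,\sigma^2) = (1,0)$ is always a fixed point. The core analytic step will be to show that this is the \emph{unique} fixed point in $\mathcal{R}$ exactly when $\delta > \deltaAMP = \tfrac{64}{\pi^2}-4$; I expect $\deltaAMP$ to emerge as the infimum of $\delta$ for which the reduced scalar equation admits no positive root $\gamma^\star > 0$, with the explicit constant $64/\pi^2-4$ produced by a limiting integral evaluation as $\gamma^\star \downarrow 0$.

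With uniqueness in hand, I would prove convergence from every $(\alpha_0,\sigma_0^2)$ with $0 < |\alpha_0| \leq 1$ and $\sigma_0^2 \leq 1$ in two stages. First, establish a forward invariance property for $\mathcal{R}$ (possibly after a single warm-up step, and possibly enlarging $\mathcal{R}$ to absorb a uniform a priori bound on $\sigma_t^2$). Second, exhibit a Lyapunov functional decreasing strictly off the fixed point; the natural candidate is $V(\alpha,\sigma^2) \Mydef (1-\alpha)^2 + \sigma^2$, which by \eqref{Eqn:AWGN_property} equals the asymptotic MSE and whose unique zero on $\mathcal{R}$ is $(1,0)$. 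Combining invariance, strict Lyapunov descent, and uniqueness of the fixed point rules out limit cycles and non-convergent trajectories, and forces $(\alpha_t,\sigma_t^2) \to (1,0)$.

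The main obstacle is the lack of smoothness of the SE map at the target fixed point: a short expansion gives $1 - F(\gamma) = \Theta\bigl(\gamma^2 \log(1/\gamma)\bigr)$ as $\gamma \downarrow 0$, so the Jacobian of the SE map at $(1,0)$ does not exist in the usual sense, and a spectral-radius-based local contraction argument is unavailable. Accordingly, both the uniqueness-of-fixed-point analysis and the Lyapunov descent step must be carried out globally, relying on the monotonicity of $F$, $G$, $H$ and on careful manipulation of the integrals defining them. This global, integral-level analysis is where I expect the bulk of the technical work to reside.
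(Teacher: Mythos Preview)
Your reduction via Lemma~\ref{Lem:psi_phase_invariance_0} and the ratio $\gamma=\sigma/\alpha$ is sound, and you are right that $\deltaAMP$ arises from the limiting $\gamma\downarrow 0$ behavior of the scalar fixed-point relation. But the heart of your plan---strict descent of $V(\alpha,\sigma^2)=(1-\alpha)^2+\sigma^2$---does not hold. The paper states explicitly that ``the dynamics of $(\alpha_t,\sigma_t^2)$ do not monotonically move toward the fixed point $(1,0)$,'' and indeed one can see why: above the nullcline $F_1^{-1}$ (i.e.\ $\alpha>F_1(\sigma^2)$) one has $\psi_1(\alpha,\sigma^2)<\alpha$, so $(1-\alpha)^2$ \emph{increases}, while below $F_2$ one has $\psi_2(\alpha,\sigma^2)>\sigma^2$, so $\sigma^2$ \emph{increases}. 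In each region one of the two terms in $V$ moves the wrong way, and there is no obvious weighting that makes the sum globally monotone. Your Lyapunov step would therefore fail as written.

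What the paper does instead is a nullcline/region argument. It introduces the two curves $F_1^{-1}(\alpha)$ (the nonzero fixed point of $\alpha=\psi_1(\alpha,\sigma^2)$, viewed as a function of $\alpha$) and $F_2(\alpha;\delta)$ (the attracting fixed point of $\sigma^2=\psi_2(\alpha,\sigma^2;\delta)$), and the key separation lemma is that $F_1^{-1}(\alpha)>F_2(\alpha;\delta)$ for all $\alpha\in(0,1)$ precisely when $\delta\geq\deltaAMP$. This replaces your fixed-point uniqueness step. Convergence is then obtained not by a Lyapunov function but by monotone \emph{bounding} sequences: one shows that $\tilde\alpha_{t+1}\Mydef\min\{\alpha_t,F_1(\sigma_t^2)\}$ is nondecreasing and $\tilde\sigma_{t+1}^2\Mydef\max\{\sigma_t^2,F_1^{-1}(\alpha_t)\}$ is nonincreasing once the orbit has entered the region $\{\sigma^2\le\pi^2/16\}$ and lies above an explicit lower barrier $L(\alpha;\delta)$. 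These sandwiching sequences, not $V$, are what force $(\alpha_t,\sigma_t^2)\to(1,0)$. The global integral analysis you anticipated is indeed present, but it is spent on proving $F_1^{-1}>F_2$ and $F_1^{-1}>L$ and on controlling $\psi_2$ on the region between $L$ and $F_1^{-1}$, rather than on a descent inequality for $V$.
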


Notice that $\alpha_0\neq0$ is essential for the success of AMP.A. This can be seen from the fact that $\alpha=0$ is always a fixed point of $\psi_1(\alpha,\sigma^2)$ for any $\sigma^2>0$.
From our definition of $\alpha_0$ in Theorem \ref{The:SE_complex}, $\alpha_0=0$ is equivalent to $\frac{1}{n}\langle \bm{x}_*,\bm{x}^0 \rangle =0$. This means that the initial estimate $\bm{x}^0$ cannot be orthogonal to the true signal vector $\bm{x}_*$, otherwise there is no hope to recover the signal no matter how large $\delta$ is.

The following theorem describes the importance of $\deltaGlobal$ and its proof can be found in Section \ref{Sec:proof_global_complex}.


\begin{theorem}[local convergence of SE]\label{Lem:fixed_point}
When $\sigma^2_w=0$, then $(\alpha,\sigma^2)=(1,0)$ is a fixed point of the SE in \eqref{Eqn:map_expression_complex}. Furthermore, if $\delta> \deltaGlobal$, then there exist two constants $\epsilon_1>0$ and $\epsilon_2>0$ such that the SE converges to this fixed point for any $\alpha_0\in(1-\epsilon_1,1)$ and $\sigma^2_0\in(0,\epsilon_2)$. On the other hand if $\delta< \deltaGlobal$, then the SE cannot converge to $(1,0)$ except when initialized there.
\end{theorem}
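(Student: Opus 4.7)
My plan has three steps: (i) verify by direct substitution that $(1,0)$ is a fixed point; (ii) obtain a singular asymptotic expansion of $\psi_1$ and $\psi_2$ near $(1,0)$ (the map fails to be smoothly differentiable there, so standard linearization is unavailable); (iii) use this expansion to deduce local convergence when $\delta>\deltaGlobal$ and a geometric escape mechanism when $\delta<\deltaGlobal$. Step~(i) is immediate: plugging $(\alpha,\sigma^2)=(1,0)$ into \eqref{Eqn:map_expression_complex_a}--\eqref{Eqn:map_expression_complex_b} gives $\psi_1(1,0)=\int_0^{\pi/2}\sin\theta\,d\theta=1$ and $\psi_2(1,0;\delta,0)=\frac{4}{\delta}\bigl(2-2\int_0^{\pi/2}\sin\theta\,d\theta\bigr)=0$.

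For Step~(ii), I would use the substitution $\theta\mapsto \pi/2-\phi$ to write $\alpha^2\sin^2\theta+\sigma^2=(\alpha^2+\sigma^2)(1-k^2\sin^2\phi)$ with $k=\alpha/\sqrt{\alpha^2+\sigma^2}$, which recasts $\psi_1$ and $\psi_2$ exactly in terms of the complete elliptic integrals $E(k)$ and $K(k)$. The regime $(\alpha,\sigma^2)\to (1,0)$ corresponds to $k\to 1$, i.e.\ $k'\Mydef\sqrt{1-k^2}=\sigma/\sqrt{\alpha^2+\sigma^2}\to 0$. Plugging in the classical singular expansions $K(k)=\ln(4/k')+O(k'^2\ln(1/k'))$ and $E(k)=1+\tfrac12 k'^2\bigl[\ln(4/k')-\tfrac12\bigr]+O(k'^4\ln(1/k'))$ and bookkeeping the leading terms, the logarithms inside $\psi_2$ cancel exactly, yielding
\begin{align*}
\psi_2(\alpha,\sigma^2;\delta,0) &= \tfrac{4(\alpha-1)^2}{\delta}+\tfrac{2\sigma^2}{\delta}\bigl(1+o(1)\bigr),\\
1-\psi_1(\alpha,\sigma^2) &= O\!\bigl(\sigma^2\ln(1/\sigma)\bigr),
\end{align*}
as $(\alpha,\sigma^2)\to(1,0)$.

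Writing $u_t\Mydef 1-\alpha_t\ge 0$ and $v_t\Mydef \sigma_t^2$, Step~(ii) gives the local recurrence $v_{t+1}=\tfrac{2 v_t}{\delta}+\tfrac{4 u_t^2}{\delta}+o(v_t+u_t^2)$ and $u_{t+1}=O\!\bigl(v_t\ln(1/v_t)\bigr)$. When $\delta>2$, I pick $\rho\in(2/\delta,1)$ and choose $\epsilon_1,\epsilon_2$ so small that the $o(\cdot)$ remainder is absorbed by the slack $\rho-2/\delta$; since $u_{t+1}^2=O(v_t^2\ln^2(1/v_t))=o(v_t)$ for small $v_t$, a short two-step induction yields $v_{t+1}\le\rho\, v_t$ and $u_{t+1}\to 0$, so $(\alpha_t,\sigma_t^2)\to(1,0)$. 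When $\delta<2$, I pick $\mu$ with $1<\mu<2/\delta$; the same expansion forces $v_{t+1}\ge\mu\, v_t$ whenever $v_t>0$ and the pair $(\alpha_t,v_t)$ lies in a small neighborhood of $(1,0)$. If a trajectory not starting at $(1,0)$ converged to $(1,0)$, it would eventually enter that neighborhood and $v_t$ would then grow by a factor $\mu>1$ each step, contradicting $v_t\to 0$. The one degenerate possibility, $v_{t_0}=0$ with $\alpha_{t_0}\ne 1$, is handled by $v_{t_0+1}=\tfrac{4(\alpha_{t_0}-1)^2}{\delta}>0$, so the trajectory cannot rest on $\{\sigma^2=0\}$ off the fixed point.

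The main obstacle is the non-differentiability of the SE map at $(1,0)$: $\partial_{\sigma^2}\psi_1$ diverges logarithmically, so Hartman--Grobman-style linearization is unavailable. The heart of the proof is therefore the elliptic-integral expansion in Step~(ii), which isolates an effective ``linearization'' whose only nontrivial eigenvalue is $2/\delta$ in the $v$-direction. Once that expansion is in hand the two-dimensional dynamical analysis is routine, but verifying that the $O(\sigma^2\ln(1/\sigma))$ growth of $u_t$ and the $(\alpha-1)^2$ contribution to $\psi_2$ are both genuinely lower-order---in the right sense for both the contracting and expanding regimes---is where the careful bookkeeping lies.
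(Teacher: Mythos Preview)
Your approach is correct and genuinely different from the paper's. You extract a singular asymptotic expansion of $(\psi_1,\psi_2)$ near $(1,0)$ via the $k'\to 0$ asymptotics of $E,K$, and then run what is effectively a one-dimensional contraction/expansion argument on $v_t=\sigma_t^2$, after showing the cross-term $u_t^2$ is lower order. The paper instead reuses its nullcline machinery: for $\delta>2$ it first proves the local ordering $F_1^{-1}(\alpha)>F_2(\alpha;\delta)$ on $(1-\epsilon,1)$ (via an elliptic-integral derivative computation at the endpoint $x=1$, yielding the critical value $\gamma-\tfrac12$), then invokes the \emph{strong} global attractiveness of $\psi_1$ and of $\psi_2$ (the latter available because $\alpha$ near $1$ exceeds $\alpha_\ast\approx 0.53$), partitions a rectangular neighborhood into sub-regions $\mathcal{R}^\epsilon_1,\mathcal{R}^\epsilon_{2a},\mathcal{R}^\epsilon_{2b}$, and builds monotone bounding sequences $B_1,B_2$ exactly as in the proof of Theorem~\ref{Theo:PhaseTransition_complex}. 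For $\delta<2$ the paper's argument is close to yours in spirit but is phrased through $\partial_{\sigma^2}\psi_2>1$ on a full neighborhood (so $\psi_2(\alpha,\sigma^2)-\sigma^2$ is increasing in $\sigma^2$ and hence $\ge \tfrac{4}{\delta}(\alpha-1)^2\ge 0$), rather than through your explicit two-term expansion and contradiction. Your route is more self-contained and bypasses the region-decomposition apparatus; the paper's route recycles structure already built for the global theorem and makes the geometric picture (nullcline ordering near the corner) explicit. One caveat: your ``short two-step induction'' hides real bookkeeping---after the first step, $u_1^2\ll v_1$ holds because $u_1=O(v_0\ln(1/v_0))$ while $v_1\gtrsim v_0$, and to propagate $u_t^2=o(v_t)$ you need both upper and lower comparisons $v_{t+1}\asymp v_t$; this all works, but deserves a line rather than a phrase.
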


\begin{figure}
\begin{center}
\includegraphics[width=.32\textwidth]{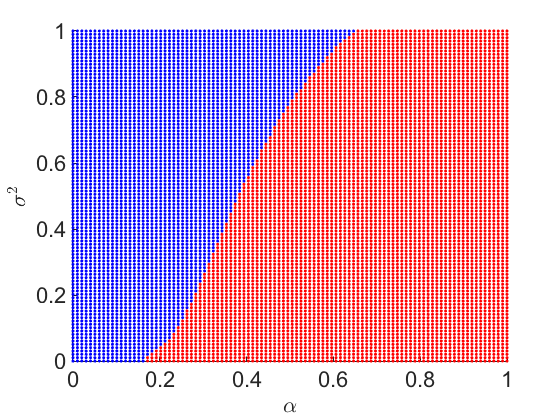}
\includegraphics[width=.32\textwidth]{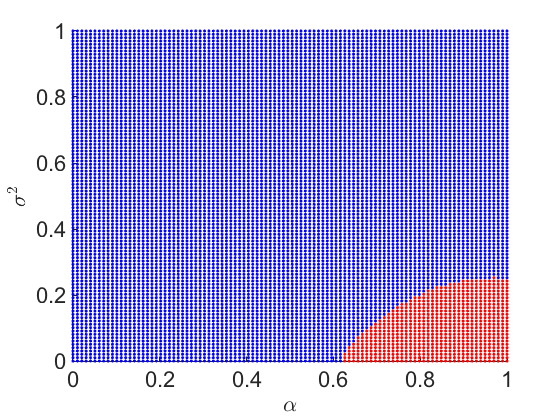}
\includegraphics[width=.32\textwidth]{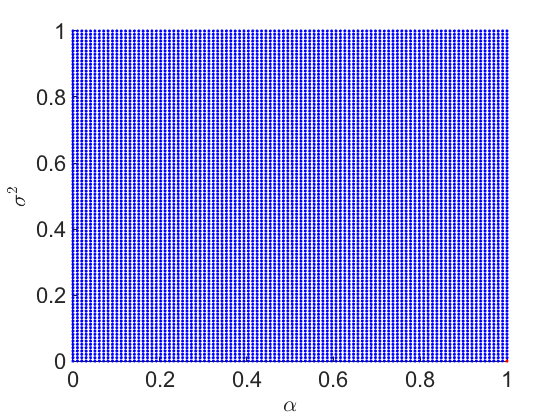}
\caption{The red region exhibits the basin of attraction of $(\alpha, \sigma^2) =(1,0)$. From left to right $\delta= 2.45$, $\delta= 2.3$, $\delta= 2.1$. Note that the basin of attraction of $(1,0)$ in the case of $\delta = 2.1$ is a really small region in the bottom-right corner of the graph. The results are obtained by running the state evolution (SE) of AMP.A (complex-valued version) with $\alpha_0$ and $\sigma^2_0$ chosen from $100\times 100$ values equispaced in $[0,1]\times [0,1]$. W}
\label{fig:basinattract}
\end{center}
\end{figure}

According to Theorem \ref{Lem:fixed_point}, with proper initialization, SE can potentially converge to $(\alpha, \sigma^2)$ even if $\deltaGlobal < \delta < \deltaAMP$. However, there are two points we should emphasize here: (i) we find that when $\delta< \deltaAMP$, standard initialization techniques, such as the spectral method, do not help $\ampa$ converge to $\bm{x}_*$. Hence, the question of finding initialization in the basin of attraction of $(\alpha, \sigma^2)= (1,0)$ (when $\delta< \deltaAMP$) remains open for future research. (In Appendix \ref{sec:spectral}, we briefly discuss how we combine spectral initialization with AMP.A. More details will be reported in our forthcoming paper \cite{Plan}.) (ii) As $\delta$ decreases from $\deltaAMP$ to $\deltaGlobal$ the basin of attraction of $(\alpha, \sigma^2)=  (1,0)$ shrinks. Check the numerical results in Figure \ref{fig:basinattract}.

\subsection{Noise sensitivity}
So far we have only discussed the performance of $\ampa$ in the ideal setting where the noise is not present in the measurements. In general, one can use \eqref{Eqn:SE_complex} to calculate the asymptotic MSE (AMSE) of $\ampa$ as a function of the variance of the noise and $\delta$. However, as our next theorem demonstrates it is possible to obtain an explicit and informative expression for AMSE of $\ampa$ in the high signal-to-noise ratio (SNR) regime.

\begin{theorem}[noise sensitivity]\label{thm:noisesens_comp}
Suppose that $\delta>\deltaAMP=\frac{64}{\pi^2}-4$ and $0<|\alpha_0|\le1$ and $\sigma_0^2<1$. Then, in the high SNR regime the asymptotic MSE defined in \eqref{Eqn:AMSE_def} behaves as
\BE
\lim_{\sigma^2_w\to0}\frac{\mr{AMSE}(\sigma^2_w,\delta)}{\sigma^2_w} = \frac{4}{1 -\frac{2}{\delta}}.\nonumber
\EE
\end{theorem}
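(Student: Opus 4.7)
The plan is to translate the AMSE statement into an asymptotic analysis of the state evolution fixed point and then expand $\psi_1,\psi_2$ around $(\alpha,\sigma^2)=(1,0)$. By Theorem~\ref{The:SE_complex}, for any initialization satisfying \eqref{Eqn:def_alpha_sigma},
\[
\mr{AMSE}(\sigma_w^2,\delta)=\lim_{t\to\infty}\bigl[(1-|\alpha_t|)^2+\sigma_t^2\bigr],
\]
and by Lemma~\ref{Lem:psi_phase_invariance_0} I may take $\alpha_t\ge 0$ real throughout. When $\sigma_w^2=0$, Theorem~\ref{Theo:PhaseTransition_complex} gives $(\alpha_t,\sigma_t^2)\to(1,0)$ under the hypotheses on $(\alpha_0,\sigma_0^2)$. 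So the task reduces to showing that (i) for all sufficiently small $\sigma_w^2>0$ the SE still converges, from the same initial data, to a unique fixed point $(\alpha^*(\sigma_w^2),\sigma^{*2}(\sigma_w^2))$ close to $(1,0)$, and (ii) this fixed point obeys $(1-\alpha^*)^2+\sigma^{*2}=\frac{4\sigma_w^2}{1-2/\delta}+o(\sigma_w^2)$.

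For (i), I would deduce the continuation from the noiseless result: the proof of Theorem~\ref{Theo:PhaseTransition_complex} produces a Lyapunov/contraction structure on $\{0<\alpha\le 1,\,0\le\sigma^2\le 1\}$ for $\delta>\deltaAMP$, and because $\psi_2(\alpha,\sigma^2;\delta,\sigma_w^2)=\psi_2(\alpha,\sigma^2;\delta,0)+4\sigma_w^2$ depends affinely on $\sigma_w^2$ while $\psi_1$ is independent of it, a standard perturbation argument shows that the trajectory enters and remains in an arbitrarily small neighborhood of the perturbed fixed point for small enough $\sigma_w^2$. (Existence and local uniqueness of the perturbed fixed point follow from a direct iteration using the Taylor estimates below; the classical implicit function theorem does not apply because $\partial_{\sigma^2}\psi_1$ has a logarithmic blow-up at $(1,0)$, and this is the technically delicate point.)

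For (ii), the key step is a Taylor expansion using the probabilistic representation
\[
\psi_2(\alpha,\sigma^2;\delta,\sigma_w^2)=4\,\mb{E}\bigl(|P|-|Z|\bigr)^2+4\sigma_w^2,\qquad P=\alpha Z+\sigma B,
\]
where $Z,B\sim\mathcal{CN}(0,1/\delta)$ are independent (the $+4\sigma_w^2$ follows because $W\sim\mathcal{CN}(0,\sigma_w^2)$ is independent of the real quantity $|P|-|Z|$ with $\mb{E}W=0$). For $\alpha>0$ and small $\sigma$, write
\[
|\alpha Z+\sigma B|-|Z|=(\alpha-1)|Z|+\frac{\sigma\,\mr{Re}(\bar Z B)}{|Z|}+O(\sigma^2),
\]
and use $\mb{E}|Z|^2=1/\delta$ together with $\mb{E}\bigl[(\mr{Re}(\bar Z B))^2\bigm|Z\bigr]=|Z|^2/(2\delta)$ to obtain
\[
\psi_2(\alpha,\sigma^2;\delta,\sigma_w^2)=\frac{4(\alpha-1)^2}{\delta}+\frac{2\sigma^2}{\delta}+4\sigma_w^2+o(\sigma^2)+O\bigl((\alpha-1)\sigma^2\bigr).
\]
For $\psi_1$, I would use \eqref{Eqn:map_expression_complex_a} to check first that $\psi_1(\alpha,0)=1$ for every $\alpha>0$ (so $\partial_\alpha\psi_1|_{(1,0)}=0$), and then by splitting the integral at $\theta\sim\sigma/\alpha$ obtain the asymptotic
\[
\psi_1(\alpha,\sigma^2)=1-\frac{\sigma^2}{4\alpha^2}\log\!\frac{\alpha^2}{\sigma^2}+O(\sigma^2),
\]
which shows that any fixed point $(\alpha^*,\sigma^{*2})$ close to $(1,0)$ satisfies $\alpha^*-1=O\!\bigl(\sigma^{*2}\log(1/\sigma^{*2})\bigr)$, hence $(\alpha^*-1)^2=o(\sigma^{*2})$.

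Plugging these into the fixed-point equations yields
\[
\sigma^{*2}\!\left(1-\frac{2}{\delta}\right)=\frac{4(\alpha^*-1)^2}{\delta}+4\sigma_w^2+o(\sigma^{*2})=4\sigma_w^2+o(\sigma_w^2),
\]
so $\sigma^{*2}=\frac{4\sigma_w^2}{1-2/\delta}(1+o(1))$ and $(\alpha^*-1)^2=o(\sigma_w^2)$. Therefore $\mr{AMSE}=\sigma^{*2}+(1-\alpha^*)^2=\frac{4\sigma_w^2}{1-2/\delta}+o(\sigma_w^2)$, and dividing by $\sigma_w^2$ before letting $\sigma_w^2\to 0$ gives the claim. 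The main obstacle is paragraph (i): namely, upgrading the noiseless convergence of Theorem~\ref{Theo:PhaseTransition_complex} to the noisy SE while working around the logarithmic singularity of $\partial_{\sigma^2}\psi_1$ at $(1,0)$. Fortunately this singularity affects only the slow drift of $\alpha_t-1$, which contributes at order $\sigma_w^2\log(1/\sigma_w^2)$ and is absorbed into the $o(\sigma_w^2)$ error, so the leading constant $4/(1-2/\delta)$ is unaffected.
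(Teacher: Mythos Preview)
Your argument is correct in outline and reaches the right constant, but the paper takes a cleaner route that sidesteps the obstacle you flag. You work in $(\alpha,\sigma^2)$ coordinates, note that the implicit function theorem fails because $\partial_{\sigma^2}\psi_1$ blows up logarithmically at $(1,0)$, and compensate by showing $(\alpha^*-1)^2=o(\sigma^{*2})$ so that only the $\sigma^{*2}$ balance matters at leading order. The paper instead reparametrizes via $s=\sigma/\alpha$: the equation $\alpha=\psi_1(\alpha,\sigma^2)$ becomes $\alpha=\phi_1(s)$, and substituting this together with $\sigma^2=s^2\phi_1^2(s)$ into the $\psi_2$ fixed-point equation collapses the system to a single scalar relation $T(s^2,\sigma_w^2)=0$ (equation \eqref{eq:TeqFPs}). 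In this variable the logarithmic divergences cancel, $\partial_{s^2}T|_{s^2=0}=1-2/\delta$ is finite, and the classical implicit function theorem applies directly to give $s^2/\sigma_w^2\to 4/(1-2/\delta)$; one then checks $\mr{AMSE}/s^2\to 1$ separately via the elliptic-integral asymptotics of $\phi_1$. So the IFT \emph{is} available after all, just not in your coordinates---the reparametrization is what buys the paper a clean one-variable computation in place of your two-equation expansion with a log-singular coupling.

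Two caveats on your sketch. Your probabilistic expansion of $\psi_2$ is morally right, but the pointwise $O(\sigma^2)$ remainder involves negative powers of $|Z|$ whose expectations need care; the rigorous route is to use the exact formula \eqref{Eqn:map_expression_complex_b} together with $\partial_{\sigma^2}\psi_2|_{(1,0)}=2/\delta$ from Lemma~\ref{lem:psi2}(i), which yields your expansion without the interchange-of-limits issue. More substantively, your part (i) is not ``standard'': the paper spends Lemma~\ref{Lem:noisy} (seven parts) and Lemma~\ref{Lem:SE_converge_noisy} re-establishing global convergence, because under the $+4\sigma_w^2$ shift the curve $F_2$ now crosses $F_1^{-1}$ at an interior point and loses monotonicity in $\alpha$, so the region decomposition of Theorem~\ref{Theo:PhaseTransition_complex} must be redone with additional subregions near the new fixed point rather than merely perturbed.
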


The proof of this theorem can be found in Appendix \ref{Sec:proof_noise}.

\section{Extension to real-valued signals}\label{sec:real_valued}
Until now our focus is on complex-valued signals. In this section, our goal is to extend our results to real-valued signals. Since most of the results are similar to the complex-valued case, we will skip the details and only emphasize on the main differences. 

\subsection{$\ampa$ Algorithm}\label{Sec:algorithm_real}
In the real-valued case, $\ampa$ uses the following iterations:
\BS \label{Eqn:AMP_real}
\begin{align}
\bm{x}^{t+1} &=-\mr{div}_p(g_t)\cdot \bm{x}^t +\bm{A}^{\UT} g(\bm{p}^t,\bm{y}), \nonumber\\ 
\bm{p}^t &= \bm{Ax}^t -\frac{1}{\delta}g(\bm{p}^{t-1},\bm{y}), \nonumber\label{Eqn:AMP_real_b}
\end{align}
where $g(p,y):\mathbb{R}\times\mathbb{R}_{+}\mapsto \mathbb{R}$ is given by
\BE \label{Eqn:g_def_real}
g(p,y)\Mydef y\cdot \mr{sign}(p)-p, \nonumber
\EE
\ES
where $\mr{sign}(p)$ denotes the sign of $p$. We emphasize that the divergence term $\mr{div}_p(g_t)$ contains a Dirac delta at $0$ due to the discontinuity of the sign function. This makes the calculation of the divergence in the $\ampa$ algorithm tricky. One can use the smoothing idea we discussed in Section \ref{ssec:rigorousSE}. Alternatively, there are several possible approaches to estimate the divergence term. These practical issues will be discussed in details in our follow-up paper \cite{Plan}.  

\subsection{Asymptotic Analysis}\label{Sec:algorithm_real}
Our analysis is based on the same asymptotic framework detailed in Section \ref{Sec:asym_framework}. The only difference is that the measurement matrix is now real Gaussian with $A_{ij}\sim\mathcal{N}(0,1/m)$ and $w_a\sim\mathcal{N}(0,\sigma^2_w)$. In the real-valued setting, the state evolution (SE) recursion of $\ampa$ in \eqref{Eqn:AMP_real} becomes the following. 
\begin{definition}\label{Def:SE_map_real}
Starting from fixed $(\alpha_0,\sigma_0^2)\in\mathbb{R}\times\mathbb{R}_+\backslash(0,0)$ the sequences $\{\alpha_t\}_{t\ge1}$ and $\{\sigma^2_t\}_{t\ge1}$ are generated via the following iterations: 
\BE \label{Eqn:SE_real}
\begin{split}
\alpha_{t+1} &= \psi_{1}(\alpha_t,\sigma_t^2),\\
\sigma^2_{t+1} &= \psi_{2}(\alpha_t,\sigma_t^2;\delta,\sigma^2_w),
\end{split}
\EE
where, with some abuse of notations, $\psi_{1}:\mathbb{R}\times\mathbb{R}_+\mapsto\mathbb{R}$ and $\psi_{2}:\mathbb{R}\times\mathbb{R}_+\mapsto\mathbb{R}_+$ are now defined as
\BE
\begin{split}
\psi_1(\alpha,\sigma^2)  &= \mathbb{E}[\partial_z g(P,|Y|)]=\mathbb{E}[\mr{sign}(Z\,P)],\\
\psi_2(\alpha,\sigma^2;\delta,\sigma^2_w) &= \mathbb{E}[g^2(P,|Y|)]=\mathbb{E}\left[(|Z|-|P|+W)^2\right].
\end{split}\nonumber
\EE
The expectations are over the following random variables: $Z\sim\mathcal{N}(0,1/\delta)$, $P=\alpha Z + \sigma B$ where $B\sim\mathcal{N}(0,1/\delta)$ is independent of $Z$, and $Y=|Z|+W$ where $W\sim\mathcal{N}(0,\sigma^2_w)$ independent of both $Z$ and $B$.
\end{definition}

In Appendix \ref{Sec:SE_derivations_real}, we derived the following closed-form expressions of $\psi_1$ and $\psi_2$:
\BS \label{Eqn:map_expression_real}
\begin{align}
\psi_{1}(\alpha,\sigma^2)  &= \frac{2}{\pi}\mr{arctan}\left( \frac{\alpha}{\sigma} \right),\label{Eqn:map_expression_real_a}\\
\psi_{2}(\alpha,\sigma^2;\delta,\sigma^2_w) &= \frac{1}{\delta}\left[\alpha^2+\sigma^2+1-\frac{4\sigma}{\pi} -\frac{4\alpha}{\pi}\mr{arctan}\left( \frac{\alpha}{\sigma} \right)\right]+\sigma^2_w. \label{Eqn:map_expression_real_b}
\end{align}
\ES

As in the complex-valued case, we would like to study the dynamics of these two equations. The following lemma simplifies the analysis.

\begin{lemma}\label{lem:psi1_real} 
$\psi_1\left(\alpha,\sigma^2\right)$ and $\psi_2 (\alpha,\sigma^2)$ in \eqref{Eqn:map_expression_real}  and \eqref{Eqn:map_expression_real_b} have the following properties: 
\begin{enumerate}
\item[(i)] $\psi_1 (\alpha, \sigma^2) = \psi_1(|\alpha|, \sigma^2)\cdot {\rm sign} (\alpha)$. 
\item[(ii)] $\psi_2(\alpha, \sigma^2) = \psi_2(|\alpha|, \sigma^2)$. 
\end{enumerate}
\end{lemma}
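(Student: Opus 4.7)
The plan is to simply exploit the closed-form expressions \eqref{Eqn:map_expression_real_a}--\eqref{Eqn:map_expression_real_b} that have already been derived in Appendix \ref{Sec:SE_derivations_real}. Since the arctangent is an odd function, both claims reduce to one-line observations, and no probabilistic reasoning is required beyond what is already packaged into those formulas. The main ``obstacle'' is really just to verify the symmetry algebraically, so the proof is essentially a calculation.

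For part (i), I would write $\alpha = \mr{sign}(\alpha)\cdot |\alpha|$ and use the fact that $\mr{arctan}$ is odd to conclude
\[
\psi_1(\alpha,\sigma^2)=\frac{2}{\pi}\mr{arctan}\!\left(\frac{\alpha}{\sigma}\right)=\mr{sign}(\alpha)\cdot\frac{2}{\pi}\mr{arctan}\!\left(\frac{|\alpha|}{\sigma}\right)=\mr{sign}(\alpha)\cdot \psi_1(|\alpha|,\sigma^2).
\]
This immediately yields the claim.

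For part (ii), I would examine each term of $\psi_2$ in \eqref{Eqn:map_expression_real_b} separately. The terms $\alpha^2$, $\sigma^2$, $1$, $4\sigma/\pi$, and $\sigma_w^2$ are manifestly invariant under $\alpha\mapsto|\alpha|$ (the first because $\alpha^2=|\alpha|^2$, the rest because they do not depend on $\alpha$). For the remaining term, apply the odd-function identity again:
\[
\frac{4\alpha}{\pi}\mr{arctan}\!\left(\frac{\alpha}{\sigma}\right)=\frac{4\,\mr{sign}(\alpha)|\alpha|}{\pi}\cdot\mr{sign}(\alpha)\,\mr{arctan}\!\left(\frac{|\alpha|}{\sigma}\right)=\frac{4|\alpha|}{\pi}\mr{arctan}\!\left(\frac{|\alpha|}{\sigma}\right),
\]
since $\mr{sign}(\alpha)^2=1$ for $\alpha\neq 0$ (and the claim is trivial for $\alpha=0$). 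Combining these observations gives $\psi_2(\alpha,\sigma^2)=\psi_2(|\alpha|,\sigma^2)$, which completes the proof.

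Because all the work has already been done in deriving \eqref{Eqn:map_expression_real}, there is no real obstacle here; one could even note that this lemma is simply the real-valued analogue of Lemma~\ref{Lem:psi_phase_invariance_0} and prove it as an immediate corollary of the closed form. I would keep the write-up to a few lines.
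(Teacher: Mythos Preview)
Your proposal is correct and matches the paper's approach: the paper does not supply an explicit proof for this lemma, treating it as an immediate consequence of the closed-form expressions \eqref{Eqn:map_expression_real_a}--\eqref{Eqn:map_expression_real_b}, which is precisely the route you take via the oddness of $\mr{arctan}$.
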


Again the following two values of $\delta$ play a critical role in the performance of AMP:
\BE
\begin{split}
\delta_{\rm AMP} &= \frac{\pi^2}{4}-1\approx1.47 , \\
\delta_{\rm global} &= 1+\frac{4}{\pi^2}\approx 1.40.
\end{split}\nonumber
\EE

The following two theorems correspond to Theorems  \ref{Theo:PhaseTransition_complex} and \ref{Lem:fixed_point} that explain the dynamics of SE for complex-valued signals. The proofs can be found in Section \ref{Sec:SE_conver_proof} and Section \ref{ssec:prooflemmaglobalminimum} respectively.

\begin{theorem}[convergence of SE]\label{The:PT_real}
Suppose that $\delta>\deltaAMP=\frac{\pi^2}{4}-1$ and $\sigma^2_w=0$. For any $\alpha_0\in\mathbb{R}\backslash0$ and $\sigma^2_0<\infty$, the sequences $\{\alpha_t\}_{t\ge1}$ and $\{\sigma^2_t\}_{t\ge1}$ defined in \eqref{Eqn:SE_real} converge:
\[
\lim_{t\to\infty} | \alpha_t | =1\quad\text{and}\quad\lim_{t\to\infty}  \sigma^2_t =0.
\]
\end{theorem}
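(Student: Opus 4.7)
My plan is to analyse the two-dimensional SE map \eqref{Eqn:SE_real} on the positive half-plane $\{\alpha>0,\,\sigma^2\ge 0\}$ in three stages: compactness of the orbit, enumeration of fixed points with their local stability, and a global Lyapunov / LaSalle argument. By Lemma \ref{lem:psi1_real} I may assume $\alpha_0>0$; since $\psi_1(\alpha,\sigma^2)=\tfrac{2}{\pi}\arctan(\alpha/\sigma)\in(0,1)$ when $\alpha>0$, the sign is preserved and $\alpha_t\in(0,1)$ for all $t\ge 1$. Dropping the negative terms in \eqref{Eqn:map_expression_real_b} with $\sigma_w^2=0$ yields the crude bound $\delta\sigma_{t+1}^2\le\alpha_t^2+\sigma_t^2+1\le 2+\sigma_t^2$, hence $\sigma_t^2\le \max\bigl(\sigma_0^2,\tfrac{2}{\delta-1}\bigr)$ for $\delta>1$. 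The orbit therefore lives in a compact set, so LaSalle / accumulation arguments apply.

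Next I would enumerate the fixed points. Substituting $\alpha=\tfrac{2}{\pi}\arctan(\alpha/\sigma)$ into $\psi_2$ simplifies the cross term via $\tfrac{4\alpha}{\pi}\arctan(\alpha/\sigma)=2\alpha^2$, so the fixed-point equation for $\sigma^2$ collapses to $(\delta-1)\sigma^2 + \tfrac{4\sigma}{\pi} + \alpha^2 = 1$. Coupled with $\sigma=\alpha\cot(\pi\alpha/2)$ on $\alpha\in(0,1)$, this yields a single transcendental equation $F(\alpha)=0$ that I would show has $\alpha=1$ as its only root in $(0,1]$ whenever $\delta>\deltaAMP$ (note $F(0^+)=\tfrac{4(\delta+1)}{\pi^2}-1$, which vanishes at $\delta=\deltaAMP$ and is strictly positive beyond). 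The boundary $\alpha=0$ always carries an additional fixed point $(0,\sigma_0^{*2})$, where $\sigma_0^*$ is the positive root of $(\delta-1)x^2+4x/\pi-1=0$; computing $\partial_\alpha\psi_1$ at this point gives $\tfrac{2}{\pi\sigma_0^*}$, which is strictly less than $1$ iff $\sigma_0^*<2/\pi$. Substituting $\sigma=2/\pi$ into the defining quadratic recovers exactly $\delta=\tfrac{\pi^2}{4}-1=\deltaAMP$, and implicit differentiation shows $\sigma_0^*$ is strictly decreasing in $\delta$; hence $\delta>\deltaAMP$ is the sharp condition under which the boundary fixed point is repelling in the $\alpha$-direction.

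The remaining step is to turn local repulsion of $(0,\sigma_0^{*2})$ and local attraction of $(1,0)$ into a global convergence statement. My plan is a Lyapunov / LaSalle argument on the compact invariant set: build $V(\alpha,\sigma^2)$ — a natural candidate is a weighted quadratic $a(1-\alpha)^2+b\sigma^2$ with coefficients tuned to the degenerate linearisation at $(1,0)$ — and verify $V(\psi_1,\psi_2)\le V$ with equality only at fixed points. LaSalle's invariance principle then forces the $\omega$-limit set into the fixed-point set, and combined with $\alpha_t>0$ together with the repelling nature of $(0,\sigma_0^{*2})$, this leaves only $(\alpha_t,\sigma_t^2)\to(1,0)$ as the possible limit.

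The main obstacle is producing a descent inequality for $V$ that is tight precisely at $\delta=\deltaAMP$. A direct Jacobian contraction is ruled out because $\partial\psi_1/\partial\sigma=-\alpha/[\pi\sigma(\sigma^2+\alpha^2)]$ blows up at $(1,0)$, so the linearisation is degenerate there; a standard local two-step analysis, obtained by eliminating $\sigma_t$ via the identity $\sigma_t=\alpha_t\tan(\pi\alpha_{t+1}/2)$ and linearising $u_t\triangleq 1-\alpha_t$ near $0$, yields only the weaker threshold $\deltaGlobal=1+4/\pi^2$. To reach the sharp global threshold $\deltaAMP$ one must exploit the concavity of $\arctan$ and the algebraic cancellations in \eqref{Eqn:map_expression_real_b} across the entire invariant set, reflecting the fact that $\deltaAMP$ is dictated by the stability of the off-axis fixed point rather than by the local picture at $(1,0)$.
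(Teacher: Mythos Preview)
Your preliminary reductions are sound: sign preservation via Lemma~\ref{lem:psi1_real}, the bound $\alpha_t\in(0,1)$ for $t\ge1$, and orbit compactness all match the paper. Your identification of the boundary fixed point $(0,\sigma_0^{\ast 2})$ and the computation tying its $\alpha$-instability to the threshold $\deltaAMP$ is correct and illuminating (modulo a sign slip: $\partial_\alpha\psi_1=\tfrac{2}{\pi\sigma_0^\ast}$ is \emph{greater} than $1$ when $\sigma_0^\ast<2/\pi$, not less).

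There are, however, two genuine gaps. First, you assert that the interior fixed-point equation has $\alpha=1$ as its only root for $\delta>\deltaAMP$, but you supply only the endpoint value $F(0^+)$; showing $F$ has no zero on $(0,1)$ is precisely the content of Lemma~\ref{Lem:F1_F2_real} (equivalently, $F_1^{-1}(\alpha)>F_2(\alpha;\delta)$ on $(0,1)$), and the paper's proof (Section~\ref{ssec:proofF1F2relation}) requires a reparametrisation, a case split, monotonicity of auxiliary functions $G_1,G_2$, and numerical verification on a grid of subintervals. Second, your global convergence step is only a plan: you propose a quadratic Lyapunov function, then immediately observe that the Jacobian at $(1,0)$ is degenerate and that a two-step linearisation yields only $\deltaGlobal$. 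You never construct a $V$ with the required descent property, and without it LaSalle is unavailable; in a two-dimensional discrete system, uniqueness of the attracting fixed point plus instability of the boundary one does not by itself rule out periodic orbits or more complicated $\omega$-limit sets.

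The paper avoids Lyapunov functions entirely. Its global argument is geometric: once the nullcline inequality $F_1^{-1}(\alpha)>F_2(\alpha;\delta)$ is established, the phase plane is partitioned into regions $\mathcal{R}_0,\mathcal{R}_1,\mathcal{R}_{2a},\mathcal{R}_{2b}$ (Definition~\ref{Def:region_real}), and a lower-envelope curve $L(\alpha;\delta)$ (Definition~\ref{def:L_realvaleued}) confines the orbit after one step. Convergence in $\mathcal{R}_1\cup\mathcal{R}_{2a}$ is then obtained by constructing monotone bounding sequences $B_1(\alpha_t,\sigma_t^2)=\min\{\alpha_t,F_1(\sigma_t^2)\}$ and $B_2(\alpha_t,\sigma_t^2)=\max\{\sigma_t^2,F_1^{-1}(\alpha_t)\}$ (Lemma~\ref{Lem:regionI_III_real}), which exploit the separate monotonicity of $\psi_1,\psi_2$ in each argument rather than any spectral property of the Jacobian. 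This sidesteps the degeneracy at $(1,0)$ that blocks your approach, but the price is the delicate nullcline comparison and the case analysis in Lemma~\ref{Lem:RegionIII_bound_real}.
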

Note that in Theorem \ref{The:PT_real} the sequences converge for any $\sigma^2_0<\infty$. This result is stronger than the complex-valued counterpart, which requires $0<|\alpha_0|\le1$ and $\sigma^2_0\le1$ (see Theorem \ref{Theo:PhaseTransition_complex}).

\begin{theorem}[local convergence of SE]\label{lem:global_real}
For the noiseless setting where $\sigma^2_w=0$, $(\alpha,\sigma^2)=(1,0)$ is a fixed point of the SE in \eqref{Eqn:map_expression_complex}. Furthermore, if $\delta> \deltaGlobal$, then there exist two constants $\epsilon_1>0$ and $\epsilon_2>0$ such that the SE converges to this fixed point for any $\alpha_0\in(1-\epsilon_1,1)$ and $\sigma^2_0\in(0,\epsilon_2)$. On the other hand if $\delta< \deltaGlobal$, then the SE cannot converge to $(1,0)$ except when initialized there.
\end{theorem}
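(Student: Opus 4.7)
The plan is to Taylor-expand the SE map \eqref{Eqn:map_expression_real} about the candidate fixed point in the local coordinates $a_t \Mydef 1-\alpha_t$ and $s_t \Mydef \sigma_t^2$, reduce the dynamics to a $2\times 2$ linear recursion on $V_t \Mydef (a_t^2, s_t)^{\UT}$, and show that stability of that recursion flips exactly at $\delta=\deltaGlobal$.

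That $(1,0)$ is a fixed point is immediate from \eqref{Eqn:map_expression_real}: as $\sigma\to 0^{+}$, $\psi_1(1,\sigma^2)\to 1$ and $\psi_2(1,\sigma^2;\delta,0)\to (1/\delta)(2-2)=0$. For $\alpha,\sigma>0$ I would use the identity $\arctan(\alpha/\sigma)=\pi/2-\arctan(\sigma/\alpha)$ and the Taylor expansion $\arctan(x)=x-x^3/3+O(x^5)$. Plugging $\alpha_t=1-a_t$ with $a_t,\sigma_t$ small and positive, a short calculation gives
\[
a_{t+1}\;=\;\tfrac{2\sigma_t}{\pi}(1+a_t) + O(a_t^2\sigma_t+\sigma_t^3),\qquad \sigma_{t+1}^2\;=\;\tfrac{1}{\delta}\big[a_t^2+\sigma_t^2 + O(\sigma_t^3)\big],
\]
where the absence of a first-order-in-$\sigma_t$ term in $\sigma_{t+1}^2$ arises because the leading contribution of $(4\alpha/\pi)\arctan(\sigma/\alpha)$ cancels the $-4\sigma/\pi$ in \eqref{Eqn:map_expression_real_b}. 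Squaring the first identity, the SE collapses to
\[
V_{t+1}\;=\;MV_t + r(V_t),\qquad M\;=\;\begin{pmatrix}0 & 4/\pi^2 \\ 1/\delta & 1/\delta\end{pmatrix},\qquad \|r(V_t)\|\;\le\;C\|V_t\|^{3/2}.
\]
The characteristic polynomial of $M$ is $\lambda^2 - \lambda/\delta - 4/(\pi^2\delta)$, which at $\lambda=1$ equals $1-\deltaGlobal/\delta$; hence the spectral radius $\rho(M)$ crosses $1$ exactly at $\delta=\deltaGlobal$, which is the origin of the threshold.

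For the local-convergence half ($\delta>\deltaGlobal$), I would choose a norm $\|\cdot\|_{\ast}$ in which $\|M\|_{\ast}\le\rho'<1$; the recursion then gives $\|V_{t+1}\|_{\ast}\le(\rho'+C'\|V_t\|_{\ast}^{1/2})\|V_t\|_{\ast}$. Picking $\epsilon_1,\epsilon_2>0$ small enough that $\rho'+C'\|V_0\|_{\ast}^{1/2}<1$ whenever $a_0\in(0,\epsilon_1)$ and $s_0\in(0,\epsilon_2)$, a routine induction yields geometric decay $V_t\to 0$, i.e.\ $\alpha_t\to 1$ and $\sigma_t^2\to 0$. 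For the non-convergence direction ($\delta<\deltaGlobal$), $\rho(M)>1$; since $M$ is nonnegative with both off-diagonal entries positive (hence primitive), Perron--Frobenius supplies a strictly positive left eigenvector $u$ with $u^{\UT}M=\lambda_+ u^{\UT}$, $\lambda_+>1$. Because $V_t\ge 0$ component-wise and a direct inspection of the SE shows $V_t\ne 0$ for all $t$ whenever $V_0\ne 0$, we have $u^{\UT}V_t>0$, and the remainder bound gives $u^{\UT}V_{t+1}\ge(\lambda_+ - C''\|V_t\|^{1/2})\,u^{\UT}V_t$. Assuming for contradiction that $V_t\to 0$, eventually $\|V_t\|$ becomes so small that the factor exceeds $1$, forcing $u^{\UT}V_t$ to grow geometrically---a contradiction.

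The main technical obstacle is the non-smoothness of $\psi_1$ in the $\sigma^2$-direction at $\sigma=0$: the na\"ive Jacobian of $(\alpha,\sigma^2)\mapsto(\psi_1,\psi_2)$ blows up there, which is precisely why the state must be taken as $(a^2,\sigma^2)$ rather than $(a,\sigma)$, and why the $O(\|V_t\|^{3/2})$ remainder has to be tracked carefully in both halves. Apart from this coordinate choice, the argument closely parallels the proof of the complex-valued counterpart, Theorem~\ref{Lem:fixed_point}.
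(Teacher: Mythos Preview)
Your approach is correct and genuinely different from the paper's. The paper does not linearize; instead, for $\delta>\deltaGlobal$ it first proves a local nullcline separation $F_1^{-1}(\alpha)>F_2(\alpha;\delta)$ for $\alpha$ near $1$ (Lemma~\ref{Lem:F1_F2_real_local}, via a Taylor expansion of the fixed-point relation in the variable $s=\cot(\tfrac{\pi}{2}\alpha)$), and then recycles the region/bounding machinery $(B_1,B_2)$ from Section~\ref{Sec:proof_global_complex} to trap the trajectory. For $\delta<\deltaGlobal$ the paper only sketches ``an argument similar to Section~\ref{Sec:complex_local_converse}'', i.e., that $\psi_2(\alpha,\sigma^2)>\sigma^2$ near $(1,0)$; but since $\partial_{\sigma^2}\psi_2(1,0)=1/\delta$, for $1<\delta<\deltaGlobal$ this inequality fails on a full neighborhood (it holds only on the cone $a^2>(\delta-1)\sigma^2$), and the paper leaves the interaction with the $\alpha$-dynamics implicit. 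Your Perron--Frobenius argument on $M$ handles precisely this coupling and makes the threshold transparent: the characteristic polynomial $\lambda^2-\lambda/\delta-4/(\pi^2\delta)$ has a root at $\lambda=1$ exactly when $\delta=1+4/\pi^2$. The coordinate choice $(a^2,\sigma^2)$ is the right fix for the non-smoothness of $\psi_1$ in $\sigma^2$, and your $O(\|V_t\|^{3/2})$ remainder control is what both halves need. One minor correction: your closing remark that the argument ``closely parallels the proof of Theorem~\ref{Lem:fixed_point}'' is not accurate---that proof also proceeds via nullclines rather than spectral linearization, so your method differs from both the real and complex proofs in the paper.
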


Note that $\deltaGlobal$ here is different from the information theoretic limit $\delta=1$. We should emphasize that if we had not used the continuation discussed in \eqref{eq:continuationstra}, then the basin of attraction of $(\alpha, \sigma) = (1,0)$ would be non-empty as long as $\delta>1$.  

Finally, we discuss the performance of $\ampa$ in the high SNR regime. See Section \ref{Sec:proof_noise} for its proof.

\begin{theorem}[noise sensitivity] \label{thm:noisesens_real}
Suppose that $\delta>\deltaAMP=\frac{\pi^2}{4}-1$ and $\alpha_0\in\mathbb{R}\backslash0$ and $\sigma^2_0<\infty$. Then, in the high SNR regime we have
\[
\lim_{\sigma^2_w\to0}\frac{\mr{AMSE}(\sigma^2_w,\delta)}{\sigma^2_w} = \frac{1}{\left(1+\frac{4}{\pi^2}\right)^{-1}-\frac{1}{\delta}}.
\]
\end{theorem}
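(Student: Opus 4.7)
\textbf{Proof plan for Theorem \ref{thm:noisesens_real}.} The strategy mirrors the noise sensitivity analysis of the complex-valued case: locate the fixed point of the noisy SE near the noiseless fixed point $(1,0)$ and expand asymptotically in $\sigma_w^2$. First, by Theorem \ref{The:PT_real} the noiseless SE iterates converge to $(1,0)$ for every valid initialization. I would argue, by a continuity/perturbation argument on the SE map, that for all sufficiently small $\sigma_w^2>0$ the noisy SE has a unique fixed point $(\alpha_\infty,\sigma_\infty^2)$ in a small neighbourhood of $(1,0)$ and that the iterates $(\alpha_t,\sigma_t^2)$ converge to it. Once this is in hand, using $\theta_t \to 0$ (by Lemma \ref{lem:psi1_real} the sign of $\alpha_t$ is preserved) and the definition of AMSE together with \eqref{Eqn:AWGN_property}, one gets $\mr{AMSE}(\sigma_w^2,\delta)=(1-|\alpha_\infty|)^2+\sigma_\infty^2$. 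Hence everything reduces to computing the fixed point to leading order in $\sigma_w^2$.

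Next, I would exploit the fixed-point equation $\alpha_\infty=\tfrac{2}{\pi}\arctan(\alpha_\infty/\sigma_\infty)$, i.e.\ $\arctan(\alpha_\infty/\sigma_\infty)=\tfrac{\pi}{2}\alpha_\infty$, to eliminate the $\arctan$ in $\psi_2$. Substituting into \eqref{Eqn:map_expression_real_b} simplifies the second fixed-point equation to
\[
\sigma_\infty^2 \;=\; \frac{1}{\delta}\Big[\,1-\alpha_\infty^2 + \sigma_\infty^2 - \tfrac{4\sigma_\infty}{\pi}\,\Big] + \sigma_w^2.
\]
From the same relation one gets $\sigma_\infty = \alpha_\infty\cot(\tfrac{\pi}{2}\alpha_\infty)$. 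Setting $u:=1-\alpha_\infty$ and using $\cot(\tfrac{\pi}{2}-\tfrac{\pi}{2}u)=\tan(\tfrac{\pi}{2}u)=\tfrac{\pi}{2}u+O(u^3)$ yields the two-sided expansion
\[
u \;=\; \frac{2}{\pi}\sigma_\infty \;+\; \frac{4}{\pi^{2}}\sigma_\infty^{2} \;+\; O(\sigma_\infty^{3}),
\qquad
1-\alpha_\infty^{2} \;=\; 2u-u^{2} \;=\; \frac{4\sigma_\infty}{\pi} + \frac{4\sigma_\infty^{2}}{\pi^{2}} + O(\sigma_\infty^{3}).
\]

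Plugging this into the simplified second equation, the linear-in-$\sigma_\infty$ contribution $\tfrac{4\sigma_\infty}{\pi}$ cancels against $-\tfrac{4\sigma_\infty}{\pi}$, leaving
\[
\sigma_\infty^{2}\!\left[\,1-\frac{1}{\delta}\!\left(1+\frac{4}{\pi^{2}}\right)\right] \;=\; \sigma_w^{2} + O(\sigma_\infty^{3}),
\]
which is solvable because $\delta>\deltaAMP=\tfrac{\pi^{2}}{4}-1>1+\tfrac{4}{\pi^{2}}=\deltaGlobal$, so the bracket is strictly positive. This gives $\sigma_\infty^{2} = \sigma_w^{2}\bigl/\bigl[1-(1+\tfrac{4}{\pi^{2}})/\delta\bigr] + o(\sigma_w^{2})$. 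Finally, $\mr{AMSE}=u^{2}+\sigma_\infty^{2}=(1+\tfrac{4}{\pi^{2}})\sigma_\infty^{2}+O(\sigma_\infty^{3})$, and dividing by $\sigma_w^{2}$ and letting $\sigma_w^{2}\to 0$ gives
\[
\lim_{\sigma_w^{2}\to 0}\frac{\mr{AMSE}(\sigma_w^{2},\delta)}{\sigma_w^{2}} \;=\; \frac{1+4/\pi^{2}}{1-(1+4/\pi^{2})/\delta} \;=\; \frac{1}{(1+4/\pi^{2})^{-1}-1/\delta},
\]
as required.

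The main obstacle is the qualitative convergence step: justifying that, for every sufficiently small $\sigma_w^{2}$, the perturbed SE still converges to a \emph{nearby} fixed point. A direct implicit function argument at $(1,0)$ is awkward because $\partial\psi_1/\partial\sigma$ blows up there, so I would instead combine (a) Theorem \ref{The:PT_real} to push the iterate into an arbitrarily small neighbourhood of $(1,0)$ in finitely many noiseless iterations, with (b) a contraction estimate for the noisy SE in a disk around $(1,0)$, proved by computing the Jacobian of the map $(u,\sigma)\mapsto(u',\sigma')$ after the change of variables $u=1-\alpha$ (which removes the singularity, since $u\asymp\sigma$ at the fixed point); the local contraction factor at $\sigma_w^{2}=0$ governs the speed of convergence and, by continuity, persists for small $\sigma_w^{2}$. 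This also automatically gives uniqueness of the fixed point in the neighbourhood. All remaining steps are routine Taylor expansion.
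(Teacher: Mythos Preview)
Your computation of the fixed point and the resulting noise sensitivity is correct and essentially matches the paper's: the paper parametrizes by $s=\sigma/\alpha$ and invokes the implicit function theorem on an equation $T(s^2,\sigma_w^2)=0$, whereas you parametrize by $u=1-\alpha$ and Taylor-expand directly. These are cosmetic variants of the same calculation.

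Where you genuinely diverge is in the convergence step. The paper does \emph{not} argue locally: it reruns the full region-based dynamical analysis of the noiseless case (Lemma~\ref{Lem:SE_converge_noisy_real}, patterned on Lemma~\ref{Lem:SE_converge_noisy} in the complex setting), partitioning the $(\alpha,\sigma^2)$-plane, showing uniqueness of the noisy fixed point via strict inequalities that persist for small $\sigma_w^2$, and sandwiching the iterates between monotone auxiliary sequences. Your proposal instead uses finitely many noisy iterations (continuous in $\sigma_w^2$) to reach a neighbourhood of $(1,0)$ via Theorem~\ref{The:PT_real}, then a local contraction. This is sound: in the $(u,\sigma)$ chart the linearization of the map at the noisy fixed point tends, as $\sigma_w^2\to 0$, to
\[
J=\begin{pmatrix} 0 & 2/\pi \\ 2/(\pi\delta) & 1/\delta \end{pmatrix},
\]
whose spectral radius is $<1$ exactly when $\delta>1+4/\pi^2=\deltaGlobal$, which is implied by $\delta>\deltaAMP$. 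So your contraction holds uniformly for all small $\sigma_w^2$, and the argument closes. Your route is shorter and more conceptual; the paper's route, while heavier, is self-contained and yields somewhat more (e.g., convergence from the full initial region rather than from a neighbourhood reached after finitely many steps).
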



\section{Proofs of our main results}\label{sec:proofmainresults}


\subsection{Background on Elliptic Integrals}\label{ssec:ellipticintegrals}

The functions that we have in \eqref{Eqn:SE_complex} are related to the first and second kinds of elliptic integrals. Below we review some of the properties of these functions that will be used throughout our paper. Elliptic integrals (elliptic integral of the second kind) were originally proposed for the study of the arc length of ellipsoids. Since their appearance, elliptic integrals have appeared in many problems in physics and chemistry, such as characterization of planetary orbits. Three types of elliptic integrals are of particular importance, since a large class of elliptic integrals can be reduced to these three. We introduce two of them that are of particular interest in our work. 

\begin{definition}
The first and second kinds of complete elliptic integrals, denoted by $K(m)$ and $E(m)$ (for $-\infty<m<1$) respectively, are defined as \cite{Byrd1971}
\BS\label{Eqn:def_elliptic}
\begin{align}
K(m) &=\int_0^{\frac{\pi}{2}}\frac{1}{(1-m\sin^2\theta)^{\frac{1}{2}}}\mr{d}\theta,\\
E(m) &=\int_0^{\frac{\pi}{2}}(1-m\sin^2\theta)^{\frac{1}{2}}\mr{d}\theta.
\end{align}
For convenience, we also introduce the following definition:
\BE
T(m) = E(m)-(1-m)K(m).
\EE
\ES
\end{definition}
In the above definitions, we continued to use $m$, to follow the convention in the literature of elliptic integrals. Previously, $m$ was defined to be the number of measurements, but such abuse of notation should not cause confusion as the exact meaning of $m$ is usually clear from the context. 

Below, we list some properties of elliptic integrals that will be used in this paper. The proofs of these properties can be found in standard references for elliptic integrals and thus omitted (e.g., \cite{Byrd1971}).  
\begin{lemma}\label{Lem:elliptic}
The following hold for $K(m)$ and $E(m)$ defined in \eqref{Eqn:def_elliptic}:
\begin{enumerate}
\item[(i)] $K(0)=E(0)=\frac{\pi}{2}$. Further, for $\epsilon\to0$, $E(1-\epsilon)$ and $K(1-\epsilon)$ behave as
\begin{eqnarray}
E(1- \epsilon) &=& 1+ \frac{\epsilon}{2} \left(\log  \frac{4}{\sqrt{\epsilon}} - 0.5\right)  + O(\epsilon^2\log (1/\epsilon)) \nonumber \\
K(1-\epsilon) &=& \log\left(\frac{4}{\sqrt{\epsilon}}\right) + O(\epsilon\log(1/\epsilon)  ).\nonumber
\end{eqnarray}
\item[(ii)] On $m\in(0,1)$, $K(m)$ is strictly increasing, $E(m)$ is strictly decreasing, and $T(m)$ is strictly increasing. 
\item[(iii)] For $m>-1$,
\BS
\begin{align}
K(-m)&=\frac{1}{\sqrt{1+m}}K\left(\frac{m}{1+m}\right),\nonumber\\
E(-m)&=\sqrt{1+m}E\left(\frac{m}{1+m}\right).\nonumber
\end{align}
\ES
\item[(iv)] The derivatives of $K(m)$, $E(m)$ and $T(m)$ are given by (for $m<1$)
\BE\label{Eqn:elliptic_dif}
\begin{split}
K'(m)&=\frac{E(m)-(1-m)K(m)}{2m(1-m)},\\
E'(m)&=\frac{E(m)-K(m)}{2m},\\
T'(m)&=\frac{1}{2}K(m).
\end{split}
\EE

\end{enumerate}
\end{lemma}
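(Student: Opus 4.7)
The plan is to treat parts (i)--(iv) in the order stated, noting that all four are classical facts about complete elliptic integrals, so I only sketch the routes I would take rather than grinding through the computations.

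For part (i), I would first evaluate $K(0)$ and $E(0)$ by setting $m=0$ in the defining integrals, which collapses both integrands to $1$ over $[0,\pi/2]$ and yields $\pi/2$. The asymptotic expansions as $m=1-\epsilon \to 1^-$ are the delicate piece. My approach would be to split each integral as $\int_0^{\pi/2-\eta}+\int_{\pi/2-\eta}^{\pi/2}$ for a small $\eta$, rewrite $1-m\sin^2\theta = \epsilon + (1-\epsilon)\cos^2\theta$, and compare the near-$\pi/2$ piece with the model integrals $\int_0^\eta d\phi/\sqrt{\epsilon+\phi^2}$. This extracts the leading term $\log(4/\sqrt{\epsilon})$ for $K$ and the $(\epsilon/2)(\log(4/\sqrt{\epsilon})-1/2)$ correction for $E$; the $O(\epsilon\log(1/\epsilon))$ error bound follows from bounding the smooth remainder of the integrand away from the singularity.

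For parts (ii) and (iii), the arguments are short. Monotonicity of $K$ and $E$ is immediate from pointwise monotonicity of the integrands in $m$ for each fixed $\theta\in(0,\pi/2)$; monotonicity of $T$ will follow once part (iv) is in hand, since $T'(m)=K(m)/2>0$. The transformation formulas in (iii) are proved by direct substitution: writing $m' = m/(1+m)$ so that $1-m'=1/(1+m)$, one computes $1-m'\sin^2\theta = (1+m\cos^2\theta)/(1+m)$, which converts $K(m')$ and $E(m')$ into integrals of $(1+m\cos^2\theta)^{\mp 1/2}$ with the appropriate powers of $\sqrt{1+m}$ in front. The change of variable $\phi=\pi/2-\theta$ replaces $\cos^2\theta$ with $\sin^2\phi$ and identifies the results as $K(-m)$ and $E(-m)$.

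For part (iv), I would differentiate under the integral sign to obtain
$K'(m) = \tfrac{1}{2}\int_0^{\pi/2}\sin^2\theta\,(1-m\sin^2\theta)^{-3/2}d\theta$
and $E'(m) = -\tfrac{1}{2}\int_0^{\pi/2}\sin^2\theta\,(1-m\sin^2\theta)^{-1/2}d\theta$. Using the identity $m\sin^2\theta = 1-(1-m\sin^2\theta)$ inside each integrand reduces both to linear combinations of $K(m)$, $E(m)$, and the auxiliary integral $J(m)=\int_0^{\pi/2}(1-m\sin^2\theta)^{-3/2}d\theta$. To handle $J(m)$, I would integrate by parts against the antiderivative $\frac{d}{d\theta}\bigl[\sin\theta\cos\theta\,(1-m\sin^2\theta)^{-1/2}\bigr]$, which expresses $J(m)$ in terms of $K(m)$ and $E(m)$. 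Solving the resulting $2\times 2$ linear system yields the stated formulas for $K'$ and $E'$, and differentiating $T=E-(1-m)K$ then gives $T'(m)=K(m)/2$.

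The only genuine technical obstacle is the asymptotic expansion in part (i), where one must pin down the constant inside the logarithm and the size of the subleading terms; everything else is routine substitution, differentiation under the integral, or pointwise monotonicity of elementary functions.
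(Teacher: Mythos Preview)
Your sketch is correct and follows the standard textbook derivations of these classical identities. The paper itself does not give a proof of this lemma at all: it simply states that these are well-known properties of complete elliptic integrals and refers the reader to Byrd--Friedman, so there is nothing substantive to compare your approach against.
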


Furthermore, we will use a few more elliptic integrals in our work. Next lemma and its proof connects these elliptic integrals to Type I and Type II elliptic integrals. 

\begin{lemma}\label{Lem:Aux_2}
The following equalities hold for any $m\ge0$:
\BS
\begin{align}
&\int_0^{\frac{\pi}{2}}\frac{\cos^2\theta}{\left(1+m\sin^2\theta\right)^{\frac{3}{2}}}\mr{d}\theta=\int_0^{\frac{\pi}{2}}\frac{\sin^2\theta}{\left( 1+m\sin^2\theta \right)^{\frac{1}{2}}}\mr{d}\theta,\label{Lem:Aux_2_a}\\
&\int_0^{\frac{\pi}{2}}\frac{3m\cos^2\theta }{ (1+m\sin^2\theta)^{\frac{5}{2}} }\mr{d}\theta +\int_0^{\frac{\pi}{2}}\frac{1 }{ (1+m\sin^2\theta)^{\frac{3}{2}} }\mr{d}\theta=\int_0^{\frac{\pi}{2}}\frac{1+2m\sin^2\theta}{\left(1+m\sin^2\theta\right)^{\frac{1}{2}}}\mr{d}\theta. \label{Lem:Aux_2_b}
\end{align}
\ES
\end{lemma}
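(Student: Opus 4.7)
My plan is to treat both identities as integration-by-parts relations on $[0,\pi/2]$, for which the boundary contributions vanish because the antiderivatives I use carry a factor of $\sin\theta$ (which kills them at $\theta=0$) or are multiplied against $\cos\theta$ (which kills them at $\theta=\pi/2$). Setting $u(\theta) := 1+m\sin^2\theta$ throughout keeps the bookkeeping compact.

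For identity \eqref{Lem:Aux_2_a}, a direct differentiation gives
\[
\frac{d}{d\theta}\!\left[\frac{\sin\theta}{u^{1/2}}\right] \;=\; \frac{\cos\theta\,u - \sin\theta\cdot m\sin\theta\cos\theta}{u^{3/2}} \;=\; \frac{\cos\theta}{u^{3/2}} .
\]
Integration by parts with this as $dv$ and $u_0 := \cos\theta$ on the left-hand side of \eqref{Lem:Aux_2_a} kills the boundary term $[\cos\theta\sin\theta/u^{1/2}]_0^{\pi/2}=0$ and produces exactly the right-hand side. This one is essentially free.

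Identity \eqref{Lem:Aux_2_b} is the main work. The plan is to find an antiderivative $V(\theta)$ of $3m\cos\theta/u^{5/2}$ that still has a $\sin\theta$ factor up front. The ansatz $V(\theta) = \sin\theta(a+b\sin^2\theta)/u^{3/2}$ leads, after differentiation and matching coefficients of $1$ and $\sin^2\theta$, to $a = 3m$ and $b = 2m^2$, so
\[
V(\theta) \;=\; \frac{m\sin\theta\,(3+2m\sin^2\theta)}{u^{3/2}}, \qquad V'(\theta) \;=\; \frac{3m\cos\theta}{u^{5/2}} .
\]
Integration by parts on the first integral of \eqref{Lem:Aux_2_b} (with $u_0=\cos\theta$, $dv = V'(\theta)\,d\theta$) then gives, after using $[V\cos\theta]_0^{\pi/2}=0$,
\[
\int_0^{\pi/2}\!\frac{3m\cos^2\theta}{u^{5/2}}\,d\theta
\;=\;\int_0^{\pi/2}\!\frac{m\sin^2\theta\,(3+2m\sin^2\theta)}{u^{3/2}}\,d\theta .
\]
To recognize the right-hand side of \eqref{Lem:Aux_2_b}, I would rewrite the numerator as $m\sin^2\theta\bigl(2u+1\bigr)$ and use the splitting $m\sin^2\theta = u-1$, producing $2m\sin^2\theta/u^{1/2} + 1/u^{1/2} - 1/u^{3/2}$. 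Moving the $-\!\int u^{-3/2}$ term to the other side yields precisely \eqref{Lem:Aux_2_b}.

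The only nontrivial step is guessing the primitive $V$, and even that is a two-parameter polynomial ansatz rather than a genuine obstacle. If one prefers a more systematic route, both identities drop out as the $n=1$ and $n=3$ cases of the reduction formula
\[
n(m+1)\,I_{n+2} \;-\; (m+2)(n-1)\,I_n \;+\; (n-2)\,I_{n-2} \;=\; 0, \qquad I_k := \int_0^{\pi/2}\!(1+m\sin^2\theta)^{-k/2}\,d\theta,
\]
obtained by integrating $\frac{d}{d\theta}[\sin\theta\cos\theta\cdot u^{-n/2}]$ over $[0,\pi/2]$ and substituting $m\sin^2\theta = u-1$; the remaining numerators in \eqref{Lem:Aux_2_a}--\eqref{Lem:Aux_2_b} are then expressed in the $I_k$-basis and the recurrences applied. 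I would present the direct IBP version for brevity.
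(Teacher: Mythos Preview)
Your proof is correct and takes a genuinely different route from the paper's. The paper proceeds by first deriving five auxiliary identities that express each of the integrals
\[
\int_0^{\pi/2}\frac{\sin^2\theta}{(1+m\sin^2\theta)^{k/2}}\,d\theta,\qquad
\int_0^{\pi/2}\frac{1}{(1+m\sin^2\theta)^{k/2}}\,d\theta\qquad(k=1,3,5)
\]
in closed form via the complete elliptic integrals $K\!\left(\frac{m}{1+m}\right)$ and $E\!\left(\frac{m}{1+m}\right)$, using the transformation and differentiation rules collected in their Lemma on elliptic integrals. Identities \eqref{Lem:Aux_2_a}--\eqref{Lem:Aux_2_b} then fall out by substituting those closed forms and cancelling. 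Your argument, by contrast, is entirely self-contained: you exhibit the explicit primitives $\sin\theta/u^{1/2}$ and $V(\theta)=m\sin\theta(3+2m\sin^2\theta)/u^{3/2}$ and integrate by parts, never invoking $K$ or $E$ at all. This is more elementary and considerably shorter. The paper's approach does buy something, though: the intermediate identities (i)--(v) expressing each building block in terms of $K$ and $E$ are themselves reused later (e.g.\ in the analysis of $\partial\psi_2/\partial\sigma^2$ and of the function $f(s)$), so for the paper's purposes that extra machinery is not wasted effort. Your recurrence $n(m+1)I_{n+2}-(m+2)(n-1)I_n+(n-2)I_{n-2}=0$ is also correct and gives a clean systematic alternative.
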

\begin{proof}
We will only prove \eqref{Lem:Aux_2_b}. \eqref{Lem:Aux_2_a} can be proved in the same way. The idea is to express the integrals using elliptic integrals defined in \eqref{Eqn:def_elliptic}, and then apply known properties of elliptic integrals (Lemma~\ref{Lem:elliptic}) to simplify the results. The same tricks in proving \eqref{Lem:Aux_2_b} are used to derive other related integrals in this paper. Below, we will provide the full details for the proof of \eqref{Lem:Aux_2_b}, and will not repeat such calculations elsewhere. The LHS of \eqref{Lem:Aux_2_b} can be rewritten as:
\BE\label{Lem:Aux_2_c}
\int_0^{\frac{\pi}{2}} \frac{ 3m}{(1+m\sin^{2}\theta)^{\frac{5}{2}} }\mr{d}\theta-\int_0^{\frac{\pi}{2}} \frac{ 3m\sin^2\theta}{(1+m\sin^{2}\theta)^{\frac{5}{2}} }\mr{d}\theta+\int_0^{\frac{\pi}{2}} \frac{ 1}{(1+m\sin^{2}\theta)^{\frac{3}{2}} }\mr{d}\theta=\int_0^{\frac{\pi}{2}}\frac{1+2m\sin^2\theta}{\left(1+m\sin^2\theta\right)^{\frac{1}{2}}}\mr{d}\theta. 
\EE
The equality in \eqref{Lem:Aux_2_c} can be proved by combining the following identities together with straightfroward manipulations:
\BS\label{Eqn:integral_identities}
\begin{align}
\text{(i):}\quad \int_0^{\frac{\pi}{2}} \frac{\sin^2\theta}{(1 + m\sin^2\theta)^{\frac{1}{2}}}\mr{d}\theta  &= \frac{(m+1)E\left(\frac{m}{1+m}\right)-K\left(\frac{m}{1+m}\right)}{m\sqrt{1+m}},\\
\text{(ii):}\quad\int_0^{\frac{\pi}{2}} \frac{\sin^2\theta}{(1 + m\sin^2\theta)^{\frac{3}{2}}}\mr{d}\theta&=\frac{K\left(\frac{m}{1+m}\right)-E\left(\frac{m}{1+m}\right)}{m\sqrt{1+m}},\\
\text{(iii):}\quad\int_0^{\frac{\pi}{2}} \frac{1}{(1 + m\sin^2\theta)^{\frac{3}{2}}}\mr{d}\theta&=\frac{1}{\sqrt{1+m}}E\left(\frac{m}{1+m}\right),\\
 \text{(iv):}\quad\int_0^{\frac{\pi}{2}} \frac{\sin^2\theta}{(1 + m\sin^2\theta)^{\frac{5}{2}}}\mr{d}\theta  &= \frac{-(1-m)E\left(\frac{m}{1+m}\right)+K\left(\frac{m}{1+m}\right)}{3m(1+m)^{\frac{3}{2}}},\\
\text{(v):}\quad\int_0^{\frac{\pi}{2}} \frac{1}{(1 + m\sin^2\theta)^{\frac{5}{2}}}\mr{d}\theta &= \frac{2(m+2)E\left(\frac{m}{1+m}\right)-K\left(\frac{m}{1+m}\right)}{3(1+m)^{\frac{3}{2}}},
\end{align}
\ES
where $K(m)$ and $E(m)$ denote the complete elliptic integrals of the first and second kinds (see~\eqref{Eqn:def_elliptic}). First, consider the identity (i) in \eqref{Eqn:integral_identities}:
\BE
\begin{split}
\int_0^{\frac{\pi}{2}} \frac{\sin^2\theta}{(1 + m\sin^2\theta)^{\frac{1}{2}}}\mr{d}\theta &= \frac{1}{m}\int_0^{\frac{\pi}{2}} (1 + m\sin^2\theta)^{\frac{1}{2}}\mr{d}\theta -\frac{1}{m}\int_0^{\frac{\pi}{2}} \frac{1}{(1 + m\sin^2\theta)^{\frac{1}{2}}}\mr{d}\theta \\
&\overset{(a)}{=}\frac{1}{m}\left[ E(-m)-K(-m) \right]\\
&\overset{(b)}{=}\frac{1}{m}\left[\sqrt{1+m}E\left(\frac{m}{1+m}\right)-\frac{1}{\sqrt{1+m}}K\left(\frac{m}{1+m}\right)\right],
\end{split}\nonumber
\EE
where (a) is from the definition of $K(m)$ and $E(m)$ in \eqref{Eqn:def_elliptic}, and (b) is from Lemma~\ref{Lem:elliptic} (iii).

Identity (ii) can be proved as follows:
\BE\label{Eqn:Ellip_2_proof}
\begin{split}
\int_0^{\frac{\pi}{2}} \frac{\sin^2\theta}{(1 + m\sin^2\theta)^{\frac{3}{2}}}\mr{d}\theta & = -2\frac{\mr{d}}{\mr{d}m}\int_0^{\frac{\pi}{2}} \frac{1}{(1 + m\sin^2\theta)^{\frac{1}{2}}}\mr{d}\theta \\
&=-2\frac{\mr{d}}{\mr{d}m}K(-m)\\
&\overset{(a)}{=}\frac{ (1+m)K(-m) -E(-m)}{m(1+m)}\\
&\overset{(b)}{=}\frac{K\left(\frac{m}{1+m}\right)-E\left(\frac{m}{1+m}\right)}{m\sqrt{1+m}},
\end{split}
\EE
where (a) is due to Lemma~\ref{Lem:elliptic} (iv) and (b) is from Lemma~\ref{Lem:elliptic} (iii).

For identity (iii), we have
\BE\label{Eqn:Ellip_3_proof}
\begin{split}
\int_0^{\frac{\pi}{2}} \frac{1}{(1 + m\sin^2\theta)^{\frac{3}{2}}}\mr{d}\theta &= \int_0^{\frac{\pi}{2}} \frac{1}{(1 + m\sin^2\theta)^{\frac{1}{2}}}\mr{d}\theta - m \cdot \int_0^{\frac{\pi}{2}} \frac{\sin^2\theta}{(1 + m\sin^2\theta)^{\frac{3}{2}}}\mr{d}\theta \\
&\overset{(a)}{=}K(-m) - m\cdot \frac{ (1+m)K(-m) -E(-m)}{m(1+m)}\\
&=\frac{E(-m)}{1+m}\\
&\overset{(b)}{=}\frac{1}{\sqrt{1+m}}E\left(\frac{m}{1+m}\right),
\end{split}
\EE
where step (a) follows from the third step of \eqref{Eqn:Ellip_2_proof}, and step (b) follows from Lemma~\ref{Lem:elliptic} (iii).

Identity (iv) can be proved in a similar way:
\BE
\begin{split}
\int_0^{\frac{\pi}{2}} \frac{\sin^2\theta}{(1 + m\sin^2\theta)^{\frac{5}{2}}}\mr{d}\theta & = -\frac{2}{3}\cdot\frac{\mr{d}}{\mr{d}m}\int_0^{\frac{\pi}{2}} \frac{1}{(1 + m\sin^2\theta)^{\frac{3}{2}}}\mr{d}\theta\\
&\overset{(a)}{=}-\frac{2}{3}\cdot\frac{\mr{d}}{\mr{d}m}\frac{E(-m)}{1+m}\\
&\overset{(b)}{=} \frac{ (1+m)K(-m)-(1-m)E(-m) }{3m(1+m)^2}\\
&\overset{(c)}{=}\frac{-(1-m)E\left(\frac{m}{1+m}\right) -K\left(\frac{m}{1+m}\right) }{3m(1+m)^{\frac{3}{2}}},
\end{split}\nonumber
\EE
where (a) is from the third step of \eqref{Eqn:Ellip_3_proof}, step (b) is from Lemma~\ref{Lem:elliptic} (iv) and (c) is from Lemma~\ref{Lem:elliptic} (iii). 

Lastly, identity (v) can be proved as follows:
\BE
\begin{split}
\int_0^{\frac{\pi}{2}} \frac{1}{(1 + m\sin^2\theta)^{\frac{5}{2}}}\mr{d}\theta &=\int_0^{\frac{\pi}{2}} \frac{1}{(1 + m\sin^2\theta)^{\frac{3}{2}}}\mr{d}\theta - m\cdot \int_0^{\frac{\pi}{2}} \frac{\sin^2\theta}{(1 + m\sin^2\theta)^{\frac{5}{2}}}\mr{d}\theta\\
&\overset{(a)}{=}  \frac{E(-m)}{1+m}- m \cdot \frac{ (1+m)K(-m)-(1-m)E(-m) }{3m(1+m)^2}\\
&\overset{(b)}{=}\frac{2(m+2)E\left(\frac{m}{1+m}\right)-K\left(\frac{m}{1+m}\right)}{3(1+m)^{\frac{3}{2}}},
\end{split}\nonumber
\EE
where step (a) follows from the derivations of the previous two identities and (b) is again due to Lemma~\ref{Lem:elliptic} (iii).
\end{proof}

\subsection{Proof of Theorem~\ref{The:SE_complex}}\label{Sec:The:SE_complex_proof}

Since the proof of the real-valued and complex valued signals look similar, for the sake of notational simplicity we present the proof for the real-valued signals. First note that according to \cite[Lemma 13]{Mondelli2017}\footnote{The proof for a more general result was first presented in \cite{Javanmard2013}. However, we found \cite{Mondelli2017} easier to follow. The reader may also find \cite[Claim 1]{Rangan11} and related discussions useful, although no formal proof was provided.} for the smoothed $\ampa$ algorithm we know that almost surely
\[
\lim_{n\to\infty}\frac{1}{n}\sum_{i=1}^n \left(x_{\epsilon_j, i}^{t+1}(n)-\mr{sign}( \alpha_t )\cdot x_{*,i}\right)^2 = \mathbb{E}(X_{\epsilon_j}^{t+1}-\mr{sign}(\alpha_t)\cdot X_*)^2,
\]
where $X_\epsilon^{t}=\alpha_{\epsilon,t} X_* +\sigma_{\epsilon, t} H$ and $X_*\sim p_{X}$ is independent of $H\sim\mathcal{N}(0,1)$, and $\alpha_{\epsilon,t}$ and $\sigma_{\epsilon, t}$ satisfy the following iterations:
\begin{eqnarray}
\alpha_{\epsilon,t+1} &=&  \mathbb{E} \left[ \partial_z g_\epsilon (P^t, Y) \right],  \nonumber \\
\sigma_{\epsilon, t+1}^2 &=&  \mathbb{E}  [g_\epsilon^2 (P^t, Y)],\nonumber
\end{eqnarray}
where $Y=|Z|+W$, $P^t  =  \alpha_{\epsilon,t} Z + \sigma_{\epsilon,t}B$, where $B \sim \mathcal{N}(0,1/\delta)$ is independent of $Z \sim \mathcal{N}(0,1/\delta)$ and $W\sim\mathcal{N}(0,1/\delta)$. It is also straightforward to use an induction step similar to the one presented in the proof of Theorem 1 of \cite{Zheng17} and show that $(\alpha_{\epsilon,t}, \sigma^2_{\epsilon, t}) \rightarrow (\alpha_t, \sigma^2_{ t})$ as $i \rightarrow \infty$, where $(\alpha_t, \sigma^2_{ t})$ satisfy 
\begin{eqnarray}
\alpha_{t+1} &=& \mathbb{E} \left[ \partial_z g (P^t, Y) \right],  \nonumber \\
\sigma_{ t+1}^2 &=& \mathbb{E} [ g^2 (P^t, Y)].\nonumber
\end{eqnarray}


\subsection{Proof of Theorem \ref{Theo:PhaseTransition_complex}} \label{ssec:proofTheo:PhaseTransition_complex}

The goal of this section is to prove Theorem  \ref{Theo:PhaseTransition_complex}. However, since the proof is very long we start with the proof sketch to help the reader navigate through the complete proof.

\subsubsection{Roadmap of the proof}\label{Sec:roadmap_complex}
Our main goal is to study the dynamics of the iterations:
\BE\label{eq:sedynamics}
\begin{split}
\alpha_{t+1} &= \psi_{1}(\alpha_t,\sigma_t^2),\\
\sigma^2_{t+1} &= \psi_{2}(\alpha_t,\sigma_t^2;\delta),
\end{split}
\EE
Notice that according to the assumptions of the theorem, we assume that we initialized the dynamical system with $\alpha_0>0$.
Our first hope is that this dynamical system will not oscillate and will converge to the solutions of the following system of nonlinear equations: 
\BE\label{eq:FPequation}
\begin{split}
\alpha &= \psi_{1}(\alpha,\sigma^2),\\
\sigma^2 &= \psi_{2}(\alpha,\sigma^2;\delta),
\end{split}
\EE
Hence, the first step is to characterize and understand the fixed points of the solutions of \eqref{eq:FPequation}. Toward this goal we should study the properties of $\psi_{1}(\alpha,\sigma^2)$ and $\psi_{2}(\alpha,\sigma^2;\delta)$. In particular, we would like to know how the fixed points of $\psi_{1}(\alpha,\sigma^2)$ behave for a given $\sigma^2$ and how the fixed points of $\psi_{2}(\alpha,\sigma^2;\delta)$ behave for a given value of $\alpha$ and $\delta$. The graphs of these functions are shown in Figure \ref{fig:psi1_psi2}. 
\begin{figure}[!htbp]
\centering
\subfloat{\includegraphics[width=.47\textwidth]{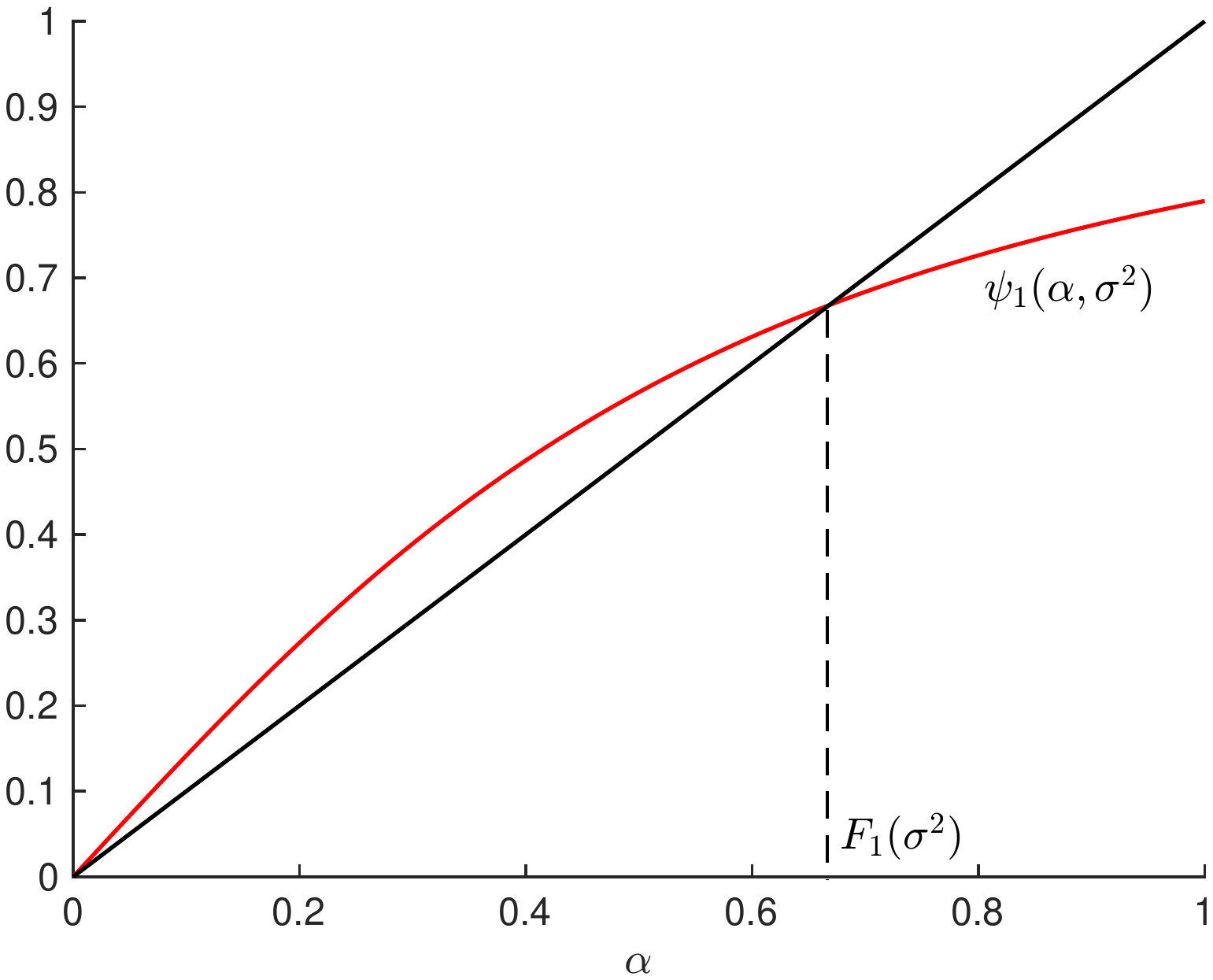}}
\subfloat{\includegraphics[width=.47\textwidth]{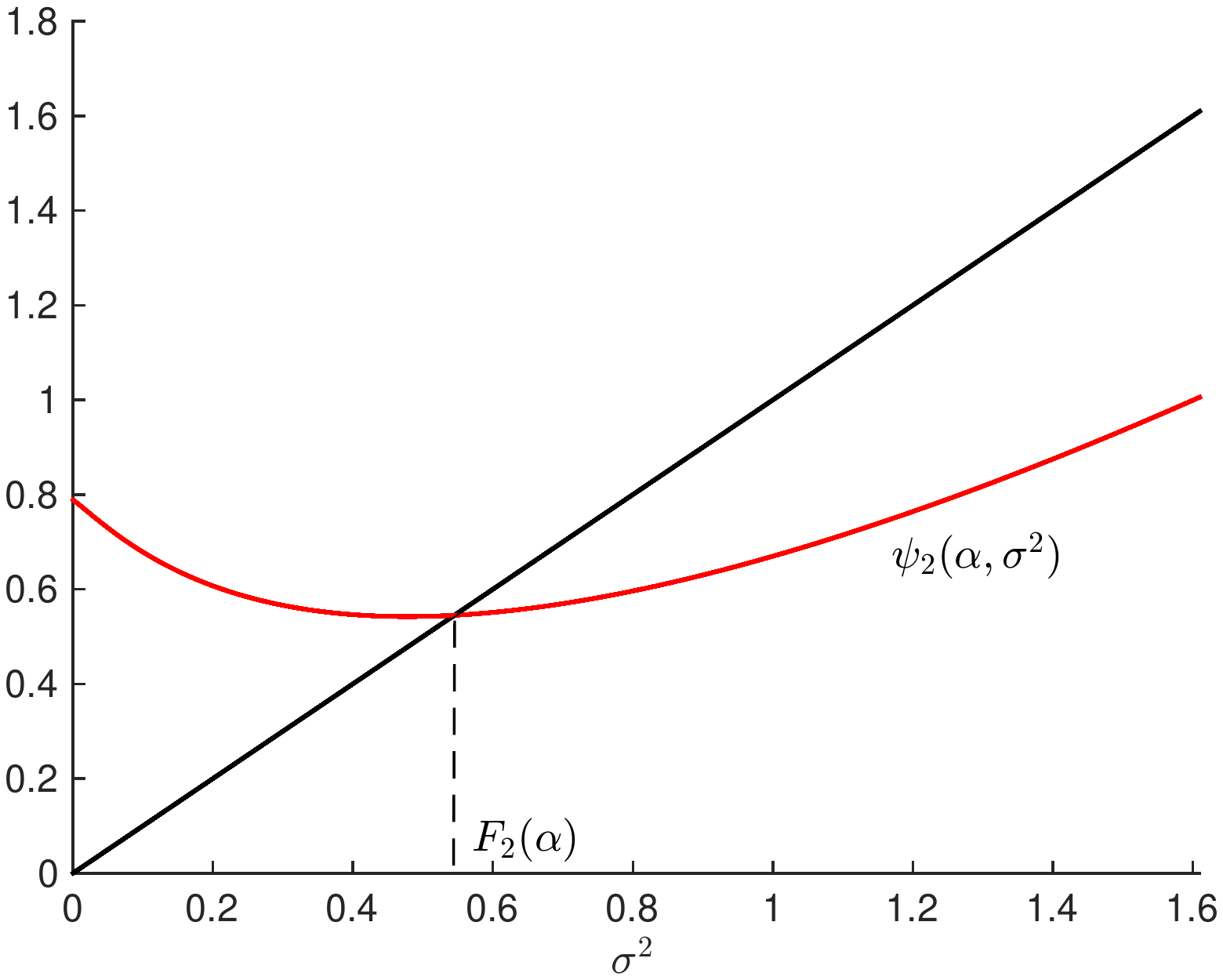}}
\caption{\textbf{Left:} plot of $\psi_1(\alpha,\sigma^2)$ against $\alpha$. $\sigma^2=0.3$. \textbf{Right:} plot of $\psi_2(\alpha,\sigma^2;\delta)$ against $\sigma^2$. $\alpha=0.3$ and $\delta=\deltaAMP$.}
\label{fig:psi1_psi2}
\end{figure}
We list some of the important properties of these two functions. We refer the reader to Section \ref{ssec:psi1_2prop} to see more accurate statement of these claims. 
\begin{enumerate}
\item $\psi_1\left(\alpha,\sigma^2\right)$ is a concave and strictly increasing function of $\alpha>0$, for any $\sigma^2>0$: This implies that $\psi_1\left(\alpha,\sigma^2\right)$ can have two fixed points: one at zero and one at $\alpha>0$. Also, as is clear from the figure, the second fixed point is the stable one. 

\item If $\delta> \deltaAMP$, then $\psi_2$ has always one stable fixed point. It may have one unstable fixed points (as a function of $\sigma^2$). See Fig.~\ref{Fig:psi2_small_delta} for an example of this situation.

\end{enumerate}

For the moment assume that the unstable fixed points do not affect the dynamics of $\ampa$. Let $F_1(\sigma^2)$ denote the non-zero fixed point of $\psi_1$ and $F_2(\sigma^2)$ the stable fixed point of $\psi_2$.\footnote{In the literature of dynamical systems, these functions are sometimes called \textit{nullclines}. Nullclines are useful for qualitatively analyzing local dynamical behavior of two-dimensional maps (which is the case for the SE in this paper).} We will prove in Lemma \ref{Lem:2} that $F_1(\sigma^2)$  is a decreasing function and hence $F_1^{-1}(\alpha)$ is well-defined on $0<\alpha\le 1$. Moreover, we will show that by choosing $F_1^{-1}(0)=\frac{\pi^2}{16}$, $F_1^{-1}(\alpha)$ is continuous on [0,1]. $F_1^{-1}(\alpha)$ and $F_2(\alpha; \delta)$ are shown in Fig. \ref{Fig:nullclines}. Note that the places these curves intersect correspond to the fixed points of \eqref{eq:FPequation}. Depending on the value of $\delta$ the two curves show the following different behaviors:
\begin{enumerate}
\item When $\delta>\deltaAMP$, the dashed curve (see Fig. \ref{Fig:nullclines}) is entirely below the solid curve except at $(\alpha,\sigma^2)=(1,0)$. $\deltaAMP$ is the critical value of $\delta$ at which $F_2(0;\delta) = F_1^{-1}(0)$. Formally, we will prove the following lemma:

\begin{lemma}\label{Lem:F1_F2_complex}
If $\delta \ge\deltaAMP=\frac{64}{\pi^2}-4$, then $F_1^{-1}(\alpha)>F_2(\alpha;\delta)$ holds for any $\alpha\in(0,1)$.
\end{lemma}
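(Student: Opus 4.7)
The plan is to reduce the claim, via an elliptic-integral parametrization of $F_1$, to a scalar inequality in a single variable that needs to be verified only at the critical threshold $\delta=\deltaAMP$.

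\textbf{Step 1 (parametrize $F_1$ by a single elliptic parameter).} Factoring $\alpha$ out of \eqref{Eqn:map_expression_complex_a} shows that $\psi_1(\alpha,\sigma^2)$ depends on $(\alpha,\sigma^2)$ only through $u:=\sigma^2/\alpha^2$: $\psi_1=g(u)$, with $g(u):=\int_0^{\pi/2}\sin^2\theta/\sqrt{\sin^2\theta+u}\,\mr{d}\theta$. Hence the non-zero branch of $\alpha=\psi_1(\alpha,\sigma^2)$ is parametrized by $u\in[0,\infty)$ as $(\alpha,\sigma^2)=(g(u),\,ug(u)^2)$. Setting $m:=1/(1+u)\in(0,1]$ and applying the change of variable $\phi=\pi/2-\theta$ (the same manipulation as in the proof of Lemma~\ref{Lem:Aux_2}) reduces $g$ to complete elliptic form $g(u)=T(m)/\sqrt{m}$, so along $F_1$:
\[
\alpha^2 \;=\; T(m)^2/m, \qquad \sigma^2 \;=\; (1-m)T(m)^2/m^2, \qquad m\in(0,1].
\]

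\textbf{Step 2 (reduce to a scalar inequality and to $\delta=\deltaAMP$).} I would split the integrand in \eqref{Eqn:map_expression_complex_b} as $(2\alpha^2\sin^2\theta+\sigma^2)/\sqrt{\alpha^2\sin^2\theta+\sigma^2}=\alpha^2\sin^2\theta/\sqrt{\alpha^2\sin^2\theta+\sigma^2}+\sqrt{\alpha^2\sin^2\theta+\sigma^2}$. The $F_1$-constraint $\int_0^{\pi/2}\sin^2\theta/\sqrt{\alpha^2\sin^2\theta+\sigma^2}\,\mr{d}\theta=1$ (obtained by dividing $\psi_1=\alpha$ by $\alpha$) forces the first piece to integrate to $\alpha^2$, while the second piece equals $T(m)E(m)/m$ after the substitution $\phi=\pi/2-\theta$. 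Rearranging $\psi_2(\alpha,\sigma^2;\delta)<\sigma^2$ along $F_1$ then yields the equivalent elliptic inequality
\[
4m\big(T(m)E(m)-m\big) \;>\; (4-\delta)(1-m)T(m)^2, \qquad m\in(0,1). \qquad (\star)
\]
The left side of $(\star)$ is independent of $\delta$ and the right side is strictly decreasing in $\delta$, so it suffices to establish $(\star)$ at $\delta=\deltaAMP=64/\pi^2-4$, where $4-\deltaAMP=8-64/\pi^2$. Once $(\star)$ holds strictly on $(0,1)$ one gets $\psi_2<\sigma^2$ strictly on $F_1\setminus\{(1,0)\}$; since any crossing of $F_1$ and $F_2$ would be a joint SE fixed point requiring $\psi_2=\sigma^2$, the strict inequality together with $F_1^{-1}(1)=F_2(1;\delta)=0$ and continuity forces $F_1^{-1}(\alpha)>F_2(\alpha;\delta)$ throughout $(0,1)$.

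\textbf{Step 3 (verifying $(\star)$ at $\delta=\deltaAMP$; the main obstacle).} Set
\[
\Phi(m) \;:=\; 4m\big(T(m)E(m)-m\big) \;-\; \Big(8-\tfrac{64}{\pi^2}\Big)(1-m)T(m)^2.
\]
Using the small-$m$ expansion $T(m)=(\pi/4)m+(\pi/32)m^2+O(m^3)$ and the endpoint expansions of $E,K$ from Lemma~\ref{Lem:elliptic}(i), one checks that $\Phi(0^+)=\Phi(1^-)=0$ and that the leading-order behaviour at each endpoint is strictly positive: $\Phi(m)\sim(5\pi^2/16-3)m^3$ as $m\to 0^+$, and $\Phi(1-\epsilon)\sim(64/\pi^2-6)\epsilon$ as $\epsilon\to 0^+$. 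The genuinely hard part is ruling out an interior sign change of $\Phi$ on $(0,1)$. The natural attack is to compute $\Phi'(m)$ using the derivative formulas $T'=K/2$ and $E'=(E-K)/(2m)$ of Lemma~\ref{Lem:elliptic}(iv) together with the identity $T=E-(1-m)K$, and to show that $\Phi'$ has a unique zero on $(0,1)$; equivalently, to show that $\Psi(m):=4m(T(m)E(m)-m)/[(1-m)T(m)^2]$ is strictly increasing on $(0,1)$ with $\Psi(0^+)=8-64/\pi^2$, which delivers $(\star)$ directly. Establishing this monotonicity of $\Psi$ is the technical heart of the proof: after clearing denominators it reduces to a pointwise inequality between polynomial combinations of $E(m)$, $K(m)$ and $m$, which I expect to verify by exploiting the monotonicity and convexity properties of $E,K,T$ recorded in Lemma~\ref{Lem:elliptic}(ii)--(iv).
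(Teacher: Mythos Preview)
Your reduction in Steps 1--2 is correct and coincides with the paper's: your parameter $m=1/(1+u)$ is the paper's $x=1/(1+s^2)$, and your inequality $(\star)$ is exactly \eqref{Eqn:Lem3_proof8} (after noting $4-\delta=4\gamma$). One small sharpening: you do not need the crossing/continuity argument at the end of Step~2. By the weak global attractiveness of $F_2$ in $[0,1]$ (Lemma~\ref{lem:psi2}(ii)), $F_1^{-1}(\alpha)>F_2(\alpha;\delta)$ is \emph{equivalent} to $\psi_2(\alpha,F_1^{-1}(\alpha);\delta)<F_1^{-1}(\alpha)$, which is $(\star)$ directly; this is how the paper sets it up in \eqref{Eqn:Lem_F1F2_1}--\eqref{Eqn:Lem3_proof0}.

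Where your plan and the paper diverge is Step~3. Your target---monotonicity of $\Psi$ on $(0,1)$---is exactly the paper's claim that $G(t)$ in \eqref{Eqn:Lem3_proof4} is increasing on $[0,\infty)$; indeed $G(t)=\Psi(m)/4$ under $t=m/(1-m)$. The paper also arrives here but cannot push a clean global argument through (they write that ``directly proving the monotonicity of $G_1(t)+G_2(t)$ seems to be quite complicated''). Instead they split into two regimes. For $m$ bounded away from $1$ (their Case~I, $t\le 14.3$) they differentiate once to reduce $G'>0$ to $G_1(t)+G_2(t)>2$, where $G_1=g_2^3/g_3$ and $G_2=g_1g_2$; Cauchy--Schwarz gives $G_1$ increasing and $G_2$ decreasing, and the inequality is then certified interval-by-interval on a finite numerical grid. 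For $m$ near $1$ (Case~II, $m\ge 0.93$) they do \emph{not} prove $\Psi$ increasing at all: they discard the factor $m/T(m)^2>1$ (using $T(m)<m$) and instead show the simpler lower bound $(E(m)T(m)-m)/(1-m)$ is increasing on $[0.93,1)$ via repeated differentiation, checking numerically that its value at $m=0.93$ already exceeds $\gamma$.

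So your ``expect to verify by exploiting the monotonicity and convexity properties of $E,K,T$'' is precisely where the paper's authors got stuck and resorted to a computer-assisted grid argument plus a separate near-endpoint estimate. Your plan is structurally sound, but Step~3 will likely require the same kind of case split and numerical verification rather than the clean analytic inequality you are hoping for.
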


You may find the proof of this lemma in Section \ref{proof:lemmadominationF_1}. Intuitively speaking, in this case we expect the state evolution to converge to the fixed point $(\alpha,\sigma^2)=(1,0)$, meaning that AMP.A achieves exact recovery. 

\item When $2<\delta<\deltaAMP$, the two curves intersect at multiple locations, but $F_2(\alpha) <F_1^{-1}(\alpha)$ for the values of $\alpha$ that are close to one. This implies that AMP.A can still exactly recover $\bm{x}_*$ if the initialization is close enough to $\bm{x}_*$. However, this does not happen with spectral initialization. We will discuss this case in Theorem \ref{Lem:fixed_point} and we do not pursue it further here.

\end{enumerate}

\begin{figure}[!htbp]
\centering
\includegraphics[width=.55\textwidth]{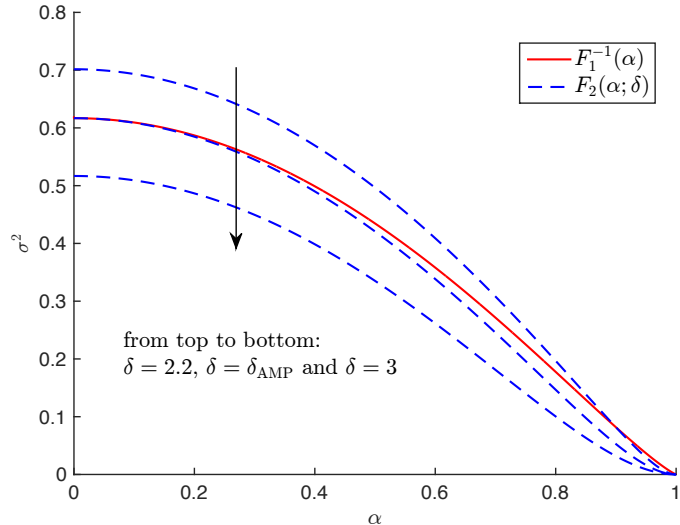}
\caption{Plots of $F_1^{-1}(\alpha)$ and $F_2(\alpha)$ for different values of $\delta$. When $\delta=\deltaAMP$, $F_1^{-1}(\alpha)$ and $F_2(\alpha;\delta)$ intersect at $\alpha=0$.}
\label{Fig:nullclines}
\end{figure}

So far, we have studied the solutions of \eqref{eq:FPequation}. But the ultimate goal of analysis of $\ampa$ is the analysis of \eqref{eq:sedynamics}. In particular, it is important to show that the estimates $(\alpha_t, \sigma_t^2)$ converge to $(1,0)$ and do not oscillate. Unfortunately, the dynamics of $(\alpha_t, \sigma_t^2)$ do not monotonically move toward the fixed point $(1,0)$, which makes the analysis of SE complicated. 

Suppose that $\delta>\deltaAMP$. We first show that $(\alpha_t,\sigma^2_t)$ lies within a bounded region if the initialization falls into that region.
\begin{lemma}\label{Lem:region_bound}
Suppose that $\alpha_0>0$ and $\sigma^2_0\le1$. If $\delta>\deltaAMP=\frac{64}{\pi^2}-4$, then the sequences $\{\alpha_t\}_{t\ge1}$ and $\{\sigma^2_t\}_{t\ge1}$ generated by \eqref{Eqn:SE_complex} satisfy the following:
\[
0\le\alpha_t\le1\quad \text{and}\quad 0\le\sigma^2_t\le\sigma^2_{\max},\quad \forall t\ge1,
\]
where $\sigma^2_{\max}\Mydef\max\left\{ 1,\frac{4}{\delta}\right\}$.
\end{lemma}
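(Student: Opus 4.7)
Since $\alpha_0 > 0$ is real, Lemma \ref{Lem:psi_phase_invariance_0}(i) ensures that $\alpha_t$ stays real and nonnegative for all $t$. My plan is an induction on $t$ with hypothesis $0 \le \alpha_t \le 1$ and $0 \le \sigma_t^2 \le \sigma_{\max}^2$. For the base case (read together with the ambient constraint $\alpha_0\le 1$ inherited from Theorem \ref{Theo:PhaseTransition_complex}), the hypothesis $\sigma_0^2\le 1\le\sigma_{\max}^2$ takes care of the $\sigma$ bound. The inductive step on $\alpha$ is easy: from \eqref{Eqn:map_expression_complex_a}, $\psi_1(\alpha,\sigma^2)\ge 0$ when $\alpha\ge 0$, and the pointwise estimate $\frac{\alpha\sin^2\theta}{\sqrt{\alpha^2\sin^2\theta+\sigma^2}}\le\sin\theta$ integrates over $[0,\pi/2]$ to give $\psi_1\le 1$, so $\alpha_{t+1}\in[0,1]$. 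Nonnegativity of $\sigma_{t+1}^2$ follows from the representation $\psi_2=4\,\mathbb{E}[(|P|-|Z|)^2]$ in Definition \ref{Def:SE_map_complex}.

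The crux is to prove $\psi_2(\alpha,\sigma^2)\le\sigma_{\max}^2$ on the rectangle $[0,1]\times[0,\sigma_{\max}^2]$. I would evaluate the integral in \eqref{Eqn:map_expression_complex_b} in closed form via the elliptic-integral identities \eqref{Eqn:integral_identities} of Lemma \ref{Lem:Aux_2}. With the substitution $r=\sqrt{\alpha^2+\sigma^2}$ and $k=\alpha^2/r^2\in[0,1]$, this produces
\[
\psi_2(\alpha,\sigma^2)=\frac{4}{\delta}\bigl(r^2+1-2r\,E(k)+(1-k)\,r\,K(k)\bigr).
\]
At fixed $k$ this is a convex quadratic in $r$ with positive leading coefficient $4/\delta$, so its maximum over the admissible range $r\in[0,r_{\max}(k)]$, with $r_{\max}(k)=\min\{1/\sqrt{k},\,\sigma_{\max}/\sqrt{1-k}\}$, is attained at an endpoint. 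The endpoint $r=0$ corresponds to $(\alpha,\sigma^2)=(0,0)$ and yields $\psi_2=4/\delta\le\sigma_{\max}^2$ by the very definition $\sigma_{\max}^2=\max\{1,4/\delta\}$. As $k$ varies over $[0,1]$, the other endpoint sweeps out the upper-right boundary $\{\alpha=1\}\cup\{\sigma^2=\sigma_{\max}^2\}$ of the rectangle, so it remains to verify the bound on these two edges.

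I expect this edge verification to be the main obstacle. The inequality becomes essentially tight as $\delta\downarrow\deltaAMP$---reflecting the fact (Lemma \ref{Lem:F1_F2_complex}) that the nullclines $F_1^{-1}$ and $F_2$ only touch at $\delta=\deltaAMP$---so crude estimates such as $\psi_2\le 4[(1-\alpha)^2+\sigma^2]/\delta$ will not suffice. On each of the two edges I would parametrize by the free coordinate, differentiate the closed form above using the derivative rules for $K,E,T$ in Lemma \ref{Lem:elliptic}(iv), extract a sign-definite monotonicity, and reduce once more to checking the four corners of the rectangle. The corners $(0,0)$, $(1,0)$, and $(0,\sigma_{\max}^2)$ give $4/\delta$, $0$, and $\tfrac{4}{\delta}(\sigma_{\max}^2+1-\tfrac{\pi}{2}\sigma_{\max})$ respectively, each of which is easily bounded by $\sigma_{\max}^2$ after splitting into the cases $\delta\ge 4$ (where $\sigma_{\max}^2=1$) and $\deltaAMP<\delta<4$ (where $\sigma_{\max}^2=4/\delta$). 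The remaining corner $(1,\sigma_{\max}^2)$ requires a direct algebraic verification that exploits the precise value $\deltaAMP=64/\pi^2-4$.
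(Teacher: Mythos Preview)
Your induction framework and the bound $\alpha_t\in[0,1]$ match the paper exactly. For the $\sigma^2$ bound, your elliptic-integral parametrization and convex-quadratic reduction to the boundary are valid, but the route is considerably more elaborate than the paper's, and your assessment of where the difficulty lies is off.

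The paper (Lemma~\ref{lem:psi2}(iii), which is what the proof of this lemma invokes) establishes the stronger uniform bound $\psi_2(\alpha,\sigma^2;\delta)\le 4/\delta$ on the whole rectangle by an elementary argument: the target inequality rewrites as
\[
\alpha^2+\sigma^2\;\le\;\int_0^{\pi/2}\frac{2\alpha^2\sin^2\theta+\sigma^2}{\sqrt{\alpha^2\sin^2\theta+\sigma^2}}\,d\theta,
\]
and after splitting the numerator into its two pieces one needs only $\int_0^{\pi/2}\frac{2\sin^2\theta}{\sqrt{\alpha^2\sin^2\theta+\sigma^2}}\,d\theta\ge 1$ and $\int_0^{\pi/2}\frac{d\theta}{\sqrt{\alpha^2\sin^2\theta+\sigma^2}}\ge 1$. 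Both integrands are monotone decreasing in $\alpha$ and in $\sigma^2$, so the worst case is the single corner $(\alpha,\sigma^2)=(1,4/\deltaAMP)$, where numerical evaluation gives roughly $1.02$ and $1.09$. No elliptic identities, no edge-by-edge analysis.

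Your claim that the inequality becomes ``essentially tight as $\delta\downarrow\deltaAMP$'' on the boundary, and the link to Lemma~\ref{Lem:F1_F2_complex}, are misreadings: that lemma concerns $\psi_2$ along the curve $\sigma^2=F_1^{-1}(\alpha)\le\pi^2/16$, not along $\sigma^2=\sigma_{\max}^2\approx 1.6$. In fact, for $\deltaAMP\le\delta\le 4$ the only place where $\psi_2$ attains $\sigma_{\max}^2=4/\delta$ is in the limit at the origin, which you have already dispatched; on the two outer edges there is comfortable slack (the $1.02$ and $1.09$ above), so the ``direct algebraic verification exploiting the precise value of $\deltaAMP$'' at the corner $(1,\sigma_{\max}^2)$ that you anticipate never arises. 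Your programme could be completed once the edge monotonicity (or concavity) claims are filled in, but it is doing much more work than the problem requires.
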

\begin{proof}
As discussed in Lemma~\ref{Lem:psi_phase_invariance_0}, the assumption $\alpha_0>0$ implies that $\alpha_t>0$, $\forall t\ge1$. Further, from the property that $0<\psi_1(\alpha,\sigma^2)<1$ for $\alpha>0$ and $\sigma^2>0$ (see Lemma~\ref{lem:psi1} (ii)), we readily have $0\le\alpha_t\le1$. Similarly, Lemma~\ref{lem:psi2} (iii) shows that if $\delta>\deltaAMP$, $ \alpha\in[0,1]$ and $\sigma^2\in[0,\sigma^2_{\max}]$, then $0\le\psi_2(\alpha,\sigma^2;\delta)\le\sigma^2_{\max}$. By our assumption, we have $\sigma^2_0\le1\le\sigma^2_{\max}$, and using induction we prove $0\le\sigma^2_t\le\sigma^2_{\max}$.
\end{proof}

From the above lemma, we see that to understand the dynamics of the SE, we only focus on the region $\mathcal{R}\Mydef\left\{(\alpha,\sigma^2)\big|0<\alpha\le1,0< \sigma^2 \le \sigma^2_{\max}\right\}$. Since the dynamic of $\ampa$ is complicated, we divide this region into smaller regions. See Figure \ref{fig:regions_I&III} for an illustration. 

\begin{figure}[!htbp]
\centering
\includegraphics[width=.57\textwidth]{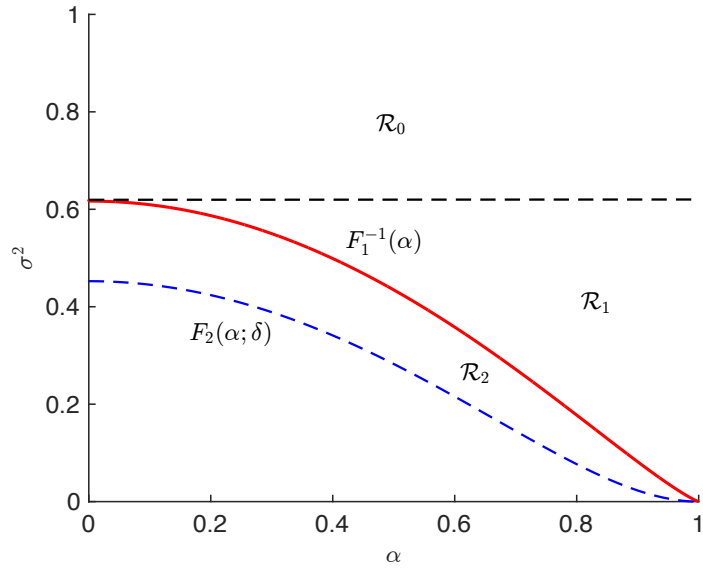}
\caption{Illustration of the three regions in Definition~\ref{Def:regions}. Note that $\mathcal{R}_2$ also includes the region below $F_2(\alpha;\delta)$.}
\label{fig:regions_I&III}
\end{figure}

 \begin{definition}\label{Def:regions}
We divide $\mathcal{R}\Mydef\left\{(\alpha,\sigma^2)\big|0<\alpha\le1,0< \sigma^2 \le \sigma^2_{\max}\right\}$ into the following three sub-regions:
\BE
\begin{split}
 \mathcal{R}_0&\Mydef\left\{(\alpha,\sigma^2)\big|0<\alpha\le1,\frac{\pi^2}{16}< \sigma^2 \le\sigma^2_{\max}\right\}, \\
\mathcal{R}_1&\Mydef\left\{(\alpha,\sigma^2)\big|0<\alpha\le1,F_1^{-1}(\alpha)< \sigma^2 \le\frac{\pi^2}{16}\right\},\\
\mathcal{R}_2 &\Mydef\left\{(\alpha,\sigma^2)\big|0<\alpha\le1,0\le\sigma^2 \le F_1^{-1}(\alpha)\right\}.
\end{split}
\EE
\end{definition}

Our next lemma shows that if $(\alpha_t, \sigma_t^2)$ is in $\mathcal{R}_1$ or $\mathcal{R}_2$ for $t\ge1$, then  $(\alpha_t, \sigma_t^2)$ converges to $(1,0)$. The following lemma demonstrates this claim.

\begin{lemma}\label{Lem:regionI_III}
Suppose that $\delta>\deltaAMP$. If $(\alpha_{t_0},\sigma^2_{t_0})$ is in $\mathcal{R}_1\cup\mathcal{R}_2$ at time $t_0$ (where $t_0\ge1$), and $\{\alpha_t\}_{t\ge t_0}$ and $\{\sigma^2_t\}_{t\ge t_0}$ are obtained via the SE in \eqref{Eqn:SE_complex}, then
\begin{enumerate}
\item[(i)] $(\alpha_t,\sigma^2_t)$ remains in $\mathcal{R}_1\cup\mathcal{R}_2$ for all $t>t_0$;
\item[(ii)] $(\alpha_t,\sigma^2_t)$ converges:
\[
\lim_{t\to\infty} \alpha_t = 1\quad{and}\quad\lim_{t\to\infty} \sigma^2_t = 0.
\]
\end{enumerate}
\end{lemma}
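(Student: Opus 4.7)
The plan is to exploit the geometric picture of the two nullclines $\sigma^2=F_1^{-1}(\alpha)$ and $\sigma^2=F_2(\alpha;\delta)$, which by Lemma~\ref{Lem:F1_F2_complex} are strictly separated on $\alpha\in(0,1)$ when $\delta>\deltaAMP$, and then to combine a monotone-convergence argument in each of $\mathcal{R}_1$ and $\mathcal{R}_2$ with the observation that the only joint fixed point of $(\psi_1,\psi_2)$ in $\overline{\mathcal{R}_1\cup\mathcal{R}_2}$ with $\alpha>0$ is $(1,0)$.

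First I would record the sign of the one-step increment $(\Delta\alpha_t,\Delta\sigma^2_t)$ on each region. Using that $F_1^{-1}$ is decreasing (Lemma~\ref{Lem:2}), the inequality $\sigma^2_t>F_1^{-1}(\alpha_t)$ defining $\mathcal{R}_1$ is equivalent to $\alpha_t>F_1(\sigma^2_t)$, and the concavity plus strict monotonicity of $\psi_1(\cdot,\sigma^2)$ together with $\psi_1(F_1(\sigma^2),\sigma^2)=F_1(\sigma^2)$ give $\psi_1(\alpha_t,\sigma^2_t)<\alpha_t$. Lemma~\ref{Lem:F1_F2_complex} sharpens this to $\sigma^2_t>F_2(\alpha_t;\delta)$, and the stability of $F_2(\alpha_t;\delta)$ as a fixed point of $\sigma^2\mapsto\psi_2(\alpha_t,\sigma^2;\delta)$ forces $\psi_2(\alpha_t,\sigma^2_t;\delta)<\sigma^2_t$. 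So in $\mathcal{R}_1$ both coordinates strictly decrease. In $\mathcal{R}_2$ the symmetric argument gives $\psi_1(\alpha_t,\sigma^2_t)>\alpha_t$, so $\alpha$ strictly increases; the behavior of $\sigma^2$ is not one-sided here, and instead I would invoke the bound $\psi_2(\alpha,\sigma^2;\delta)\le\max\{\sigma^2,F_2(\alpha;\delta)\}$ extracted from the same stability picture of $\psi_2$.

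Invariance (part (i)) then follows easily: Lemma~\ref{Lem:region_bound} already gives $0\le\alpha_{t+1}\le1$ and $\sigma^2_{t+1}\le\sigma^2_{\max}$, so the only remaining item is $\sigma^2_{t+1}\le\pi^2/16$. In $\mathcal{R}_1$ this is immediate from the monotone decrease; in $\mathcal{R}_2$ the envelope bound above reduces the question to $F_2(\alpha_t;\delta)\le\pi^2/16$, which follows from the defining identity $F_2(0;\deltaAMP)=F_1^{-1}(0)=\pi^2/16$ together with the monotonicity of $F_2$ in $\alpha$ and $\delta$. Strict positivity $\alpha_t>0$ propagates because $\psi_1(\alpha,\sigma^2)>0$ for $\alpha>0$. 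For convergence (part (ii)), I would argue that the iterates cannot stay in $\mathcal{R}_1$ forever: both coordinates would then be monotone decreasing and bounded, hence converge to a joint fixed point $(\alpha_\infty,\sigma^2_\infty)\in\overline{\mathcal{R}_1}$. Lemma~\ref{Lem:F1_F2_complex} excludes every such point except $(0,F_2(0;\delta))$ and $(1,0)$; the first is incompatible with $\overline{\mathcal{R}_1}$ since $F_2(0;\delta)<\pi^2/16$ when $\delta>\deltaAMP$, and the second is incompatible with $\alpha_t$ being positive and strictly decreasing. Hence the orbit enters $\mathcal{R}_2$ in finite time, where $\alpha_t$ is strictly increasing; any limit point of the bounded orbit is again a joint fixed point, which by the same lemma and $\alpha_\infty\ge\alpha_{t_0}>0$ must be $(1,0)$. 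Reentries into $\mathcal{R}_1$, which are in principle possible after $\sigma^2$ grows inside $\mathcal{R}_2$, can only decrease $\sigma^2$ during the excursion, so the limsup of $\sigma^2_t$ is eventually nonincreasing and the iterates funnel into a neighborhood of $(1,0)$ from which Theorem~\ref{Lem:fixed_point} (applicable because $\deltaAMP>\deltaGlobal$) closes the argument.

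The main obstacle I expect is exactly this possible oscillation between $\mathcal{R}_1$ and $\mathcal{R}_2$: each region individually carries a one-sided inequality, but the combined 2D trajectory need not be monotone, so the ``funnel'' argument above must be made quantitative by carefully bookkeeping the excursions and showing that the envelope $\max\{\sigma^2_t,F_2(\alpha_t;\delta)\}$ is itself strictly contracting toward $0$ as $\alpha_t\to1$. The non-smoothness of $F_1^{-1}$ at the corner $\alpha=1$, $\sigma^2=0$ (coming from the logarithmic singularities in the elliptic-integral expressions for $\psi_1,\psi_2$ recorded in Section~\ref{ssec:ellipticintegrals}) will also require a small amount of care when passing to limits along the nullcline.
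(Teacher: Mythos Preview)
Your envelope bound $\psi_2(\alpha,\sigma^2;\delta)\le\max\{\sigma^2,F_2(\alpha;\delta)\}$ in $\mathcal{R}_2$ is false, and this breaks both the invariance and the convergence argument. That bound would follow if $\sigma^2\mapsto\psi_2(\alpha,\sigma^2;\delta)$ were monotone, but Lemma~\ref{lem:psi2}(v) only gives monotonicity for $\alpha>\alpha_\ast\approx0.53$. For small $\alpha$ and small $\sigma^2$ (which is in $\mathcal{R}_2$, since $F_1^{-1}(\alpha)\to\pi^2/16$ as $\alpha\to0$), one has $\psi_2(\alpha,\sigma^2;\delta)\approx\frac{4}{\delta}(1-\alpha)^2$, which for $\delta$ near $\deltaAMP$ is about $1.6$, far above both $F_2(\alpha;\delta)$ and $\pi^2/16$. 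So a point near the origin in $\mathcal{R}_2$ can be sent into $\mathcal{R}_0$ in one step, exactly the phenomenon the paper warns about below the statement of Lemma~\ref{Lem:regionI_III}. Your argument never uses the hypothesis $t_0\ge1$, which is precisely what rules this out.

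The paper's fix is not a small patch: it introduces the curve $L(\alpha;\delta)$ (Definition~\ref{def:Lalpha}) and proves in Lemma~\ref{Lem:left_bound} that \emph{any} image $(\psi_1,\psi_2)$ lies above $L$, so for $t_0\ge1$ one may restrict $\mathcal{R}_2$ to the sub-region $\mathcal{R}_{2a}$ between $L$ and $F_1^{-1}$. The replacement for your envelope bound is then Lemma~\ref{Lem:RegionIII_bound}, which shows $\psi_2(\alpha,\sigma^2;\delta)<F_1^{-1}(\alpha)$ on $\mathcal{R}_{2a}$; this is the hard step and requires a case split ($\alpha\gtrless\alpha_\ast$, small versus large $\delta$) together with the auxiliary monotonicity results in Lemmas~\ref{Lem:monotonicity_Psi2_G1} and \ref{Lem:monotonicity_delta}. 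With that bound in hand, the paper avoids your oscillation bookkeeping entirely by tracking the monotone auxiliary sequences $\tilde\alpha_{t+1}=\min\{\alpha_t,F_1(\sigma_t^2)\}$ and $\tilde\sigma^2_{t+1}=\max\{\sigma_t^2,F_1^{-1}(\alpha_t)\}$, which dominate the true iterates from the ``bad'' side and converge to $(1,0)$ directly, so the appeal to Theorem~\ref{Lem:fixed_point} at the end is unnecessary.
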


This claim will be proved in Section \ref{Sec:proof_lem_regionI_III}. Notice that the condition $t_0\ge1$ is important for part (i) to hold: if $(\alpha_0,\sigma^2_0)$ is close to the origin (and thus in $\mathcal{R}_2$), then $(\alpha_1,\sigma^2_1)$ can move to $\mathcal{R}_0$. However, this cannot happen when $t\ge1$. In the proof given in Section \ref{Sec:proof_lem_regionI_III}, we showed that for any $(\alpha_0,\sigma^2_0)\in\mathcal{R}$ the possible locations of $(\alpha_1,\sigma^2_1)$ are bounded from below by a curve, and once $(\alpha,\sigma^2)$ is above this curve and also in region $\mathcal{R}_1$ or $\mathcal{R}_2$, then we will prove that it cannot go to $\mathcal{R}_0$. Finally, we will prove the following Lemma that completes the proof. 

\begin{lemma}\label{Lem:regionII_IV}
Suppose that $\delta>\deltaAMP$. Let $\{\alpha_t\}_{t\ge1}$ and $\{\sigma^2_t\}_{t\ge1}$ be the  sequences generated according to \eqref{Eqn:SE_complex} from any $(\alpha_0,\sigma^2_0)\in\mathcal{R}_0$. Then, there exists a finite number $T\ge1$ such that $(\alpha_T,\sigma_T^2)\in\mathcal{R}_1\cup\mathcal{R}_2$.
\end{lemma}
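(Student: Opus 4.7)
The plan is to show that in $\mathcal{R}_0$ the variance iterate $\sigma^2_t$ is strictly contracted at every step by a uniform amount, so it is forced below $\pi^2/16$ in finitely many iterations. The geometric picture driving the argument is as follows: in $\mathcal{R}_0$ we have $\sigma^2 > \pi^2/16 = F_1^{-1}(0)$, and since $\delta > \deltaAMP$ strictly, Lemma~\ref{Lem:F1_F2_complex} together with the monotonicity of $F_1^{-1}$ gives $F_2(\alpha;\delta) < F_1^{-1}(\alpha) \le \pi^2/16$ for every $\alpha \in (0,1]$, while at $\alpha=0$ the strict inequality $F_2(0;\delta)<\pi^2/16$ follows from $\delta>\deltaAMP$. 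Thus every point in $\mathcal{R}_0$ lies strictly above the $\psi_2$-nullcline, so the $\sigma^2$-update ought to push $\sigma^2$ down.

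The first (and main) step is to prove the pointwise strict inequality $\psi_2(\alpha,\sigma^2;\delta) < \sigma^2$ on the compact closure $\overline{\mathcal{R}_0} \Mydef \{(\alpha,\sigma^2):0\le \alpha\le 1,\ \pi^2/16\le \sigma^2\le \sigma^2_{\max}\}$. Using the closed-form expression \eqref{Eqn:map_expression_complex_b}, the function $\Delta(\alpha,\sigma^2) \Mydef \sigma^2 - \psi_2(\alpha,\sigma^2;\delta)$ is continuous on $\overline{\mathcal{R}_0}$. For fixed $\alpha$, the one-variable map $\sigma^2 \mapsto \psi_2(\alpha,\sigma^2;\delta)$ has $F_2(\alpha;\delta)$ as its relevant fixed point, and I would show, by computing $\partial \psi_2/\partial(\sigma^2)$ and bounding it using the elliptic-integral identities in Lemma~\ref{Lem:elliptic} and Lemma~\ref{Lem:Aux_2}, that no further fixed point of $\psi_2(\alpha,\cdot;\delta)$ lies in $[\pi^2/16,\sigma^2_{\max}]$; consequently $\psi_2 < \sigma^2$ throughout this interval. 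The two boundary cases $\alpha=0$ and $\alpha=1$ need to be handled separately: at $\alpha=0$, using the closed form, $\psi_2(0,\sigma^2;\delta) = \sigma^2$ forces $\sigma^2 = F_2(0;\delta)<\pi^2/16$; at $\alpha=1$, $F_2(1;\delta)=0$ and the same argument applies.

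The second step is a short compactness argument. By continuity of $\Delta$ on the compact set $\overline{\mathcal{R}_0}$, the strict pointwise inequality $\Delta>0$ yields a uniform lower bound
\BE
c \Mydef \min_{(\alpha,\sigma^2)\in \overline{\mathcal{R}_0}} \Delta(\alpha,\sigma^2) > 0. \nonumber
\EE
By Lemma~\ref{Lem:region_bound} the iterates remain in $\mathcal{R}$, so as long as $(\alpha_t,\sigma^2_t)\in \mathcal{R}_0$ we have $\sigma^2_{t+1} \le \sigma^2_t - c$. Since $\sigma^2_t \ge 0$, this descent can persist for at most $T \le \lceil \sigma^2_{\max}/c\rceil$ iterations, after which $\sigma^2_T \le \pi^2/16$ and hence $(\alpha_T,\sigma^2_T)\in \mathcal{R}_1\cup \mathcal{R}_2$, which is the desired conclusion.

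The hardest part will be the first step: proving the strict uniform inequality $\psi_2(\alpha,\sigma^2;\delta)<\sigma^2$ on $\overline{\mathcal{R}_0}$ is the precise quantitative content in which the hypothesis $\delta>\deltaAMP$ is used. The potential pitfall is that $\sigma^2 \mapsto \psi_2(\alpha,\sigma^2;\delta)$ need not be monotone, so one cannot simply invoke the stability of $F_2(\alpha;\delta)$; a second, unstable fixed point of $\psi_2$ in $[\pi^2/16,\sigma^2_{\max}]$ must be ruled out. I expect this reduces to an elliptic-integral calculation similar in flavor to those used in the proof of Lemma~\ref{Lem:F1_F2_complex}, with the critical threshold again landing exactly at $\deltaAMP=\frac{64}{\pi^2}-4$; the equality case $\delta=\deltaAMP$ would saturate the bound only at the single boundary point $(\alpha,\sigma^2)=(0,\pi^2/16)$, which lies outside $\overline{\mathcal{R}_0}$ when $\delta>\deltaAMP$ strictly.
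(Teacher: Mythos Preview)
Your approach is essentially the same as the paper's, and correct, but you are overcomplicating the first step. The ``hard part'' you flag---showing $\psi_2(\alpha,\sigma^2;\delta)<\sigma^2$ throughout $\overline{\mathcal{R}_0}$ and ruling out a second fixed point of $\psi_2(\alpha,\cdot\,;\delta)$ in $[\pi^2/16,\sigma^2_{\max}]$---is already the content of Lemma~\ref{lem:psi2}(iv), which establishes that for $\delta\ge\deltaAMP$ the fixed point $F_2(\alpha;\delta)$ is unique and weakly globally attracting on $[0,\sigma^2_{\max}]$. Combined with $F_2(\alpha;\delta)\le F_1^{-1}(\alpha)\le\pi^2/16$ (Lemma~\ref{Lem:F1_F2_complex} and monotonicity of $F_1^{-1}$), this gives $\psi_2<\sigma^2$ on $\mathcal{R}_0$ immediately, with no new elliptic-integral work needed.

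For the second step, the paper takes a slightly different route than your compactness argument: it notes that $\sigma^2_t$ is strictly decreasing while the trajectory stays in $\mathcal{R}_0$, hence convergent; if it never left $\mathcal{R}_0$ the limit would be a fixed point of $\psi_2$ lying in $[\pi^2/16,\sigma^2_{\max}]$, contradicting the uniqueness of $F_2(\alpha;\delta)<\pi^2/16$. Your uniform-gap argument via compactness is equally valid and gives an explicit iteration bound $T\le\lceil\sigma^2_{\max}/c\rceil$, which the paper's contradiction argument does not; conversely the paper's argument avoids having to verify strict inequality at the boundary $\sigma^2=\sigma^2_{\max}$ of the closed rectangle (though this also follows from Lemma~\ref{lem:psi2}(iv) and uniqueness of the fixed point).
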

The proof of this result is in Section \ref{ssec:proofLemma8}. Combining the above two lemmas, it is straightforward to see that $(\alpha_t, \sigma_t^2) \rightarrow (1,0)$, and hence the proof is complete. 

Below we present the missing details. 

\subsubsection{Properties of $\psi_1$ and $\psi_2$}\label{ssec:psi1_2prop}

In this section we derive all the main properties of $\psi_1$ and $\psi_2$ that are used throughout the paper. 

\begin{lemma}\label{lem:psi1} 
$\psi_1\left(\alpha,\sigma^2\right)$ has the following properties (for $\alpha\ge0$): 
\begin{enumerate}
\item[(i)] $\psi_1\left(\alpha,\sigma^2\right)$ is a concave and strictly increasing function of $\alpha>0$, for any given $\sigma^2>0$.
\item[(ii)] $0<\psi_1(\alpha,\sigma^2)\le1$, for $\alpha>0$ and $\sigma^2>0$.
\item[(iii)] If $0<\sigma^2 < \pi^2/16$, then there are two nonnegative solutions to $\alpha=\psi_1(\alpha,\sigma^2)$: $\alpha=0$ and $\alpha=F_1(\sigma^2)>0$. Further, $F_1(\sigma^2)$ is strongly globally attracting, meaning that
\BS\label{Eqn:psi1_contracting}
\BE
\alpha<\psi_1(\alpha,\sigma^2)<F_1(\sigma^2),\quad\alpha\in(0,F_1(\sigma^2)),
\EE
and
\BE
F_1(\sigma^2) < \psi_1(\alpha,\sigma^2) <  \alpha,\quad \alpha \in( F_1(\sigma^2),\infty).
\EE
\ES
On the other hand, if $\sigma^2 \ge \pi^2/16$ then $\alpha =0$ is the unique nonnegative fixed point and it is strongly globally attracting.
\end{enumerate}
\end{lemma}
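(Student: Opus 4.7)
The plan is to work entirely from the closed-form expression
\[
\psi_1(\alpha, \sigma^2) = \int_0^{\pi/2} \frac{\alpha \sin^2\theta}{\sqrt{\alpha^2 \sin^2\theta + \sigma^2}}\, \mr{d}\theta
\]
from \eqref{Eqn:map_expression_complex_a}, and to differentiate twice under the integral sign (justified on compact $\alpha$-intervals by dominated convergence, since for $\sigma^2 > 0$ all integrands are continuous and uniformly bounded). A direct calculation gives
\[
\frac{\partial \psi_1}{\partial \alpha}(\alpha, \sigma^2) = \int_0^{\pi/2} \frac{\sigma^2 \sin^2\theta}{(\alpha^2 \sin^2\theta + \sigma^2)^{3/2}}\, \mr{d}\theta > 0
\]
and
\[
\frac{\partial^2 \psi_1}{\partial \alpha^2}(\alpha, \sigma^2) = -\int_0^{\pi/2} \frac{3\alpha\sigma^2 \sin^4\theta}{(\alpha^2 \sin^2\theta + \sigma^2)^{5/2}}\, \mr{d}\theta,
\]
which is strictly negative for $\alpha > 0$, $\sigma^2 > 0$. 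This delivers part (i) (strict monotonicity and strict concavity in $\alpha$). Part (ii) then follows easily: positivity is clear from positivity of the integrand, and the upper bound uses $\sqrt{\alpha^2\sin^2\theta + \sigma^2} \ge \alpha\sin\theta$ to give $\psi_1(\alpha,\sigma^2) \le \int_0^{\pi/2}\sin\theta\, \mr{d}\theta = 1$.

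For part (iii) I would study $\phi(\alpha) \Mydef \psi_1(\alpha, \sigma^2) - \alpha$, whose nonnegative zeros are precisely the nonnegative fixed points of $\alpha \mapsto \psi_1(\alpha,\sigma^2)$. Since $\phi(0) = 0$, $\alpha = 0$ is always a fixed point. Letting $\alpha \to 0^+$ in the expression for $\partial_\alpha \psi_1$ gives
\[
\psi_1'(0, \sigma^2) = \frac{1}{\sigma}\int_0^{\pi/2} \sin^2\theta\, \mr{d}\theta = \frac{\pi}{4\sigma},
\]
so $\phi'(0) = \frac{\pi}{4\sigma} - 1$, which is positive iff $\sigma^2 < \pi^2/16$. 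Combining this with the strict concavity of $\psi_1$ (hence of $\phi$) and with $\phi(\alpha) \to -\infty$ as $\alpha \to \infty$ (a consequence of $\psi_1 \le 1$ being bounded while $-\alpha \to -\infty$), I obtain a clean dichotomy: if $\sigma^2 < \pi^2/16$, then $\phi$ has exactly one positive zero, which I name $F_1(\sigma^2)$, with $\phi > 0$ on $(0, F_1(\sigma^2))$ and $\phi < 0$ on $(F_1(\sigma^2), \infty)$; if $\sigma^2 \ge \pi^2/16$, then $\phi'(0) \le 0$ and strict concavity force $\phi(\alpha) < 0$ for all $\alpha > 0$, so $\alpha = 0$ is the unique nonnegative fixed point. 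The strong attraction inequalities \eqref{Eqn:psi1_contracting} then follow by combining the sign of $\phi(\alpha)$ (which controls whether $\psi_1(\alpha,\sigma^2)$ lies above or below $\alpha$) with the strict monotonicity of $\psi_1$ in $\alpha$ together with the identity $\psi_1(F_1(\sigma^2),\sigma^2) = F_1(\sigma^2)$, which pins $\psi_1(\alpha,\sigma^2)$ on the correct side of $F_1(\sigma^2)$.

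The bulk of the argument is routine; the only slightly delicate point is the boundary case $\sigma^2 = \pi^2/16$, where $\phi'(0) = 0$ exactly. Here the conclusion that no positive fixed point exists rests on the second-derivative integrand being not merely nonpositive but strictly negative on a set of positive measure once $\alpha > 0$, i.e.\ on $\psi_1$ being \emph{strictly} concave rather than merely concave. This is the one place where I would be most careful to make sure every inequality in the chain is strict.
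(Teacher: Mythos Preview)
Your proof is correct and follows essentially the same approach as the paper: differentiate under the integral sign to get monotonicity and strict concavity, bound the integrand pointwise for part (ii), and analyze the auxiliary function $\phi(\alpha)=\psi_1(\alpha,\sigma^2)-\alpha$ via its derivative at zero together with concavity for part (iii). If anything, you are slightly more careful than the paper at the boundary $\sigma^2=\pi^2/16$, explicitly invoking strict concavity to rule out a positive fixed point there; the paper treats only the cases $\sigma^2>\pi^2/16$ and $\sigma^2<\pi^2/16$ separately.
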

\begin{proof}

\noindent \textit{Part (i):} From \eqref{Eqn:map_expression_complex}, it is easy to verify that $\psi_1(\alpha,\sigma^2)$ is an increasing function of $\alpha>0$. We now prove its concavity. To this end, we calculate its first and second partial derivatives:
\BS\label{Eqn:psi1_dif}
\begin{align}
\frac{\partial\psi_1(\alpha,\sigma^2) }{\partial \alpha}&=\int_0^{\frac{\pi}{2}} \frac{\sin^2\theta\cdot\sigma^2}{(\alpha^2\sin^2\theta+\sigma^2)^{\frac{3}{2}}}\mr{d}\theta, \label{Eqn:psi1_dif_a}\\
\frac{\partial_1^2 \psi_1(\alpha,\sigma^2)}{\partial \alpha^2}&= \int_0^{\frac{\pi}{2}}\frac{-3 \sin^4\theta\cdot\sigma^2\alpha}{(\alpha^2\sin^2\theta+\sigma^2 )^{\frac{5}{2}}}\mr{d}\theta<0,\quad\forall \alpha>0,\sigma^2>0.
\end{align}
\ES
Hence, $\psi_2(\alpha,\sigma^2)$ is a concave function of $\alpha$ for $\alpha>0$. \\

\noindent \textit{Part (ii):}  Positivity of $\psi_1$ is obvious. Also, note that 
\[
\psi_1(\alpha, \sigma^2) = \int_{0}^{\pi/2} \frac{\sin^2 \theta}{( \sin^{2} (\theta) + \frac{\sigma^2}{\alpha^2})^\frac{1}{2}} d\theta \leq \int_{0}^{\pi/2} \sin\theta d \theta = 1. 
 \]

\noindent \textit{Proof of (iii)}: The claim is a consequence of the concavity of $\psi_1$ (with respect to $\alpha$) and the following condition:
\[
\left. \frac{\partial \psi_1(\alpha,\sigma^2)}{\partial \alpha} \right|_{\alpha=0}=1\Longleftrightarrow \sigma^2=\frac{\pi^2}{16}.
\]

The detailed proof is as follows. First, it is straightforward to verify that $\alpha=0$ is always a solution to $\alpha=\psi_1(\alpha,\sigma^2)$. 
Define
\BE
\Psi_1(\alpha,\sigma^2)\Mydef  \psi_1(\alpha,\sigma^2) -\alpha.\nonumber
\EE
Since $\Psi_1(\alpha,\sigma^2)$ is a concave function of $\alpha$ (as $\psi_1(\alpha,\sigma^2)$ is concave), $\frac{\partial \Psi_1(\alpha,\sigma^2)}{\partial\alpha}$ is decreasing. Let's first consider $\sigma^2> \pi^2/16$. In this case we know that

\BE\label{Eqn:psi1_local_stability}
\frac{ \partial\Psi_1(\alpha,\sigma^2)}{\partial\alpha} \leq \frac{ \partial\Psi_1(\alpha,\sigma^2)}{\partial\alpha}\Big|_{\alpha=0} =\frac{ \partial\psi_1(\alpha,\sigma^2)}{\partial\alpha}\Big|_{\alpha=0}-1=\frac{\pi}{4\sigma}-1<0,
\EE
where the second equality can be calculated from \eqref{Eqn:psi1_dif_a}. Since $\Psi_1 (\alpha, \sigma^2)$ is a decreasing function of $\alpha$ and is equal to zero at zero, and it does not have any other solution. Now, consider case $\sigma^2< \pi^2/16$. It is straightforward to confirm that 
\[
\frac{ \partial\Psi_1(\alpha,\sigma^2)}{\partial\alpha}\Big|_{\alpha=0} =\frac{ \partial\psi_1(\alpha,\sigma^2)}{\partial\alpha}\Big|_{\alpha=0}-1=\frac{\pi}{4\sigma}-1 >0.
\]
Furthermore,  from \eqref{Eqn:psi1_dif_a}  we have $\frac{\partial \psi_1(\alpha,\sigma^2)}{\partial\alpha}\Big|_{\alpha\to\infty} =0$, and so
\BE
\frac{\partial \Psi_1(\alpha,\sigma^2)}{\partial\alpha}\Big|_{\alpha\to\infty}\to-1.\nonumber
\EE
Hence, $\Psi_1(\alpha, \sigma^2)=0$ has exactly one more solution for $\alpha>0$. Note that since from part (ii) $\psi_1(\alpha, \sigma^2) <1$, the solution of $\alpha = \psi_1(\alpha, \sigma^2)$ also satisfies $\alpha \leq 1$.  

Finally, the strong global attractiveness follows from the fact that $\psi_1$ is a strictly increasing function of $\alpha$.

\end{proof}

\begin{lemma}\label{lem:psi2}
$\psi_2\left(\alpha,\sigma^2;\delta\right)$ has the following properties: 
\begin{enumerate}

\item[(i)] If $\delta<2$, then $\sigma^2=0$ is a locally unstable fixed point to $\sigma^2=\psi_2\left(\alpha,\sigma^2;\delta\right)$, meaning that
\[
\frac{\partial \psi_2(\alpha,\sigma^2;\delta)}{\partial\sigma^2}\Big|_{\alpha=1,\sigma^2=0}>1.
\]
\item [(ii)] For any $\delta>2$, $\sigma^2=\psi_2\left(\alpha,\sigma^2;\delta\right)$ has a unique fixed point in $\sigma^2\in[0,1]$ for any $\alpha\in[0,1]$. Further, the fixed point is (weakly) globally attracting in $\sigma^2\in[0,1]$:
\BS\label{Eqn:psi2_contracting}
\BE
\sigma^2<\psi_2(\alpha,\sigma^2;\delta) ,\quad \sigma^2\in(0,F_2(\alpha)),
\EE
and
\BE
\psi_2(\alpha,\sigma^2) < \sigma^2,\quad \sigma^2\in( F_2(\alpha),1).
\EE
\ES
 \item[(iii)] 
 If $\delta\ge \deltaAMP$, then for any $\alpha\in[0,1]$, we have
 \BE
0\le\psi_2(\alpha,\sigma^2;\delta)\le \sigma^2_{\max},\quad  \sigma^2\in[0,\sigma^2_{\max}],\nonumber
 \EE
where $\sigma^2_{\max}\Mydef \max\{1,4/\delta\}$.
\item[(iv)] If $\delta\ge \deltaAMP$, then for any $\alpha\in[0,1]$, $F_2(\alpha)$ is the unique (weakly) globally attracting fixed point of $\sigma^2=\psi_2(\alpha,\sigma^2;\delta)$ in $\sigma^2\in[0,\sigma^2_{\max}]$. Namely,
\BS\label{Eqn:psi2_contracting2}
\BE
\sigma^2<\psi_2(\alpha,\sigma^2;\delta) ,\quad \sigma^2\in(0,F_2(\alpha)),
\EE
and
\BE
\psi_2(\alpha,\sigma^2) < \sigma^2,\quad \sigma^2 \in(F_2(\alpha), \sigma^2_{\max}).
\EE
\ES
\item[(v)] For any $\delta>0$, $\psi_2(\alpha,\sigma^2;\delta)$ is an increasing function of $\sigma^2>0$ if
\BE\label{Eqn:alpha_ast_def}
\alpha>\alpha_{\ast}\Mydef\frac{1}{2\sqrt{1+s_{\ast}^2 }} E\left( \frac{1}{1 +s_{\ast}^2} \right)\approx0.53,
\EE
where $s_{\ast}^2$ is the unique solution to
\BE
2E\left(\frac{1}{1+s_{\ast}^2}\right)=K\left(\frac{1}{1+s_{\ast}^2}\right).\nonumber
\EE
Here, $K(\cdot)$ and $E(\cdot)$ denote the complete elliptic integrals introduced in \eqref{Eqn:def_elliptic}. Further, when $\alpha>\alpha_{\ast}$ and $\delta>\deltaAMP$, then $F_2(\sigma^2)$ is strongly globally attracting in $[0,\sigma^2_{\max}]$. Specifically,
\[
\sigma^2<\psi_2(\alpha,\sigma^2;\delta) <F_2(\alpha), \quad\sigma^2\in(0,F_2(\alpha)),
\]
and
\[
F_2(\alpha)<\psi_2(\alpha,\sigma^2) < \sigma^2,\quad \sigma^2 \in(F_2(\alpha), \sigma^2_{\max}).
\]
\end{enumerate}
\end{lemma}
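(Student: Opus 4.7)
All five claims follow from analyzing the closed form \eqref{Eqn:map_expression_complex_b} through its derivative in $\sigma^2$. Differentiating under the integral sign and applying identity (iii) of \eqref{Eqn:integral_identities} with $m=\alpha^2/\sigma^2$ yields the key formula
\begin{equation*}
\frac{\partial \psi_2(\alpha,\sigma^2;\delta)}{\partial \sigma^2} \;=\; \frac{4}{\delta}\left(1-\frac{1}{2\sqrt{\alpha^2+\sigma^2}}\,E\!\left(\frac{\alpha^2}{\alpha^2+\sigma^2}\right)\right).
\end{equation*}
The plan is to combine this expression with boundary evaluations of $\psi_2$ at $\sigma^2=0$ and at $\sigma^2\in\{1,\sigma^2_{\max}\}$, exploiting the monotonicity and value properties of $E,K$ from Lemma~\ref{Lem:elliptic}.

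\textbf{Parts (i) and (v).} For (i), substitute $\alpha=1$ and $\sigma^2=0$ into the central formula: since $E(1)=1$, the derivative equals $2/\delta$, which exceeds $1$ iff $\delta<2$. For (v), setting $m=\alpha^2/(\alpha^2+\sigma^2)$ recasts $\partial_{\sigma^2}\psi_2>0$ as $\alpha>\tfrac{1}{2}\sqrt{m}\,E(m)$. Differentiating $\sqrt{m}E(m)$ with the identity $E'(m)=(E-K)/(2m)$ gives $\tfrac{d}{dm}[\sqrt{m}E(m)]=(2E-K)/(2\sqrt{m})$, so the critical equation is $2E(m_*)=K(m_*)$; since $2E-K$ is strictly decreasing on $(0,1)$ by Lemma~\ref{Lem:elliptic}~(ii) and straddles zero, $m_*$ exists and is unique, and $\alpha_*=\tfrac{1}{2}\sqrt{m_*}E(m_*)$ matches the lemma's form after the substitution $m=1/(1+s^2)$. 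The strong-attraction statement in (v) then follows by combining the resulting strict monotonicity of $\sigma^2\mapsto\psi_2(\alpha,\sigma^2;\delta)$ with the fixed-point value $F_2(\alpha)$ supplied by (iv).

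\textbf{Parts (ii), (iii), (iv).} Set $\Psi_2(\sigma^2):=\psi_2(\alpha,\sigma^2;\delta)-\sigma^2$; a direct evaluation gives $\Psi_2(0)=4(1-\alpha)^2/\delta\geq 0$, with equality iff $\alpha=1$. For (iii), reducing $\psi_2(\alpha,\sigma^2;\delta)$ at $\sigma^2=\sigma^2_{\max}$ via identities~(i) and (iii) of \eqref{Eqn:integral_identities} converts the invariance condition $\psi_2\leq\sigma^2_{\max}$ into a scalar inequality in $E$ and $K$; maximizing over $\alpha\in[0,1]$ and $\sigma^2\in[0,\sigma^2_{\max}]$ shows the inequality becomes tight precisely at $\delta=\deltaAMP=\tfrac{64}{\pi^2}-4$. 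Existence of $F_2(\alpha)\in(0,\sigma^2_{\max}]$ then follows from the sign change of $\Psi_2$ together with the intermediate-value theorem. Uniqueness (parts (ii) and (iv)) comes from the equation $\Psi_2'(\sigma^2)=0$, which via the central formula becomes $\sqrt{m}\,E(m)=(4-\delta)\alpha/2$; the unimodality of $\sqrt{m}E(m)$ established in (v) implies at most two roots in $m$, and therefore at most two critical points of $\Psi_2$, which together with the endpoint signs forces a single interior zero of $\Psi_2$. The contraction inequalities \eqref{Eqn:psi2_contracting} and \eqref{Eqn:psi2_contracting2} are then read off from the sign of $\Psi_2$ on each side of $F_2(\alpha)$.

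\textbf{Main obstacle.} I expect the hardest step to be part (iii): pinpointing the precise threshold $\deltaAMP=64/\pi^2-4$ requires identifying the worst case of $\psi_2(\alpha,\sigma^2;\delta)-\sigma^2_{\max}$ over the full box $[0,1]\times[0,\sigma^2_{\max}]$, and I expect the extremum to sit at a boundary configuration involving $\sigma^2=\pi^2/16$, the value from Lemma~\ref{lem:psi1} at which $\psi_1$ loses its nontrivial fixed point. Evaluating $E$ and $K$ at the corresponding endpoint using $E(0)=K(0)=\pi/2$ is what produces the $64/\pi^2$ factor. A secondary obstacle is uniqueness in (ii), because $\Psi_2'$ is generically sign-changing; controlling this rests on the unimodality of $\sqrt{m}E(m)$, which in turn rides on the monotonicity of $2E(m)-K(m)$ from Lemma~\ref{Lem:elliptic}~(ii).
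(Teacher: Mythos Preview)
Your derivative formula and the reparametrization $m=\alpha^2/(\alpha^2+\sigma^2)$ match the paper's route exactly, and Parts~(i) and~(v) are correct. There are, however, two genuine gaps.

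\textbf{Part (iii).} Your plan to ``pinpoint the precise threshold $\deltaAMP=64/\pi^2-4$'' from the invariance inequality misreads the role of $\deltaAMP$ in this lemma. The constant $\deltaAMP$ is \emph{not} determined here; it is fixed elsewhere as the value for which $F_2(0;\delta)=F_1^{-1}(0)=\pi^2/16$ (this is where $\pi^2/16$ actually enters --- see Lemma~\ref{Lem:2} and Lemma~\ref{Lem:F1_F2_complex}). For Part~(iii) the paper proves the \emph{stronger} bound $\psi_2\le 4/\delta$ on the whole box $[0,1]\times[0,\sigma^2_{\max}]$ by rewriting it as an integral inequality and bounding each piece monotonically in $\alpha,\sigma^2$; the worst case is checked numerically at $\sigma^2=4/\deltaAMP$ (the largest $\sigma^2_{\max}$ can be when $\delta\ge\deltaAMP$), with margins $\approx 1.09>1$ and $\approx 1.02>1$. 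Nothing is tight at $\deltaAMP$, and the extremum is not at $\sigma^2=\pi^2/16$. Your proposed route would not close.

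\textbf{Uniqueness in (ii) and (iv).} The claim ``at most two critical points of $\Psi_2$ together with the endpoint signs forces a single interior zero'' is false as stated: with two critical points and a sign change at the endpoints, three zeros are possible. The paper's argument is sharper. For $\delta>4$ one has $\partial_{\sigma^2}\psi_2<4/\delta<1$, so $\Psi_2$ is strictly decreasing. For $2<\delta\le4$, as $\sigma^2$ increases from $0$ the quantity $\sqrt{m}\,E(m)$ starts at $1>(4-\delta)\alpha/2$, rises, then falls to $0$; hence $\Psi_2'$ changes sign \emph{exactly once}, so $\Psi_2$ is U-shaped with two zeros on $(0,\infty)$, the larger of which exceeds $1$ because $\Psi_2(1)<0$ (shown in the paper via a separate concavity-in-$\alpha^2$ argument you omit). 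For Part~(iv) the paper does not reuse this; it instead proves $\partial_{\sigma^2}\psi_2<1$ on all of $[0,\sigma^2_{\max}]$ directly, using $\sin\theta\le\theta$ and a numerical bound ($\approx 0.98<1$), again exploiting $\sigma^2_{\max}\le 4/\deltaAMP$.
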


\begin{proof}
First note that the partial derivative of $\psi_2$ w.r.t. $\sigma^2$ is given by
\BE\label{Eqn:psi2_dif_sigma2}
\frac{\partial \psi_2(\alpha,\sigma^2;\delta)}{\partial\sigma^2} = 
\frac{4}{\delta}\left( 1-\frac{1}{2}\int_0^{\frac{\pi}{2}} \frac{\sigma^2}{(\alpha^2\sin^2\theta+\sigma^2)^{\frac{3}{2}}} \mr{d}\theta\right).
\EE


\textit{Part (i):} Before we proceed, we first comment on the discontinuity of  the partial derivative $\frac{\partial \psi_2(\alpha,\sigma^2;\delta)}{\partial\sigma^2}$ at $\sigma^2=0$. Note that the formula in \eqref{Eqn:psi2_dif_sigma2} was derived for non-zero values of $\sigma^2$. Naively, one may plug in $\sigma^2=0$ in the equation and assume that $\frac{\partial \psi_2(\alpha,\sigma^2;\delta)}{\partial\sigma^2}\Big|_{\alpha=1,\sigma^2=0}=\frac{4}{\delta}$. This is not the case since the integral $\int_{0}^{\pi/2} \frac{d\theta}{\sin \theta}$ is divergent. It turns out that the derivative $\frac{\partial \psi_2(\alpha,\sigma^2;\delta)}{\partial\sigma^2}$ is a continuous function of $\sigma^2$. The technical details can be found in Appendix \ref{sec:proofscontinuity}. 

Since $\frac{\partial \psi_2(\alpha,\sigma^2;\delta)}{\partial\sigma^2}$ is continuous at $\sigma^2=0$, we have 
\[
\frac{\partial \psi_2(\alpha,\sigma^2;\delta)}{\partial\sigma^2}\Big|_{\alpha=1,\sigma^2=0}= \lim_{\sigma^2 \rightarrow 0} \frac{\partial \psi_2(1,\sigma^2;\delta)}{\partial\sigma^2}.
\]
Note that if we set $m= 1/\sigma^2$, then from \eqref{Eqn:integral_identities} we have
\begin{eqnarray*}
\frac{\partial \psi_2(1,\sigma^2;\delta)}{\partial\sigma^2} = 
\frac{4}{\delta}\left( 1-\frac{1}{2}\int_0^{\frac{\pi}{2}} \frac{\sigma^2}{(\sin^2\theta+\sigma^2)^{\frac{3}{2}}} \mr{d}\theta\right) = \frac{4}{\delta} \left(1 - \frac{1}{2} \sqrt{\frac{m}{1+m}} E \left( {\frac{m}{m+1}}\right) \right).
\end{eqnarray*}
It is then straightforward to use Lemma \ref{Lem:elliptic} to prove that
\[
\lim_{m \rightarrow \infty}\frac{4}{\delta} \left(1 - \frac{1}{2} \sqrt{\frac{m}{1+m}} E \left( {\frac{m}{m+1}}\right) \right) = \frac{2}{\delta}. 
\]
Hence, $\frac{\partial \psi_2(\alpha,\sigma^2;\delta)}{\partial\sigma^2}\Big|_{\alpha=1,\sigma^2=0}>1$ for $\delta<2$. \\

\noindent \textit{Part (ii):} We first prove that the following equation has at least one solution for any $\alpha\in[0,1]$ and $\delta>2$:
\[
\sigma^2=\psi_2(\alpha,\sigma^2;\delta),\quad\sigma^2\in[0,1].
\]
It is straightforward to verify that 
\BE\label{eq:psi2zero}
\psi_2(\alpha,\sigma^2;\delta)|_{\sigma^2=0}=\frac{4}{\delta}(1-\alpha)^2\ge0.
\EE
We next prove our claim by proving the following:
\BE\label{Eqn:psi2_sigma2_1}
\psi_2(\alpha,\sigma^2;\delta)|_{\sigma^2=1}<1,\quad\forall \alpha\in[0,1]\text{ and }\delta>2.
\EE
From \eqref{Eqn:map_expression_complex_b}, we have
\BE\label{Eqn:lem_psi2_2}
\psi_2(\alpha,\sigma^2;\delta)|_{\sigma^2=1}<1\Longleftrightarrow\underbrace{\int_0^{\frac{\pi}{2}} \frac{2\alpha^2\sin^2\theta+1}{(\alpha^2\sin^2\theta+1)^{\frac{1}{2}}}\mr{d}\theta -\alpha^2}_{g(\alpha^2)}>2-\frac{\delta}{4}.
\EE
We next show that $g(\alpha^2)$ in \eqref{Eqn:lem_psi2_2} is a concave function of $\alpha^2$, and hence the minimum can only happen at either $\alpha=0$ or $\alpha=1$. The first two derivatives w.r.t. $\alpha^2$ are given by:
\[
\frac{\mr{d}g(\alpha^2)}{\mr{d}\alpha^2}
=\int_0^{\frac{\pi}{2}} \frac{\sin^2\theta\left(\alpha^2\sin^2\theta+\frac{3}{2}\right)}{(\alpha^2\sin^2\theta+1)^{\frac{3}{2}}}\mr{d}\theta -1,
\]
and
\[
\frac{\mr{d}^2g(\alpha^2)}{\mr{d}(\alpha^2)^2}=-\int_0^{\frac{\pi}{2}}\frac{\sin^4\theta\left(\frac{1}{2}\alpha^2\sin^2\theta+\frac{5}{4}\right)}{(\alpha^2\sin^2\theta+1)^{\frac{5}{2}}}\mr{d}\theta<0.
\]
The concavity of $g(\alpha^2)$ implies that its minimum happens at either $\alpha=0$ or $\alpha=1$. Hence, to prove \eqref{Eqn:lem_psi2_2}, it suffices to prove that
\[
g(0)=\frac{\pi}{2}>2-\frac{\delta}{4}\quad\text{and}\quad  g(1)\approx 1.509 >2-\frac{\delta}{4},
\]
which holds for $\delta>2$. Hence, \eqref{Eqn:lem_psi2_2} holds. By combining  \eqref{eq:psi2zero} and \eqref{Eqn:psi2_sigma2_1} we conclude that $\psi_2(\alpha, \sigma^2;\delta)$ has at least one fixed point between $\sigma^2= 0$ and $\sigma^2 =1$. The next step is to prove the uniqueness of this fixed point. For the rest of the proof, we discuss two cases separately: a) $\delta>4$ and b) $2<\delta\le 4$.

\begin{itemize}
\item [(a)] $\delta>4$. 
Define
\BE\label{Eqn:Psi2_def}
\Psi_2(\alpha,\sigma^2;\delta)\Mydef \psi_2(\alpha,\sigma^2;\delta)-\sigma^2.
\EE
From \eqref{Eqn:psi2_dif_sigma2}, if $\delta>4$, then $\frac{\partial \psi_2(\alpha,\sigma^2;\delta)}{\partial\sigma^2}<1$, $\forall\sigma^2>0$. This means that $\Psi_2(\alpha,\sigma^2;\delta)$ defined in \eqref{Eqn:Psi2_def} is monotonically decreasing in $\sigma^2>0$. Hence, the solution to $\Psi_2(\alpha,\sigma^2;\delta)=0$ is unique. Furthermore, the following property is a direct consequence of the monotonicity of $\Psi_2(\alpha,\sigma^2;\delta)$:
\BS\label{Eqn:psi2_attacting_2}
\BE
\Psi_2(\alpha,\sigma^2;\delta) < 0,\quad\forall 0<\sigma^2<F_2(\alpha),
\EE
and
\BE
\Psi_2(\alpha,\sigma^2;\delta) > 0 > \sigma^2,\quad\forall F_2(\alpha)<\sigma^2 < 1,
\EE
\ES
where $F_2(\alpha)$ denotes the solution to $\Psi_2(\alpha,\sigma^2;\delta)=0$.

\item [(b)] $2<\delta\le 4$. In this case, we will prove that there exists a threshold on $\sigma^2$, denoted as $\sigma^2_{\star}(\alpha;\delta)$ below, such that the following hold:
\BE\label{Eqn:psi2_dif_sigma2_cases}
\frac{\partial \psi_2(\alpha,\sigma^2;\delta)}{\partial\sigma^2}<1,\quad \forall \sigma^2<\sigma_{\star}^2(\alpha;\delta)\quad\text{and}\quad \frac{\partial \psi_2(\alpha,\sigma^2;\delta)}{\partial\sigma^2}>1,\quad \forall  \sigma^2\in(\sigma_{\star}^2(\alpha;\delta),\infty).
\EE
This means that $\Psi_2(\alpha,\sigma^2;\delta)=\psi_2(\alpha,\sigma^2;\delta)-\sigma^2$ is strictly decreasing on $\sigma^2\in(0,\sigma^2_{\star}(\alpha;\delta))$ and increasing on $\sigma^2\in(\sigma^2_{\star}(\alpha;\delta),\infty)$. Note that since we have proved that $\Psi_2(\alpha,\sigma^2;\delta) =0$ has at least one solution, we conclude that there exist exactly two solutions to $\Psi_2(\alpha,\sigma^2;\delta)=0$, one in $(0,\sigma^2_{\star}(\alpha;\delta))$ and the second in $(\sigma^2_{\star}(\alpha;\delta),\infty)$, if  $\Psi_2(\alpha,\sigma^2;\delta)|_{\sigma^2=\sigma^2_{\star}(\alpha;\delta)}<0$. This is the case since $\Psi_2(\alpha,\sigma^2;\delta)|_{\sigma^2=1}<0$ (see \eqref{Eqn:psi2_sigma2_1}), and that $\Psi_2(\alpha,\sigma^2;\delta)|_{\sigma^2=1}<\Psi_2(\alpha,\sigma^2;\delta)|_{\sigma^2=\sigma^2_{\star}(\alpha;\delta)}$ (since the latter is the global minimum of $\Psi_2(\alpha,\sigma^2;\delta)$ in $\sigma^2\in(0,\infty)$). 

Also, it is easy to prove \eqref{Eqn:psi2_attacting_2}. In fact, the following holds:
\[
\Psi_2(\alpha,\sigma^2;\delta) < 0,\quad\forall 0<\sigma^2<F_2(\alpha),
\]
and
\[
\Psi_2(\alpha,\sigma^2;\delta) > 0 > \sigma^2,\quad\forall F_2(\alpha)<\sigma^2 < \hat{F}_2(\alpha;\delta),
\]
where $\hat{F}_2(\alpha;\delta)>1$ denotes the larger solution to $\Psi_2(\alpha,\sigma^2;\delta)=0$. See Fig.~\ref{Fig:psi2_small_delta} for an illustration.

\begin{figure}[!htbp]
\centering
\includegraphics[width=.57\textwidth]{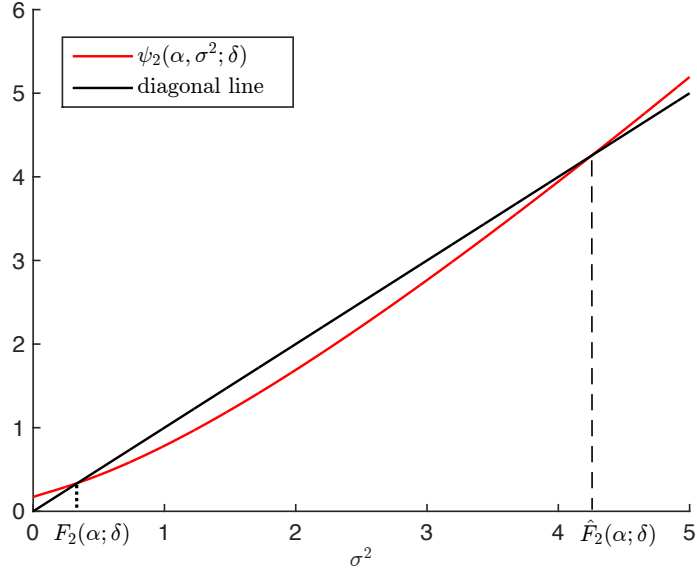}
\caption{Plot of $\psi_2(\alpha,\sigma^2;\delta)$ for $\alpha=0.7$ and $\delta=2.1$.}
\label{Fig:psi2_small_delta}
\end{figure}
From the above discussions, it remains to prove \eqref{Eqn:psi2_dif_sigma2_cases}. To this end, it is more convenient to express \eqref{Eqn:psi2_dif_sigma2} using elliptic integrals discussed in Section \ref{ssec:ellipticintegrals}:
\BS\label{Eqn:psi2_dif_sigma2_fs}
\begin{align}
\frac{\partial \psi_2(\alpha,\sigma^2;\delta)}{\partial\sigma^2} &= 
\frac{4}{\delta}\left( 1-\frac{1}{2}\int_0^{\frac{\pi}{2}} \frac{\sigma^2}{(\alpha^2\sin^2\theta+\sigma^2)^{\frac{3}{2}}} \mr{d}\theta\right)\\
&=\frac{4}{\delta\alpha}\Bigg(\alpha-\underbrace{ \frac{1}{2\sqrt{1+s^2}} E\left( \frac{1}{1 + s^2} \right)}_{f(s)} \Bigg), \label{Eqn:f_def}
\end{align}
\ES
where we introduced a new variable $s\Mydef\frac{\sigma}{\alpha}$ and the last step is derived using the identities in Lemma \ref{Lem:Aux_2}. Based on \eqref{Eqn:psi2_dif_sigma2_fs} we can now rewrite \eqref{Eqn:psi2_dif_sigma2_cases} as
\BE\label{Eqn:psi2_dif_cases_fs}
f(s)>\alpha\left(1-\frac{\delta}{4}\right),\quad\forall s<\frac{\sigma_{\star}(\alpha;\delta)}{\alpha}\quad\text{and}\quad f(s)<\alpha\left(1-\frac{\delta}{4}\right),\quad\forall s\in\left(\frac{\sigma_{\star}(\alpha;\delta)}{\alpha},\infty\right).
\EE
To prove this, we first show that there exists $s^*$ such that $f(s)$ is strictly increasing on $(0,s_{\ast})$ and decreasing on $(s_{\ast},\infty)$, namely,
\BS\label{Eqn:lem_psi2_fs_cases}
\BE
f'(s)>0,\  \text{for }s<s_{\ast},\quad\text{and}\quad f'(s)<0,\  \text{for }s>s_{\ast}.
\EE
$s_{\ast}$ is in fact the unique solution to the following equation:
\BE\label{Eqn:lem_psi2_1}
2E\left(\frac{1}{1+s_{\ast}^2}\right)=K\left(\frac{1}{1+s_{\ast}^2}\right).
\EE
\ES
This can be seen from $f'(s)$ derived below:
\BE
\begin{split}
f'(s)&=\frac{\mr{d}}{\mr{d}s}\frac{1}{2\sqrt{1+s^2}} E\left( \frac{1}{1 + s^2} \right)\\
&=\frac{s}{2(1+s^2)^{\frac{3}{2}}}\left[K\left(\frac{1}{1+s^2}\right)-2E\left(\frac{1}{1+s^2}\right)\right].
\end{split}\nonumber
\EE
Further noting that $E(\cdot)$ is strictly decreasing in $(0,1)$ while $K(\cdot)$ is increasing, we proved \eqref{Eqn:lem_psi2_fs_cases}. 
\begin{figure}[!htbp]
\centering
\includegraphics[width=.55\textwidth]{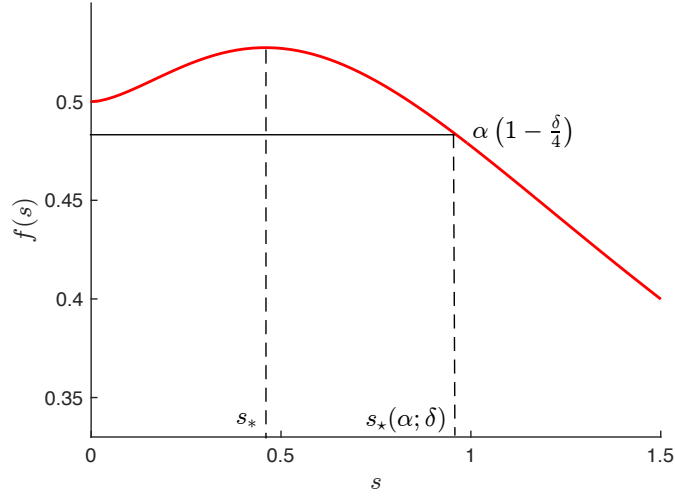}
\caption{Illustration of $f(s)$. }
\label{Fig:fs}
\end{figure}

Based on the above discussions, we can finally turn to the proof of \eqref{Eqn:psi2_dif_cases_fs}. From \eqref{Eqn:f_def}, it is straightforward to verify that $f(0)=\frac{1}{2}$. Therefore, when $\delta>2$, we have
\[
\alpha\left(1-\frac{\delta}{4}\right) \le 1-\frac{\delta}{4}<\frac{1}{2}=f(0),\quad \forall \delta>2\text{ and }0\le\alpha\le1.
\]
Hence, the following equation admits a unique solution (denoted as $s_{\star}(\alpha;\delta)$ below):
\[
f(s)=\alpha\left(1-\frac{\delta}{4}\right),\quad \forall \delta>2\text{ and }0\le\alpha\le1.
\]
See Fig.~\ref{Fig:fs} for an illustration.
Also, from our above discussions on the monotonicity of $f(s)$ it is straightforward to show that
\[
f(s)>\alpha\left(1-\frac{\delta}{4}\right),\quad \forall s<s_{\star}(\alpha;\delta)\quad\text{and}\quad f(s)<\alpha\left(1-\frac{\delta}{4}\right),\forall s\in\left(s_{\star}(\alpha;\delta),\infty\right),
\]
which proves \eqref{Eqn:psi2_dif_cases_fs} by setting $\sigma_{\star}(\alpha;\delta)\Mydef \alpha\cdot s_{\star}(\alpha;\delta)$. This proves \eqref{Eqn:psi2_dif_sigma2_cases}, which completes the proof.
\end{itemize}

\textit{Part (iii):}
We will prove a stronger result: $\psi_2\le 4/\delta$. From \eqref{Eqn:map_expression_complex_b}, $\psi_{2}(\alpha,\sigma^2;\delta) \le 4/\delta$ is equivalent to 
\[
\alpha^2+\sigma^2- \int_0^{\frac{\pi}{2}} \frac{2\alpha^2\sin^2\theta+\sigma^2}{(\alpha^2\sin^2\theta+\sigma^2)^{\frac{1}{2}}}\mr{d}\theta\le0,
\]
which can be further reformulated as
\BE\label{Eqn:Lemma1_1}
\alpha^2 \le \int_0^{\frac{\pi}{2}} \frac{2\alpha^2\sin^2\theta}{(\alpha^2\sin^2\theta+\sigma^2)^{\frac{1}{2}}}\mr{d}\theta+\sigma^2\left( \int_0^{\frac{\pi}{2}} \frac{1}{(\alpha^2\sin^2\theta+\sigma^2)^{\frac{1}{2}}}\mr{d}\theta - 1   \right).
\EE
For $0\le\alpha\le1$ and $\sigma^2\le\sigma^2_{\max}$ we have
\BE\label{Eqn:Lemma1_2}
\begin{split}
\int_0^{\frac{\pi}{2}} \frac{1}{(\alpha^2\sin^2\theta+\sigma^2)^{\frac{1}{2}}}\mr{d}\theta &\ge \int_0^{\frac{\pi}{2}} \frac{1}{\left(\sin^2\theta+\sigma_{\max}^2\right)^{\frac{1}{2}}}\mr{d}\theta,\\
&\overset{(a)}{=} \int_0^{\frac{\pi}{2}} \frac{1}{\left(\sin^2\theta+\frac{4}{\deltaAMP}\right)^{\frac{1}{2}}}\mr{d}\theta\\
&\approx1.09>1,
\end{split}
\EE
where step (a) from $\sigma^2_{\max}=\max\left\{1,4/\delta\right\}\ge\max\left\{1,4/\deltaAMP\right\}=4/\deltaAMP\approx1.6$.
Due to \eqref{Eqn:Lemma1_2}, to prove \eqref{Eqn:Lemma1_1}, it suffices to prove
\[
\alpha^2 \le \int_0^{\frac{\pi}{2}} \frac{2\alpha^2\sin^2\theta}{(\alpha^2\sin^2\theta+\sigma^2)^{\frac{1}{2}}}\mr{d}\theta,
\]
or
\[
1\le \int_0^{\frac{\pi}{2}} \frac{2\sin^2\theta}{(\alpha^2\sin^2\theta+\sigma^2)^{\frac{1}{2}}}\mr{d}\theta,
\]
which, similar to \eqref{Eqn:Lemma1_2}, can be proved by
\[
\int_0^{\frac{\pi}{2}} \frac{2\sin^2\theta}{(\alpha^2\sin^2\theta+\sigma^2)^{\frac{1}{2}}}\mr{d}\theta \ge \int_0^{\frac{\pi}{2}} \frac{2\sin^2\theta}{\left(\sin^2\theta+\frac{4}{\deltaAMP}\right)^{\frac{1}{2}}}\mr{d}\theta\approx1.02>1.
\]

\textit{Part (iv):} 
We bound the partial derivative of $\psi_{2}(\alpha,\sigma^2;\delta)$ for $\sigma^2\in[0,\sigma^2_{\max}]$ as:
\BE\label{Eqn:psi2_dif}
\begin{split}
\frac{ \psi_2(\alpha,\sigma^2;\delta)}{\partial\sigma^2} &= 
\frac{4}{\delta}\left( 1-\frac{1}{2}\int_0^{\frac{\pi}{2}} \frac{\sigma^2}{(\alpha^2\sin^2\theta+\sigma^2)^{\frac{3}{2}}} \mr{d}\theta\right)\\
& \overset{(a)}{\le} \frac{4}{\delta}\left( 1-\frac{1}{2}\int_0^{\frac{\pi}{2}} \frac{\sigma^2}{(\theta^2+\sigma^2)^{\frac{3}{2}}} \mr{d}\theta\right)\\
& \overset{(b)}{=} \frac{4}{\delta}\left( 1-\frac{1}{2}\int_0^{\frac{\pi}{2\sigma}} \frac{1}{(\tilde{\theta}^2+1)^{\frac{3}{2}}} \mr{d}\tilde{\theta}\right)\\
&\overset{(c)}{\le} \frac{4}{\deltaAMP}\left( 1-\frac{1}{2}\int_0^{\frac{\pi}{2\sqrt{\frac{4}{\deltaAMP}}}} \frac{1}{(\tilde{\theta}^2+1)^{\frac{3}{2}}} \mr{d}\tilde{\theta}\right)\\
&\approx0.98 <1,
\end{split}
\EE
where step $(a)$ follows from the constraint $0\le \alpha\le 1$ and the inequality $\sin \theta\le \theta$; $(b)$ is due to the variable change $\tilde{\theta}=\theta/\sigma$; $(c)$ is a consequence of the constraint $\sigma^2\le \sigma^2_{\max}=\max\{1,4/\delta\}\le \max\{1,4/\deltaAMP\}=4/\deltaAMP$. As a result of \eqref{Eqn:psi2_dif}, $\Psi_2(\alpha,\sigma^2;\delta)=\psi_2(\alpha,\sigma^2;\delta)-\sigma^2$ is decreasing in $\sigma^2\in[0,\sigma^2_{\max}]$.
It is easy to verify that $\psi_2(0,\alpha;\delta)\ge 0$ for $\alpha\in [0,1]$. Further, Lemma~\ref{lem:psi2} (iii) implies that 
\[
\psi_2(\sigma_{\max}^2,\alpha;\delta) - \sigma_{\max}^2\le 0.
\]
Hence, there exists a unique solution (which we denote as $F_2(\alpha)$) to the following equation:
\[
\psi_2(\sigma,\alpha;\delta) = \sigma^2,\quad0\le \sigma^2\le \sigma_{\max}^2.
\]
Finally, the property in \eqref{Eqn:psi2_contracting2} is a direct consequence of the fact that $\Psi_2(\alpha,\sigma^2;\delta) =\psi_2(\alpha,\sigma^2;\delta)-\sigma^2$ is a decreasing function of $\sigma^2\le \sigma^2_{\max}$.\\

\textit{Part (v):} In \eqref{Eqn:psi2_dif_sigma2_fs}, we have derived the following:
\BE
\frac{  \psi_2(\alpha,\sigma^2;\delta)}{\partial\sigma^2}=\frac{4}{\delta\alpha}\left(\alpha-f(s)\right),\nonumber
\EE
where $s\Mydef\frac{\sigma}{\alpha}$.
From \eqref{Eqn:f_def}, we see that $\psi_2(\alpha,\sigma^2;\delta)$ is an increasing function of $\sigma^2$ if the following holds:
\BE
\alpha> f(s).\nonumber
\EE
Further, \eqref{Eqn:lem_psi2_fs_cases} implies that the maximum of $f(s)$ happens at $s_*$, i.e.,
\BE\label{Eqn:monotonicity_condi2}
\max_{s>0}\ f(s)=\frac{1}{2\sqrt{1+s_{\ast}^2 }} E\left( \frac{1}{1 +s_{\ast}^2} \right)\Mydef\alpha_{\ast},
\EE
where $s_{\ast}^2$ is the unique solution to
\BE\label{Eqn:lem_psi2_1}
2E\left(\frac{1}{1+s_{\ast}^2}\right)=K\left(\frac{1}{1+s_{\ast}^2}\right).
\EE
Clearly, $\alpha>\alpha_{\ast}$ immediately implies $\alpha >f(s)$, which further guarantees that $\psi_2(\alpha,\sigma^2;\delta)$ is monotonically increasing on $\sigma^2>0$. Finally, the strong global attractiveness of $F_2(\alpha)$ is a direct consequence of part (iv) of this lemma together with the monotonicity of $\psi_2$.
\end{proof}


\subsubsection{Properties of $F_1$ and $F_2$}
In this section we derive the main properties of the functions $F_1$ and $F_2$ introduced in Section \ref{Sec:roadmap_complex}. These properties play major roles in the results of the paper. 

\begin{lemma}\label{Lem:2}
The following hold for $F_1(\sigma^2)$ and $F_2(\alpha;\delta)$ (for $\delta>2$):
\begin{enumerate}
\item[(i)] $F_1(0)=1$ and $\lim_{\sigma^2\rightarrow \frac{\pi^2}{16}^{-}}F_1(\sigma^2)=0$. Further, by choosing $F_1(\frac{\pi^2}{16})=0$, we have $F_{1}(\sigma^2)$ is continuous on $\left[0, \frac{\pi^2}{16}\right]$ and strictly decreasing in $\left(0,\frac{\pi^2}{16}\right)$;
\item[(ii)] $F_2$ is a continuous function of $\alpha \in [0,1]$ and $\delta\in (2, \infty)$. $F_2(1;\delta)=0$, and $F_2(0;\delta)=\left(\frac{-\pi+\sqrt{\pi^2+4(\delta-4)}}{\delta-4}\right)^2$ for $\delta\neq4$ and $F_2(0;4)=4/\pi^2$.
\end{enumerate}
\end{lemma}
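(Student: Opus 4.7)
My plan is to handle the two parts separately, using in both cases the fixed-point characterization of $F_1$ and $F_2$ together with the properties of $\psi_1$ and $\psi_2$ already established in Lemma~\ref{lem:psi1} and Lemma~\ref{lem:psi2}. The main tool for interior continuity and differentiability will be the implicit function theorem applied to $G(\alpha,\sigma^2)=\psi_1(\alpha,\sigma^2)-\alpha$ (respectively $H(\alpha,\sigma^2;\delta)=\psi_2(\alpha,\sigma^2;\delta)-\sigma^2$), and the boundary values will come from a direct computation of the integrals at the degenerate points. For part (i), at $\sigma^2=0$ the integrand $(|\alpha|^2\sin^2\theta+\sigma^2)^{-1/2}$ reduces to $(|\alpha|\sin\theta)^{-1}$, so $\psi_1(\alpha,0)=\int_0^{\pi/2}\sin\theta\,\mr{d}\theta=1$ and the nonzero fixed-point equation forces $F_1(0)=1$. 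For part (ii), at $\alpha=1,\sigma^2=0$ one checks $\psi_2(1,0;\delta)=0$ directly from \eqref{Eqn:map_expression_complex_b}, and at $\alpha=0$ the integral collapses to $\pi\sigma/2$, turning the fixed-point equation into the quadratic $(\delta-4)\sigma^2+2\pi\sigma-4=0$, whose smaller positive root yields the stated formula (with the $\delta=4$ value recovered by l'H\^opital/Taylor expansion).

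For monotonicity and continuity of $F_1$ on the open interval $(0,\pi^2/16)$, I will verify that $\partial_\alpha G<0$ and $\partial_{\sigma^2}\psi_1<0$ at any fixed point $\alpha=F_1(\sigma^2)$. The first follows from concavity of $\psi_1$ in $\alpha$ (Lemma~\ref{lem:psi1}(i)) combined with the observation that $G$ starts at $0$, increases (since $\partial_\alpha\psi_1(0,\sigma^2)=\pi/(4\sigma)>1$ for $\sigma^2<\pi^2/16$), and must cross zero again with negative slope at the second fixed point. The second follows from differentiating the integral representation \eqref{Eqn:map_expression_complex_a} under the integral sign. Implicit differentiation then gives $F_1'(\sigma^2)=-\partial_{\sigma^2}\psi_1/(\partial_\alpha\psi_1-1)<0$, establishing strict monotonicity and smoothness on $(0,\pi^2/16)$; continuity at $\sigma^2=0$ follows by sandwiching: for any $\epsilon>0$, continuity of $\psi_1(1-\epsilon,\cdot)$ gives $\psi_1(1-\epsilon,\sigma^2)>1-\epsilon$ for $\sigma^2$ near $0$, so Lemma~\ref{lem:psi1}(iii) forces $F_1(\sigma^2)>1-\epsilon$.

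The main obstacle is the endpoint $\sigma^2\to(\pi^2/16)^-$, where $F_1$ merges with the zero fixed point, causing the IFT Jacobian $\partial_\alpha G$ to vanish (a saddle-node bifurcation). I will avoid the degeneracy by a continuity-plus-concavity argument: fix any $\epsilon>0$; concavity of $\psi_1(\cdot,\pi^2/16)$ with tangent slope exactly $1$ at $\alpha=0$ gives $\psi_1(\epsilon,\pi^2/16)<\epsilon$ strictly; joint continuity of $\psi_1$ in $(\alpha,\sigma^2)$ then preserves this inequality on a neighborhood of $\sigma^2=\pi^2/16$, and Lemma~\ref{lem:psi1}(iii) converts $\psi_1(\epsilon,\sigma^2)<\epsilon$ into $F_1(\sigma^2)<\epsilon$. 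This shows $F_1(\sigma^2)\to 0$ and makes the extension $F_1(\pi^2/16):=0$ continuous.

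For part (ii), the continuity of $F_2(\alpha;\delta)$ on $[0,1]\times(2,\infty)$ again reduces to verifying $\partial_{\sigma^2}\psi_2\ne 1$ at $F_2$. This is precisely the content of the case analysis in the proof of Lemma~\ref{lem:psi2}(ii): for $\delta>4$, $\partial_{\sigma^2}\psi_2<1$ uniformly, while for $2<\delta\le 4$ the function $\Psi_2=\psi_2-\sigma^2$ is strictly decreasing on $(0,\sigma_\star^2(\alpha;\delta))$ and the smaller fixed point $F_2$ lies strictly inside this interval, forcing $\partial_{\sigma^2}\psi_2<1$ there. At the boundary $\alpha=1,\sigma^2=0$, I will use the continuity-at-zero computation from Lemma~\ref{lem:psi2}(i) giving $\partial_{\sigma^2}\psi_2|_{(1,0)}=2/\delta<1$ for $\delta>2$, so IFT applies and yields continuity up to this corner. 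Thus IFT delivers joint continuity throughout the interior; continuity at $\alpha=0$ is trivial from the closed-form expression (with the $\delta=4$ value handled by the limit), and continuity at $\alpha=1$ follows from the IFT branch passing through $(1,0,\delta)$, completing the proof.
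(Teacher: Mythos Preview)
Your proposal is correct and follows essentially the same approach as the paper: implicit differentiation of the fixed-point equation for monotonicity of $F_1$, direct computation of the boundary values, and continuity of $F_2$ via the properties of $\psi_2$ established in Lemma~\ref{lem:psi2}. The paper is terser---it simply invokes ``Lemma~\ref{lem:psi1}(iii) and continuity of $\psi_1$'' for the endpoint behavior of $F_1$ and ``Lemma~\ref{lem:psi2}(ii) and continuity of $\psi_2$'' for $F_2$---whereas you spell out the saddle-node argument at $\sigma^2=\pi^2/16$ and the case split on $\delta$ for the IFT Jacobian of $F_2$; these are exactly the details underlying the paper's one-line citations, so the two proofs coincide in substance.
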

\begin{proof}

\textit{Part (i):} We first verify $F_1(0)=1$ and $\lim_{\sigma^2\rightarrow \frac{\pi^2}{16}^{-}}F_1(\sigma^2)=0$. First, $F_1(0)=1$ can be seen from the following facts: (a) $\psi_1(\alpha,0)=1$ for $\alpha>0$, see \eqref{Eqn:map_expression_complex_a}; and (b) By definition, $F_1(0)$ is the non-zero solution to $\alpha=\psi_1(\alpha,0)$. Then, by Lemma~\ref{lem:psi1} (iii) and continunity of $\psi_1$, we know $F_1$ is continuous on $[0,\frac{\pi^2}{16})$, and further $\lim_{\sigma^2\rightarrow \frac{\pi^2}{16}^{-}}F_1(\sigma^2)=0$ since $\sigma^2=\frac{\pi^2}{16}$ corresponds to a case where the non-negative solution to $\psi_1(\alpha,\sigma^2)=\alpha$ decreases to zero. Next, we prove the monotonicity of $F_1$. Note that
\[
F_1(\sigma^2)=\psi_1(F_1(\sigma^2),\sigma^2),
\]
Differentiation w.r.t. $\sigma^2$ yields
\[
F_1'(\sigma^2)=\partial_2 \psi_1(F_1(\sigma^2),\sigma^2) +\partial _1\psi_1(F_1(\sigma^2),\sigma^2)\cdot F_1'(\sigma^2),
\]
where $\partial_2 \psi_1(F_1(\sigma^2),\sigma^2)\Mydef\frac{\partial \psi_1(\alpha,\sigma^2)}{\partial\sigma^2}\Big|_{\alpha=F_1(\sigma^2)}$ and $\partial_1 \psi_1(F_1(\sigma^2),\sigma^2)\Mydef\frac{\partial \psi_1(\alpha,\sigma^2)}{\partial\alpha}\Big|_{\alpha=F_1(\sigma^2)}$. Hence,
\BE\label{Eqn:F1_prime}
\left[1-\partial_1 \psi_1(F_1(\sigma^2),\sigma^2)\right]\cdot F_1'(\sigma^2)=\partial_2 \psi_1(F_1(\sigma^2),\sigma^2).
\EE
We have proved in \eqref{Eqn:psi1_local_stability} that $\frac{\partial \psi_1(\alpha,\sigma^2)}{\partial\alpha}\Big|_{\alpha=0}<1$ when $\sigma^2<\frac{\pi^2}{16}$. Together with the concavity of $\psi_1$ w.r.t. $\alpha$ (cf. Lemma~\ref{lem:psi1} (i)), we have
\BE\label{Eqn:F1_prime_den}
\frac{\partial \psi_1(\alpha,\sigma^2)}{\partial\alpha}\Big|_{\alpha=F_1(\sigma^2)}<\frac{\partial \psi_1(\alpha,\sigma^2)}{\partial\alpha}\Big|_{\alpha=0}<1,\quad\forall\sigma^2<\frac{\pi^2}{16}.
\EE
Further, from \eqref{Eqn:map_expression_complex_a}, it is straightforward to see that $\psi_1$ is a strictly decreasing function of $\sigma^2$, and thus 
\BE\label{Eqn:F1_prime_num}
\partial_2 \psi_1(F_1(\sigma^2),\sigma^2)=\frac{\partial \psi_1(\alpha,\sigma^2)}{\partial\alpha}\Big|_{\alpha=F_1(\sigma^2)}<0.
\EE
Substituting \eqref{Eqn:F1_prime_den} and \eqref{Eqn:F1_prime_num} into \eqref{Eqn:F1_prime}, we obtain
\[
F_1'(\sigma^2)<0,\quad \forall \sigma^2<\frac{\pi^2}{16}.
\]


\textit{Proof of (ii):} By Lemma~\ref{lem:psi2} (ii) and continuity of $\psi_2$, it is straightforward to check that $F_2$ is continuous. Moreover, we have proved that $\sigma^2=F_2(\alpha;\delta)$ is the unique solution to the following equation (for $\delta>2$):
\BE\label{Eqn:psi2_fixed}
\sigma^2 = \frac{4}{\delta}\left( \alpha^2+\sigma^2+1-\int_0^{\frac{\pi}{2}} \frac{2\alpha^2\sin^2\theta+\sigma^2}{(\alpha^2\sin^2\theta+\sigma^2)^{\frac{1}{2}}}\mr{d}\theta \right),\quad \sigma^2\in[0,1].
\EE
When $\alpha=0$, \eqref{Eqn:psi2_fixed} reduces
\[
\sigma^2 = \frac{4}{\delta}\left(\sigma^2+1-\frac{\pi}{2}\sigma \right),\quad \sigma^2\in[0,1],
\]
which has two possible solutions (for $\delta\neq4$):
\[
\sigma_1 = \frac{-\pi+\sqrt{\pi^2+4(\delta-4)}}{\delta-4}\quad\text{and}\quad\sigma_2 = \frac{-\pi-\sqrt{\pi^2+4(\delta-4)}}{\delta-4}.
\]
(For the special case $\delta=4$, $\sigma_1=2/\pi$.)
However, $\sigma_2$ is invalid due to our constraint $0<\sigma^2<1$. This can be seen as follows. First, $\sigma_2<0$ for $\delta>4$ and hence invalid. When $2<\delta<4$, we have
\[
\sigma_2=\frac{\pi+\sqrt{\pi^2-4(4-\delta)}}{4-\delta}>\frac{\pi}{4-\delta}>1.
\]
Hence, $F_2(0;\delta)=\sigma_1$. When $\alpha=1$, \eqref{Eqn:psi2_fixed} becomes:
\[
\sigma^2 = \frac{4}{\delta}\left( 2+\sigma^2-\int_0^{\frac{\pi}{2}} \frac{2\sin^2\theta+\sigma^2}{(\sin^2\theta+\sigma^2)^{\frac{1}{2}}}\mr{d}\theta \right),\quad \sigma^2\in[0,1].
\]
It is straightforward to verify that $\sigma^2=0$ is a solution. Also, from Lemma~\ref{lem:psi2} (ii), $\sigma^2=0$ is a also the unique solution. Hence, $F_2(1;\delta)=0$.

\end{proof}

\subsubsection{Proof of Lemma \ref{Lem:F1_F2_complex}}\label{proof:lemmadominationF_1}

In Lemma~\ref{lem:psi2}, we have proved that $F_2(\alpha;\delta)$ is the unique globally attracting fixed point of $\psi_2$ in $\sigma^2\in[0,1]$ (for $\delta>2$), and from \eqref{Eqn:psi2_contracting} we have
\BE\label{Eqn:Lem_F1F2_1}
\sigma^2>F_2(\alpha;\delta)\Longleftrightarrow \psi_2(\alpha,\sigma^2;\delta)<\sigma^2,\quad\sigma^2\in[0,1].
\EE
Here, our objective is to prove that $F_1^{-1}(\alpha)<F_2(\alpha;\delta)$ holds for any $\alpha\in(0,1)$ when $\delta\ge\deltaAMP$. From \eqref{Eqn:Lem_F1F2_1} and noting that $F_1^{-1}(\alpha)\le\pi^2/16<1$ (from Lemma~\ref{Lem:2}), our problem can be reformulated as proving the following inequality (for $\delta>\deltaAMP$):
\BE\label{Eqn:Lem3_proof0}
\psi_2(\alpha,F_1^{-1}(\alpha);\delta)<F_1^{-1}(\alpha),\quad \forall\alpha\in(0,1).
\EE
Since $\psi_2(\alpha,F_1^{-1}(\alpha);\delta)$ is a strictly decreasing function of $\delta$ (see \eqref{Eqn:map_expression_complex_b}), it suffices to prove that \eqref{Eqn:Lem3_proof0} holds for $\delta=\deltaAMP$:
\BE\label{Eqn:Lem3_proof1}
\psi_2(\alpha,F_1^{-1}(\alpha);\deltaAMP)<F_1^{-1}(\alpha),\quad \forall\alpha\in(0,1).
\EE

We now make some variable changes for \eqref{Eqn:Lem3_proof1}. From \eqref{Eqn:map_expression_complex_a}, $\psi_1$ in  can be rewritten as the following for $\alpha>0$:
\[
\psi_1(\alpha,\sigma^2)=\int_0^{\frac{\pi}{2}}\frac{\sin^2\theta}{\left(\sin^2\theta +\frac{\sigma^2}{\alpha^2} \right)^{\frac{1}{2}}}\mr{d}\theta.
\]
By definition, $F_1(\sigma^2)$ is the solution to $\alpha=\psi_1(\alpha,\sigma^2)$, and hence the following holds:
\[
\alpha=\int_0^{\frac{\pi}{2}}\frac{\sin^2\theta}{\left(\sin^2\theta +\frac{F_1^{-1}(\alpha)}{\alpha^2} \right)^{\frac{1}{2}}}\mr{d}\theta.
\]
At this point, it is more convenient to make the following variable change:
\BE\label{Eqn:Lem_F1F2_s_def}
s\Mydef \frac{\sqrt{F_1^{-1}(\alpha)}}{\alpha},
\EE
from which we get
\BE \label{Eqn:Lem_F1F2_phi}
\alpha = \phi_1(s)\Mydef\int_0^{\frac{\pi}{2}}\frac{\sin^2\theta}{\left(\sin^2\theta +s^2 \right)^{\frac{1}{2}}}\mr{d}\theta.
\EE
Notice that $\phi_1:\mathbb{R}_{+}\mapsto[0,1]$ is a monotonically decreasing function, and it defines a one-to-one map between $\alpha$ and $s$. From the above definitions, we have
\BE\label{Eqn:Lem3_proof2}
F_1^{-1}(\alpha) = s^2 \alpha^2=s^2\phi_1^2(s),
\EE
where the first equality is from \eqref{Eqn:Lem_F1F2_s_def} and the second step from \eqref{Eqn:Lem_F1F2_phi}. Using the relationship in \eqref{Eqn:Lem3_proof2}, we can reformulate the inequality in \eqref{Eqn:Lem3_proof1} into the following equivalent form:
\BE\label{Eqn:Lem3_proof0_b}
\psi_2(\phi_1(s),s^2\phi_1^2(s);\deltaAMP)<s^2\phi_1^2(s),\quad\forall s>0.
\EE

Substituting \eqref{Eqn:Lem_F1F2_phi} and \eqref{Eqn:map_expression_complex_b} into \eqref{Eqn:Lem3_proof0_b} and after some manipulations, we can finally write our objective as:
\BE\label{Eqn:Lem3_proof3}
     \int_0^{\frac{\pi}{2}} \frac{\sin^2 \theta}{(\sin^2 \theta +
     s^2)^{\frac{1}{2}}} d \theta \cdot \int_0^{\frac{\pi}{2}} \frac{(1 -
     \gamma^{} s^2) \sin^2 \theta + s^2}{(\sin^2 \theta + s^2)^{\frac{1}{2}}}
     d \theta >  1, \quad \forall s > 0.
\EE
where we defined 
\BE \label{Eqn:gamma_def}
\gamma \Mydef1-\frac{\deltaAMP}{4}=2-\frac{16}{\pi^2}.
\EE
In the next two subsections, we prove \eqref{Eqn:Lem3_proof3} for $s^2 > 0.07$ and $s^2 \le 0.07$ using different techniques.
\begin{itemize}

\item[(i)] Case I: 
We make another variable change:
\[
  t \Mydef  \frac{1}{s^2} .
\]
Using the variable $t$, we can rewrite \eqref{Eqn:Lem3_proof3} into the following:
\BS
\BE \label{Eqn:Lem3_proof4}
  G (t) \Mydef \frac{g_1 (t)}{g_2 (t)} - \frac{1}{g^2_2 (t)} \ge 
\gamma, \quad \forall t \in [0, 14.3).
\EE
where $\gamma$ is defined in \eqref{Eqn:gamma_def}, and
\begin{align}
  g_1 (t) & \Mydef  \int_0^{\frac{\pi}{2}} (1 + t \sin^2 \theta)^{\frac{1}{2}} d
  \theta,\\
  g_2 (t) & \Mydef  \int_0^{\frac{\pi}{2}} \frac{\sin^2 \theta}{(1 + t
  \sin^2 \theta)^{\frac{1}{2}}} d \theta .
\end{align}
\ES
Notice that if we could prove \eqref{Eqn:Lem3_proof4} for $t<14.3$, we would have proved \eqref{Eqn:Lem3_proof3} for $s^2>0.07$, since $14.3>1/0.07\approx14.28$.
For the ease of later discussions, we define
\[
\begin{split}
  g_3 (t) & \Mydef  \int_0^{\frac{\pi}{2}} \frac{\sin^4 \theta}{(1 + t \sin^2
  \theta)^{\frac{3}{2}}} d \theta,\\
  g_4 (t) & \Mydef  \int_0^{\frac{\pi}{2}} \frac{\sin^6 \theta}{(1 + t \sin^2
  \theta)^{\frac{5}{2}}} d \theta .
\end{split}
\]
The following identities related to $\{g_1(t),g_2(t),g_3(t),g_4(t)\}$ will be used in our proof:
\BE\label{Eqn:Lem3_proof5}
\begin{split}
  g_1' (t) & = \frac{1}{2} g_2 (t),\\
  g_2' (t) & =  - \frac{1}{2} g_3 (t),\\
  g'_3 (t) & =  - \frac{3}{2} g_4 (t) .
\end{split}
\EE

We now prove \eqref{Eqn:Lem3_proof4}. First, it is straightforward to verify that equality holds for \eqref{Eqn:Lem3_proof4} at $t=0$, i.e.,
\BE
G (0) =\gamma.
\EE
Hence, to prove that $G(t)\ge\gamma$ for $t\in[0,14.3)$, it is sufficient to prove that $G (t)$ is an increasing function of $t$ on 
$t\in[0,14.3)$. To this end, we calculate the derivative of $G(t)$:
\[
\begin{split}
  G' (t) & =  \frac{g_1' (t) g_2 (t) - g_1 (t) g_2' (t)}{g_2^2 (t)} - \left(
  \frac{- 2 g'_2 (t)}{g_2^3 (t)} \right)\\
  & \overset{(a)}{=}  \frac{\frac{1}{2} g_2^2 (t) + \frac{1}{2} g_1 (t) g_3 (t)}{g_2^2 (t)}
  - \frac{g_3 (t)}{g_2^3 (t)}\\
  & =  1 + \frac{1}{2} \frac{g_1 (t) g_3 (t)}{g_2^2 (t)} - \frac{g_3
  (t)}{g_2^3 (t)}\\
  & =  \frac{1}{2} \frac{g_3 (t)}{g_2^3 (t)} \bigg( \underbrace{\frac{g_2^3
  (t)}{g_3 (t)}}_{G_1 (t)} + \underbrace{g_1 (t) g_2 (t)}_{G_2 (t)} - 2
  \bigg),
\end{split}
\]
where step (a) follows from the identities listed in \eqref{Eqn:Lem3_proof5}. Since $g_3 (t) > 0$, we have
\[
  G' (t) > 0  \Longleftrightarrow G_1 (t) + G_2 (t) - 2 > 0.
\]
It remains to prove that $G_1 (t) + G_2 (t) - 2 > 0$ for $t <
14.3$. Our numerical results suggest that $G_1 (t) + G_2 (t)$ is a
monotonically decreasing function for $t > 0$, and $G_1 (t) + G_2 (t)\to 2$ as $t
\rightarrow \infty .$ However, directly proving the monotonicity of $G_1 (t) +
G_2 (t)$ seems to be quite complicated. We use a different strategy here. We will prove that (at the end of this section)
\begin{itemize}
  \item $G_1 (t)$ is monotonically increasing;
  
  \item $G_2 (t)$ is monotonically decreasing.
\end{itemize}
As a consequence, the following hold true
for any $c_2 > c_1 > 0$:
\[
 G_1 (t) + G_2 (t) - 2 \ge G_1 (c_1) + G_2 (c_2) - 2, \quad \forall t \in
  [c_1, c_2] .
\]
Hence, if we verify that $G_1 (c_1) + G_2 (c_2) - 2 > 0$, we will be
proving the following:
\[
G_1 (t) + G_2 (t) - 2 > 0, \quad \forall t \in [c_1, c_2] .
\]
To this end, we verify that $G_1 (c_1) + G_2 (c_2) - 2 > 0$ hold for a sequence of $c_1$
and $c_2$: $[c_1,c_2]=[0,0.49]$, $[c_1,c_2]=[0.49,1.08]$, $[c_1,c_2]=[1.08,1.78]$, $[c_1,c_2]=[1.78,2.56]$, $[c_1,c_2]=[2.56,3.47]$, $[c_1,c_2]=[3.47,4.47]$, $[c_1,c_2]=[4.47,5.56]$, $[c_1,c_2]=[5.56,6.77]$, $[c_1,c_2]=[6.67,8.08]$, $[c_1,c_2]=[8.08,9.5]$, $[c_1,c_2]=[9.5,11]$, $[c_1,c_2]=[11,12.6]$, $[c_1,c_2]=[12.6,14.3]$.
Combining all the above results proves
\[
   G_1 (t) + G_2 (t) - 2 > 0, \quad \forall t \in [0, 14.3] .
\]

From the above discussions, it only remains to prove the monotonicity of $G_1(t)$ and $G_2(t)$.
Consider $G_1 (t)$ first:
\BE\label{Eqn:Lem3_proof6}
\begin{split}
  G' _1(t) & = \left(\frac{g_2^3
  (t)}{g_3 (t)}\right)'\\
  &=\frac{3 g_2^2 (t) g_2' (t) g_3 (t) - g_2^3 (t) g_3' (t)}{g_3^2
  (t)}\\
  & =  \frac{- \frac{3}{2} g_2^2 (t) g_3^2 (t) + \frac{3}{2} g_2^3 (t) g_4
  (t)}{g_3^2 (t)}\\
  & =  - \frac{3}{2} g_2^2 (t) + \frac{3}{2} \frac{g_2^3 (t) g_4 (t)}{g_3^2
  (t)}\\
  & =  \frac{3}{2} \frac{g_2^2 (t)}{g_3^2 (t)} \cdot [- g_3^2 (t) + g_2 (t)
  g_4 (t)] .
\end{split}
\EE
Applying the Cauchy-Schwarz inequality yields:
\BE\label{Eqn:Lem3_proof7}
\begin{split}
  g_2 (t) g_4 (t) & =  \int_0^{\frac{\pi}{2}} \frac{\sin^2 \theta}{(1 +
  t \sin^2 \theta)^{\frac{1}{2}}} d \theta \cdot \int_0^{\frac{\pi}{2}} \frac{\sin^6
  \theta}{(1 + t \sin^2 \theta)^{\frac{5}{2}}} d \theta\\
  & \ge  \left( \int_0^{\frac{\pi}{2}} \frac{\sin^4 \theta}{(1 + t
  \sin^2 \theta)^{\frac{3}{2}}} d \theta \right)^2\\
  & =  g_3^2 (t) .
\end{split}
\EE
Combining \eqref{Eqn:Lem3_proof6} and \eqref{Eqn:Lem3_proof7}, we proved that $G'_1 (t) \ge 0$, and therefore $G_1 (t)$ is monotonically increasing. For $G_2 (t)$, we have
\[
\begin{split}
  G_2' (t) & = g_1' (t) g_2 (t) + g_1 (t) g_2' (t)\\
  & =  \frac{1}{2} g_2^2 (t) + g_1 (t) \left( - \frac{1}{2} g_3 (t)
  \right)\\
  & =  \frac{1}{2} [g_2^2 (t) - g_1 (t) g_3 (t)] .
\end{split}
\]
Again, using Cauchy-Schwarz we have
\[
\begin{split}
  g_1 (t) g_3 (t) & =  \int_0^{\frac{\pi}{2}} (1 + t \sin^2 \theta)^{\frac{1}{2}} d
  \theta \cdot \int_0^{\frac{\pi}{2}} \frac{\sin^4 \theta}{(1 + t \sin^2
  \theta)^{\frac{3}{2}}} d \theta\\
  & \ge \left( \int_0^{\frac{\pi}{2}} \frac{\sin^2 \theta}{(1 + t
  \sin^2 \theta)^{\frac{1}{2}}} d \theta \right)^2\\
  & = g_2^2 (t) .
\end{split}
\]
Combining the previous two equations leads to $G'_2 (t) \ge 0$, which completes our proof.
\item[(ii)] Case II:
We next prove \eqref{Eqn:Lem3_proof3} for $s^2 \le 0.07$, which is based on a different strategy. Some manipulations
of the RHS of \eqref{Eqn:Lem3_proof3} yields:
\BS\label{Eqn:Lem3_proof7.1}
\BE
     \int_0^{\frac{\pi}{2}} \frac{\sin^2 \theta}{(\sin^2 \theta +
     s^2)^{\frac{1}{2}}} d \theta \cdot \int_0^{\frac{\pi}{2}} \frac{\left(1 -
   \gamma s^2\right) \sin^2 \theta + s^2}{(\sin^2 \theta + s^2)^{\frac{1}{2}}}
     d \theta = \frac{E (x) T (x)}{x} - \frac{\gamma (1 - x) T^2 (x)}{x^2}, 
     \EE
where $E(\cdot)$, $K(\cdot)$ and $T(\cdot)$ are elliptic integrals defined in \eqref{Eqn:def_elliptic}, $\gamma$ is a constant defined in \eqref{Eqn:gamma_def}, and $x$ is a new variable:
\BE\label{Eqn:Lem3_proof7.2}
 x \Mydef \frac{1}{1 + s^2} . 
 \EE
\ES
From our reformulation in \eqref{Eqn:Lem3_proof7.1}, the inequality in \eqref{Eqn:Lem3_proof3} for $s^2<0.07$ becomes
\BE\label{Eqn:Lem3_proof7.3}
\frac{E (x) T (x)}{x} -\gamma\frac{ (1 - x) T^2 (x)}{x^2} > 1, \quad x \in
   [0.93, 1).
\EE
Note that $0.93<1/(1+0.07)$ and thus proving the above inequality for $x\in[0.93, 1)$ is sufficient to prove the original inequality for $s^2\le0.07$ (note that $x\Mydef1/(1+s^2)$, see \eqref{Eqn:Lem3_proof7.2}).

With some further calculations, \eqref{Eqn:Lem3_proof7.3} can be reformulated as
\BE\label{Eqn:Lem3_proof8} 
\frac{x}{T^2 (x)} \frac{E (x) T (x) - x}{(1 - x)} >\gamma, \quad x \in
   [0.93, 1) . 
\EE   
The following inequality is due to \cite[Eqn.~(1)]{Anderson1985}
\[
T (x) < x < 1, \quad \forall x \in (0, 1) . 
\]
Hence,
\[
 \frac{x}{T^2 (x)} \frac{E (x) T (x) - x}{(1 - x)} > \frac{E (x) T (x) -
   x}{1 - x}, \quad \forall x \in (0, 1), 
\]
and to prove \eqref{Eqn:Lem3_proof8} it suffices to prove the following
\BE\label{Eqn:Lem3_proof9} 
 \frac{E (x) T (x) - x}{1 - x} > \gamma, \quad \forall x \in [0.93, 1) . 
 \EE
To this end, we will prove that the LHS of \eqref{Eqn:Lem3_proof9} is a strictly increasing
function of $x \in [0.93, 1)$. If this is true, we would have
\[
 \frac{E (x) T (x) - x}{1 - x} > \frac{E (x) T (x) - x}{1 - x} |_{x = 0.93}
    \approx 0.385 > \gamma=2-\frac{16}{\pi^2} \approx 0.3789, \quad \forall x \in
   [0.93, 1) . 
\]

We next prove the monotonicity of $\frac{E (x) T (x) - x}{1 - x}$. From the identities in Lemma~\ref{Lem:elliptic}, we derive the following
\[
  {}[E (x) T (x) - x]'  =  \frac{E^2 (x) - 2 (1 - x) E (x) K (x) + (1 - x)
  K^2 (x)}{2 x} - 1.
\]
Hence, to prove that $\frac{E (x) T (x) - x}{1 - x}$ is monotonically
increasing, it is sufficient to prove the following inequality:
\BE \label{Eqn:Lem3_proof10}
 \left( \frac{E^2 (x) - 2 (1 - x) E (x) K (x) + (1 - x) K^2 (x)}{2 x} - 1
   \right) (1 - x) - [E (x) T (x) - x] (- 1) > 0. 
 \EE 
Now, substituting $T (x) = E (x) - (1 - x) K (x)$ into \eqref{Eqn:Lem3_proof10} and after some
manipulations, we finally reformulate the inequality to be proved into the
following form:
\[
  T (x)^{ 2}  >  2 x - x E^2 (x).
\]

It can be verified that
equality holds at $x = 1$. We next prove that $T (x)^2 + x E (x)^2 - 2 x$ is
monotonically decreasing on $[0.93, 1)$. We differentiate once more:
\[
  (T (x)^2 + x E (x)^2 - 2 x)'  =  2 E (x)^2 - (1 - x) K (x)^2 - 2.
\]
Our problem boils down to proving $2 E (x)^2 - (1 - x) K (x)^2 - 2 < 0$ for $x
\in [0.93, 1)$. We can verify that $2 E (x)^2 - (1 - x) K (x)^2 - 2 = 0$ holds
at $x = 1$. We finish by showing that $2 E (x)^2 - (1 - x) K (x)^2 - 2$ is
monotonically increasing in $x \in [0.93, 1)$. To this end, we differentiate
again:
\BE\label{Eqn:Lem3_proof11}
\begin{split}
[2 E (x)^2 - (1 - x) K (x)^2 - 2]' & = \frac{K (x)^2 - 3 E (x) K (x) + 2
  E (x)^2}{x}\\
   &= \frac{\left[ K (x) - \frac{3}{2} E (x) \right]^2 - \frac{1}{2} E
  (x)^2}{x}.
\end{split}
\EE
We note that $K (x) - \left( \frac{3}{2} + \frac{1}{\sqrt{2}} \right) E (x)$
is a monotonically increasing function in (0,1) since $K (x)$ is monotonically
increasing and $E (x)$ is monotonically decreasing. We verify that \ $K (x) -
\left( \frac{3}{2} + \frac{1}{\sqrt{2}} \right) E (x)>0$ when $x \ge0.93$.
Hence,
\[
  K (x) - \left( \frac{3}{2} + \frac{1}{\sqrt{2}} \right) E (x)  >  0, \quad
  \forall x \in [0.93, 1),
\]
and therefore
\BE\label{Eqn:Lem3_proof12}
  \left( K (x) - \frac{3}{2} E (x) \right)^2  >  \frac{1}{2} E (x)^2, \quad
  \forall x \in [0.93, 1) .
\EE
Substituting \eqref{Eqn:Lem3_proof12} into \eqref{Eqn:Lem3_proof11}, we prove that $[2 E (x)^2 - (1 - x) K (x)^2 - 2]' > 0$
for $x \in [0.93, 1)$, which completes the proof.

\end{itemize}

\subsubsection{Proof of Lemma~\ref{Lem:regionI_III}} \label{Sec:proof_lem_regionI_III}

First, we introduce a function that will be crucial for our proof. 

\begin{definition}\label{def:Lalpha}
Define
\BE\label{Eqn:left_bound}
L(\alpha;\delta)\Mydef \frac{4}{\delta}\left( 1 - \frac{\phi_2^2(\phi_1^{-1}(\alpha)) }{4\left[1+(\phi_1^{-1}(\alpha))^2\right]}\right),\quad\alpha\in(0,1),
\EE
where $\phi_1:\mathbb{R}_{+}\mapsto[0,1]$ and $\phi_2:\mathbb{R}_{+}\mapsto\mathbb{R}_{+}$ below:
\BS
\begin{align}
\phi_1(s)&\Mydef\int_0^{\frac{\pi}{2}}\frac{\sin^2\theta}{\left(\sin^2\theta +s^2 \right)^{\frac{1}{2}}}\mr{d}\theta, \label{Eqn:phi}\\
\phi_2(s)&\Mydef \int_0^{\frac{\pi}{2}} \frac{ 2\sin^2\theta +s^2 }{ \left( \sin^2\theta +s^2 \right)^{\frac{1}{2}} }\mr{d}\theta,\label{Eqn:phi2}
\end{align}
\ES
where $\phi_1^{-1}$ is the inverse functions of  $\phi_1$. The existence of $\phi_1^{-1}$ follows from its monotonicity, which can be seen from its definition.  
\end{definition}

In the following, we list some preliminary properties of $L(\alpha;\delta)$. The main proof for Lemma~\ref{Lem:regionI_III} comes afterwards.

\begin{itemize}

\item \textbf{Preliminaries:}

The following lemma helps us clarify the importance of $L$ in the analysis of the dynamics of SE: 

\begin{lemma}\label{Lem:left_bound}
For any $\alpha>0$, $\sigma^2>0$ and $\delta>0$, the following holds:
\BE\label{Eqn:left_bound_1}
 L\left[ \psi_1(\alpha,\sigma^2);\delta\right]\le\psi_2(\alpha,\sigma^2;\delta),
\EE
where $\psi_1$ and $\psi_2$ are the SE maps defined in \eqref{Eqn:map_expression_complex}, and $L(\alpha;\delta)$ is defined in \eqref{Eqn:left_bound}.
\end{lemma}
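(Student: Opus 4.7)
The plan is to reduce the inequality to an obvious perfect-square inequality by exploiting the scaling structure of $\psi_1$ and $\psi_2$. Specifically, I will introduce the natural dimensionless variable $s \Mydef \sigma/\alpha$ and use it to rewrite both $\psi_1(\alpha,\sigma^2)$ and $\psi_2(\alpha,\sigma^2;\delta)$ in terms of the auxiliary functions $\phi_1$ and $\phi_2$ from Definition~\ref{def:Lalpha}.

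First I would pull the factor $\alpha$ out of the square root in the integrands. This gives
\[
\psi_1(\alpha,\sigma^2)=\int_0^{\pi/2}\frac{\sin^2\theta}{(\sin^2\theta+s^2)^{1/2}}\,\mr{d}\theta=\phi_1(s),
\]
and
\[
\psi_2(\alpha,\sigma^2;\delta)=\frac{4}{\delta}\Bigl(\alpha^2(1+s^2)+1-\alpha\,\phi_2(s)\Bigr).
\]
Since $\phi_1$ is strictly monotone (as noted in Definition~\ref{def:Lalpha}), the first identity yields $\phi_1^{-1}(\psi_1(\alpha,\sigma^2))=s$, so by definition of $L$ we have
\[
L\!\left[\psi_1(\alpha,\sigma^2);\delta\right]=\frac{4}{\delta}\!\left(1-\frac{\phi_2^2(s)}{4(1+s^2)}\right).
\]

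Subtracting the two expressions, the inequality $L[\psi_1(\alpha,\sigma^2);\delta]\le\psi_2(\alpha,\sigma^2;\delta)$ is equivalent to
\[
\alpha^2(1+s^2)-\alpha\,\phi_2(s)+\frac{\phi_2^2(s)}{4(1+s^2)}\ge 0,
\]
which is a quadratic in $\alpha$ and factors as the perfect square
\[
\left(\sqrt{1+s^2}\,\alpha-\frac{\phi_2(s)}{2\sqrt{1+s^2}}\right)^{\!2}\ge 0.
\]
This completes the argument.

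I do not anticipate any genuine obstacle here: the only mildly subtle point is confirming that $\phi_1$ is invertible on the relevant range so that $\phi_1^{-1}(\psi_1(\alpha,\sigma^2))$ makes sense, but this is immediate from the definition since $\phi_1$ is strictly decreasing in $s$. Everything else is a direct computation once the substitution $s=\sigma/\alpha$ is made. Notice in particular that equality in the lemma is attained precisely when $\alpha=\phi_2(s)/[2(1+s^2)]$, which will be useful later when the lemma is applied to characterize the boundary curves of the basin of attraction.
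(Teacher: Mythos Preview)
Your proof is correct and follows essentially the same route as the paper: both introduce $s=\sigma/\alpha$, identify $\psi_1(\alpha,\sigma^2)=\phi_1(s)$ and $\psi_2(\alpha,\sigma^2;\delta)=\frac{4}{\delta}\bigl((1+s^2)\alpha^2-\phi_2(s)\alpha+1\bigr)$, and exploit that for fixed $s$ this is a quadratic in $\alpha$. The paper phrases the last step as ``$L(C;\delta)$ is the minimum of $\psi_2$ over the level set $\psi_1=C$,'' while you phrase it as a perfect-square factorization; these are the same observation, and your equality case $\alpha=\phi_2(s)/[2(1+s^2)]$ is exactly the paper's minimizer $\hat{\alpha}_{\min}$.
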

\begin{proof}
Define $\mathcal{X}\Mydef\{(\alpha,\sigma^2)|\alpha>0,\sigma^2>0\}$. Let $\mathcal{Y}$ be the image of $\mathcal{X}$ under the SE map in \eqref{Eqn:map_expression_complex}. We will prove that the following holds for an arbitrary $C\in[0,1]$:
\BE\label{Eqn:left_bound_2}
\begin{split}
L\left( C;\delta\right)=\min_{(\hat{\alpha},\hat{\sigma}^2)\in\mathcal{X}}&\  \psi_2(\hat{\alpha},\hat{\sigma}^2;\delta),
\end{split}
\EE
where $(\hat{\alpha},\hat{\sigma}^2)$ satisfies the constraint
\[
\psi_1(\hat{\alpha},\hat{\sigma}^2) =C.
\]
If \eqref{Eqn:left_bound_2} holds, we would have proved \eqref{Eqn:left_bound_1}. To see this, consider arbitrary $(\alpha,\sigma^2)$ such that $\psi_1(\alpha,\sigma^2) = C$. Then, we have
\[
\begin{split}
L\left[ \psi_1(\alpha,\sigma^2);\delta\right] \overset{(a)}{=} \min_{(\hat{\alpha},\hat{\sigma}^2)}\ \psi_2(\hat{\alpha},\hat{\sigma}^2;\delta) \overset{(b)}{\le} \psi_2({\alpha},{\sigma}^2;\delta),
\end{split}
\]
where step (a) follows from \eqref{Eqn:left_bound_2} and $\psi_1(\alpha,\sigma^2) = C$, and step (b) holds since the choice $\hat{\alpha}=\alpha$ and $\hat{\sigma}^2=\sigma^2$ is feasible for the constraint $\psi_1(\hat{\alpha},\hat{\sigma}^2) = \psi_1({\alpha},{\sigma}^2)$. This is precisely \eqref{Eqn:left_bound_1}.

We now prove \eqref{Eqn:left_bound_2}. From \eqref{Eqn:map_expression_complex_a}  
we have
\[
\psi_1(\alpha, \sigma^2) = \int_0^{\pi/2} \frac{\alpha \sin^2 \theta}{ (\alpha^2 \sin^2 \theta + \sigma^2)^{1/2}}d\theta. 
\]
Furthermore, from the definition of $\phi_1$ in \eqref{Eqn:phi} we have
\BE\label{Eqn:alpha_new_s}
\psi_1(\hat{\alpha},\hat{\sigma}^2) =\phi_1\left(\frac{\hat{\sigma}}{\hat{\alpha}}\right)=C \Longrightarrow s \Mydef\frac{\hat{\sigma}}{\hat{\alpha}}= \phi_1^{-1}(C).
\EE
Similarly, from \eqref{Eqn:map_expression_complex_b}, i.e. the definition of $\psi_2$, and the definition of $\phi_2$ in \eqref{Eqn:phi2}, we can express $\psi_2(\hat{\alpha},\hat{\sigma}^2;\delta)$ as
\[
\begin{split}
\psi_2(\hat{\alpha},\hat{\sigma}^2;\delta) &= \frac{4}{\delta}\left[ \hat{\alpha}^2+\hat{\sigma}^2 + 1 -  \hat{\alpha}\cdot \phi_2\left(\frac{\hat{\sigma}}{\hat{\alpha}}\right)\right]\\
&=\frac{4}{\delta}\left[ (1+s^2)\hat{\alpha}^2 + 1 -  \hat{\alpha} \cdot \phi_2(s)\right].
\end{split}
\]
From \eqref{Eqn:alpha_new_s}, we see that fixing $\psi_1(\hat{\alpha},\hat{\sigma}^2) =C$ is equivalent to fixing $s=\phi_1^{-1}(C)$. Further, for a fixed $s$, $\psi_2(\hat{\alpha},\hat{\sigma}^2)$ is a quadratic function of $\hat{\alpha}$, and the minimum happens at
\[
\hat{\alpha}_{\min} = \frac{\phi_2(s)}{2(1+s^2)}=\frac{\phi_2(\phi_1^{-1}(C))}{2\left[1+\left(\phi_1^{-1}(C)\right)^2\right]},
\]
and $\psi_2(\hat{\alpha}_{\min},\hat{\sigma}^2;\delta)$ is
\[
\psi_2(\hat{\alpha}_{\min},\hat{\sigma}^2;\delta)=\frac{4}{\delta}\left( 1 - \frac{\phi_2^2(s) }{4(1+s^2)}\right)=\frac{4}{\delta}\left( 1 - \frac{\phi_2^2(\phi_1^{-1}(C)) }{4\left(1+\left[\phi_1^{-1}(C)\right]^2\right)}\right)=L\left(C;\delta\right),
 \]
where the last step is from the definition of $L$ is \eqref{Eqn:left_bound}. This completes the proof.
\end{proof}

To understand the implication of this lemma, let us consider the $t^{\rm th}$ iteration of the SE:
\[
\begin{split}
\alpha_{t+1} &= \psi_{1}(\alpha_t,\sigma_t^2),\\
\sigma_{t+1}^2 &= \psi_{2}(\alpha_t,\sigma_t^2;\delta),
\end{split}
\]
Note that according to Lemma \ref{Lem:left_bound}, no matter where $(\alpha_t, \sigma_t^2)$ is, $(\alpha_{t+1}, \sigma_{t+1}^2)$ will fall above the $\sigma^2 = L(\alpha; \delta)$ curve. This function is a key component in the dynamics of $\ampa$. Before we proceed further we discuss two main properties of the function $L(\alpha; \delta)$.

\begin{lemma}\label{Lem:monotonicity_L}
$L(\alpha;\delta)$ is a strictly decreasing function of $\alpha\in(0,1)$.
\end{lemma}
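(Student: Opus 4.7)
The plan is to reduce the claim to a simple monotonicity statement about the complete elliptic integrals $K$, $E$, and $T$ from Section~\ref{ssec:ellipticintegrals}. The key observation is that under the reparametrization $m \Mydef 1/(1+s^2)$ with $s=\phi_1^{-1}(\alpha)$, the quantity inside $L(\alpha;\delta)$ collapses to $(T(m)+E(m))^2/4$.

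First, I would verify that $\phi_1$ is a strictly decreasing bijection $(0,\infty)\to(0,1)$: differentiation under the integral sign yields
\[
\phi_1'(s) = -s\int_0^{\frac{\pi}{2}}\frac{\sin^2\theta}{(\sin^2\theta+s^2)^{3/2}}\,\mr{d}\theta < 0,
\]
and one checks $\lim_{s\to 0^+}\phi_1(s)=1$, $\lim_{s\to\infty}\phi_1(s)=0$. Consequently $m(\alpha) \Mydef 1/\bigl(1+[\phi_1^{-1}(\alpha)]^2\bigr)$ is a strictly \emph{increasing} bijection $(0,1)\to(0,1)$, so it suffices to prove that $\phi_2^2(s)/[4(1+s^2)]$, viewed as a function of $m$, is strictly increasing on $(0,1)$.

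Second, I would apply the substitution $\theta\mapsto \pi/2-\theta$, which turns $\sin^2\theta+s^2$ into $(1+s^2)(1-m\sin^2\theta)$. Under this substitution,
\[
\int_0^{\frac{\pi}{2}}\frac{\mr{d}\theta}{\sqrt{\sin^2\theta+s^2}} = \sqrt{m}\,K(m), \qquad \int_0^{\frac{\pi}{2}}\sqrt{\sin^2\theta+s^2}\,\mr{d}\theta = \frac{E(m)}{\sqrt{m}}.
\]
Writing $2\sin^2\theta+s^2=2(\sin^2\theta+s^2)-s^2$ with $s^2=(1-m)/m$ and invoking the definition $T(m)=E(m)-(1-m)K(m)$, a short calculation gives
\[
\phi_2(s) = \frac{T(m)+E(m)}{\sqrt{m}},\qquad \frac{\phi_2^2(s)}{4(1+s^2)} = \frac{(T(m)+E(m))^2}{4}.
\]

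Third, I would show that $T(m)+E(m)$ is strictly increasing on $(0,1)$. Using the derivative formulas in \eqref{Eqn:elliptic_dif},
\[
(T+E)'(m) = \frac{K(m)}{2} + \frac{E(m)-K(m)}{2m} = \frac{E(m)-(1-m)K(m)}{2m} = \frac{T(m)}{2m},
\]
which is strictly positive on $(0,1)$ because Lemma~\ref{Lem:elliptic}(i)-(ii) gives $T(0)=0$ with $T$ strictly increasing. Since $T(m)+E(m)>0$, squaring preserves strict monotonicity, and combining with $L(\alpha;\delta) = \frac{4}{\delta}\bigl[1-(T(m)+E(m))^2/4\bigr]$ together with the increasing bijection $\alpha\mapsto m(\alpha)$ yields the desired strict decrease.

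I do not expect a serious obstacle: the only real subtlety is spotting the reparametrization $m=1/(1+s^2)$ and the resulting identity $\phi_2(s)=(T(m)+E(m))/\sqrt{m}$, which causes the factor $1+s^2$ to cancel exactly. A direct attack via differentiation of $\phi_2^2(s)/(1+s^2)$ under the integral leads to signing a difference of integrals, which appears hard to resolve without this reduction.
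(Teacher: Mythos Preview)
Your proposal is correct and follows essentially the same route as the paper's proof. Both arguments reparametrize via $m=1/(1+s^2)$, reduce $\phi_2^2(s)/[4(1+s^2)]$ to $\tfrac14\bigl(2E(m)-(1-m)K(m)\bigr)^2=\tfrac14(T(m)+E(m))^2$, and then verify monotonicity by computing the derivative $\bigl[2E(m)-(1-m)K(m)\bigr]'=T(m)/(2m)>0$; the only cosmetic difference is that you obtain the elliptic-integral identity via the explicit substitution $\theta\mapsto\pi/2-\theta$, whereas the paper invokes the precomputed identities around \eqref{Eqn:integral_identities}.
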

\begin{proof}

Recall from \eqref{Eqn:left_bound} that $L(\alpha;\delta)$ is defined as
\BE
\begin{split}
L(\alpha;\delta)&\Mydef \frac{4}{\delta}\left( 1 - \frac{\phi_2^2(\phi_1^{-1}(\alpha)) }{4(1+(\phi_1^{-1}(\alpha))^2)}\right)\\
&= \frac{4}{\delta}\left( 1 - I_2[\phi_1^{-1}(\alpha)]\right),
\end{split}\nonumber
\EE
where $I_2:\mathbb{R}_{+}\mapsto\mathbb{R}_{+}$ is defined as
\BE\label{Eqn:I2_def}
I_2(s) \Mydef \frac{\phi^2_2(s)}{4(1+s^2)}.
\EE
From \eqref{Eqn:phi}, it is easy to see that $\phi_1(s)$ is a decreasing function. Hence, to prove that $L(\alpha;\delta)$ is a decreasing function of $\alpha$, it suffices to prove that $I_2(s)$ is strictly decreasing. 

Substituting \eqref{Eqn:phi2} into \eqref{Eqn:I2_def} yields:
\BE
\begin{split}
I_2(s) &= \frac{\phi^2_2(s)}{4(1+s^2)}\\
&=\frac{1}{4(1+s^2)}\left(\int_0^{\frac{\pi}{2}} \frac{ 2\sin^2\theta +s^2}{ \left( \sin^2\theta +s^2 \right)^{\frac{1}{2}} }\right)^2 \\
&\overset{(a)}{=}\frac{1}{4}\left[2E\left(\frac{1}{1+s^2}\right)-\frac{s^2}{1+s^2}K\left(\frac{1}{1+s^2}\right)\right]^2\\
&=\frac{1}{4}\left[ 2E(x)-(1-x)K(x) \right]^2,
\end{split}\nonumber
\EE
where step (a) is obtained through similar calculations as those in \eqref{Eqn:integral_identities}, and in the last step we defined $x=\frac{1}{1+s^2}$. Hence, to prove that $I_2(s)$ is a decreasing function of $s$, it suffices to prove that $[2E(x)-(1-x)K(x)]^2$ is an increasing function of $x$.
Further, $2E(x)-(1-x)K(x)=T(x)+E(x)>0$ (form the definition of $T(x)$ in \eqref{Eqn:def_elliptic}), our problem reduces to proving that $2E(x)-(1-x)K(x)$ is increasing. To this end, differentiation yields
\BE
\left[2E(x)-(1-x)K(x)\right]'\overset{(a)}{=}\frac{E(x)-(1-x)K(x)}{2x}\overset{(b)}{=}\frac{1}{2}T(x)\overset{(c)}{>}0,\nonumber
\EE
where (a) is from the differentiation identities in Lemma~\ref{Lem:elliptic}, (b) is from \eqref{Eqn:def_elliptic}, and $T(x)>0$ follows from Lemma~\ref{Lem:elliptic} (ii) together with the fact that $T(0)=0$.
\end{proof}

The next lemma compares the function $L(\alpha;\delta)$ with $F_1^{-1}(\alpha)$.

\begin{lemma}\label{Lem:F1_L}
If $\delta>\deltaAMP$, then 
\BE
F_1^{-1}(\alpha)>L(\alpha;\delta),\quad\forall\alpha\in(0,1).\nonumber
\EE
\end{lemma}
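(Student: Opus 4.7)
The plan is to deduce this statement by combining the minimum--characterization of $L(\alpha;\delta)$ established in the proof of Lemma~\ref{Lem:left_bound} with the previously proved inequality $F_1^{-1}(\alpha) > F_2(\alpha;\delta)$ (Lemma~\ref{Lem:F1_F2_complex}) and the contractive behavior of $\psi_2$ above its fixed point (Lemma~\ref{lem:psi2} (iv)). No new estimate on elliptic integrals should be needed; everything reduces to chaining three inequalities.

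First I would exhibit a specific feasible point in the minimization that defines $L(\alpha;\delta)$. Namely, given $\alpha \in (0,1)$, pick $\hat\sigma^2 = F_1^{-1}(\alpha)$ and $\hat\alpha = \alpha$. By the very definition of $F_1$, we have $\psi_1(\alpha, F_1^{-1}(\alpha)) = \alpha$. From the identity \eqref{Eqn:left_bound_2} established inside the proof of Lemma~\ref{Lem:left_bound}, $L(\alpha;\delta)$ is the minimum of $\psi_2(\hat\alpha,\hat\sigma^2;\delta)$ over all pairs with $\psi_1(\hat\alpha,\hat\sigma^2)=\alpha$, so immediately
\begin{equation*}
L(\alpha;\delta) \;\le\; \psi_2(\alpha,\, F_1^{-1}(\alpha);\, \delta).
\end{equation*}

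Next I would bound the right-hand side using Lemma~\ref{lem:psi2} (iv). Since $\delta > \deltaAMP > 2$, Lemma~\ref{Lem:F1_F2_complex} gives $F_1^{-1}(\alpha) > F_2(\alpha;\delta)$ for all $\alpha \in (0,1)$. Moreover, by Lemma~\ref{Lem:2} (i), $F_1^{-1}(\alpha) \le \pi^2/16 < 1 \le \sigma^2_{\max}$, so $F_1^{-1}(\alpha)$ lies in the interval $(F_2(\alpha;\delta),\, \sigma^2_{\max})$. The strict inequality $\psi_2(\alpha,\sigma^2;\delta) < \sigma^2$ from Lemma~\ref{lem:psi2} (iv) applied at $\sigma^2 = F_1^{-1}(\alpha)$ then yields
\begin{equation*}
\psi_2(\alpha,\, F_1^{-1}(\alpha);\, \delta) \;<\; F_1^{-1}(\alpha).
\end{equation*}
Chaining the two displays gives $L(\alpha;\delta) < F_1^{-1}(\alpha)$, which is what we want.

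There is essentially no hard step here; the only subtlety to check is that the choice $\hat\alpha = \alpha$, $\hat\sigma^2 = F_1^{-1}(\alpha)$ is admissible in the minimization defining $L$, and that $F_1^{-1}(\alpha)$ lies in the range of $\sigma^2$ where Lemma~\ref{lem:psi2} (iv) applies. Both are immediate from Lemma~\ref{Lem:2}. The real content was already absorbed into Lemma~\ref{Lem:F1_F2_complex}, which is the elliptic-integral inequality; Lemma~\ref{Lem:F1_L} is thus a short corollary that simply transfers the separation between $F_1^{-1}$ and $F_2$ to a separation between $F_1^{-1}$ and the lower envelope $L$.
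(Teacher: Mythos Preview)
Your argument is correct and is a genuinely more direct route than the paper's. The paper proves the lemma by contradiction: assuming $L(\hat\alpha;\delta)\ge F_1^{-1}(\hat\alpha)$ for some $\hat\alpha$, it chooses $\hat\sigma^2$ in between, then uses the global attractiveness of $F_1$ (Lemma~\ref{lem:psi1}), Lemma~\ref{Lem:F1_F2_complex}, Lemma~\ref{lem:psi2}(iv), and the monotonicity of $L$ (Lemma~\ref{Lem:monotonicity_L}) to deduce $\psi_2(\hat\alpha,\hat\sigma^2;\delta)<L[\psi_1(\hat\alpha,\hat\sigma^2);\delta]$, contradicting Lemma~\ref{Lem:left_bound}. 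Your proof instead goes straight through the variational characterization \eqref{Eqn:left_bound_2} of $L$ established inside the proof of Lemma~\ref{Lem:left_bound}: plugging in the feasible pair $(\alpha,F_1^{-1}(\alpha))$ gives $L(\alpha;\delta)\le\psi_2(\alpha,F_1^{-1}(\alpha);\delta)$, and then Lemma~\ref{Lem:F1_F2_complex} together with Lemma~\ref{lem:psi2}(iv) finishes. This bypasses both the contradiction step and the need for Lemma~\ref{Lem:monotonicity_L}, at the small cost of invoking an identity stated only inside another proof rather than the lemma statement itself. Both approaches ultimately rest on the same core inequality (Lemma~\ref{Lem:F1_F2_complex}), but yours makes the dependence more transparent.
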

\begin{proof}
We prove by contradiction. Suppose that $L(\hat{\alpha};\delta)\ge F_1^{-1}(\hat{\alpha})$ at some $\hat{\alpha}\in(0,1)$. If this is the case, then there exists a $\hat{\sigma}^2$ such that
\BE\label{Eqn:Lem_F1L_1}
F_1^{-1}(\hat{\alpha}) \le \hat{\sigma}^2\leq L(\hat{\alpha};\delta).
\EE
Since $F_1$ is a decreasing function (see Lemma~\ref{Lem:2}), the first inequality implies that $\hat{\alpha}\textcolor{red}{\ge} F_1(\hat{\sigma}^2)$. Then, based on the global attractiveness property in Lemma~\ref{lem:psi1} (iii), we have
\BE\label{Eqn:Lem_F1_L_1}
\psi_1(\hat{\alpha},\hat{\sigma}^2)\le \hat{\alpha}.
\EE
Further, Lemma~\ref{Lem:F1_F2_complex} shows that $F_1^{-1}(\hat{\alpha})>F_2(\hat{\alpha};\delta)$ for $\delta>\deltaAMP$, and using \eqref{Eqn:Lem_F1L_1} we also have $\hat{\sigma}^2\ge F_1^{-1}(\hat{\alpha})> F_2(\hat{\alpha};\delta)$. Also, from \eqref{Eqn:Lem_F1L_1}, 
\[
\hat{\sigma}^2\le L(\hat{\alpha};\delta)\overset{(a)}{<} L(0;\delta)=\frac{4}{\delta}\left(1-\frac{\pi^2}{16}\right)<\frac{4}{\delta}\le\sigma^2_{\max},
\]
where (a) is due to the monotonicity of $L(\alpha;\delta)$ (see Lemma~\ref{Lem:monotonicity_L}). From the above discussions, $F_2(\hat{\alpha};\delta)<\hat{\sigma}^2<\sigma^2_{\max}$. We then have (for $\delta>\deltaAMP$):
\BE\label{Eqn:Lem_F1_L_2}
 \psi_2(\hat{\alpha},\hat{\sigma}^2;\delta)\overset{(a)}{<}\hat{\sigma}^2\overset{(b)}{\le}L(\hat{\alpha};\delta)\overset{(c)}{\le}L\left[ \psi_1(\hat{\alpha},\hat{\sigma}^2);\delta\right],
\EE
where step (a) follows from the global attractiveness property in Lemma~\ref{lem:psi2} (iv), step (b) is due to the hypothesis in \eqref{Eqn:Lem_F1L_1}, step (c) is from \eqref{Eqn:Lem_F1_L_1} together with the monotonicity of $L(\alpha;\delta)$ (see Lemma~\ref{Lem:monotonicity_L}). Note that \eqref{Eqn:Lem_F1_L_2} shows that $ \psi_2(\hat{\alpha},\hat{\sigma}^2;\delta)<L\left[ \psi_1(\hat{\alpha},\hat{\sigma}^2);\delta\right]$, which contradicts Lemma \ref{Lem:left_bound}, where we proved that $ \psi_2(\alpha,\sigma^2;\delta)\ge L\left[ \psi_1(\alpha,\sigma^2);\delta\right]$ for any $\alpha>0$, $\sigma^2>0$ and $\delta>0$. Hence, we must have that $L(\alpha;\delta)<F_1^{-1}(\alpha)$ for any $\alpha\in(0,1)$. 
\end{proof}

\begin{lemma}\label{Lem:L_bound}
The following holds for any $\alpha\in(0,1)$ and $\delta>0$, 
\BE\label{Eqn:lem_L_bound0}
L(\alpha;\delta)>\frac{4}{\delta}\left ( 1 - \frac{\pi^2}{16 }- \frac{1}{2}\alpha^2\right ),
\EE
where $L(\alpha,\delta)$ is defined in \eqref{Eqn:left_bound}.
\end{lemma}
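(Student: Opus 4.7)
The plan is to reduce the lemma to a clean inequality in the auxiliary variable $s$, and then prove that inequality via a single application of Cauchy--Schwarz followed by a well-chosen Young's inequality. Writing $s \Mydef \phi_1^{-1}(\alpha) \in (0,\infty)$ (so that $\alpha = \phi_1(s)$), the definition of $L(\alpha;\delta)$ in \eqref{Eqn:left_bound} shows that \eqref{Eqn:lem_L_bound0} is equivalent to
\[
\frac{\phi_2^2(s)}{4(1+s^2)} < \frac{\pi^2}{16} + \frac{\phi_1^2(s)}{2}, \qquad s > 0,
\]
so the entire task is to establish this inequality in $s$.

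The first step is the observation that the integrand in $\phi_2$ decomposes as $\frac{2\sin^2\theta + s^2}{\sqrt{\sin^2\theta+s^2}} = \frac{\sin^2\theta}{\sqrt{\sin^2\theta+s^2}} + \sqrt{\sin^2\theta+s^2}$, giving $\phi_2(s) = \phi_1(s) + \phi_3(s)$ where $\phi_3(s) \Mydef \int_0^{\pi/2} \sqrt{\sin^2\theta + s^2}\, d\theta$. Next, Cauchy--Schwarz against the constant function $1$ yields
\[
\phi_3^2(s) = \left(\int_0^{\pi/2} \sqrt{\sin^2\theta+s^2}\cdot 1\, d\theta\right)^2 < \frac{\pi}{2}\int_0^{\pi/2} (\sin^2\theta + s^2)\, d\theta = \frac{\pi^2(1+2s^2)}{8},
\]
with strict inequality because $\sqrt{\sin^2\theta+s^2}$ is non-constant on $(0,\pi/2)$.

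The key algebraic step is then Young's inequality with the carefully chosen parameter $\lambda = 1 + 2s^2$: namely $2\phi_1 \phi_3 \le (1+2s^2)\phi_1^2 + \phi_3^2/(1+2s^2)$. Expanding $\phi_2^2 = \phi_1^2 + 2\phi_1\phi_3 + \phi_3^2$ and collecting terms gives
\[
\phi_2^2(s) \le 2(1+s^2)\phi_1^2(s) + \frac{2(1+s^2)}{1+2s^2}\phi_3^2(s).
\]
The particular weight $1+2s^2$ is chosen precisely so that the factor $1+2s^2$ appearing in the Cauchy--Schwarz bound on $\phi_3^2$ cancels against the same factor in the denominator here. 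Substituting the Cauchy--Schwarz bound yields the strict inequality $\phi_2^2(s) < 2(1+s^2)\phi_1^2(s) + \pi^2(1+s^2)/4$, and dividing by $4(1+s^2)$ finishes the proof.

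The only nontrivial choice in the argument is the Young parameter $\lambda = 1+2s^2$; the natural symmetric choice $\lambda = 1$ (or any $s$-independent choice) is too loose and yields a bound on $\phi_2^2$ that exceeds the target. I expect this is the main conceptual obstacle: one must recognize that the $(1+2s^2)$ factor in the Cauchy--Schwarz bound $\phi_3^2 \le \pi^2(1+2s^2)/8$ forces the Young weight to be $1+2s^2$ for the telescoping to produce exactly the RHS in the desired bound. Every remaining step is a routine manipulation once this matching is noticed.
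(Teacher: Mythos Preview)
Your proof is correct and essentially matches the paper's argument: both reduce to the same inequality in $s$, use the decomposition $\phi_2 = \phi_1 + \phi_3$, and establish the key bound $\phi_3^2(s) < \frac{\pi^2}{8}(1+2s^2)$ via Cauchy--Schwarz. The only cosmetic difference is that the paper treats $(\phi_1+\phi_3)^2 - (1+s^2)\bigl[\tfrac{\pi^2}{4}+2\phi_1^2\bigr]$ as a quadratic in $\phi_1$ with leading coefficient $-(1+2s^2)$ and shows its discriminant is negative, whereas you apply Young's inequality with weight $\lambda = 1+2s^2$; these are the same algebraic step (completing the square), so the two arguments coincide.
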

\begin{proof}
From \eqref{Eqn:left_bound}, proving \eqref{Eqn:lem_L_bound0} is equivalent to proving:
\BE\label{Eqn:lem_L_bound1}
1 - \frac{\phi_2^2(\phi_1^{-1}(\alpha)) }{4\left[1+(\phi_1^{-1}(\alpha))^2\right]} > 1 - \frac{\pi^2}{16 }- \frac{1}{2}\alpha^2,\quad\forall \alpha\in(0,1),
\EE
where $\phi_1:[0,\infty)\mapsto[0,1]$ and $\phi_2:[0,\infty)\mapsto[0,\infty)$ are defined as (see \eqref{Eqn:phi} and \eqref{Eqn:phi2}):
\BS \label{Eqn:lem_L_bound_1}
\begin{align}
\phi_1(s)&=\int_0^{\frac{\pi}{2}}\frac{\sin^2\theta}{\left(\sin^2\theta +s^2 \right)^{\frac{1}{2}}}\mr{d}\theta, \label{Eqn:lem_L_bound_1a}\\
\phi_2(s)&= \int_0^{\frac{\pi}{2}} \frac{ 2\sin^2\theta +s^2 }{ \left( \sin^2\theta +s^2 \right)^{\frac{1}{2}} }\mr{d}\theta.  \label{Eqn:lem_L_bound_1b}
\end{align}
\ES
We make a variable change:
\[
\alpha = \phi_1(s).
\]
Simple calculations show that \eqref{Eqn:lem_L_bound1} can be reformulated as the following
\BE\label{Eqn:lem_L_bound_2}
\frac{1}{1+s^2}\phi_2^2(s) < \frac{\pi^2}{4} + 2  \phi_1^2(s),\quad s\in(0,\infty).
\EE
Let us further define
\BE \label{Eqn:lem_L_bound_3}
\phi_3(s)\equiv \int_0^{\frac{\pi}{2}} (\sin^2 \theta + s^2)^{\frac{1}{2}} \mr{d} \theta.
\EE
From \eqref{Eqn:lem_L_bound_1} and \eqref{Eqn:lem_L_bound_3}, we have
\[
\phi_2(s)=\phi_1(s) + \phi_3(s),
\]
and \eqref{Eqn:lem_L_bound_2} can be reformulated as
\BE\label{Eqn:lem_L_bound_4}
\left[\phi_1(s) + \phi_3(s)\right]^2 - (1 + s^2) \left[ \frac{\pi^2}{4} + 2\phi_1^2(s)\right]<0.
\EE
To this end, we can write the LHS of \eqref{Eqn:lem_L_bound_4} into a quadratic form of $\phi_1(s)$:
\[
\begin{split}
&\left[\phi_1(s) + \phi_3(s)\right]^2- (1 + s^2) \left[ \frac{\pi^2}{4} + 2\phi_1^2(s)\right] \\
& = \phi_1^2(s) + \phi_3^2(s) + 2\phi_1(s) \phi_3(s)-(1 + s^2) \left[ \frac{\pi^2}{4} + 2\phi_1^2(s)\right]\\
&=-(1+2s^2)\phi_1^2(s) + 2\phi_1(s) \phi_3(s)-\frac{\pi^2}{4}(1+s^2) + \phi_3^2(s).
\end{split}
\]
Hence, to prove that this quadratic form is negative everywhere, it suffices to prove that the discriminant is negative, i.e.,
\[
4\phi_3^2(s) + 4  (1+2s^2)\left[-\frac{\pi^2}{4}(1+s^2) + \phi_3^2(s)\right]<0,
\]
or
\[
  \phi_3^2(s) < \frac{\pi^2}{8} (1+2 s^2 ).
\]
Finally, by Cauchy-Schwarz we have
\[
\begin{split}
  \phi_3^2(s)  & =\left[ \int_0^{\frac{\pi}{2}} (\sin^2 \theta + s^2)^{\frac{1}{2}} \mr{d} \theta\right]^2\\
  &\le \int _0^{\frac{\pi}{2}}1 \mr{d} \theta \cdot \int_0^{\frac{\pi}{2}} \left(\sqrt{\sin^2 \theta + s^2}\right)^2  \mr{d}
  \theta\\
&=\frac{\pi}{2}\left(\frac{\pi}{4}+\frac{\pi}{2}s^2\right)=\frac{\pi^2}{8} (1+2 s^2 ) ,
\end{split}
\]
which completes our proof.
\end{proof}

\begin{lemma}\label{Lem:monotonicity_Psi2_G1}
For any $\alpha\in[0,1]$, $\psi_2(\alpha,\sigma^2;\deltaAMP)$ is an increasing function of $\sigma^2$ on $\sigma^2\in[L(\alpha;\deltaAMP),\infty)$, where the function $L(\alpha;\delta)$ is defined in \eqref{def:Lalpha}. 
\end{lemma}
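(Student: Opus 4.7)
The plan is to translate the monotonicity in $\sigma^2$ into positivity of $\partial\psi_2/\partial\sigma^2$ and then to split on the size of $\alpha$. From \eqref{Eqn:psi2_dif_sigma2_fs},
\[
\frac{\partial\psi_2(\alpha,\sigma^2;\deltaAMP)}{\partial\sigma^2}=\frac{4}{\deltaAMP\,\alpha}\bigl(\alpha-f(\sigma/\alpha)\bigr),\qquad f(s)=\frac{1}{2\sqrt{1+s^2}}\,E\!\left(\frac{1}{1+s^2}\right),
\]
so it suffices to show $\alpha\ge f(\sigma/\alpha)$ whenever $\sigma^2\ge L(\alpha;\deltaAMP)$. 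For $\alpha\ge\alpha_{\ast}$ this is immediate: Lemma~\ref{lem:psi2}(v) already guarantees $\alpha\ge f(s)$ for every $s>0$, so the derivative is non-negative on the entire half-line $\sigma^2>0$.

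The remaining case is $\alpha\in(0,\alpha_{\ast})$. By Lemma~\ref{lem:psi2}(v), $f$ attains its global maximum $\alpha_{\ast}$ at $s=s_{\ast}$ and strictly decreases on $(s_{\ast},\infty)$, so the set $\{s>0:f(s)\ge\alpha\}$ is a bounded interval $[s_{-}(\alpha),s_{+}(\alpha)]$ containing $s_{\ast}$ (with $s_{-}(\alpha)=0$ when $\alpha\le f(0)=\tfrac12$). Since $\{\sigma^2\ge L(\alpha;\deltaAMP)\}$ is the half-line $\sigma\ge\sqrt{L(\alpha;\deltaAMP)}$, its disjointness from the ``bad'' interval $\sigma\in[\alpha s_{-}(\alpha),\alpha s_{+}(\alpha)]$ reduces to the single inequality
\[
L(\alpha;\deltaAMP)\;\ge\;\alpha^2\,s_{+}^2(\alpha),\qquad\alpha\in(0,\alpha_{\ast}].
\]

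To attack this, I would parameterize $s_{+}$ via $x:=1/(1+s_{+}^2)\in(0,x_{\ast}]$, where $x_{\ast}=1/(1+s_{\ast}^2)$ is the unique root of $2E(x_{\ast})=K(x_{\ast})$. The relation $f(s_{+})=\alpha$ becomes $\alpha=\sqrt{x}\,E(x)/2$, whence $\alpha^2s_{+}^2=(1-x)E^2(x)/4$. Plugging in the lower bound from Lemma~\ref{Lem:L_bound}, $L(\alpha;\deltaAMP)>\tfrac{\pi^2}{16}-\tfrac{\pi^2\alpha^2}{2(16-\pi^2)}$, and substituting $\alpha^2=xE^2(x)/4$, the task reduces to the elliptic-integral inequality
\[
\frac{\pi^2}{16}-\frac{\pi^2\,x\,E^2(x)}{8(16-\pi^2)}\;\ge\;\frac{(1-x)E^2(x)}{4},\qquad x\in(0,x_{\ast}].
\]

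The main obstacle is verifying this final inequality. At $x=0$ both sides equal $\pi^2/16$ (since $E(0)=\pi/2$), which is precisely the manifestation of $\deltaAMP=\frac{64}{\pi^2}-4$ being the critical threshold: the bound is tight in the limit $\alpha\downarrow 0$. For strictness on $(0,x_{\ast}]$, I would first compute the derivative in $x$ of the LHS minus the RHS at $x=0$ using $E'(0)=-\pi/8$, showing that it is strictly positive; this handles a neighborhood of zero. The remainder of $(0,x_{\ast}]$ would then be covered using the monotonicity of $E(x)$ on $(0,1)$ together with the elliptic-integral identities collected in Lemma~\ref{Lem:elliptic} and Section~\ref{ssec:ellipticintegrals}, along the same lines as the case-split calculations used in the proof of Lemma~\ref{Lem:F1_F2_complex}.
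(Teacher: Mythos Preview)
Your approach is essentially the same as the paper's: split at $\alpha_{\ast}$, invoke Lemma~\ref{lem:psi2}(v) for $\alpha\ge\alpha_{\ast}$, and for $\alpha<\alpha_{\ast}$ replace $L$ by the explicit lower bound $\hat L$ from Lemma~\ref{Lem:L_bound} and reduce to a single elliptic-integral inequality. In fact your final displayed inequality, after clearing denominators, is exactly equivalent to the paper's target inequality $E(x)<\sqrt{(16-\pi^2)/(\deltaAMP(1-x)+2x)}$, so the reductions coincide.

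The one substantive difference is how the final inequality is dispatched. The paper does not use a derivative-at-zero argument followed by ad hoc case-splitting; it invokes the known bound $E(x)<\tfrac{\pi}{2}\sqrt{1-x/2}$ (from \cite{Wang2013}), which turns the inequality into the elementary check $(1-\tfrac{x}{2})(\deltaAMP-(\deltaAMP-2)x)<\deltaAMP$, true for all $x\in(0,1)$ since both factors on the left lie in $(0,1)$ and $(0,\deltaAMP)$ respectively. This is a one-line finish, and in particular it proves the inequality on all of $(0,1)$, not just $(0,x_{\ast}]$. Your proposed route (check $G'(0)>0$, then cover the rest by monotonicity or Lemma~\ref{Lem:F1_F2_complex}-style case splits) is not wrong in principle, but it is left unspecified precisely at the point where the work would have to be done; the paper's use of the Wang bound is the missing ingredient that makes the argument short and complete.
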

\begin{proof}
From Lemma~\ref{lem:psi2} (v), the case $\alpha>\alpha_{\ast}\approx0.53$ is trivial since then $\psi_2(\sigma^2,\alpha;\deltaAMP)$ is strictly increasing in $\sigma^2\in\mathbb{R}_{+}$. In the rest of this proof, we assume that $\alpha<\alpha_{\ast}$. We have derived in \eqref{Eqn:psi2_dif_sigma2} that

\BE\label{Eqn:lem_mono_Psi2_G1_1}
\frac{\partial \psi_2(\alpha,\sigma^2;\delta)}{\partial\sigma^2}>0 \Longleftrightarrow \alpha >\frac{1}{2\sqrt{1+s^2}} E\left( \frac{1}{1 + s^2} \right)=f(s),
\EE
where
\[
s\Mydef \frac{\sigma}{\alpha}.
\]
Hence, the result of Lemma \ref{Lem:monotonicity_Psi2_G1} can be reformulated as proving the following:   
\BE
 \alpha >f(s),\quad \forall s \ge \frac{\sqrt{ L(\alpha;\deltaAMP)}}{\alpha}, \ \ \ \ \ \ \ \ \alpha\in[0,\alpha_{\ast}).\nonumber
\EE
We proceed in three steps:
\begin{enumerate}
\item[(i)] In Lemma \ref{Lem:L_bound}, we proved that the following holds for any $\alpha\in[0,1]$:
\BE\label{Eqn:L_hat}
{L}(\alpha;\deltaAMP)\ge\hat{L}(\alpha,\deltaAMP)\Mydef\frac{4}{\deltaAMP}\left ( 1 - \frac{\pi^2}{16 }- \frac{1}{2}\alpha^2\right ).
\EE
For convenience, define
\BE\label{Eqn:s_hat_def}
\hat{s}(\alpha) \Mydef \frac{\sqrt{\hat{L}(\alpha;\deltaAMP)}}{\alpha}.
\EE
\item[(ii)] We prove that $f(s)$ is monotonically decreasing on $s\in\left [\hat{s}(\alpha),\infty\right)$ for $\alpha<\alpha_{\ast}$.
\item[(iii)] We prove that the following holds for $\alpha<\alpha_{\ast}$:
\BE
 \alpha >f(\hat{s}(\alpha)).\nonumber
\EE
\end{enumerate}
Clearly, \eqref{Eqn:lem_mono_Psi2_G1_1} follows from the above claims. Here, we introduce the function $\hat{L}$ since $\hat{L}$ has a simple closed-form formula and is easier to manipulate than $L(\alpha)$. We next prove step (ii). From \eqref{Eqn:lem_psi2_fs_cases}, it suffices to prove that
\[
\hat{s}(\alpha) > s_{\ast},\quad \forall\alpha<\alpha_{\ast},
\]
where $s_{\ast}$ and $\alpha_{\ast}$ are defined in \eqref{Eqn:lem_psi2_1} and \eqref{Eqn:monotonicity_condi2} respectively. To this end, we note that the following holds for $\alpha < \alpha_{\ast}$:
\[
\begin{split}
\hat{s}(\alpha) =  \frac{\sqrt{\hat{L}(\alpha;\deltaAMP)}}{\alpha} >\frac{\sqrt{\hat{L}(\alpha_{\ast};\deltaAMP)}}{\alpha_{\ast}}\approx 1.18,
\end{split}
\]
where the inequality follows from the fact that $\hat{L}$ in \eqref{Eqn:L_hat} is strictly decreasing in $\alpha$, and the last step is calculated from \eqref{Eqn:L_hat} and $\alpha_{\ast}\approx0.527$ . Finally, numerical evaluation of \eqref{Eqn:lem_psi2_1} shows that $s_{\ast}\approx0.458$.  Hence, $\hat{s}(\alpha) > s_{\ast}$, which completes the proof.

We next prove step (iii). First, simple manipulations yields
\BE\label{Eqn:lem_mono_Psi2_G1_4}
\begin{split}
\hat{s}^2(\alpha)\overset{(a)}{=}\frac{\hat{L}(\alpha)}{\alpha^2}\overset{(b)}{=}\frac{4}{\deltaAMP}\left [ \left(1 - \frac{\pi^2}{16 }\right)\cdot \frac{1}{\alpha^2}- \frac{1}{2}\right ],
\end{split}
\EE
where (a) is from the definition of $\hat{s}(\alpha)$ in \eqref{Eqn:s_hat_def} and (b) is due to \eqref{Eqn:L_hat}. 
Using \eqref{Eqn:lem_mono_Psi2_G1_4}, we further obtain
\BE\label{Eqn:L_hat2}
\alpha=\sqrt{ \frac{16-\pi^2}{ 4\deltaAMP\hat{s}^2(\alpha) + 8 } }.
\EE
Now, from \eqref{Eqn:L_hat2} and \eqref{Eqn:f_def}, we have
\BE \label{Eqn:lem_mono_Psi2_G1_5}
\begin{split}
\alpha - f(\hat{s}(\alpha))>0 &\Longleftrightarrow \sqrt{ \frac{16-\pi^2}{ 4\deltaAMP\hat{s}^2(\alpha) + 8 } } - \frac{1}{2\sqrt{1+\hat{s}^2(\alpha)}} E\left( \frac{1}{1 + \hat{s}^2(\alpha)} \right) >0.
\end{split}
\EE

We prove \eqref{Eqn:lem_mono_Psi2_G1_5} by showing that the following stronger result holds:
\BE\label{Eqn:lem_mono_Psi2_G1_6}
\sqrt{ \frac{16-\pi^2}{ 4\deltaAMP t^2 + 8 } } - \frac{1}{2\sqrt{1+t^2 }} E\left( \frac{1}{1 + t^2} \right) >0,\quad \forall t\in\mathbb{R}_{+}.
\EE
For convenience, we make a variable change:
\[
x\Mydef \frac{1}{1+t^2}.
\]
With some straightforward calculations, we can rewrite \eqref{Eqn:lem_mono_Psi2_G1_6} as
\[
\begin{split}
E(x) & < \sqrt{ \frac{16-\pi^2}{ \deltaAMP(1-x) + 2 x } }
\end{split}
\]
The following upper bound on $E(x)$ is due to \cite[Eqn.~(1.2)]{Wang2013}:
\[
E(x)<\frac{\pi}{2}\sqrt{1-\frac{x}{2}},\quad \forall x\in(0,1].
\]
Hence, it is sufficient to prove that
\[
\frac{\pi}{2}\sqrt{1-\frac{x}{2}}< \sqrt{ \frac{16-\pi^2}{ \deltaAMP(1-x) + 2 x } },
\]
which can be reformulated as
\[
\left(1-\frac{x}{2}\right)\left(\deltaAMP - (\deltaAMP-2)x\right)<\frac{4}{\pi^2}(16-\pi^2)=\deltaAMP
\]
where the second equality follows from the definition $\deltaAMP=\frac{64}{\pi^2}-4$. The above inequality holds since $0<1-\frac{x}{2}<1$ and $0<\deltaAMP - (\deltaAMP-2)x<\deltaAMP$. This completes the proof.
\end{proof}

\vspace{5pt}

\begin{lemma}\label{Lem:monotonicity_delta}
For any $\alpha\in[0,1]$, $\psi_2\left(\alpha,L(\alpha;\delta);\delta\right)$ is a strictly decreasing function of $\delta>0$, where $L(\alpha;\delta)$ is defined in \eqref{Eqn:left_bound}.
\end{lemma}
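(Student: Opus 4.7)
The plan is as follows. From the proof of Lemma~\ref{Lem:left_bound}, the minimizer $(\hat{\alpha}_{\min},\hat{\sigma}^2_{\min})$ realizing $L(\alpha;\delta)$ depends only on $\alpha$, not on $\delta$. Since $\psi_2(\hat{\alpha},\hat{\sigma}^2;\delta) = (4/\delta)\,h(\hat{\alpha},\hat{\sigma}^2)$ with $h(\alpha,\sigma^2)\Mydef \alpha^2+\sigma^2+1-J(\alpha,\sigma^2)$ and $J(\alpha,\sigma^2)\Mydef \int_0^{\pi/2}(2\alpha^2\sin^2\theta+\sigma^2)/\sqrt{\alpha^2\sin^2\theta+\sigma^2}\,\mr{d}\theta$ both independent of $\delta$, it follows that $L(\alpha;\delta)=C(\alpha)/\delta$ for some $C(\alpha)>0$ (whenever $\alpha<1$). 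Setting $M(\alpha,\delta)\Mydef \psi_2(\alpha,L(\alpha;\delta);\delta) = (4/\delta)\,h(\alpha,L(\alpha;\delta))$ and applying the chain rule with $\mr{d}L/\mr{d}\delta = -L/\delta$ yields
\[
\frac{\mr{d}M}{\mr{d}\delta} = -\frac{4}{\delta^2}\bigl[h(\alpha,L) + L\,h_{\sigma^2}(\alpha,L)\bigr],
\]
so strict decrease of $M$ in $\delta$ is equivalent to $h(\alpha,\sigma^2)+\sigma^2 h_{\sigma^2}(\alpha,\sigma^2)>0$ at every admissible $\sigma^2 = L(\alpha;\delta)>0$; for fixed $\alpha\in[0,1)$, as $\delta$ ranges over $(0,\infty)$, this sweeps out all $\sigma^2>0$.

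The next step is to reduce the pointwise inequality to one purely in elliptic integrals. Setting $\rho\Mydef\sqrt{\alpha^2+\sigma^2}$ and $x\Mydef \alpha^2/(\alpha^2+\sigma^2)\in[0,1]$, the identities of Lemma~\ref{Lem:Aux_2} give $J(\alpha,\sigma^2) = \rho[T(x)+E(x)]$, and analogous reductions applied to $I(\alpha,\sigma^2)\Mydef \int_0^{\pi/2}\mr{d}\theta/(\alpha^2\sin^2\theta+\sigma^2)^{3/2}$ yield $\sigma^2 I = E(x)/\rho$, whence $\sigma^2 h_{\sigma^2} = \sigma^2 - (1-x)\rho E(x)/2$. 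Combined with $\alpha^2+2\sigma^2+1 = (2-x)\rho^2+1$, this produces the closed form
\[
h+\sigma^2 h_{\sigma^2} = (2-x)\rho^2 - Q(x)\,\rho + 1,\qquad Q(x)\Mydef \frac{(5-x)E(x)}{2}-(1-x)K(x).
\]
By AM--GM, $(2-x)\rho^2+1\geq 2\rho\sqrt{2-x}$ for every $\rho>0$, hence
\[
h+\sigma^2 h_{\sigma^2} \geq \bigl(2\sqrt{2-x}-Q(x)\bigr)\rho.
\]
It therefore suffices to prove the scalar inequality $Q(x) < 2\sqrt{2-x}$ on $[0,1)$, with equality at $x=1$ (which matches the degenerate case $\alpha=1$, where $L\equiv 0$ and $M\equiv 0$ so strict decrease trivially fails).

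The main obstacle is this scalar inequality, equivalently $\Delta(x)\Mydef 4(2-x)-Q(x)^2 > 0$ on $[0,1)$. The plan is to verify $\Delta(1)=0$ and $\Delta(0)=8-9\pi^2/16>0$, and then show $\Delta$ is strictly decreasing on $(0,1)$. Using $T'(x)=K(x)/2$ and $E'(x)=(E-K)/(2x)$ from Lemma~\ref{Lem:elliptic}(iv), a direct calculation collapses the derivative of $Q$ to the compact form
\[
Q'(x) = \frac{3(1-x)\bigl(E(x)-K(x)\bigr)}{4x},
\]
which is strictly negative on $(0,1)$ because $E(x)<K(x)$ there. Consequently $\Delta'(x) = -4 + \frac{3(1-x)}{2x}\,Q(x)\bigl(K(x)-E(x)\bigr)$, and the residual task is the uniform bound
\[
3(1-x)\,Q(x)\,\bigl(K(x)-E(x)\bigr) < 8x, \qquad x\in(0,1).
\]
At the endpoints this is easily checked (the ratio of the two sides tends to $9\pi^2/128<1$ as $x\to 0$ via $K-E\sim \pi x/4$, and to $0$ as $x\to 1$ via $(1-x)K(x)\to 0$); the hard part will be the intermediate regime. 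I would attack it by combining the monotonicity bound $(1-x)K(x)\leq \pi/2$ (itself following from $\frac{\mr{d}}{\mr{d}x}[(1-x)K] = [E-(1+x)K]/(2x)\leq 0$), the sharp estimate $E(x)\leq (\pi/2)\sqrt{1-x/2}$ already invoked in Lemma~\ref{Lem:L_bound}, and the asymptotic expansions of Lemma~\ref{Lem:elliptic}(i), splitting $(0,1)$ into a few subintervals if needed so as to beat the leading-order behavior $\Delta(x)\sim 4(1-x)$ near $x=1$. Once $\Delta'<0$ is in hand, $\Delta(x)>\Delta(1)=0$ on $[0,1)$ and the lemma follows.
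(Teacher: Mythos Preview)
Your approach is correct and essentially identical to the paper's. Both recognize $L(\alpha;\delta)=\bar\sigma^2(\alpha)/\delta$ with $\bar\sigma^2$ independent of $\delta$, differentiate, and reduce to positivity of the same quadratic: your $(2-x)\rho^2-Q(x)\rho+1$ becomes, under $x=1/(1+s^2)$ and $\rho=\alpha\sqrt{1+s^2}$, exactly one quarter of the paper's $4(1+2s^2)\alpha^2-4\alpha\cdot\tfrac{(5s^2+4)E-2s^2K}{2\sqrt{1+s^2}}+4$. In both cases the discriminant condition is the single scalar inequality $Q(x)<2\sqrt{2-x}$ on $[0,1)$.

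The only divergence is in how that scalar inequality is finished. The paper sets $h(x)=2Q(x)-4\sqrt{2-x}$ and shows $h'>0$ directly: since $h'(x)=\tfrac{3(1-x)(E-K)}{2x}+\tfrac{2}{\sqrt{2-x}}$, the bound $K-E<\tfrac{\pi}{2}\bigl(\tfrac{1}{\sqrt{1-x}}-1\bigr)$ reduces $h'>0$ to the elementary $\tfrac{8}{3\pi\sqrt{2-x}}>\tfrac{\sqrt{1-x}}{1+\sqrt{1-x}}$, which holds because the left side exceeds $\tfrac12$ and the right side is below $\tfrac12$. Your route via $\Delta'<0$ is workable too---for instance, $Q'<0$ gives $Q(x)\le Q(0)=\tfrac{3\pi}{4}$, and combining this with the same $K-E$ bound yields $3(1-x)Q(K-E)<\tfrac{9\pi^2}{8}\sqrt{1-x}\,(1-\sqrt{1-x})$, which is readily checked to be $<8x=8(1-\sqrt{1-x})(1+\sqrt{1-x})$---but it carries the extra factor $Q$ and is slightly less clean than the paper's argument. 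You should complete this step rather than leave the subinterval-splitting as a plan.
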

\begin{proof}
From the definition of $L(\alpha;\delta)$ in \eqref{Eqn:left_bound}, we can write
\[
\psi_2\left(\alpha,L(\alpha;\delta);\delta\right) = \psi_2\left(\alpha,\frac{1}{\delta}\bar{\sigma}^2;\delta\right),
\]
where (note that $\bar{\sigma}$ is not the conjugate of $\sigma$)
\[
 \bar{\sigma}^2 \Mydef  4\left( 1 - \frac{\phi_2^2(\phi_1^{-1}(\alpha)) }{4\left[1+(\phi_1^{-1}(\alpha))^2\right]}\right).
\]
A key observation here is that $ \bar{\sigma}^2$ does not depend on $\delta$. Clearly, Lemma~\ref{Lem:monotonicity_delta} is implied by the following stronger result:
\[
\frac{\partial \psi_2\left(\alpha,\frac{1}{\delta}\bar{\sigma}^2;\delta\right)}{\partial \delta} < 0,\quad\forall \bar{\sigma}^2>0,\alpha>0,\delta>0,
\]
which we will prove in the sequel. For convenience, we define
\BE\label{Eqn:lem_mono_delta_1}
\bar{s}\Mydef\frac{\bar{\sigma}}{\alpha},\ \gamma\Mydef\frac{1}{\delta}\text{ and } s=\sqrt{\gamma}\bar{s}.
\EE
Using these new variables, we have
\BE
\begin{split}
\psi_2\left(\alpha,\frac{1}{\delta}\bar{\sigma}^2;\delta\right)&=\psi_2\left(\alpha,\gamma\bar{\sigma}^2;\gamma^{-1}\right)\\
&= 4\gamma\left((1+\gamma\bar{s}^2)\alpha^2+1- \alpha\int_0^{\frac{\pi}{2}} \frac{ 2\sin^2\theta +\gamma\bar{s}^2 }{ \left( \sin^2\theta + \gamma\bar{s}^2\right)^{\frac{1}{2}} }\mr{d}\theta\right),
\end{split}\nonumber
\EE
where the last equality is from the definition of $\psi_2$ in \eqref{Eqn:map_expression_complex_b}.
It remains to prove that $\psi_2\left(\alpha,\gamma\bar{\sigma}^2;\gamma^{-1}\right)$ is an increasing function of $\gamma$. The partial derivative of $\psi_2(\alpha,\sigma^2;\delta)$ w.r.t. $\gamma$ is given by
\BE \label{Eqn:psi2_dif_gamma}
\begin{split}
\frac{\partial \psi_2\left(\alpha,\gamma\bar{\sigma}^2;\gamma^{-1}\right)}{\partial \gamma} &= 4(1+2\gamma\bar{s}^2)\alpha^2-4\alpha\left(\int_0^{\frac{\pi}{2}} \frac{2\sin^2\theta+\gamma\bar{s}^2}{(\sin^2\theta+\gamma\bar{s}^2)^{\frac{1}{2}}}\mr{d}\theta+\frac{1}{2}\int_0^{\frac{\pi}{2}} \frac{\gamma^2\bar{s}^4}{(\sin^2\theta+\gamma\bar{s}^2)^{\frac{3}{2}}}\mr{d}\theta \right)+4\\
&\overset{(a)}{=}(1+2{s}^2)\alpha^2-4\alpha\left(\int_0^{\frac{\pi}{2}}\frac{2\sin^2\theta+s^2}{\left(\sin^2\theta+s^2\right)^{\frac{1}{2}}}\mr{d}\theta+\frac{1}{2}\int_0^{\frac{\pi}{2}} \frac{s^4}{(\sin^2\theta+s^2)^{\frac{3}{2}}}\mr{d}\theta\right)+4 \\
&\overset{(b)}{=}4(1+2s^2)\alpha^2-4\alpha\left(\frac{(5s^2+4)E\left(\frac{1}{1+s^2}\right)-2s^2K\left(\frac{1}{1+s^2}\right)}{2\sqrt{1+s^2}}\right)+4 ,
\end{split}
\EE
where in step (a) we used the relationship $s^2=\gamma\bar{s}^2$ (see \eqref{Eqn:lem_mono_delta_1}), and step (b) is from the identities in \eqref{Eqn:integral_identities}. From \eqref{Eqn:psi2_dif_gamma}, we see that $\frac{\partial \psi_2\left(\alpha,\gamma\bar{\sigma}^2;\gamma^{-1}\right)}{\partial \gamma} $ is a quadratic function of $\alpha$. Therefore, to prove $\frac{\partial \psi_2\left(\alpha,\gamma\bar{\sigma}^2;\gamma^{-1}\right)}{\partial \gamma} >0$, it suffices to show that the discriminant is negative:
\BE\label{Eqn:lem_mono_delta_3}
\left(\frac{(5s^2+4)E\left(\frac{1}{1+s^2}\right)-2s^2K\left(\frac{1}{1+s^2}\right)}{2\sqrt{1+s^2}}\right)^2-4(1+2s^2)<0.
\EE
Further, to prove \eqref{Eqn:lem_mono_delta_3}, it is sufficient to prove that the following two inequalities hold:
\BS
\BE\label{Eqn:Aux_1_ineq1}
(5s^2+4)E\left(\frac{1}{1+s^2}\right)-2s^2K\left(\frac{1}{1+s^2}\right)>0,
\EE
and
\BE\label{Eqn:Aux_1_ineq2}
(5s^2+4)E\left(\frac{1}{1+s^2}\right)-2s^2K\left(\frac{1}{1+s^2}\right)<4\sqrt{1+s^2}\sqrt{1+2s^2}.
\EE
\ES
We first prove \eqref{Eqn:Aux_1_ineq1}. It is sufficient to prove the following
\BE\label{Eqn:Aux_1_ineq2.1}
(4s^2+4)E\left(\frac{1}{1+s^2}\right)-2s^2K\left(\frac{1}{1+s^2}\right)>0.
\EE
Applying a variable change $x=\frac{1}{1+s^2}$, we can rewrite \eqref{Eqn:Aux_1_ineq2.1} as
\[
\frac{4E(x)-2(1-x)K(x)}{x}>0.
\]
The above inequality holds since
\[
4E(x)-2(1-x)K(x)>2E(x)-2(1-x)K(x)=2T(x)>0,
\]
where the last equality is from the definition of $T(x)$ in \eqref{Eqn:def_elliptic}.\par

We next prove \eqref{Eqn:Aux_1_ineq2}. Again, applying the variable change $x=\frac{1}{1+s^2}$ and after some straightforward manipulations, we can rewrite \eqref{Eqn:Aux_1_ineq2} as
\[
h(x)/x<0,\quad x\in(0,1),
\]
where
\[
h(x)\Mydef (5-x)E(x)-2(1-x)K(x)-4\sqrt{2-x}<0.
\]
Hence, we only need to prove $h(x)<0$ for $0<x<1$. First, we note that $\lim_{x\to1^-}h(x)=0$, from the fact that $E(1)=1$ and $\lim_{x\to1^-}(1-x)K(x)=0$ (see Lemma~\ref{Lem:elliptic} (i)). We finish the proof by showing that $h(x)$ is strictly increasing in $x\in(0,1)$. Using the identities in \eqref{Eqn:elliptic_dif}, we can obtain
\BE
h'(x)=\frac{3}{2}\frac{(1-x)(E(x)-K(x))}{x}+\frac{2}{\sqrt{2-x}}.\nonumber
\EE
To prove $h'(x)>0$, it is equivalent to prove
\BE\label{Eqn:Aux_1_ineq3}
\begin{split}
\frac{4x}{3(1-x)\sqrt{2-x}}&>K(x)-E(x)\\
&=\int_0^{\frac{\pi}{2}}\frac{1}{(1-x\sin^2\theta)^{\frac{1}{2}}}\mr{d}\theta-\int_0^{\frac{\pi}{2}}(1-x\sin^2\theta)^{\frac{1}{2}}\mr{d}\theta\\
&=\int_0^{\frac{\pi}{2}}\frac{x\sin^2\theta}{(1-x\sin^2\theta)^{\frac{1}{2}}}\mr{d}\theta.
\end{split}
\EE
Noting $0<x<1$, we can get the following
\[
\begin{split}
\int_0^{\frac{\pi}{2}}\frac{x\sin^2\theta}{(1-x\sin^2\theta)^{\frac{1}{2}}}\mr{d}\theta<\int_0^{\frac{\pi}{2}}\frac{x\sin^2\theta}{1-x\sin^2\theta}\mr{d}\theta=\frac{\pi}{2}\left(\frac{1}{\sqrt{1-x}}-1\right).
\end{split}
\]
Hence, to prove \eqref{Eqn:Aux_1_ineq3}, it suffices to prove
\[
\frac{4x}{3(1-x)\sqrt{2-x}}>\frac{\pi}{2}\left(\frac{1}{\sqrt{1-x}}-1\right),
\]
which can be reformulated as
\[
\frac{8}{3\pi}\frac{1}{\sqrt{2-x}}>\frac{\sqrt{1-x}}{1+\sqrt{1-x}}.
\]
The inequality holds since
\BS
\[
\frac{8}{3\pi}\frac{1}{\sqrt{2-x}}>\frac{8}{3\pi}\frac{1}{\sqrt{2}}>\frac{1}{2},\quad \forall x\in(0,1),
\]
and
\[
\frac{\sqrt{1-x}}{1+\sqrt{1-x}}<\frac{1}{2},\quad \forall x\in(0,1).
\]
\ES
\end{proof}

\begin{figure}[!htbp]
\centering
\includegraphics[width=.55\textwidth]{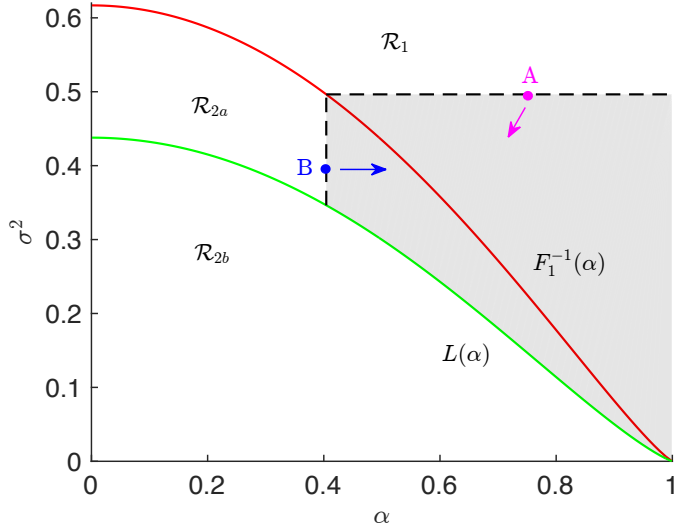}
\caption{Illustration of the convergence behavior. $\mathcal{R}_1$ and $\mathcal{R}_2$ are defined in Definition \ref{Def:regions}. For both point A and point B, $B_1(\alpha,\sigma^2)$ and $B_2(\alpha,\sigma^2)$ are given by the two dashed lines. After one iteration, $\mathcal{R}_{2b}$ will not be achievable and we can focus on $\mathcal{R}_{2a}$.}
\label{fig:regions_I&III_bounds}
\end{figure}

\item \textbf{Main proof}

We now return to the main proof for Lemma~\ref{Lem:regionI_III}. Notice that by Lemma \ref{Lem:left_bound}, $(\alpha_{t_0},\sigma^2_{t_0})$ cannot fall below the curve $L(\alpha;\delta)$ for $t_0\ge1$. Hence, for $\mathcal{R}_2$, we can focus on the region above $L(\alpha;\delta)$ (including $L(\alpha;\delta)$), which we denote as $\mathcal{R}_{2a}$. See Fig.~\ref{fig:regions_I&III_bounds} for illustration.

We will first prove that if $(\alpha,\sigma^2)\in\mathcal{R}_1\cup \mathcal{R}_{2a}$, then the next iterates $\psi_1(\alpha,\sigma^2)$ and $\psi_2(\alpha,\sigma^2)$ satisfy the following:
\BS\label{Eqn:lem_regionI_III_1}
\BE\label{Eqn:lem_regionI_III_1a}
\psi_1(\alpha,\sigma^2) \ge B_1(\alpha,\sigma^2),
\EE
and
\BE\label{Eqn:lem_regionI_III_1b}
\psi_2(\alpha,\sigma^2) \le B_2(\alpha,\sigma^2),
\EE
\ES
where $B_1(\alpha,\sigma^2) $ and $B_2(\alpha,\sigma^2) $ are defined as
\BE\label{Eqn:bounds_def}
\begin{split}
B_1(\alpha,\sigma^2)& \Mydef\min\left\{ \alpha, F_1(\sigma^2) \right\},\\
B_2(\alpha,\sigma^2) & \Mydef\max\left\{ \sigma^2, F_1^{-1}(\alpha) \right\}.
\end{split}
\EE

Note that when $(\alpha,\sigma^2)$ is on $F_1^{-1}$ (i.e., $\sigma^2=F_1^{-1}(\alpha)$), equalities in \eqref{Eqn:lem_regionI_III_1a} and \eqref{Eqn:lem_regionI_III_1b} can be achieved. Further, this is the only case when either of the equality is achieved. Also, it is easy to see that if $(\alpha,\sigma^2)$ is on $F_1^{-1}$, then $(\psi_1(\alpha,\sigma^2),\psi_2(\alpha,\sigma^2))$ cannot be on $F_1^{-1}$.

Since $F_1^{-1}$ separates $\mathcal{R}_1$ and $\mathcal{R}_{2a}$,  \eqref{Eqn:bounds_def} can also be written as
\BE\label{Eqn:bounds_def2}
\big[ B_1(\alpha,\sigma^2), B_2(\alpha,\sigma^2)\big]=
\begin{cases}
 [ F_1(\sigma^2) , \sigma^2 ] & \text{if $(\alpha,\sigma^2)\in\mathcal{R}_1$},\\
  [ \alpha , F^{-1}_1(\alpha) ]  & \text{if $(\alpha,\sigma^2)\in\mathcal{R}_{2a}$}.
  \end{cases}
\EE
As a concrete example, consider the situation shown in Fig.~\ref{fig:regions_I&III_bounds}. In this case, for both point A and point B, $B_1(\alpha,\sigma^2)$ and $B_2(\alpha,\sigma^2)$ are given by the two dashed lines. This directly follows from \eqref{Eqn:bounds_def2} by noting that point A is in region $\mathcal{R}_1$ and point B is in region $\mathcal{R}_{2a}$. Let $\mathcal{R}_{2a}\backslash F_1^{-1}(\alpha)$ be a shorhand for $\{(\alpha,\sigma^2)|(\alpha,\sigma^2)\in\mathcal{R}_{2a},\alpha\neq F_1(\sigma^2)\}$. To prove the strict inequality in \eqref{Eqn:lem_regionI_III_1}, we deal with $(\alpha,\sigma^2)\in\mathcal{R}_1$ and $(\alpha,\sigma^2)\in\mathcal{R}_{2a}\backslash F_1^{-1}(\alpha)$ separately.
\begin{enumerate}
\item  Assume that $(\alpha,\sigma^2)\in\mathcal{R}_1$. Using \eqref{Eqn:bounds_def2}, the inequality in \eqref{Eqn:lem_regionI_III_1} can be rewritten as
\BE\label{Eqn:lem_regionI_III_7}
\psi_1(\alpha,\sigma^2) > F_1(\sigma^2)\quad\text{and}\quad \psi_2(\alpha,\sigma^2) < \sigma^2.
\EE
Since $(\alpha,\sigma^2)\in\mathcal{R}_1$, we have $\sigma^2 > F_1^{-1}(\alpha)$. Then, applying \eqref{Eqn:psi1_contracting} proves $\psi_1(\alpha,\sigma^2) > F_1(\sigma^2)$. Further, using Lemma~\ref{Lem:F1_F2_complex}, we have $\sigma^2 > F_1^{-1}(\alpha)>F_2(\alpha)$. Also, Lemma~\ref{Lem:region_bound} guarantees that $\sigma^2<\sigma^2_{\max}$. Hence, $F_1^{-1}(\alpha)<\sigma^2 < \sigma^2_{\max}$ and applying Lemma~\ref{lem:psi2} (iv) yields $\psi_2(\alpha,\sigma^2) < \sigma^2$.
 
\item We now consider the case where $(\alpha,\sigma^2)\in\mathcal{R}_{2a}\backslash F_1^{-1}(\alpha)$. Similar to \eqref{Eqn:lem_regionI_III_7}, we need to prove
\BE\label{Eqn:lem_regionI_III_7b}
\psi_1(\alpha,\sigma^2) > \alpha\quad\text{and}\quad \psi_2(\alpha,\sigma^2) < F_1^{-1}(\alpha).
\EE
The inequality $\psi_1(\alpha,\sigma^2) > \alpha$ can be proved by the global attractiveness in Lemma~\ref{lem:psi1} (iii) and the fact that $\sigma^2 < F_1^{-1}(\alpha)$ when $(\alpha,\sigma^2)\in\mathcal{R}_{2a}\backslash F_1^{-1}(\alpha)$. The proof for $\psi_2(\alpha,\sigma^2) < F_1^{-1}(\alpha)$ is considerably more complicated and is detailed in Lemma~\ref{Lem:RegionIII_bound} below.  

\begin{lemma}\label{Lem:RegionIII_bound}
For any $(\alpha,\sigma^2)\in\mathcal{R}_{2a}$ (see Definition~\ref{Def:regions}) and $\delta\ge\deltaAMP$, the following holds:
\BE\label{Eqn:lem_regionIII_bound_1}
\psi_2(\alpha,\sigma^2;\delta)<F_1^{-1}(\alpha),
\EE
where $\psi_2$ is the SE map in \eqref{Eqn:map_expression_complex_b} and $F_1^{-1}$ is the inverse of $F_1$ defined in Lemma~\ref{lem:psi1}.
\end{lemma}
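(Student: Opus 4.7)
The plan is to reduce the general case to the case $\delta = \deltaAMP$ by exploiting the monotonicity of $\psi_2$ with respect to both $\sigma^2$ (Lemma~\ref{Lem:monotonicity_Psi2_G1}) and $\delta$ (Lemma~\ref{Lem:monotonicity_delta}), combined with Lemma~\ref{Lem:F1_F2_complex} as the base inequality. I would first establish the \emph{base case}: for every $\sigma^2\in\bigl[L(\alpha;\deltaAMP),\,F_1^{-1}(\alpha)\bigr]$,
\[
\psi_2\bigl(\alpha,\sigma^2;\deltaAMP\bigr)<F_1^{-1}(\alpha).
\]
The argument is: Lemma~\ref{Lem:F1_F2_complex} gives $F_2(\alpha;\deltaAMP)<F_1^{-1}(\alpha)\le \pi^2/16<\sigma^2_{\max}$, so the global attractiveness of $F_2$ inside $[0,\sigma^2_{\max}]$ (Lemma~\ref{lem:psi2}(iv)) yields $\psi_2(\alpha,F_1^{-1}(\alpha);\deltaAMP)<F_1^{-1}(\alpha)$. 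Because Lemma~\ref{Lem:monotonicity_Psi2_G1} makes $\psi_2(\alpha,\cdot;\deltaAMP)$ increasing on $[L(\alpha;\deltaAMP),\infty)$, the maximum over the interval is attained at the right endpoint $F_1^{-1}(\alpha)$, giving the claim.

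Next I would split the general statement into two subcases. In Case A, $\sigma^2\ge L(\alpha;\deltaAMP)$. The noiseless formula \eqref{Eqn:map_expression_complex_b} shows $\psi_2(\alpha,\sigma^2;\delta)=\tfrac{4}{\delta}\cdot\Phi(\alpha,\sigma^2)$ for a $\delta$-independent positive $\Phi$, so $\psi_2$ is strictly decreasing in $\delta$ and therefore $\psi_2(\alpha,\sigma^2;\delta)\le \psi_2(\alpha,\sigma^2;\deltaAMP)<F_1^{-1}(\alpha)$ by the base case.

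In Case B, $\sigma^2< L(\alpha;\deltaAMP)$. Writing $L(\alpha;\delta)=c(\alpha)/\delta$ with $c(\alpha)>0$ (from \eqref{Eqn:left_bound}), I set $\tilde\delta:=c(\alpha)/\sigma^2$, so that $\sigma^2=L(\alpha;\tilde\delta)$. The membership condition $\sigma^2\ge L(\alpha;\delta)$ forces $\tilde\delta\le\delta$, while $\sigma^2<L(\alpha;\deltaAMP)$ forces $\tilde\delta>\deltaAMP$. Using monotonicity of $\psi_2$ in $\delta$ (the smaller $\delta$ gives the larger $\psi_2$),
\[
\psi_2(\alpha,\sigma^2;\delta)\le\psi_2(\alpha,\sigma^2;\tilde\delta)=\psi_2\bigl(\alpha,L(\alpha;\tilde\delta);\tilde\delta\bigr).
\]
Lemma~\ref{Lem:monotonicity_delta} asserts that $\psi_2(\alpha,L(\alpha;\cdot);\cdot)$ is strictly decreasing, so the right-hand side is strictly less than $\psi_2(\alpha,L(\alpha;\deltaAMP);\deltaAMP)$, which is less than $F_1^{-1}(\alpha)$ by the base case applied at $\sigma^2=L(\alpha;\deltaAMP)$.

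The main obstacle is Case B: when $\sigma^2<L(\alpha;\deltaAMP)$, the monotonicity of $\psi_2$ in $\sigma^2$ provided by Lemma~\ref{Lem:monotonicity_Psi2_G1} is no longer available, and $\psi_2(\alpha,\cdot;\delta)$ may genuinely fail to be monotone there. The key insight that unlocks this case is the observation that the curve $\sigma^2=L(\alpha;\cdot)$ is a bijection between $\sigma^2$ and a parameter $\tilde\delta\in(\deltaAMP,\infty)$, so any $(\alpha,\sigma^2)\in\mathcal{R}_{2a}$ with $\sigma^2<L(\alpha;\deltaAMP)$ can be relocated onto this curve at a larger effective parameter $\tilde\delta>\deltaAMP$, at which point Lemma~\ref{Lem:monotonicity_delta} transports the bound back to $\deltaAMP$ where the base case applies.
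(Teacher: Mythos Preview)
Your argument is correct and in fact cleaner than the paper's. The paper proves this lemma by splitting into three cases according to the values of $\alpha$ and $\delta$: (i) $\alpha\ge\alpha_\ast$, where Lemma~\ref{lem:psi2}(v) gives monotonicity of $\psi_2$ in $\sigma^2$ for all $\delta$; (ii) $\alpha<\alpha_\ast$ and $\delta\le 17$, where one first verifies the auxiliary inequality $L(\alpha;\delta)\ge\sigma_1^2(\alpha)$ (with $\sigma_1^2$ the local maximum of $\psi_2(\alpha,\cdot;\delta)$) so that the maximum over $[L(\alpha;\delta),F_1^{-1}(\alpha)]$ is attained at an endpoint, and then reduces to $\deltaAMP$ via Lemmas~\ref{Lem:monotonicity_Psi2_G1} and~\ref{Lem:monotonicity_delta}; and (iii) $\alpha<\alpha_\ast$ and $\delta>17$, handled by the crude bound $\psi_2\le 4/\delta<4/17<F_1^{-1}(\alpha_\ast)$. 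Your decomposition by the single threshold $\sigma^2\gtrless L(\alpha;\deltaAMP)$, together with the reparametrization $\sigma^2=L(\alpha;\tilde\delta)$ in Case~B, sidesteps all of this: it removes the $\alpha_\ast$ split, the ad hoc numerical constant $17$, the local-maximum analysis of $\psi_2(\alpha,\cdot;\delta)$, and the separate large-$\delta$ case. The only small point to make explicit is that you are using $L(\alpha;\deltaAMP)\le F_1^{-1}(\alpha)$ for $\alpha\in(0,1)$; Lemma~\ref{Lem:F1_L} is stated for $\delta>\deltaAMP$, but its proof goes through verbatim at $\delta=\deltaAMP$ (it only uses Lemma~\ref{Lem:F1_F2_complex} and Lemma~\ref{lem:psi2}(iv), both valid for $\delta\ge\deltaAMP$), or one can simply take $\delta\to\deltaAMP^+$ by continuity.
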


\begin{proof}

The following holds when $(\alpha,\sigma^2)\in\mathcal{R}_{2a}$:
\BE
\psi_2(\alpha,\sigma^2;\delta)\le \max_{ \hat{\sigma}^2\in\mathcal{D}_{\alpha}}\psi_2({\alpha},\hat{\sigma}^2;\delta),\nonumber
\EE
where
\BE\label{Eqn:D_def}
\mathcal{D}_{\alpha} \Mydef \left\{ \hat{\sigma}^2 \big| L({\alpha;\delta}) \le \sigma^2 \le F_1^{-1}({\alpha})\right\}.
\EE
Hence, to prove \eqref{Eqn:lem_regionIII_bound_1}, it suffices to prove that the following holds for any $\delta\ge\deltaAMP$ and $\alpha\in[0,1]$:
\BE \label{Eqn:lem_regionIII_bound_3}
\max_{ \hat{\sigma}^2\in\mathcal{D}_{\alpha}}\psi_2({\alpha},\hat{\sigma}^2;\delta) < F_1^{-1}(\alpha).
\EE 
We next prove \eqref{Eqn:lem_regionIII_bound_3}. We consider the three different cases:
\begin{enumerate}
\item[(i)] $\alpha \in[ \alpha_{\ast},1]$ and all $\delta\in[\deltaAMP,\infty)$, where $\alpha_{\ast}$ is defined in \eqref{Eqn:alpha_ast_def}.
\item[(ii)] $\alpha\in[0, \alpha_{\ast})$ and $\delta\in[\deltaAMP,17]$.
\item[(iii)] $\alpha \in[0, \alpha_{\ast})$ and $\delta\in(17,\infty)$.
\end{enumerate}

\textit{Case (i):} Lemma~\ref{lem:psi2} (v) shows that $\psi_2$ is an increasing function of $\sigma^2$ in $\mathbb{R}_{+}$. Hence, by noting \eqref{Eqn:D_def}, we have
\BE
\max_{ \hat{\sigma}^2\in\mathcal{D}_{\alpha}}\psi_2({\alpha},\hat{\sigma}^2;\delta) = \psi_2({\alpha},F_1^{-1}(\alpha);\delta).\nonumber
\EE 
Therefore, proving \eqref{Eqn:lem_regionIII_bound_8} reduces to proving
\BE\label{Eqn:lem_regionIII_bound_3.1}
\psi_2({\alpha},F_1^{-1}(\alpha);\delta)\le F_1^{-1}(\alpha).
\EE
Finally, \eqref{Eqn:lem_regionIII_bound_3.1} follows from the global attractiveness property in Lemma~\ref{lem:psi2} (iv) and the inequality $F_1^{-1}(\alpha)>F_2(\alpha;\delta)$ in Lemma~\ref{Lem:F1_F2_complex}.

\textit{Case (ii):} We will prove that the following holds for $\alpha\in[0, \alpha_{\ast})$ and $\delta\in[\deltaAMP,17]$ (at the end of this proof)
\BE\label{Eqn:lem_regionIII_bound_6}
\max_{ \hat{\sigma}^2\in\mathcal{D}_{\alpha}}\psi_2({\alpha},\sigma^2;\delta)=\max\left\{\psi_2({\alpha},L(\alpha;\delta);\delta), \ \psi_2({\alpha},F_1^{-1}(\alpha);\delta) \right\}.
\EE
Namely, the maximum of $\psi_2$ over $\sigma^2$ is achieved at either $\sigma^2=L(\alpha;\delta)$ or $\sigma^2=F_1^{-1}(\alpha)$.
Hence, we only need to prove that the following holds for any $\alpha\in[0,\alpha_{\ast})$ and $\delta\ge\deltaAMP$:
\BE\label{Eqn:lem_regionIII_bound_7}
\max\left\{\psi_2({\alpha},L(\alpha;\delta);\delta), \ \psi_2({\alpha},F_1^{-1}(\alpha);\delta) \right\}\le F_1^{-1}(\alpha).
\EE
In the sequel, we first use \eqref{Eqn:lem_regionIII_bound_6} to prove \eqref{Eqn:lem_regionIII_bound_3}, and the proof for \eqref{Eqn:lem_regionIII_bound_6} will come at the end of this proof. 

Firstly, it is easy to see that $\psi_2(\alpha,F_1^{-1}(\alpha);\delta)$ is a decreasing function of $\delta$, since $\psi_2(\alpha,\sigma^2;\delta)$ is a decreasing function of $\delta$ and $F_1^{-1}(\alpha)$ does not depend on $\delta$. Further, Lemma~\ref{Lem:monotonicity_delta} shows that $\psi_2({\alpha},L(\alpha;\delta);\delta)$ is also a decreasing function of $\delta$. (Notice that unlike $F_1^{-1}(\alpha)$, $L(\alpha;\delta)$ depends on $\delta$, and thus Lemma~\ref{Lem:monotonicity_delta} is nontrivial.) Hence, to prove \eqref{Eqn:lem_regionIII_bound_7} for $\delta\ge\deltaAMP$, it suffices to prove \eqref{Eqn:lem_regionIII_bound_7} for $\delta=\deltaAMP$, namely,
\BE\label{Eqn:lem_regionIII_bound_8}
\max\left\{\psi_2({\alpha},L(\alpha;\delta);\deltaAMP), \ \psi_2({\alpha},F_1^{-1}(\alpha);\deltaAMP) \right\}\le F_1^{-1}(\alpha).
\EE
When $\delta=\deltaAMP$, we prove in Lemma~\ref{Lem:monotonicity_Psi2_G1} that $\psi_2$ is an increasing function of $\sigma^2$ in $\sigma^2\in[L(\alpha;\deltaAMP),\infty)$. (Such  monotonicity generally does not hold if $\delta$ is too large.) Further, Lemma~\ref{Lem:F1_L} shows that $F_1^{-1}(\alpha)>L(\alpha;\deltaAMP)$. Hence,
\[
\psi_2({\alpha},L(\alpha;\delta);\deltaAMP) \le \psi_2({\alpha},F_1^{-1}(\alpha);\deltaAMP),
\]
and thus proving \eqref{Eqn:lem_regionIII_bound_8} reduces to proving
\BE
\psi_2({\alpha},F_1^{-1}(\alpha);\deltaAMP)\le F_1^{-1}(\alpha),\nonumber
\EE
which follows from the same argument as that for \eqref{Eqn:lem_regionIII_bound_3.1}.

\textit{Case (iii):} Lemma~\ref{lem:psi2} (iii) shows that $\psi_2(\alpha;\sigma^2;\delta) \le \frac{4}{\delta}$ for any $\sigma^2\in[0,\sigma^2_{\max}]$.
It is easy to see that $\mathcal{D}_{\alpha}\subset [0,\sigma^2_{\max}]$, and thus
\BE \label{Eqn:lem_regionIII_bound_4}
\max_{ {\sigma}^2\in\mathcal{D}_{\alpha}}\psi_2({\alpha},{\sigma}^2;\delta)\le\frac{4}{\delta}  \le \frac{4}{17}\approx 0.235.
\EE
Further, Lemma~\ref{Lem:2} shows that $F_1^{-1}:[0,1]\mapsto[0,\pi^2/16]$ is monotonically decreasing. Hence,
\BE\label{Eqn:lem_regionIII_bound_5}
F_1^{-1}(\alpha) > F_1^{-1}(\alpha_{\ast})\approx 0.415,
\EE
where the numerical constant is calculated from the closed form formula $F_1^{-1}(\alpha)=\alpha^2\cdot\left[\phi_1^{-1}(\alpha)\right]^2$ (see \eqref{Eqn:Lem3_proof2}) and $\alpha_{\ast}\approx0.5274$ (from \eqref{Eqn:alpha_ast_def}). Comparing \eqref{Eqn:lem_regionIII_bound_4} and \eqref{Eqn:lem_regionIII_bound_5} shows that \eqref{Eqn:lem_regionIII_bound_3} holds in this case.

It only remains to prove \eqref{Eqn:lem_regionIII_bound_6}. We have shown in \eqref{Eqn:psi2_dif_sigma2_fs} that 
\BE\label{Eqn:lem_regionIII_bound_10}
\frac{\partial \psi_2(\alpha,\sigma^2;\delta)}{\partial\sigma^2}=\frac{4}{\delta\alpha}\Bigg(\alpha-\underbrace{ \frac{1}{2\sqrt{1+s^2}} E\left( \frac{1}{1 + s^2} \right)}_{f(s)} \Bigg),
\EE
where $s\Mydef\sigma/\alpha$. Further, we have proved in \eqref{Eqn:lem_psi2_fs_cases} that $f(s)$ is strictly increasing on $[0,s_{\ast})$ and strictly decreasing on $(s_{\ast},\infty)$, where $s_{\ast}$ is defined in \eqref{Eqn:lem_psi2_1}. Hence, when $f(0)=0.5<\alpha<f(s_{\ast})=\alpha_{\ast}$, there exist two solutions to 
\[
\alpha = f(s),
\]
denoted as $s_1(\alpha)$ and $s_2(\alpha)$, respectively. Also, from \eqref{Eqn:lem_regionIII_bound_10} and noting the definition $s=\sigma/\alpha$, we have
\[
\begin{split}
\frac{\partial \psi_2(\alpha,\sigma^2;\delta)}{\partial\sigma^2} &> 0 \Longleftrightarrow \sigma^2\in\left[0,\sigma^2_1(\alpha)\right)\cup\left(\sigma^2_2(\alpha),\infty\right),\\
\frac{\partial \psi_2(\alpha,\sigma^2;\delta)}{\partial\sigma^2} &\le 0 \Longleftrightarrow \sigma^2\in\left[\sigma^2_1(\alpha),\sigma^2_2(\alpha)\right],
\end{split}
\]
where $\sigma^2_1(\alpha)\Mydef \alpha^2 s^2_1(\alpha)$ and $\sigma^2_2(\alpha)\Mydef \alpha^2 s^2_2(\alpha)$.
Hence, for fixed $\alpha$ where $\alpha\in(f(0),f(s_{\ast}))$, $\sigma^2_1(\alpha)$ is a local maximum of $\psi_2$ and $\sigma^2_2(\alpha)$ is a local minimum. Clearly, if
\BE\label{Eqn:lem_regionIII_bound_11}
L(\alpha;\delta) \ge \sigma^2_1(\alpha),
\EE
then the maximum of $\psi_2$ over $\sigma^2\in[L(\alpha;\delta),F_1^{-1}(\alpha)]$ can only happen at either $L(\alpha;\delta)$ or $F_1^{-1}(\alpha)$, which will prove \eqref{Eqn:lem_regionIII_bound_6}. Further, for the degenerate case $\alpha\in(0,f(0))$, $\psi_2$ only has a local minimum, and it is easy to see that \eqref{Eqn:lem_regionIII_bound_6} also holds. Thus, we only need to prove that \eqref{Eqn:lem_regionIII_bound_11} holds when $\delta<17$. This can be proved as follows:
\BE\label{Eqn:lem_regionIII_bound_12}
\begin{split}
\sigma^2_1(\alpha)& \overset{(a)}{\le} s_{\ast}^2\cdot \alpha^2 \overset{(b)}{\le} s_{\ast}^2\cdot \alpha_{\ast}^2,
\end{split}
\EE
where (a) is from the fact that $s_1(\alpha)\le s_{\ast}$ and (b) is from our assumption $\alpha\le\alpha_{\ast}$. On the other hand, since $L(\alpha)$ is a decreasing function of $\alpha$ (see Lemma~\ref{Lem:monotonicity_L}), and thus for $\alpha\le\alpha_{\ast}$ we have
\BE\label{Eqn:lem_regionIII_bound_13}
\begin{split}
L(\alpha;\delta) & \ge L(\alpha_{\ast};\delta) \\
&=\frac{4}{\delta}\left( 1 - \frac{\phi_2^2(\phi_1^{-1}(\alpha_{\ast})) }{4\left[1+(\phi_1^{-1}(\alpha_{\ast}))^2\right]}\right),
\end{split}
\EE
where the last step is from Definition~\ref{Eqn:left_bound}. Based on \eqref{Eqn:lem_regionIII_bound_12} and \eqref{Eqn:lem_regionIII_bound_13}, we see that $L(\alpha;\delta)>\sigma^2_1(\alpha)$ for $\alpha\le\alpha_{\ast}$ if
\[
\delta \le \frac{4}{s_{\ast}^2\cdot\alpha_{\ast}^2}\left( 1 - \frac{\phi_2^2(\phi_1^{-1}(\alpha_{\ast})) }{4\left[1+(\phi_1^{-1}(\alpha_{\ast}))^2\right]}\right)\approx 17.04,
\]
where the numerical constant is calculated based on the definition of $\alpha_{\ast}$ in \eqref{Eqn:monotonicity_condi2}, the definition of $s_{\ast}$ in \eqref{Eqn:lem_psi2_1}, and that of $\phi_1$ and $\phi_2$ in Definition~\ref{Eqn:left_bound}. Hence, the condition $\delta<17$ is enough for our purpose. This concludes our proof.
\end{proof}

\end{enumerate}

Now we turn our attention to the proof of part (i) of Lemma \ref{Lem:regionI_III}.
Suppose that $(\alpha,\sigma^2)\in\mathcal{R}_1\cup\mathcal{R}_{2a}$. Then, using \eqref{Eqn:lem_regionI_III_1} and based on the fact that $F_1(\alpha)$ is a strictly decreasing function, we know that $(\psi_1(\alpha,\sigma^2),\psi_2(\alpha,\sigma^2))\in\mathcal{R}_1\cup\mathcal{R}_2$. (See Definition~\ref{Def:regions}.) Further, Lemma~\ref{Lem:regionII_IV} shows that $(\psi_1(\alpha,\sigma^2),\psi_2(\alpha,\sigma^2))\notin\mathcal{R}_{2b}$. Hence, $(\psi_1(\alpha,\sigma^2),\psi_2(\alpha,\sigma^2))\in\mathcal{R}_1\cup\mathcal{R}_{2a}$. Applying this argument recursively shows that if $(\alpha_{t_0},\sigma_{t_0}^2)\in\mathcal{R}_1\cup\mathcal{R}_{2a}$, then $(\alpha_t,\sigma_t^2)\in\mathcal{R}_1\cup\mathcal{R}_{2a}$ for all $t> t_0$. An illustration of the situation is shown in Fig.~\ref{fig:regions_I&III_bounds}.\\

Now we can discuss the proof of part (ii) of Lemma \ref{Lem:regionI_III}. To proceed, we introduce two auxiliary sequences $\{\tilde{\alpha}_{t+1}\}_{t\ge t_0}$ and $\{\tilde{\sigma}^2_{t+1}\}_{t\ge t_0}$, defined as:
\BE\label{Eqn:lem_regionI_III_1.1}
\tilde{\alpha}_{t+1} = B_1(\alpha_t,\sigma^2_t)\quad\text{and} \quad \tilde{\sigma}^2_{t+1} = B_2(\alpha_t,\sigma^2_t),
\EE
where $B_1$ and $B_2$ are defined in \eqref{Eqn:bounds_def}. Note that the definitions of $B_1(\alpha,\sigma^2)$ and $B_2(\alpha,\sigma^2)$ require $(\alpha,\sigma^2)\in\mathcal{R}_1\cup\mathcal{R}_{2a}$, and such requirement is satisfied here due to part (i) of this lemma. Noting the SE update $\alpha_{t+1}=\psi_1(\alpha_t,\sigma^2_t)$ and $\sigma^2_{t+1}=\psi_2(\alpha_t,\sigma^2_t)$, and recall the inequalities in \eqref{Eqn:lem_regionI_III_1}, we obtain the following:
\BE\label{Eqn:lem_regionI_III_2}
\alpha_{t+1} \ge \tilde{\alpha}_{t+1}  \quad\text{and}\quad \sigma^2_{t+1} \le \tilde{\sigma}^2_{t+1},\quad\forall t\ge t_0.
\EE 
Namely, $\{\tilde{\alpha}_{t+1}\}_{t\ge t_0}$ and $\{\tilde{\sigma}^2_{t+1}\}_{t\ge t_0}$ are ``worse'' than $\{{\alpha}_{t+1}\}_{t\ge t_0}$ and $\{{\sigma}^2_{t+1}\}_{t\ge t_0}$, respectively, at each iteration. We next prove that 
\BE\label{Eqn:lem_regionI_III_a1}
\lim_{t\to\infty} \tilde{\alpha}_{t+1} =1\quad\text{and}\quad\lim_{t\to\infty} \tilde{\sigma}^2_{t+1} =0,
\EE
which together with \eqref{Eqn:lem_regionI_III_2}, and the fact that $\alpha_{t+1}\le1$ and $\sigma_{t+1}>0$ (since $(\alpha_t,\sigma^2_t)\in\mathcal{R}_{2a}$), leads to the results we want to prove:
\[
\lim_{t\to\infty} \alpha_{t+1} =1\quad\text{and}\quad\lim_{t\to\infty} \sigma^2_{t+1} =0.
\]

It remains to prove \eqref{Eqn:lem_regionI_III_a1}. First, notice that $\tilde{\alpha}_{t+1}\le1$ and $\tilde{\sigma}^2_{t+1}\ge0$ ($\forall t\ge t_0$), from the definition in \eqref{Eqn:bounds_def}. We then show that the sequence $\{\tilde{\alpha}_{t+1}\}_{t\ge t_0}$ is monotonically non-decreasing and $\{\tilde{\sigma}^2_{t+1}\}_{t\ge t_0}$ is monotonically non-increasing, namely,
\BE\label{Eqn:lem_regionI_III_2.2}
 \tilde{\alpha}_{t+2}\ge \tilde{\alpha}_{t+1} \quad\text{and}\quad \tilde{\sigma}^2_{t+2}\le \tilde{\sigma}^2_{t+1}, \quad\forall t\ge t_0,
\EE
and equalities of \eqref{Eqn:lem_regionI_III_2.2} hold only when the equalities in \eqref{Eqn:lem_regionI_III_1} hold. Then we can finish the proof by the fact that $\tilde{\alpha}$ and $\tilde{\sigma}^2$ will improve strictly in at most two consecutive iterations and the ratios $\frac{\tilde{\alpha}_{t+2}}{\tilde{\alpha}_t}, \frac{\tilde{\sigma}^2_{t+2}}{\tilde{\sigma}^2_t}$ are continuous functions of $(\alpha_t,\sigma_t^2)$ on $[\tilde{\alpha}_{t_0},1]\times [0,\sigma_{\max}^2]$. (This is essentially due to the fact that equalities in \eqref{Eqn:lem_regionI_III_1} can be achieved when $\sigma^2=F_1^{-1}(\alpha)$, but this cannot happen in two consecutive iterations. See the discussions below \eqref{Eqn:bounds_def}.) 

To prove \eqref{Eqn:lem_regionI_III_2.2}, we only need to prove the following (based on the definition in \eqref{Eqn:lem_regionI_III_1.1})
\BE
B_1\left[ \psi_1,\psi_2 \right] \ge B_1(\alpha,\sigma^2)\quad\text{and}\quad B_2\left[ \psi_1,\psi_2 \right] \le B_2(\alpha,\sigma^2),\quad\forall(\alpha,\sigma^2)\in\mathcal{R}_1\cup\mathcal{R}_{2a},\nonumber
\EE
where $\psi_1$ and $\psi_2$ are shorthands for $\psi_1(\alpha,\sigma^2)$ and $\psi_2(\alpha,\sigma^2;\delta)$. From \eqref{Eqn:bounds_def}, the above inequalities are equivalent to
\BE\label{Eqn:lem_regionI_III_3}
\min\left\{ \psi_1,F_1(\psi_2) \right\}\ge B_1(\alpha,\sigma^2),
\EE
and
\BE\label{Eqn:lem_regionI_III_4}
\max\left\{ \psi_2,F_1^{-1}(\psi_1) \right\}\le B_2(\alpha,\sigma^2).
\EE
Note that \eqref{Eqn:lem_regionI_III_1} already proves the following
\[
\psi_1 \ge B_1(\alpha,\sigma^2)\quad\text{and}\quad\psi_2 \le B_2(\alpha,\sigma^2).
\]
Hence, to prove \eqref{Eqn:lem_regionI_III_3} and \eqref{Eqn:lem_regionI_III_4}, we only need to prove
\[
F_1(\psi_2) \ge B_1(\alpha,\sigma^2)\quad\text{and}\quad F_1^{-1}(\psi_1)\le  B_2(\alpha,\sigma^2).
\] 
To prove $F_1(\psi_2) \ge B_1(\alpha,\sigma^2)$, we note that
\[
\begin{split}
\psi_2 &\overset{(a)}{\le} B_2(\alpha,\sigma^2)\\
&\overset{(b)}{=} \max\left\{ \sigma^2,F_1^{-1}(\alpha) \right\}\\
&\overset{(c)}{=} F_1^{-1}\left(\min\left\{F_1( \sigma^2),\alpha \right\}\right)\\
&\overset{(d)}{=} F_1^{-1}\left(B_1(\alpha,\sigma^2)\right),
\end{split}
\]
where (a) is from \eqref{Eqn:lem_regionI_III_1b}, (b) is from \eqref{Eqn:bounds_def}, and (c) is due to the fact that $F_1^{-1}$ is strictly decreasing, and (d) from \eqref{Eqn:lem_regionI_III_1}. Hence, since $F_1$ is strictly decreasing, we have 
\[
F_1(\psi_2)\ge F_1\left[F_1^{-1}\left(B_1(\alpha,\sigma^2)\right)\right]=B_1(\alpha,\sigma^2).
\]
Further, it is straightforward to see that if both inequalities are strict in \eqref{Eqn:lem_regionI_III_1} then
 \[
 \min\left\{ \psi_1,F_1(\psi_2) \right\} > B_1(\alpha,\sigma^2).
 \]
This shows that equalities of \eqref{Eqn:lem_regionI_III_2.2} hold only when the equalities in \eqref{Eqn:lem_regionI_III_1} hold.

The proof for $F_1^{-1}(\psi_1)\le  B_2(\alpha,\sigma^2)$ is similar and omitted. 
\end{itemize}


\subsubsection{Proof of Lemma~\ref{Lem:regionII_IV}} \label{ssec:proofLemma8} 

Suppose that $(\alpha,\sigma^2)\in \mathcal{R}_0$. From Definition ~\ref{Def:regions}, we have
\BE\label{Eqn:lem_regionII_1}
\frac{\pi^2}{16}<\sigma^2\le\sigma_{\max}^2.
\EE
Further, $F_1^{-1}$ is monotonically decreasing and hence (for $\delta>\deltaAMP$)
\BE\label{Eqn:lem_regionII_2}
\frac{\pi^2}{16}=F_1^{-1}(0)>F_1^{-1}(\alpha) \ge F_2({\alpha};\delta),
\EE
where the last inequality is due to Lemma~\ref{Lem:F1_F2_complex}. Combining \eqref{Eqn:lem_regionII_1} and \eqref{Eqn:lem_regionII_2} yields
\BE\label{Eqn:lem_regionII_3}
F_2({\alpha};\delta) < \sigma^2 \le\sigma^2_{\max}.
\EE
By the global attractiveness property in Lemma~\ref{lem:psi2} (iv), \eqref{Eqn:lem_regionII_3} implies
\BE
\psi_2(\alpha;\sigma^2;\delta)<\sigma^2.\nonumber
\EE

From the above analysis, we see that as long as $\frac{\pi^2}{16}<\sigma^2_t\le \sigma^2_{\max}$ (and also $0<\alpha_t<1$), $\sigma^2_{t+1}$ will be strictly smaller than $\sigma^2_t$:
\[
\sigma^2_{t+1}= \psi_2(\alpha_t;\sigma_t^2;\delta)<\sigma^2_t.
\]
Hence, there exists a finite number $T\ge1$ such that 
\[
\sigma^2_{T-1}>\frac{\pi^2}{16}\quad\text{and}\quad\sigma^2_{T}\le\frac{\pi^2}{16}.
\]
Otherwise, $\sigma_t^2$ will converge to a $\bar{\sigma}^2$ in $\mathcal{R}_0$. This implies that $\bar{\sigma}^2$ is a fixed point of $\psi_2$ for certain value of $0 < \alpha \leq 1$. However, we know from part (i) of Lemma \ref{Lem:2} and Lemma \ref{Lem:F1_F2_complex} that this cannot happen. 

Based on a similar argument, we also have $\psi_1(\alpha;\sigma^2)<\alpha$ and so $\alpha_{t+1}<\alpha_t$ for $t\le T-1$. Further, we can show that $\alpha_t>0$ (i.e., $\alpha_t\neq0$) for all $0\le t\le T$. First, $\alpha_0>0$ follows from our assumption. Further, from  \eqref{Eqn:map_expression_complex_a} we see that $\alpha_{t+1}>0$ if $\alpha_{t}>0$. Then, using a simple induction argument we prove that $\alpha_t>0$ for all $0\le t\le T$. Putting things together, we showed that there exists a finite number $T\ge1$ such that
\[
0<\alpha_T\le1 \quad\text{and}\quad\sigma^2_{T}\le\frac{\pi^2}{16}.
\]
(Recall that we have proved in Lemma~\ref{Lem:region_bound} that $\alpha_T\le1$.)
From Definition ~\ref{Def:regions}, $(\alpha_T,\sigma^2_T)\in\mathcal{R}_1\cup\mathcal{R}_2$.


\subsection{Proof of Theorem~\ref{Lem:fixed_point}} \label{Sec:proof_global_complex}

We consider the two different cases separately: (1) $\delta>\deltaGlobal$ and (2) $\delta<\deltaGlobal$.

\subsubsection{Case $\delta>\deltaGlobal$}

In this section, we will prove that when $\delta>\deltaGlobal$ the state evolution converges to the fixed point $(\alpha,\sigma^2)=(1,0)$ if initialized close enough to the fixed point. We first prove the following lemma, which shows that $F_1^{-1}$ is larger than $F_2(\alpha;\delta)$ for $\alpha$ close to one.

\begin{lemma}\label{Lem:F1_F2_complex_local}
Suppose that $\delta>\deltaGlobal=2$. Then, there exists an $\epsilon>0$ such that the following holds:
\BE\label{Eqn:F1_F2_complex_local}
F_1^{-1}(\alpha)>F_2(\alpha;\delta),\quad\forall \alpha\in(1-\epsilon,1).
\EE
\end{lemma}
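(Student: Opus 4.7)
The plan is to compare the leading-order behavior of $F_1^{-1}(\alpha)$ and $F_2(\alpha;\delta)$ as $\alpha\to 1^-$. Both curves pass through $(1,0)$ by Lemma~\ref{Lem:2}, so the inequality is driven by the different rates at which they vanish. I will argue that $F_2(\alpha;\delta)$ vanishes quadratically in $1-\alpha$ while $F_1^{-1}(\alpha)$ vanishes only like $(1-\alpha)/\log(1/(1-\alpha))$, making the latter strictly larger on some left neighborhood of $\alpha=1$.

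For $F_2$, I would first note the exact identity $\psi_2(\alpha,0;\delta)=\tfrac{4}{\delta}(1-\alpha)^2$, from which $\partial_\alpha\psi_2(1,0)=0$. Combined with $\partial_{\sigma^2}\psi_2(1,0)=\tfrac{2}{\delta}$ from Lemma~\ref{lem:psi2}(i), and observing that the nondegeneracy $\partial_{\sigma^2}(\sigma^2-\psi_2)\bigl|_{(1,0)}=1-\tfrac{2}{\delta}>0$ is exactly the hypothesis $\delta>\deltaGlobal$, the implicit function theorem together with a short Taylor expansion yields $F_2(\alpha;\delta)=\frac{4}{\delta-2}(1-\alpha)^2(1+o(1))$. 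For $F_1^{-1}$, I would use the parametrization $s=\sigma/\alpha$ from Section~\ref{proof:lemmadominationF_1}, so that $\alpha=\phi_1(s)$ and $F_1^{-1}(\alpha)=\alpha^2s^2$. Identity (i) of \eqref{Eqn:integral_identities} rewrites $\phi_1$ in the compact form $\phi_1(s)=\sqrt{1+s^2}\,T\!\left(\tfrac{1}{1+s^2}\right)$, where $T$ is the quantity defined in \eqref{Eqn:def_elliptic}. Inserting the expansions of $E(1-\varepsilon)$ and $K(1-\varepsilon)$ from Lemma~\ref{Lem:elliptic}(i) produces $T(1-\varepsilon)=1-\tfrac{\varepsilon}{4}\log(1/\varepsilon)+O(\varepsilon)$, hence $1-\phi_1(s)=\tfrac{s^2}{2}\log(1/s)(1+o(1))$ as $s\to 0^+$. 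Inverting this logarithmically-corrected relation to leading order gives $s^2\sim \tfrac{4(1-\alpha)}{\log(1/(1-\alpha))}$, and therefore $F_1^{-1}(\alpha)\sim \tfrac{4(1-\alpha)}{\log(1/(1-\alpha))}$.

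Dividing the two asymptotics yields
\[
\frac{F_1^{-1}(\alpha)}{F_2(\alpha;\delta)}\sim\frac{\delta-2}{(1-\alpha)\log(1/(1-\alpha))}\longrightarrow\infty\quad\text{as }\alpha\to 1^-,
\]
so there is an $\epsilon>0$ on which \eqref{Eqn:F1_F2_complex_local} holds throughout $(1-\epsilon,1)$. The main technical obstacle is the extra logarithmic factor in the expansion of $\phi_1$, which originates from a boundary layer near $\theta=0$ in its defining integral and prevents a naive Taylor expansion. Reducing $\phi_1$ to the elliptic-integral combination $T$ and applying the sharp asymptotics in Lemma~\ref{Lem:elliptic}(i) is what controls this correction cleanly. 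A minor side point, continuity of $\partial_\alpha\psi_2$ at $(1,0)$ needed for the implicit function theorem, follows from the same boundary-layer argument used in Appendix~\ref{sec:proofscontinuity} for $\partial_{\sigma^2}\psi_2$.
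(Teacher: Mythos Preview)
Your argument is correct and takes a genuinely different route from the paper's proof. The paper does not analyze $F_1^{-1}$ and $F_2$ separately; instead it recycles the reformulation of Section~\ref{proof:lemmadominationF_1}, passing (via the global attractiveness of $\psi_2$) to the single inequality \eqref{Eqn:F1_F2_complex_local2}--\eqref{Eqn:F1_F2_complex_local3}, verifies that the left-hand side equals $1$ at the boundary $x=1$, and then computes the limit of its derivative as $x\to 1^-$. That limit turns out to be $\gamma-\tfrac12=\tfrac12-\tfrac{\delta}{4}$, which is negative precisely when $\delta>2$, so the reformulated quantity exceeds $1$ on a left neighborhood of $x=1$.

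Your approach instead extracts the leading asymptotics of each curve: $F_2(\alpha;\delta)\sim\tfrac{4}{\delta-2}(1-\alpha)^2$ from the exact identity $\psi_2(\alpha,0;\delta)=\tfrac{4}{\delta}(1-\alpha)^2$ together with continuity of $\partial_{\sigma^2}\psi_2$ at $(1,0)$, and $F_1^{-1}(\alpha)\sim\tfrac{4(1-\alpha)}{\log(1/(1-\alpha))}$ from the elliptic-integral expansion of $\phi_1$. This is more transparent about \emph{why} the inequality holds (the two curves vanish at different orders) and in fact yields the stronger conclusion $F_1^{-1}(\alpha)/F_2(\alpha;\delta)\to\infty$. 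The paper's route is shorter computationally and pinpoints the threshold $\delta=2$ as the exact value where a single derivative changes sign, while yours gives sharper quantitative information at the cost of the logarithmic-inversion step. One small remark: your $F_2$ expansion does not actually require continuity of $\partial_\alpha\psi_2$; writing $F_2=\psi_2(\alpha,0)+\int_0^{F_2}\partial_{\sigma^2}\psi_2(\alpha,t)\,dt$ and using only the continuity of $\partial_{\sigma^2}\psi_2$ (already established in Appendix~\ref{sec:proofscontinuity}) suffices.
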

\begin{proof}
In Lemma \ref{Lem:F1_F2_complex}, we proved that $F_1^{-1}(\alpha)>F_2(\alpha;\delta)$ holds for all $\alpha\in(0,1)$ when $\delta>\deltaAMP\approx2.5$. Here, we will prove that $F_1^{-1}(\alpha)>F_2(\alpha;\delta)$ holds for $\alpha$ close to 1 when $\delta>\deltaGlobal=2$. Similar to the manipulations given in Section \ref{proof:lemmadominationF_1}, the inequality \eqref{Eqn:F1_F2_complex_local} can be re-parameterized into the following: 
\BE\label{Eqn:F1_F2_complex_local2}
     \int_0^{\frac{\pi}{2}} \frac{\sin^2 \theta}{(\sin^2 \theta +
     s^2)^{\frac{1}{2}}} d \theta \cdot \int_0^{\frac{\pi}{2}} \frac{(1 -
     \gamma^{} s^2) \sin^2 \theta + s^2}{(\sin^2 \theta + s^2)^{\frac{1}{2}}}
     d \theta >  1, \quad \forall s \in (0,\xi),
\EE
where $\gamma\Mydef 1-\delta/4$ and
$\xi=\phi_1^{-1}(\epsilon)$ (see \eqref{Eqn:Lem_F1F2_phi} for the definition of $\phi_1$). Again, it is more convenient to express \eqref{Eqn:F1_F2_complex_local2} using elliptic integrals (cf. \eqref{Eqn:Lem3_proof7.3})
\BE\label{Eqn:F1_F2_complex_local3}
\frac{E(x)T(x)}{x}-\frac{\gamma (1-x)T^2(x)}{x^2} > 1,\quad \forall x\in\left( \frac{1}{1+\xi},1 \right),
\EE
where we made a variable change $x\Mydef 1/(1+s^2)$. To this end, we can verify that
\[
\lim_{x\to1}\frac{E(x)T(x)}{x}-\frac{\gamma (1-x)T^2(x)}{x^2}=1.
\]
To complete the proof, we only need to show that the derivative of the LHS of \eqref{Eqn:F1_F2_complex_local3} in a small neighborhood of $x=1$ is strictly negative when $\delta>\deltaGlobal=2$. Using the formulas listed in Section \ref{ssec:ellipticintegrals}, we can derive the following:
\[
\begin{split}
&\frac{\mr{d}}{\mr{d}x}\left(\frac{E(x)T(x)}{x}-\frac{\gamma (1-x)T^2(x)}{x^2}\right)\Big|_{x\to1}\\
&=\frac{2\gamma (x-4)E(x)\cdot(1-x)K(x)+[4\gamma(1-x)+x]\cdot (1-x)K^2(x)+[2\gamma(2-x)-x]E^2(x)}{2x^3}\Big|_{x\to1}\\
&=\gamma-\frac{1}{2},
\end{split}
\]
where the last step is due to the facts that $E(x)=1$ and $\lim_{x\to1}(1-x)K(x)=0$. See Section \ref{ssec:ellipticintegrals} for more details. Hence, the above derivative is negative if $\gamma<\frac{1}{2}$ or $\delta>2$ by noting the definition $\gamma=1-\delta/4$. 
\end{proof}

\begin{figure}[!htbp]
\centering
\includegraphics[width=.50\textwidth]{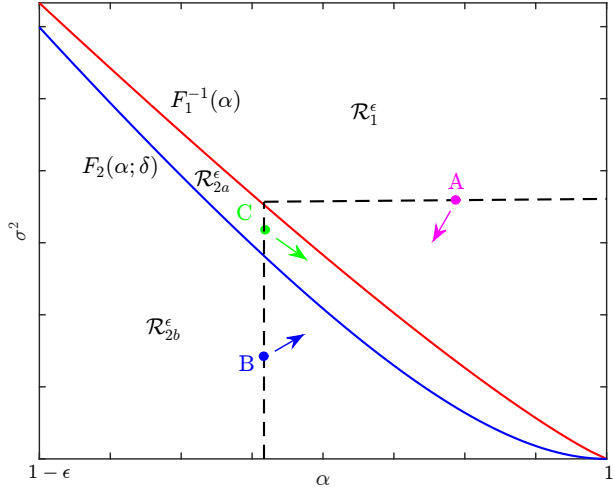}
\caption{Illustration of the local convergence behavior when $\delta>\deltaGlobal$. For all the three points shown in the figure, $B_1$ and $B_2$ are given by the dashed lines. }
\label{Fig:local_convergence}
\end{figure}

We now turn to the proof of Lemma~\ref{Lem:fixed_point}. The idea of the proof is similar to that of Theorem \ref{Theo:PhaseTransition_complex}. There are some differences though, since now $\delta$ can be smaller than $\deltaAMP$ and some results in the proof of Theorem \ref{Theo:PhaseTransition_complex} do not hold for the case considered here. On the other hand, as we focus on the range $\alpha\in(1-\epsilon,1)>\alpha_{\ast}$, and under this condition we know that $F_2(\sigma^2;\delta)$ is strongly globally attracting (see Lemma \ref{lem:psi2}-(v)), which means that $\psi_2(\alpha,\sigma^2)$ moves towards the fixed point $F_2(\alpha;\delta)$, but cannot move to the other side of $F_2(\alpha;\delta)$.

We continue to prove the local convergence of the state evolution. We divide the region $\mathcal{R}^{\epsilon}\Mydef \{(\alpha,\sigma^2)|1-\epsilon\le\alpha\le1,0\le\sigma^2\le F_1^{-1}(1-\epsilon)\}$ into the following sub-regions:
\BE\label{Eqn:regions_local}
\begin{split}
\mathcal{R}^{\epsilon}_1&\Mydef\left\{(\alpha,\sigma^2)\big|1-\epsilon\le\alpha\le1,F_1^{-1}(\alpha)< \sigma^2 \le F_1^{-1}(1-\epsilon)\right\},\\
\mathcal{R}^{\epsilon}_{2a} &\Mydef\left\{(\alpha,\sigma^2)\big|1-\epsilon\le\alpha\le1,F_2(\alpha;\delta)<\sigma^2 \le F_1^{-1}(\alpha)\right\}\\
 \mathcal{R}^{\epsilon}_{2b} &\Mydef\left\{(\alpha,\sigma^2)\big|1-\epsilon\le\alpha\le1,0 \le \sigma^2 \le F_2(\alpha;\delta)\right\}.
\end{split}
\EE
Similar to the proof of Lemma~\ref{Lem:regionI_III} discussed in Section \ref{Sec:proof_lem_regionI_III}, we will show that if $(\alpha,\sigma^2)\in\mathcal{R}^{\epsilon}$ then the new states $(\psi_1,\psi_2)$ can be bounded as follows:
\BE\label{Eqn:psi_bounds_local}
\psi_1(\alpha,\sigma^2)\ge B_1(\alpha,\sigma^2)\quad\text{and}\quad \psi_2(\alpha,\sigma^2)\le B_2(\alpha,\sigma^2),\quad\forall(\alpha,\sigma^2)\in\mathcal{R}^{\epsilon},
\EE
where 
\[ 
B_1(\alpha,\sigma^2)=\min\left\{\alpha,F_1(\sigma^2) \right\}\quad\text{and}\quad B_2(\alpha,\sigma^2)=\max\left\{\sigma^2,F_1^{-1}(\alpha) \right\}.
\]
Based on the strong global attractiveness of $\psi_1$ (Lemma \ref{lem:psi1}-iii) and $\psi_2$ (Lemma \ref{lem:psi2}-v) and the additional result \eqref{Eqn:psi2_contracting}, it is straightforward to show the following:
\[
\begin{split}
\psi_1(\alpha,\sigma^2)\ge F_1(\sigma^2)&\quad\text{and}\quad \psi_2(\alpha,\sigma^2)\le \sigma^2,\quad\forall(\alpha,\sigma^2)\in\mathcal{R}^{\epsilon}_1,\\
\psi_1(\alpha,\sigma^2)\ge \alpha&\quad\text{and}\quad \psi_2(\alpha,\sigma^2)\le \sigma^2,\quad\forall(\alpha,\sigma^2)\in\mathcal{R}^{\epsilon}_{2a},\\
\psi_1(\alpha,\sigma^2)\ge \alpha&\quad\text{and}\quad \psi_2(\alpha,\sigma^2)\le F_2(\alpha;\delta),\quad\forall(\alpha,\sigma^2)\in\mathcal{R}^{\epsilon}_{2b},
\end{split}
\]
which, together with the definitions given in \eqref{Eqn:regions_local} and the fact that $F_2(\alpha;\delta)<F_1^{-1}(\alpha)$ (cf. Lemma \ref{Lem:F1_F2_complex_local}), proves \eqref{Eqn:psi_bounds_local}. The rest of the proof follows that in Section \ref{Sec:proof_lem_regionI_III}. Namely, we construct two auxiliary sequences $\{\tilde{\alpha}_{t+1}\}$ and $\{\tilde{\sigma}^2_{t+1}\}$ where
\[
\tilde{\alpha}_{t+1}=B_1(\alpha_t,\sigma^2_t)\quad\text{and}\quad \tilde{\sigma}^2_{t+1}=B_2(\alpha_t,\sigma^2_t),
\]
and show that $\{\tilde{\alpha}_{t+1}\}$ and $\{\tilde{\sigma}^2_{t+1}\}$ monotonically converge to $1$ and $0$ respectively. The detailed arguments can be found in Section \ref{Sec:proof_lem_regionI_III} and will not be repeated here.

\subsubsection{Case $\delta<\deltaGlobal$}\label{Sec:complex_local_converse}
We proved in \eqref{Eqn:psi2_dif_sigma2_fs} that 
\[
\begin{split}
\frac{\partial \psi_2(\alpha,\sigma^2;\delta)}{\partial\sigma^2} &= 
\frac{4}{\delta\alpha}\Bigg(\alpha-\underbrace{ \frac{1}{2\sqrt{1+s^2}} E\left( \frac{1}{1 + s^2} \right)}_{f(s)} \Bigg),
\end{split}
\]
where $s = \frac{\sigma}{\alpha}$. Hence, we have (note that $E(1)=1$)
\BE\label{Eqn:psi2_dif_local}
\partial_2 \psi_2(\alpha,0)\Mydef \frac{\partial\psi_2(\alpha,\sigma^2) }{\partial \sigma^2}\Big|_{\sigma^2=0}=\frac{4}{\delta}\left(1-\frac{1}{2\alpha}\right),\quad \forall \alpha>0.
\EE
Therefore,
\[
\partial_2 \psi_2(\alpha,0)>1,\quad\forall \alpha>\frac{2}{4-\delta}.
\]
When $\delta<\deltaGlobal=2$, we have $\frac{2}{4-\delta}<1$ and therefore there exists a constant $\alpha^{\ast}$ that satisfies the following:
\[
\frac{2}{4-\delta}<\alpha^{\ast}<1,
\]
which together with \eqref{Eqn:psi2_dif_local} yields
\[
\partial_2 \psi_2(\alpha^{\ast},0) >1.
\]
Further, as discussed in the proof of Lemma \ref{lem:psi2}-(i), $\partial_2 \psi_2(\alpha^{\ast},\sigma^2)$ is a continuous function of $\sigma^2$. Hence, there exists $\xi^{\ast}>0$ such that
\BE\label{Eqn:psi2_dif_local2}
\partial_2 \psi_2(\alpha^{\ast},\sigma^2) >1,\quad \forall \sigma^2\in[0,\xi^{\ast}].
\EE
Further, we have shown in \eqref{Eqn:psi2_dif_sigma2} that
\[
\frac{\partial \psi_2(\alpha,\sigma^2;\delta)}{\partial\sigma^2}=
\frac{4}{\delta}\left( 1-\frac{1}{2}\int_0^{\frac{\pi}{2}} \frac{\sigma^2}{(\alpha^2\sin^2\theta+\sigma^2)^{\frac{3}{2}}} \mr{d}\theta\right),
\]
and it is easy to see that $\partial_2 \psi_2(\alpha,\sigma^2;\delta)$ is an increasing function of $\alpha\in(0,\infty)$. Hence, together with \eqref{Eqn:psi2_dif_local2} we get the following
\[
\partial_2 \psi_2(\alpha,\sigma^2;\delta)>1,\quad \forall (\alpha,\sigma^2)\in[\alpha^{\ast},1]\times[0,\xi^{\ast}],
\]
which means that $\psi_2(\alpha,\sigma^2)-\sigma^2$ is a strictly increasing function of $\sigma^2$ for $(\alpha,\sigma^2)\in[\alpha^{\ast},1]\times[0,\xi^{\ast}]$. Hence,
\[
\psi_2(\alpha,\sigma^2)-\sigma^2>\psi_2(\alpha,0)=\frac{4}{\delta}(1-\alpha)^2\ge0,\quad \forall (\alpha,\sigma^2)\in[\alpha^{\ast},1]\times[0,\xi^{\ast}].
\]
This implies that $\sigma^2$ moves away from $0$ in a neighborhood of the fixed point $(1,0)$.



\bibliographystyle{unsrt}	
\bibliography{Phase_retrieval}		

\begin{thebibliography}{10}

\bibitem{Candes2013}
Emmanuel~J Candes, Thomas Strohmer, and Vladislav Voroninski.
\newblock Phaselift: Exact and stable signal recovery from magnitude
  measurements via convex programming.
\newblock {\em Communications on Pure and Applied Mathematics},
  66(8):1241--1274, 2013.

\bibitem{Eetrapalli2013}
Praneeth Netrapalli, Prateek Jain, and Sujay Sanghavi.
\newblock Phase retrieval using alternating minimization.
\newblock In {\em Advances in Neural Information Processing Systems}, pages
  2796--2804, 2013.

\bibitem{Eldar2014}
Yonina~C Eldar and Shahar Mendelson.
\newblock Phase retrieval: Stability and recovery guarantees.
\newblock {\em Applied and Computational Harmonic Analysis}, 36(3):473--494,
  2014.

\bibitem{CaLiSo15}
E.~J. Cand¨¨s, X.~Li, and M.~Soltanolkotabi.
\newblock Phase retrieval via wirtinger flow: Theory and algorithms.
\newblock {\em IEEE Transactions on Information Theory}, 61(4):1985--2007,
  April 2015.

\bibitem{ChenCandes17}
Yuxin Chen and E.~J. Candes.
\newblock Solving random quadratic systems of equations is nearly as easy as
  solving linear systems.
\newblock {\em Communications on Pure and Applied Mathematics}, 70:822--883,
  May 2017.

\bibitem{Wang2016}
Gang Wang, Georgios~B Giannakis, and Yonina~C Eldar.
\newblock Solving systems of random quadratic equations via truncated amplitude
  flow.
\newblock {\em arXiv preprint arXiv:1605.08285}, 2016.

\bibitem{Zhang2016reshaped}
Huishuai Zhang and Yingbin Liang.
\newblock Reshaped wirtinger flow for solving quadratic system of equations.
\newblock In {\em Advances in Neural Information Processing Systems}, pages
  2622--2630, 2016.

\bibitem{Goldstein2016phasemax}
Tom Goldstein and Christoph Studer.
\newblock {PhaseMax}: Convex phase retrieval via basis pursuit.
\newblock {\em arXiv preprint arXiv:1610.07531}, 2016.

\bibitem{Aahmani2016}
Sohail Bahmani and Justin Romberg.
\newblock Phase retrieval meets statistical learning theory: A flexible convex
  relaxation.
\newblock {\em arXiv preprint arXiv:1610.04210}, 2016.

\bibitem{Cai2016}
Tony Cai, Xiaodong Li, Zongming Ma, et~al.
\newblock Optimal rates of convergence for noisy sparse phase retrieval via
  thresholded wirtinger flow.
\newblock {\em The Annals of Statistics}, 44(5):2221--2251, 2016.

\bibitem{sun2016geometric}
J.~Sun, Q.~Qu, and J.~Wright.
\newblock A geometric analysis of phase retrieval.
\newblock In {\em IEEE International Symposium on Information Theory (ISIT)},
  pages 2379--2383, July 2016.

\bibitem{Soltanolkotabi2017}
Mahdi Soltanolkotabi.
\newblock Structured signal recovery from quadratic measurements: Breaking
  sample complexity barriers via nonconvex optimization.
\newblock {\em arXiv preprint arXiv:1702.06175}, 2017.

\bibitem{Duchi2017}
John~C Duchi and Feng Ruan.
\newblock Solving (most) of a set of quadratic equalities: composite
  optimization for robust phase retrieval.
\newblock {\em arXiv preprint arXiv:1705.02356}, 2017.

\bibitem{Lu17}
Yue~M Lu and Gen Li.
\newblock Phase transitions of spectral initialization for high-dimensional
  nonconvex estimation.
\newblock {\em arXiv preprint arXiv:1702.06435}, 2017.

\bibitem{Davis2017}
Damek Davis, Dmitriy Drusvyatskiy, and Courtney Paquette.
\newblock The nonsmooth landscape of phase retrieval.
\newblock {\em arXiv preprint arXiv:1711.03247}, 2017.

\bibitem{Tan2017phase}
Yan~Shuo Tan and Roman Vershynin.
\newblock Phase retrieval via randomized kaczmarz: Theoretical guarantees.
\newblock {\em arXiv preprint arXiv:1706.09993}, 2017.

\bibitem{Jeong2017}
Halyun Jeong and C~Sinan G{\"u}nt{\"u}rk.
\newblock Convergence of the randomized kaczmarz method for phase retrieval.
\newblock {\em arXiv preprint arXiv:1706.10291}, 2017.

\bibitem{Zeng2017}
Wen-Jun Zeng and HC~So.
\newblock Coordinate descent algorithms for phase retrieval.
\newblock {\em arXiv preprint arXiv:1706.03474}, 2017.

\bibitem{Mondelli2017}
Marco Mondelli and Andrea Montanari.
\newblock Fundamental limits of weak recovery with applications to phase
  retrieval.
\newblock {\em arXiv preprint arXiv:1708.05932}, 2017.

\bibitem{Dhifallah17}
Oussama Dhifallah and Yue~M Lu.
\newblock Fundamental limits of {PhaseMax} for phase retrieval: A replica
  analysis.
\newblock {\em arXiv preprint arXiv:1708.03355}, 2017.

\bibitem{Dhifallah2017phase}
Oussama Dhifallah, Christos Thrampoulidis, and Yue~M Lu.
\newblock Phase retrieval via linear programming: Fundamental limits and
  algorithmic improvements.
\newblock {\em arXiv preprint arXiv:1710.05234}, 2017.

\bibitem{Abbasi2017}
E.~Abbasi, F.~Salehi, and B.~Hassibi.
\newblock Performance of real phase retrieval.
\newblock In {\em International Conference on Sampling Theory and Applications
  (SampTA)}, July 2017.

\bibitem{Qu2017}
Qing Qu, Yuqian Zhang, Yonina~C Eldar, and John Wright.
\newblock Convolutional phase retrieval via gradient descent.
\newblock {\em arXiv preprint arXiv:1712.00716}, 2017.

\bibitem{Zheng17}
L.~Zheng, A.~Maleki, H.~Weng, X.~Wang, and T.~Long.
\newblock Does $\ell_p$-minimization outperform $\ell_1$-minimization?
\newblock {\em IEEE Transactions on Information Theory}, PP(99):1--1, 2017.

\bibitem{wang2017solving}
Gang Wang, Georgios Giannakis, Yousef Saad, and Jie Chen.
\newblock Solving most systems of random quadratic equations.
\newblock In {\em Advances in Neural Information Processing Systems}, pages
  1865--1875, 2017.

\bibitem{Rangan11}
S.~Rangan.
\newblock Generalized approximate message passing for estimation with random
  linear mixing.
\newblock In {\em IEEE International Symposium on Information Theory
  Proceedings}, pages 2168--2172, July 2011.

\bibitem{hale2008fixed}
Elaine~T Hale, Wotao Yin, and Yin Zhang.
\newblock Fixed-point continuation for $\ell_1$-minimization: methodology and
  convergence.
\newblock {\em SIAM Journal on Optimization}, 19(3):1107--1130, 2008.

\bibitem{Balan2012reconstruction}
Radu Balan.
\newblock Reconstruction of signals from magnitudes of redundant
  representations: The comple case.
\newblock {\em Foundations of Computational Mathematics}, 16(3):677--721, 2016.

\bibitem{Plan}
Ji~Xu, Junjie Ma, and Arian Maleki.
\newblock Phase retrieval based on spectral initialization and approximate
  message passing.
\newblock in preparation.

\bibitem{Waldspurger2015}
Ir{\`e}ne Waldspurger, Alexandre d¡¯Aspremont, and St{\'e}phane Mallat.
\newblock Phase recovery, maxcut and complex semidefinite programming.
\newblock {\em Mathematical Programming}, 149(1-2):47--81, 2015.

\bibitem{Candes2014solving}
Emmanuel~J Cand{\`e}s and Xiaodong Li.
\newblock Solving quadratic equations via {PhaseLift} when there are about as
  many equations as unknowns.
\newblock {\em Foundations of Computational Mathematics}, 14(5):1017--1026,
  2014.

\bibitem{Ma2017implicit}
Cong Ma, Kaizheng Wang, Yuejie Chi, and Yuxin Chen.
\newblock Implicit regularization in nonconvex statistical estimation: Gradient
  descent converges linearly for phase retrieval, matrix completion and blind
  deconvolution.
\newblock {\em arXiv preprint arXiv:1711.10467}, 2017.

\bibitem{Wei2015}
Ke~Wei.
\newblock Solving systems of phaseless equations via kaczmarz methods: A proof
  of concept study.
\newblock {\em Inverse Problems}, 31(12):125008, 2015.

\bibitem{Chi2016}
Yuejie Chi and Yue~M Lu.
\newblock Kaczmarz method for solving quadratic equations.
\newblock {\em IEEE Signal Processing Letters}, 23(9):1183--1187, 2016.

\bibitem{Li2015phase}
G.~Li, Y.~Gu, and Y.~M. Lu.
\newblock Phase retrieval using iterative projections: dynamics in the large
  systems limit.
\newblock In {\em 53rd Annual Allerton Conference on Communication, Control,
  and Computing (Allerton)}, pages 1114--1118, Sept 2015.

\bibitem{thrampoulidis2016precise}
Christos Thrampoulidis, Ehsan Abbasi, and Babak Hassibi.
\newblock Precise error analysis of regularized m-estimators in
  high-dimensions.
\newblock {\em arXiv preprint arXiv:1601.06233}, 2016.

\bibitem{thrampoulidis2015regularized}
Christos Thrampoulidis, Samet Oymak, and Babak Hassibi.
\newblock Regularized linear regression: A precise analysis of the estimation
  error.
\newblock In {\em Conference on Learning Theory}, pages 1683--1709, 2015.

\bibitem{salehi2018precise}
Fariborz Salehi, Ehsan Abbasi, and Babak Hassibi.
\newblock A precise analysis of phasemax in phase retrieval.
\newblock {\em arXiv preprint arXiv:1801.06609}, 2018.

\bibitem{DoMaMo09}
David~L. Donoho, Arian Maleki, and Andrea Montanari.
\newblock Message-passing algorithms for compressed sensing.
\newblock {\em Proceedings of the National Academy of Sciences},
  106(45):18914--18919, 2009.

\bibitem{Bayati&Montanari11}
M.~Bayati and A.~Montanari.
\newblock The dynamics of message passing on dense graphs, with applications to
  compressed sensing.
\newblock {\em IEEE Transactions on Information Theory}, 57(2):764--785, Feb
  2011.

\bibitem{Javanmard2013}
Adel Javanmard and Andrea Montanari.
\newblock State evolution for general approximate message passing algorithms,
  with applications to spatial coupling.
\newblock {\em Information and Inference: A Journal of the IMA}, 2(2):115,
  2013.

\bibitem{Schniter2015}
Philip Schniter and Sundeep Rangan.
\newblock Compressive phase retrieval via generalized approximate message
  passing.
\newblock {\em IEEE Transactions on Signal Processing}, 63(4):1043--1055, 2015.

\bibitem{barbier2017phase}
Jean Barbier, Florent Krzakala, Nicolas Macris, L{\'e}o Miolane, and Lenka
  Zdeborov{\'a}.
\newblock Phase transitions, optimal errors and optimality of message-passing
  in generalized linear models.
\newblock {\em arXiv preprint arXiv:1708.03395}, 2017.

\bibitem{Bandeira2014}
Afonso~S Bandeira, Jameson Cahill, Dustin~G Mixon, and Aaron~A Nelson.
\newblock Saving phase: injectivity and stability for phase retrieval.
\newblock {\em Applied and Computational Harmonic Analysis}, 37(1):106--125,
  2014.

\bibitem{Balan2006signal}
Radu Balan, Pete Casazza, and Dan Edidin.
\newblock On signal reconstruction without phase.
\newblock {\em Applied and Computational Harmonic Analysis}, 20(3):345--356,
  2006.

\bibitem{Jalali2016}
Shirin Jalali and Arian Maleki.
\newblock From compression to compressed sensing.
\newblock {\em Applied and Computational Harmonic Analysis}, 40(2):352--385,
  2016.

\bibitem{bakhshizadeh2017}
Milad Bakhshizadeh, Arian Maleki, and Shirin Jalali.
\newblock Compressive phase retrieval of structured signal.
\newblock {\em arXiv preprint arXiv:1712.03278}, 2017.

\bibitem{Bayati&Montanari12}
M.~Bayati and A.~Montanari.
\newblock The {LASSO} risk for {Gaussian} matrices.
\newblock {\em IEEE Transactions on Information Theory}, 58(4):1997--2017,
  2012.

\bibitem{Mousavi2015}
Ali Mousavi, Arian Maleki, and Richard~G Baraniuk.
\newblock Consistent parameter estimation for {LASSO} and approximate message
  passing.
\newblock {\em arXiv preprint arXiv:1511.01017}, 2015.

\bibitem{Byrd1971}
PF~Byrd and MD~Friedman.
\newblock Handbook of elliptic integrals for engineers and scientists. 1971.
\newblock {\em Berlin, Heidelberg. New York}.

\bibitem{Anderson1985}
GD~Anderson and MK~Vamanamurthy.
\newblock Inequalities for elliptic integrals.
\newblock {\em Publ. Inst. Math.(Beograd)(NS)}, 37(51):61--63, 1985.

\bibitem{Wang2013}
Miao-Kun Wang and Yu-Ming Chu.
\newblock Asymptotical bounds for complete elliptic integrals of the second
  kind.
\newblock {\em Journal of Mathematical Analysis and Applications},
  402(1):119--126, 2013.

\bibitem{ArianThesis}
Arian Maleki.
\newblock {\em Approximate message passing algorithms for compressed sensing}.
\newblock Stanford University, 2010.

\bibitem{Campese2015}
Simon Campese.
\newblock Fourth moment theorems for complex {Gaussian} approximation.
\newblock {\em arXiv preprint arXiv:1511.00547}, 2015.

\bibitem{Maleki2013}
A.~Maleki, L.~Anitori, Z.~Yang, and R.~G. Baraniuk.
\newblock Asymptotic analysis of complex {LASSO} via complex approximate
  message passing ({CAMP}).
\newblock {\em IEEE Transactions on Information Theory}, 59(7):4290--4308, July
  2013.

\bibitem{Leone1961}
FC~Leone, LS~Nelson, and RB~Nottingham.
\newblock The folded normal distribution.
\newblock {\em Technometrics}, 3(4):543--550, 1961.

\bibitem{Zhu2008}
Ling Zhu.
\newblock On a quadratic estimate of shafer.
\newblock {\em J. Math. Inequal}, 2(4):571--574, 2008.

\bibitem{millane1990phase}
Rick~P Millane.
\newblock Phase retrieval in crystallography and optics.
\newblock {\em JOSA A}, 7(3):394--411, 1990.

\end{thebibliography}

\appendix

\section{Derivations of AMP.A}\label{Sec:AMP_derivations}
For the convenience of the readers (especially those who are not familiar with AMP), we provide a sketch of the derivations of the AMP.A algorithm in this appendix. Our derivations follow the approach proposed in \cite{Rangan11}. However, there are some differences specially in the last steps of our derivation.

For simplicity, we focus on the real-valued case. Consider the following optimization problem:
\BE\label{Eqn:penalize}
\underset{\bm{x}}{\min}\quad  \sum_{a=1}^m \left(y_a-|(\bm{Ax})_a|\right)^2+\frac{\mu}{2} \|\bm{x}\|_2^2,
\EE
where $\mu$ is a penalization parameter.  We now sketch the derivations of the AMP.A algorithm intended for solving \eqref{Eqn:penalize}. First, we construct the following joint pdf for \eqref{Eqn:penalize}:
\BE\label{Eqn:factor_penalization}
\begin{split}
\ell(\bm{x})&=\frac{1}{Z}\prod_{a=1}^m \exp\left[ -\beta\left(y_a - | (\bm{Ax})_a |\right)^2 \right]\cdot \prod_{i=1}^n\exp\left(-\beta\cdot \frac{\mu}{2}x_i^2\right),
\end{split}
\EE
where $Z$ is a normalizing constant, $(\bm{Ax})_a$ and $y_a$ denote the $a$-th entries of $\bm{Ax}$ and $\bm{y}$, and $\beta>0$ is parameter (the inverse temperature). Define
\BE\label{Eqn:factors_def}
\begin{split} 
f(y,z) &= \exp\left( -\beta\left(y-|z|\right)^2 \right).
\end{split}
\EE  
Following \cite[Chapter 5]{ArianThesis}, we proceed in three steps:
\begin{itemize}
\item Derive the sum-product belief propagation (BP) algorithm for \eqref{Eqn:factor_penalization}.
\item Approximate the BP update rules.
\item Find the message update rules in the limit of $\beta\to\infty$.
\end{itemize}
The above procedure is slightly different from the original derivations in \cite{Rangan11} (which is derived directly from the max-sum belief propagation algorithm) but equivalent. The sum-product BP algorithm reads
\BS\label{Eqn:BP}
\begin{align}
\hat{m}_{a\to i}^t(x_i)&\simeq\int_{\bm{x}\backslash i} f(y_a,(\bm{Ax})_a) \prod_{j\neq i} \mr{d}m_{j\to a}^t(x_j), \label{Eqn:BP_a}\\
m_{i \to a}^{t + 1}\left( {{x_i}} \right) &\simeq \prod\limits_{b \ne a}^{} {\hat m_{b \to i}^t\left( {{x_i}} \right)}\cdot \exp \left( -\beta\cdot \frac{\mu}{2}x_i^2 \right).\label{Eqn:BP_b}
\end{align}
\ES

We next simplify the above BP update rules.
\subsection{Messages from factor nodes to variable nodes}
Let $x_{j\to a}^t$ and $v_{j\to a}^t/\beta$ be the mean and variance of the incoming message $m_{j\to a}^t$ (here $v_{j\to a}^t$ is $O(1)$ and the variance of $m_{j\to a}^t$ is $O(1/\beta)$ as $\beta\to\infty$ \cite{ArianThesis}). Note that the calculation of the message $\hat{m}^t_{a \rightarrow i} (x_i)$ in \eqref{Eqn:BP_a} can be interpreted as the expectation of $ f(y_a,(\bm{Ax})_a)$ with respect to random vector $\bm{x}\backslash i$ that has product measure $\prod_{j\neq i} \mr{d}m_{j\to a}^t(x_j)$. Since in this interpretation the elements of $\bm{x}\backslash i$ are independent, based on a heuristic central limit theorem argument, we assume that $Z_a\Mydef (\bm{Ax})_a$ is Gaussian distributed, with mean and variance respectively given by \cite[Chapter 5.2]{ArianThesis}
\BE\label{Eqn:s_a_def}
\begin{split}
s_a^t&\Mydef\sum_{j\neq i}A_{aj}x_{j\to a}^t+A_{ai}x_i,\\
&=A_{ai}(x_i-x_{i\to a}^t)+\underbrace{ \sum\limits_{j = 1}^n {{A_{aj}}x_{j \to a}^t}}_{p_a^t},\\
\frac{\tau_a^t}{\beta}&\Mydef \frac{1}{\beta}\sum_{j\neq i}A_{aj}^2 v_{j\to a}^t\approx \frac{1}{\beta}\sum_{j=1}A_{aj}^2 v_{j\to a}^t.
\end{split}
\EE
Based on this approximation, the message $\hat{m}_{a\to i}^t(x_i)$ in \eqref{Eqn:BP_a} can be expressed as follows
\BE\label{Eqn:a_to_i}
\begin{split}
\hat{m}_{a\to i}^t(x_i)
&= \mathbb{E}\left\{ \exp\left[-\beta(y_a-|Z_a|)^2\right] \right\} \\
&=\int \exp\left[-\beta(y_a-|z|)^2\right] \cdot \mathcal{N} \left(z; A_{ai}(x_i-x_{i\to a}^t)+p_a^t , 
\tau_a^t/\beta\right)\mr{d}z,
\end{split}
\EE
where the expectation in step (a) is over $Z_a=(\bm{Ax})_a$ (with respect to the product distribution $\prod_{j\neq i} \mr{d}m_{j\to a}^t(x_j)$). Following \cite{Rangan11}, we define
\BE\label{Eqn:def_score}
H\left( {p,y,v /\beta } \right) \buildrel \Delta \over = \log \left[ {\int \exp\big( -\beta(y-|z|)^2 \big)  \cdot \mathcal{N}\left( {z;p,v /\beta } \right){\rm{d}}z} \right].
\EE
Using this definition, we can write $\log \left[ {\hat m_{a \to i}^t\left( {{x_i}} \right)} \right]$ in \eqref{Eqn:a_to_i} as
\[
\log \left[ {\hat m_{a \to i}^t\left( {{x_i}} \right)} \right] = H\left( {{A_{ai}}\left( {{x_i} - x_{i \to a}^t} \right) + p_a^t,{\tau}_a^t/\beta} \right).
\]
Noting $A_{ai}=O_p\left(\frac{1}{\sqrt{n}}\right)$, following \cite{Rangan11} we apply a second order Taylor expansion to $\log\left[\hat{m}_{a\to i}^t(x_i)\right]$ (amounts to a Gaussian approximation of ${\hat m_{a \to i}^t(x_i)} $) : 
\BS\label{Eqn:Taylor}
\begin{align}
H\left( {{A_{ai}}\left( {{x_i} - x_{i \to a}^t} \right) + p_a^t,{y_a},{\tau}_a^t/\beta} \right)& \approx H_a(t) + {A_{ai}}\left( {{x_i} - x_{i \to a}^t} \right)H'_a(t) + \frac{1}{2}A_{ai}^2{\left( {{x_i} - x_{i \to a}^t} \right)^2}H''_a(t)\\
&= \frac{1}{2}A_{ai}^2H''_a(t) x_i^2+ \left[ {{A_{ai}}H'_a(t) - A_{ai}^2x_{i \to a}^tH''_a(t)} \right] x_i+ {\rm{const}},
\end{align}
\ES
where we have omitted constant terms (relative to $x_i$), and $H_a(t)$, $H'_a(t)$ and $H''_a(t)$ are short-hands for
\[
\begin{split}
H_a(t)&= H(p_a^t,y_a,\tau_a^t/\beta),\\
H'_a(t) &=\frac{\partial H(p,y,\tau/\beta)}{\partial p}\big|_{p=p_a^t,y=y_a,\tau=\tau_a^t}\\
H''_a(t) &=\frac{\partial ^2H(p,y,\tau/\beta)}{\partial p^2}\big|_{p=p_a^t,y=y_a,\tau=\tau_a^t}.
\end{split}
\]
\subsection{Messages from variable nodes to factor nodes}
The message from $x_i$ to $F_a$ is 
\BE\label{Eqn:V2F}
m_{i \to a}^{t + 1}\left( {{x_i}} \right) \simeq \prod\limits_{b \ne a}^{} {\hat m_{b \to i}^t\left( {{x_i}} \right)} \cdot \exp \left( -\beta\cdot \frac{\mu}{2}x_i^2 \right).
\EE
From the Gaussian approximation in \eqref{Eqn:Taylor}, $m_{i \to a}^{t + 1}\left( {{x_i}} \right)$ is also Gaussian. Consider the following term:
\BE \label{Eqn:V2F_2}
\begin{split}
\log\left[m_{i \to a}^{t + 1}\right] &  \simeq \sum_{b\neq a} \log \left[ \hat{m}_{b\to i}^t(x_i)\right]-\frac{\mu\beta}{2} x_i^2 \\
&\approx \frac{1}{2}\left( \sum\limits_{b \ne a} {A_{bi}^2  H''_b(t) } -\beta\mu \right)x_i^2 +   \left( {\sum\limits_{b \ne a} {{A_{bi}} H'_b(t) }  - \sum\limits_{b \ne a} {A_{bi}^2} H''_b(t)  x_{i \to b}^t} \right){x_i},
\end{split}
\EE
where the second approximation comes from \eqref{Eqn:Taylor}. Comparing \eqref{Eqn:V2F_2} with the exponent of a Gaussian pdf, we find that its variance (which we denote by $v_{i\to a}^{t+1}/\beta$) and mean are respectively given by
\BE\label{Eqn:belief_var}
\frac{{v_{i \to a}^{t + 1}}}{\beta }  = \frac{1}{ - \sum\limits_{b \ne a} {A_{bi}^2 H''_b(t)}  +\beta\mu}
= \underbrace{\frac{1}{{ - \sum\limits_{b = 1}^m {A_{bi}^2 \cdot H''_b(t)} } +\beta\mu}}_{v_i^{t+1}/\beta}+O_p\left( \frac{1}{n} \right),
\EE
and
\BE \label{Eqn:V2F_3}
\begin{split}
x_{i \to a}^{t + 1}& = \frac{{v_i^{t + 1}}}{\beta } \cdot \left( {\sum\limits_{b \ne a} {{A_{bi}} \cdot H'_b(t)}  - \sum\limits_{b \ne a} {A_{bi}^2 \cdot H''_b(t) \cdot x_{i \to b}^t} } \right).
\end{split}
\EE
The approximation in \eqref{Eqn:belief_var} is due to our assumption $\mathbb{E}[A_{bi}^2]=1/m$. In \eqref{Eqn:V2F_3}, we have approximated $v_{i\to a}^{t+1}$ by $v_i^{t+1}$ and omit the $O_p(1/n)$ error term.
\subsection{From BP to AMP}
We assume that the message $x_{i \to a}^{t + 1}$ has the following structure \cite[Chapter 5.2.4]{ArianThesis}:
\[
x_{i \to a}^{t + 1} = x_i^{t + 1} + \delta x_{i \to a}^{t + 1} + O_p\left( \frac{1}{n}\right),
\]
where $ x_i^{t + 1}=O_p(1)$ and $\delta x_{i \to a}^{t + 1} \sim O_p\left( {1/\sqrt n } \right)$. From \eqref{Eqn:V2F_3}, we can identify $x_i^{t + 1}$ and $\delta x_{i \to a}^{t + 1}$ (which is the term that depends on the index $a$) to be the following
\BS\label{Eqn:x_i2a}
\begin{align}
x_{i \to a}^{t + 1} &= \underbrace {\frac{{v_i^{t + 1}}}{\beta } \cdot \left( {\sum\limits_{b = 1}^m {{A_{bi}}  \cdot H'_b(t)}  - \sum\limits_{b = 1}^m {A_{bi}^2 \cdot H''_b(t) \cdot x_{i \to b}^t} } \right)}_{x_i^{t + 1}}\\
&\underbrace { - \frac{{v_i^{t + 1}}}{\beta } \cdot {A_{ai}} \cdot H'_a(t)}_{\delta x_{i \to a}^{t + 1}} + \underbrace {\frac{{v_i^{t + 1}}}{\beta } \cdot A_{ai}^2 \cdot H''_a(t) \cdot x_{i \to a}^t}_{O_p\left( {1/n} \right)}.
\end{align}
\ES
We further simplify $x_i^{t+1}$ (i.e., the first term in the above equation) as follows
\BS \label{Eqn:belief_mean}
\begin{align}
x_i^{t + 1} &= \frac{{v_i^{t + 1}}}{\beta } \cdot \left[ {\sum\limits_{b = 1}^m {{A_{bi}} \cdot H'_b(t)}  - \sum\limits_{b = 1}^m {A_{bi}^2 \cdot H''_b(t) \cdot x_{i \to b}^t} } \right]\\
&= \frac{{v_i^{t + 1}}}{\beta } \cdot \left[ {\sum\limits_{b = 1}^m {{A_{bi}} \cdot H'_b(t)}  - \sum\limits_{b = 1}^m {A_{bi}^2 \cdot H''_b(t)\cdot x_i^t} } \right] + O_p\left( {\frac{1}{{n }}} \right).
\end{align}
\ES
The approximation error in the above is $O_p(1/n)$ since
\[
\begin{split}
\sum_{b=1}^m A_{bi}^2\cdot H''_b(t)\cdot \delta x_{i\to b}^t&=-\frac{v_i^{t}}{\beta}\sum_{b=1}^m A_{bi}^3\cdot H''_b(t)\cdot H'_b(t)=O_p\left(\frac{1}{n}\right),
\end{split}
\]
where we used $\delta x_{i\to b}^t=-v_i^{t+1}/\beta\cdot A_{bi} H'_b(t)$ in the previous equation. 
Ignoring the $O_p(1/n)$ term, the update in \eqref{Eqn:belief_mean} becomes
\BE\label{Eqn:x_i_t+1}
\begin{split}
x_i^{t + 1} &= \frac{{v_i^{t + 1}}}{\beta } \cdot \sum\limits_{b = 1}^m {{A_{bi}}  H'_b(t)}  + \frac{v_i^{t+1}}{\beta}  \left(- \sum_{b=1}^m A_{bi}^2 H_b''(t) \right) \cdot x_i^t\\
&= \frac{v_i^{t+1} }{ \beta}\cdot\left( -\sum_{b=1}^m A_{bi}^2H_b''(t) \right)\cdot \left( \frac{\sum_{b=1}^m A_{bi}H_b'(t)}{ - \sum_{b=1}^m A_{bi}^2 H_b''(t)  }    +  x_i^t \right).
\end{split}
\EE

We now return to the update of $p_a^{t+1}$ defined in \eqref{Eqn:s_a_def}:
\BE\label{Eqn:p_update}
\begin{split}
p_a^{t+1} &\Mydef \sum_{j=1}^n A_{aj} x_{j\to a}^{t+1} \\
&\overset{(a)}{\approx}  \sum_{j=1}^n A_{aj}\left( x_j^{t+1} -\frac{v_j^{t+1}}{\beta}\cdot A_{aj}\cdot H_a'(t) \right)\\
&=\Big(\sum_{j=1}^n A_{aj}x_j^{t+1}\Big) - \frac{\left(\sum_{j=1}A_{aj}^2 v_j^{t+1}\right)}{\beta}\cdot H_a'(t)\\
&\overset{(b)}{=}\Big(\sum\limits_{j = 1}^n {{A_{aj}}x_j^{t + 1}}\Big)  - \frac{{\tau _a^{t + 1}}}{\beta } \cdot H'_a(t)
\end{split}
\EE
where step (a) is due to \eqref{Eqn:x_i2a} and step (b) is from the definition in \eqref{Eqn:s_a_def}.
\subsection{Large $\beta$ Limit}\label{Sec:zero_T}
Putting \eqref{Eqn:s_a_def}, \eqref{Eqn:p_update}, \eqref{Eqn:belief_var}, \eqref{Eqn:x_i_t+1}, we obtain the following simplified BP update rules ($\forall a=1,\ldots,m$ and $\forall i=1,\ldots,n$):
\BS\label{Eqn:AMP_summary}
\begin{align}
\tau_a^t & =  \sum\limits_{j = 1}^n {A_{aj}^2v_j^{t}}, \label{Eqn:AMP_summary_a}\\
p_a^{t } &= \sum\limits_{j = 1}^n {{A_{aj}}x_j^{t }}  -\frac{\tau_a^t}{\beta}\cdot H' _a(t-1), \\
v_i^{t+1}  &=\frac{\beta}{{ - \sum\limits_{b = 1}^m {A_{bi}^2 \cdot H''_b(t)} } +\beta\mu},  \label{Eqn:AMP_summary_c}\\
x_i^{t + 1} &= \frac{v_i^{t+1}}{\beta}\cdot\left(-\sum_{b=1}^m A_{bi}^2 H_b''(t)\right)\cdot \left( x_i^t + \frac{\sum_{b=1}^m A_{bi} H_b'(t) }{-\sum_{b=1}^m A_{bi}^2 H_b''(t) }  \right),
\end{align}
\ES
where $H'_b(t)$ and $H''_b(t)$ are shorthands for $H'(p_b^t,y_b,\tau_b^t/\beta)$ and $H''(p_b^t,y_b,\tau_b^t/\beta)$ respectively. The algorithm summarized above is a special form the generalized AMP (GAMP) algorithm derived in \cite{Rangan11} (see Algorithm 1).

We further approximate the variance updates in \eqref{Eqn:AMP_summary_a} and \eqref{Eqn:AMP_summary_c} by averaging over $\bm{A}$ (based on some heuristic concentration arguments). After this approximation, $\tau_a^t$ becomes invariant to the index $a$ (denoted as $\tau^t$ below). We can then write \eqref{Eqn:AMP_summary} into the following vector form:
\BE\label{Eqn:AMP_summary_2}
\begin{split}
\tau^{t} &=\frac{1}{\delta}\frac{1}{ - \mr{div}_p (\hat{g}_{t-1}) } \cdot \lambda_{t-1},\\
\bm{p}^t &=\bm{A}\bm{x}^t  -\frac{1}{\delta} \frac{ \hat{g}\left( \bm{p}^{t-1},\bm{y},\tau^{t-1}/\beta \right) }{- \mr{div}_p (\hat{g}_{t-1}) }\cdot \lambda_{t-1} ,\\
\bm{x}^{t+1} &= \lambda_t\cdot\left(\bm{x}^t + \frac{\bm{A}^\UT  \hat{g}\left( \bm{p}^{t},\bm{y},\tau^{t}/\beta \right) }{- \mr{div}_p (\hat{g}_{t})}\right),
\end{split}
\EE
where we defined:
\[
\begin{split}
\hat{g}(p,y,\tau/\beta) &\Mydef \frac{H'(p,y,\tau/\beta)}{\beta},\\
\mr{div}_p(\hat{g}_t) & \Mydef \frac{1}{m}\sum_{a=1}^m \partial_p \hat{g}(p_a^t,y_a,\tau^t/\beta),\\
\lambda_t &\Mydef \frac{ -\mr{div}_p (\hat{g}_{t}) }{  -\mr{div}_p (\hat{g}_{t})+ \mu}.
\end{split}
\]

We next consider the zero-temperature limit, i.e., $\beta\to\infty$. From the definition of $H$ in \eqref{Eqn:def_score}, it can be verified that \cite{Rangan11}:
\[
\hat{g}(p,y,\tau/\beta) = \frac{\mathbb{E}[z,p,y,\tau/\beta] - p}{\tau},
\]
where $\mathbb{E}[z,p,y,\tau/\beta]$ denotes the posterior mean estimator of $z$ w.r.t. the distribution $p(z|p,y,\tau/\beta)\propto \exp\left[ -\beta (y-|z|)^2  -\beta \frac{1}{2\tau}(z-p)^2  \right]$. As $\beta\to\infty$, the posterior mean concentrates around the minimum of the posterior probability, i.e., $\mathbb{E}[z,p,y,\tau/\beta]\to \mr{prox}(p,y,\tau)$ where
\BE\label{Eqn:prox_lambda_tau}
 \mr{prox}(p,y,\tau)\Mydef \underset{z}{\mr{argmin}}\ (y-|z|)^2 +\frac{ (z - p)^2}{2\tau},
\EE
which has the following closed-form expression (for $\tau>0$):
\[
 \mr{prox}(p,y,\tau)=\frac{2\tau y+ |p|}{1+2\tau}\cdot\mr{sign}(p).
\]
Here, $\mr{sign}(0)$ can be arbitrarily defined to be $+1$ or $-1$.
The function $\hat{g}$ becomes:
\BE\label{Eqn:g_ghat}
\hat{g}(p,y,\tau)=\frac{ \mr{prox}(p,y,\tau)- p}{\tau}=\frac{2}{1+2\tau}\cdot \underbrace{\left( y\cdot\mr{sign}(p)-p \right)}_{g(p,y)}.
\EE
\subsection{Summary of AMP.A}
After some algebra, we can finally express \eqref{Eqn:AMP_summary_2} using $g$ (instead of $\hat{g}$, see \eqref{Eqn:g_ghat}) as the following:
\BE\label{Eqn:AMP_summary_3}
\begin{split}
\tau^{t} &=\frac{1}{\delta}\frac{\tau^{t-1} + \frac{1}{2}}{ -\mr{div}_p(g_{t-1}) }\cdot \lambda_{t-1},\\
\bm{p}^t &=\bm{A}\bm{x}^t  -\frac{1}{\delta} \frac{ {g}\left( \bm{p}^{t-1},\bm{y} \right) }{- \mr{div}_p(g_{t-1})} \cdot \lambda_{t-1},\\
\bm{x}^{t+1} &= \lambda_t\cdot\left(\bm{x}^t + \frac{\bm{A}^\UT  {g}\left( \bm{p}^{t},\bm{y} \right) }{- \mr{div}_p(g_{t})}\right),
\end{split}
\EE
where
\[
\lambda_t=\frac{-\mr{div}_p(g_{t-1})}{-\mr{div}_p(g_{t-1})+\mu \left(\tau_t +\frac{1}{2}\right)}
\]
There are a couple of points we want to emphasize:
\begin{itemize}
\item
When $\mu=0$, the update of $\bm{p}^t$ and $\bm{x}^{t+1}$ are independent of the parameter $\tau$. This is why we prefer to use $g(p,y)$ instead of $\hat{g}(p,y,\tau)$, see \eqref{Eqn:g_ghat}.
\item Calculating the divergence term $\mr{div}_p(g)$ is tricky due to the discontinuity of $g(p,y)$ at $p=0$. Unlike the complex-valued case, a simple empirical average does not work well. We postpone our discussions on this issue to a forthcoming paper \cite{Plan}.
\end{itemize}

\subsection{Heuristic derivations of the state evolution}\label{sec:HeuristicDerSE}

According to \eqref{Eqn:AMP_complex}, the complex-valued version of AMP.A proceeds as follows
\BS\label{Eqn:SE_heuristic_derive_0}
\BE\label{Eqn:SE_heuristic_derive_2}
\begin{split}
x_i^{t+1} &= -2\mr{div}_p(g_t)\cdot{x}^t_i + 2\underbrace{\sum_{a=1}^m \bar{A}_{ai} g(p_a^t,y_a)}_{T},
\end{split}
\EE
where 
\BE \label{Eqn:SE_heuristic_derive_0b}
\mr{div}_p(g_t) \Mydef \frac{1}{m}\sum_{a=1}^m  \frac{1}{2}\left( \frac{\partial g(p_a^t,y_a)}{\partial p_a^R} -\mr{i} \frac{\partial g(p_a^t,y_a)}{\partial p_a^I}\right).
\EE
\ES
Suppose that at each iteration the elements of $\bm{x}^t$ are distributed as
\BE\label{Eqn:SE_heuristic_derive_1}
x_i^t \overset{d}{=}\alpha_t x_{*,i} + \sigma_t h_i,\quad\forall i=1,\ldots,n,
\EE
where $x_{*,i}$ represents the $i$th entry of the true signal vector $\bm{x}_{\ast}$ and $h_i\sim\mathcal{CN}(0,1)$ is independent of $x_i^t$. Rigorous proof of the state evolution framework is based on the conditioning technique developed in \cite{Bayati&Montanari11,Rangan11,Javanmard2013}. Here, our goal is show the reader how to heuristically derive the state evolution (SE) recursion, namely, \textit{given $\alpha_t$ and $\sigma_t$, how to derive $\alpha_{t+1}$ and $\sigma_{t+1}$.} Following \cite{DoMaMo09,Bayati&Montanari11}, we make the following heuristic assumptions to derive the SE:
\begin{itemize}
\item[(i)] We ignore the Onsager correction term, i.e., we assume that $\bm{p}^t$ is generated as (cf. \eqref{Eqn:AMP_complex}):
\[
p_a^t = \sum_{j}^n A_{aj} x_j^t,\quad \forall a=1,\ldots,m.
\]
\item[(ii)] We assume that $\bm{x}^t$ is independent of $\bm{A}$.
\end{itemize}
We derive $\alpha_{t+1}$ and $\sigma_{t+1}$ separately in the following two subsections.

\subsubsection{Derivations of $\alpha_{t+1}$} \label{Sec:heuristic_alpha}
To derive $\alpha_{t+1}$, we will calculate the expectation of the term $T$ in \eqref{Eqn:SE_heuristic_derive_2} by treating $\bm{x}_{*}$ and $\bm{x}^t$ as constants. In other words, the expectations in this section are conditioned on $\bm{x}_{\ast}$ and $\bm{x}^t$. We now consider the expectation of a single entry in $T$:
\BE\label{Eqn:SE_heuristic_derive_2b}
\begin{split}
\mathbb{E}\bigg[ \bar{A}_{ai} g\left(p_a^t,y_a\right)  \bigg] &=\mathbb{E}\left[ \bar{A}_{ai} \cdot g\bigg(\sum_{j=1}^n A_{aj}x_j^t,\ \Big|\sum_{j=1}^n A_{aj}x_{\ast,j}\Big| +w_a\bigg)  \right] \\
&=\mathbb{E} \left[ \bar{A}_{ai}\sum_{j=1}^n A_{aj}x_j^t \right]  \cdot \mathbb{E}\left[\partial_p g(p_a^t,y_a)\right] + \mathbb{E} \left[ \bar{A}_{ai}\sum_{j=1}^n A_{aj}x_{*,j} \right]  \cdot \mathbb{E}\left[\partial_z g(p_a^t, y_a)\right]\\
&=\frac{1}{m} x_i^t  \cdot \mathbb{E}\left[\partial_p g(p_a^t, y_a)\right] + \frac{1}{m} x_{*,i}  \cdot  \mathbb{E}\left[\partial_z g(p_a^t, y_a)\right],
\end{split}
\EE
where the last step is from Stein's lemma (for complex Gaussian random variables) \cite[Lemma 2.3]{Campese2015}, and $\partial_p g(p_a^t, y_a)$ and $\partial_z g(p_a^t, |z_a|+w_a)$ are defined as
\[
\begin{split}
\partial_p g(p, y) &\Mydef  \frac{1}{2}\left( \frac{\partial }{\partial p_R}g(p,y) -\mr{i}    \frac{\partial }{\partial p_I}g(p, y)\right),\\
\partial_z g(p, |z|+w) &\Mydef  \frac{1}{2}\left( \frac{\partial }{\partial z_R}g(p, |z|+w) -\mr{i}  \frac{\partial }{\partial z_I}g(p, |z|+w)\right),
\end{split}
\]
where $p_R$ and $p_I$ are the real and imaginary parts of $p$ (i.e., $p=p_R+\mr{i} p_I$) and $z_R$ and $z_I$ are the real and imaginary parts of $z$. Similar expressions also appeared in the complex AMP algorithm (CAMP) developed for solving the LASSO problem \cite{Maleki2013}. In AMP.A, $g(p,y)=y\cdot p/|p|-p$ and based on the above definitions we can derive that
\[
\begin{split}
\partial_p g(p,y) &= \frac{y}{2|p|}-1,\\
\partial_z g(p,|z|+w) &= \frac{\bar{z}p}{2|z|\, |p|}=\frac{1}{2}e^{\mr{i}(\theta_p-\theta_z)},
\end{split}
\]
where $\theta_p$ and $\theta_z$ are the phases of $p$ and $z$ respectively. Note that in rigorous calculations we should be careful about the discontinuity of $g$. In this heuristic calculations we have ignored this issue. We will discuss this issue in our forthcoming paper \cite{Plan}. Substituting \eqref{Eqn:SE_heuristic_derive_2b} into \eqref{Eqn:SE_heuristic_derive_2} yields
\BE\label{Eqn:SE_heuristic_derive_4}
\begin{split}
\mathbb{E}[T]&= \frac{1}{m}\sum_{a=1}^m \mathbb{E}\left[\partial_p g(p_a^t, y_a)\right]  \cdot x_i^t+ \frac{1}{m}   \sum_{a=1}^m  \mathbb{E}\left[\partial_z g(p_a^t, y_a|)\right]\cdot x_{*,i}\\
&\approx \mr{div}_p(g_t)\cdot x_i^t +\mr{div}_z(g_t)\cdot  x_{*,i},
\end{split}
\EE
where in the last step we assumed that the empirical averages of the partial derivatives $\mr{div}_p(g_t)=\frac{1}{m}\sum_{a=1}^m \partial_p g(p_a^t, y_a)$ and $\mr{div}_z(g_t)= \frac{1}{m}\sum_{a=1}^m \partial_z g(p_a^t, |z_a|+w_a)$ converge to their expectations. Substituting \eqref{Eqn:SE_heuristic_derive_4} into 
\eqref{Eqn:SE_heuristic_derive_2} yields
\[
\begin{split}
\mathbb{E}[x_i^{t+1}] &=  -2\mr{div}_p(g_t)\cdot {x}^t_i +2 \mathbb{E}[T]\\
&=2\mr{div}_z(g_t)\cdot  x_{*,i}.
\end{split}
\]
From our assumption in \eqref{Eqn:SE_heuristic_derive_1}, we have $\mathbb{E}[x_i^{t+1}]=\alpha_{t+1}\cdot x_{*,i}$. This result combined with \eqref{Eqn:SE_heuristic_derive_4} leads to 
\BE\label{Eqn:SE_heuristic_derive_4b}
\alpha_{t+1}=2\mr{div}_z(g_t).
\EE
Finally, when $\bm{x}$ and $\bm{x}^t$ are independent of $\bm{A}$, and by central limit theorem we can assume that both $p_a^t=\sum_{i=1}^n A_{ai}x_i^t$ and $z_a=\sum_{i=1}^n A_{ai}x_{*,i}$ are Gaussian, and their joint distribution is specified by the relationship $p_a^t \overset{d}{=}\alpha_t z_a + \sigma_t b_a$ where $z_a\sim\mathcal{CN}(0,1/\delta)$ and $b_i\sim\mathcal{CN}(0,1/\delta)$ are independent. 

\subsubsection{Derivations of $\sigma^2_{t+1}$} \label{Sec:heuristic_alpha}

From \eqref{Eqn:SE_heuristic_derive_1}, $\sigma^2_{t+1}$ can be derived as
\BE\label{Eqn:SE_heuristic_derive_5}
\begin{split}
\sigma^2_{t+1} &=\mr{var}[x_i^{t+1}]=\mr{var}[-2\mr{div}_p(g_t)\cdot{x}^t_i +2T ]=4\cdot\mr{var}[T].
\end{split}
\EE
Further,
\BE\label{Eqn:SE_heuristic_derive_6}
\begin{split}
\mathbb{E}[|T|^2] &=\mathbb{E}\Bigg[ \bigg| \sum_{a=1}^m A_{ai}g(p_a^t,|z_a|)\bigg|^2 \Bigg] \\
 & =\sum_{a=1}^m \mathbb{E}\Big[|A_{ai}|^2\cdot  |g_a |^2 \Big] +  \sum_a\sum_{b\neq a} \mathbb{E}\left[ \bar{A}_{ia}\bar{g}_a A_{ib}g_b  \right]\\
&\overset{(a)}{\approx} \frac{1}{m} \sum_{a=1}^m\mathbb{E}\Big[  |g_a |^2 \Big] +  \sum_a\sum_{b\neq a} \mathbb{E}\left[ \bar{A}_{ia}\bar{g}_a  \right]\cdot \mathbb{E}[A_{ib}g_b] \\
&\approx \frac{1}{m} \sum_{a=1}^m\mathbb{E}\Big[  |g_a |^2 \Big] +  \frac{m(m-1)}{m^2}\cdot\left| \mathbb{E}[T]\right|^2\\
&\approx  \frac{1}{m} \sum_{a=1}^m\mathbb{E}\Big[  |g_a |^2 \Big] +\left| \mathbb{E}[T]\right|^2,
\end{split}
\EE
where $g_a$ and $g_b$ are shorthands for $g(p_a^t,y_a)$ and $g(p_b^t,y_b)$ respectively, and step (a) follows from the heuristic assumption that the correlation between $|A_{ai}|^2$ and $|g_a|^2$, and the correlation between $A_{ia}g_a$ and $A_{ib}g_b$ can be ignored. Hence, combining \eqref{Eqn:SE_heuristic_derive_5} and \eqref{Eqn:SE_heuristic_derive_6} we obtain
\[
\sigma^2_{t+1}=4\left(\mathbb{E}\left[|T|^2\right]-\left|\mathbb{E}\left[T\right]\right|^2\right)\approx \frac{4}{m} \sum_{a=1}^m\mathbb{E}\Big[  |g_a(p_a^t,y_a )^2 \Big],
\]
where as argued below \eqref{Eqn:SE_heuristic_derive_4b} the joint distribution of $p_a^t$ and $z_a$ are specified by $p_a^t \overset{d}{=}\alpha_t z_a + \sigma_t b_a$ where $z_a\sim\mathcal{CN}(0,1/\delta)$ and $b_a\sim\mathcal{CN}(0,1/\delta)$ are independent.

\section{Simplifications of SE maps}\label{Sec:SE_derivations}

\subsection{Auxiliary Results}\label{Sec:auxiliary}

Here we collect some auxiliary results that will be used in the simplification of the state evolution equation.  
\begin{lemma}\label{Lem:Aux_1}
The following identities hold for any $a\in\mathbb{R}$ and $b\in\mathbb{R}_+$:
\BS
\begin{align}
&\int_0^{2\pi}\int_0^{\infty} r\cos\theta\exp\left( -\frac{r^2-2ar\cos\theta}{b} \right)\mr{d}r\mr{d}\theta = 2a\sqrt{b}\sqrt{\pi}\int_0^{\frac{\pi}{2}}\cos^2\theta\exp\left( \frac{a^2\cos^2\theta}{b} \right)\mr{d}\theta,\label{Eqn:Aux_1_1} \\
&\int_0^{2\pi}\int_0^{\infty} r\sin\theta\exp\left( -\frac{r^2-2ar\cos\theta}{b} \right)\mr{d}r\mr{d}\theta = 0.\label{Eqn:Aux_1_2}
\end{align}
\ES
\end{lemma}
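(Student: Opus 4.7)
The plan is to attack the two identities separately, with the second being immediate from symmetry and the first requiring a short calculation. For \eqref{Eqn:Aux_1_2}, I would observe that the kernel $\exp\left(-\frac{r^2-2ar\cos\theta}{b}\right)$ is invariant under $\theta\mapsto -\theta$ (equivalently $\theta\mapsto 2\pi-\theta$), since only $\cos\theta$ appears in the exponent, while the factor $\sin\theta$ is odd under this symmetry. Performing the substitution over $\theta\in[0,2\pi]$ flips the sign of the integrand, so the integral equals its own negative and thus vanishes.

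For \eqref{Eqn:Aux_1_1}, the first step is to complete the square inside the exponent, $r^2-2ar\cos\theta = (r-a\cos\theta)^2 - a^2\cos^2\theta$, which pulls out the factor $\exp\left(\frac{a^2\cos^2\theta}{b}\right)$ that already appears on the right-hand side. With the substitution $u=r-a\cos\theta$, the inner $r$-integral splits into two pieces: $\int_{-a\cos\theta}^\infty u\, e^{-u^2/b}\,du = \tfrac{b}{2} e^{-a^2\cos^2\theta/b}$, and $a\cos\theta \cdot I(\theta)$, where $I(\theta)\Mydef \int_{-a\cos\theta}^\infty e^{-u^2/b}\,du$ is an incomplete Gaussian integral.

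After multiplying back by $\cos\theta$ and the exponential factor, the first piece contributes $\tfrac{b}{2}\int_0^{2\pi}\cos\theta\,d\theta = 0$. So the entire integral reduces to $a\int_0^{2\pi}\cos^2\theta\,\exp\!\left(\frac{a^2\cos^2\theta}{b}\right)I(\theta)\,d\theta$. The key trick is then to exploit the symmetry $\theta\mapsto\theta+\pi$: under this map, $\cos^2\theta$ and $\exp(\tfrac{a^2\cos^2\theta}{b})$ are invariant, while $I(\theta)+I(\theta+\pi) = \int_{-a\cos\theta}^\infty e^{-u^2/b}\,du + \int_{a\cos\theta}^\infty e^{-u^2/b}\,du = \sqrt{\pi b}$. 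Splitting $\int_0^{2\pi}$ as $\int_0^\pi + \int_\pi^{2\pi}$ and pairing matched $\theta$-values converts $I(\theta)$ into the constant $\sqrt{\pi b}$, collapsing the incomplete Gaussian into a complete one. The remaining integral is $\sqrt{\pi b}\int_0^\pi \cos^2\theta\,\exp(\tfrac{a^2\cos^2\theta}{b})\,d\theta$, which equals $2\sqrt{\pi b}\int_0^{\pi/2}\cos^2\theta\,\exp(\tfrac{a^2\cos^2\theta}{b})\,d\theta$ by the symmetry of the integrand about $\pi/2$, yielding exactly the right-hand side.

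No step looks genuinely hard; the only subtlety is recognizing that the incomplete Gaussian $I(\theta)$, which has no closed form on its own, only needs to be summed with its $\theta\mapsto\theta+\pi$ partner to produce $\sqrt{\pi b}$. I would therefore spend the bulk of the write-up making that symmetry pairing explicit, and treat the square-completion and the odd-integrand argument for \eqref{Eqn:Aux_1_2} as one-liners.
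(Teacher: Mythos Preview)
Your proof is correct and essentially identical to the paper's argument for \eqref{Eqn:Aux_1_1}: the paper also completes the square, evaluates the inner $r$-integral (writing it via the Gaussian CDF $\Phi$ rather than your incomplete Gaussian $I(\theta)$), kills the first piece by $\int_0^{2\pi}\cos\theta\,\mr{d}\theta=0$, pairs $\theta$ with $\theta+\pi$ using $\Phi(x)+\Phi(-x)=1$ (your $I(\theta)+I(\theta+\pi)=\sqrt{\pi b}$ is the same identity), and then folds $[0,\pi]$ to $[0,\pi/2]$. For \eqref{Eqn:Aux_1_2} your odd-in-$\theta$ symmetry argument is slightly cleaner than the paper, which instead repeats the full calculation and observes that the resulting integrand $\sin 2\theta\,\exp\bigl(\tfrac{a^2\cos^2\theta}{b}\bigr)$ integrates to zero over $[0,\pi]$.
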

\begin{proof}
We first consider \eqref{Eqn:Aux_1_1}:
\BE \label{Eqn:Aux_1_3}
\begin{split}
&\int_0^{2\pi}\int_0^{\infty}r\cos\theta\exp\left(-\frac{r^2-2a\cdot r\cos\theta}{b}\right)\mr{d}\theta\mr{d}r \\
&=\int_0^{2\pi}\cos\theta\exp\left(\frac{a^2\cos^2\theta}{b}\right)\mr{d}\theta\int_0^{\infty}r\exp\left(-\frac{(r-a\cos\theta)^2}{b}\right)\mr{d}r\\
&\overset{(a)}{=}\int_0^{2\pi}\cos\theta\exp\left(\frac{a^2\cos^2\theta}{b}\right)\left[\frac{1}{2}b\exp\left(\frac{-a^2\cos^2\theta}{b}\right)+a\cos\theta\sqrt{b\pi}\Phi\left(\frac{\sqrt{2}a\cos\theta}{\sqrt{b}}\right)\right]\mr{d}\theta\\
&=\int_0^{2\pi}\frac{1}{2}b\cos\theta\mr{d}\theta + \int_0^{2\pi}a\cos^2\theta\sqrt{b\pi}\exp\left(\frac{a^2\cos^2\theta}{b}\right)\Phi\left(\frac{\sqrt{2}a\cos\theta}{\sqrt{b}}\right)\mr{d}\theta\\
&\overset{(b)}{=}\int_0^{\pi}a\cos^2\theta\sqrt{b\pi}\exp\left(\frac{a^2\cos^2\theta}{b}\right)\Phi\left(\frac{\sqrt{2}a\cos\theta}{\sqrt{b}}\right)\mr{d}\theta+\int_{0}^{\pi}a\cos^2\hat{\theta}\sqrt{b\pi}\exp\left(\frac{a^2\cos^2\hat{\theta}}{b}\right)\Phi\left(-\frac{\sqrt{2}a\cos\hat{\theta}}{\sqrt{b}}\right)\mr{d}\hat{\theta}\\
&=\int_0^{\pi}a\cos^2\theta\sqrt{b\pi}\exp\left(\frac{a^2\cos^2\theta}{b}\right)\left[\Phi\left(\frac{\sqrt{2}a\cos\theta}{\sqrt{b}}\right)+\Phi\left(-\frac{\sqrt{2}a\cos\theta}{\sqrt{b}}\right)\right]\mr{d}\theta\\
&\overset{(c)}{=}a\sqrt{b\pi}\int_0^{\pi}\cos^2\theta\exp\left(\frac{a^2\cos^2\theta}{b}\right)\mr{d}\theta\\
&\overset{(d)}{=}2a\sqrt{b\pi}\int_0^{\frac{\pi}{2}}\cos^2\theta\exp\left(\frac{a^2\cos^2\theta}{b}\right)\mr{d}\theta,
\end{split}
\EE
where step (a) is from the integral ($\Phi(x)$ denotes the CDF of the standard Gaussian distribution):
\[
\int_{0}^{\infty}r\exp\left(-\frac{(r-m)^2}{v}\right)\mr{d}r =\frac{1}{2}b\exp\left(\frac{-m^2}{v}\right)+m\sqrt{v\pi}\Phi\left(\frac{\sqrt{2}m}{\sqrt{v}}\right),\quad \forall m\in\mathbb{R}, v\in\mathbb{R}_+,
\]
step (b) is from the variable change $\hat{\theta}=\theta-\pi$, step (c) is from the fact that $\Phi(x)+\Phi(-x)=1$, and step (d) is from
\[
\begin{split}
&\int_0^{\pi}\cos^2\theta\exp\left(\frac{a^2\cos^2\theta}{b}\right)\mr{d}\theta\\
&=\int_0^{\frac{\pi}{2}}\cos^2\theta\exp\left(\frac{a^2\cos^2\theta}{b}\right)\mr{d}\theta+\int_{\frac{\pi}{2}}^{\pi}\cos^2\theta\exp\left(\frac{a^2\cos^2\theta}{b}\right)\mr{d}\theta\\
&=\int_0^{\frac{\pi}{2}}\cos^2\theta\exp\left(\frac{a^2\cos^2\theta}{b}\right)\mr{d}\theta+\int_{\frac{\pi}{2}}^{0}\cos^2\hat{\theta}\exp\left(\frac{a^2\hat{\cos}^2\theta}{b}\right)(-\mr{d}\hat{\theta})\quad(\hat{\theta}=\pi-\theta)\\
&=2\int_0^{\frac{\pi}{2}}\cos^2\theta\exp\left(\frac{a^2\cos^2\theta}{b}\right)\mr{d}\theta.
\end{split}
\]

The identity in \eqref{Eqn:Aux_1_2} can be derived based on similar calculations:
\[
\begin{split}
\int_0^{2\pi}\int_0^{\infty}r\sin\theta\exp\left(-\frac{r^2-2b\cdot r\cos\theta}{b}\right)\mr{d}\theta\mr{d}r &=a\sqrt{b\pi}\int_0^{\pi}\frac{1}{2}\sin2\theta\exp\left(\frac{a^2\cos^2\theta}{b}\right)\mr{d}\theta\\
&=0.
\end{split}
\]
\end{proof}


\vspace{3pt}

\begin{lemma}
Let $\tilde{Z}\sim\mathcal{N}(0,1)$ be a standard Gaussian random variable. Then, for any $x\in\mathbb{R}$, the following identities hold:
\BE\label{Eqn:Aux_f}
\begin{split}
\mathbb{E}\left[|\tilde Z|\cdot \phi\left( x|\tilde Z| \right) \right] &=\frac{1}{\pi}\frac{1}{1+x^2},\\
\mathbb{E}\left[ \Phi\left( x|\tilde Z| \right) \right] &=\frac{1}{\pi}\mr{arctan}(x)+\frac{1}{2},\\
\mathbb{E}\left[\tilde{Z}^2\cdot \Phi\left( x|\tilde Z| \right) \right] &=\frac{1}{\pi}\mr{arctan}(x)+\frac{1}{2} + \frac{1}{\pi}\frac{x}{1+x^2},
\end{split}
\EE
where $\phi(\cdot)$ and $\Phi(\cdot)$ are, respectively, PDF and CDF functions of the standard Gaussian distribution.
\end{lemma}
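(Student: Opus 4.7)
The plan is to reduce all three identities to a single one-dimensional Gaussian integral plus differentiation under the integral sign in the parameter $x$. Since $|\tilde{Z}|$ has density $2\phi(z)$ on $[0,\infty)$, the first identity becomes
\[
\mathbb{E}\bigl[|\tilde Z|\,\phi(x|\tilde Z|)\bigr] \;=\; \frac{2}{2\pi}\int_0^{\infty} z\, e^{-(1+x^2)z^2/2}\,\mathrm{d}z,
\]
which is a one-line Gaussian moment evaluating to $\frac{1}{\pi(1+x^2)}$. This is the engine that drives everything else.

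For the second identity, I would differentiate under the integral sign in $x$:
\[
\frac{\mathrm{d}}{\mathrm{d}x}\mathbb{E}\bigl[\Phi(x|\tilde Z|)\bigr] \;=\; \mathbb{E}\bigl[|\tilde Z|\,\phi(x|\tilde Z|)\bigr] \;=\; \frac{1}{\pi(1+x^2)}
\]
by the first identity. Antidifferentiating in $x$ gives $\frac{1}{\pi}\arctan(x) + C$, and evaluating at $x=0$ (where $\Phi(0)=\tfrac{1}{2}$) pins $C=\tfrac{1}{2}$.

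For the third identity, the same trick applies: differentiation in $x$ yields $\mathbb{E}[|\tilde Z|^3\phi(x|\tilde Z|)]$, which by the density of $|\tilde Z|$ becomes $\frac{1}{\pi}\int_0^\infty z^3 e^{-(1+x^2)z^2/2}\,\mathrm{d}z = \frac{2}{\pi(1+x^2)^2}$. I would then invoke the standard antiderivative
\[
\int \frac{\mathrm{d}x}{(1+x^2)^2} \;=\; \frac{1}{2}\left(\arctan(x) + \frac{x}{1+x^2}\right) + C
\]
to recover the claimed closed form, and fix the constant using $\mathbb{E}[\tilde{Z}^2\Phi(0)] = \tfrac{1}{2}\mathbb{E}[\tilde Z^2] = \tfrac{1}{2}$ at $x=0$.

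The only subtle point is justifying differentiation under the integral sign, but this is routine: the integrands $|\tilde Z|\phi(x|\tilde Z|)$ and $|\tilde Z|^3\phi(x|\tilde Z|)$ are pointwise bounded (uniformly in $x\in\mathbb{R}$) by $(2\pi)^{-1/2}|\tilde Z|$ and $(2\pi)^{-1/2}|\tilde Z|^3$ respectively, both of which are integrable against the Gaussian density. Beyond that, the only non-mechanical calculation is remembering the antiderivative of $(1+x^2)^{-2}$, so I do not anticipate any real obstacle.
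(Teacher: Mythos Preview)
Your proof is correct. For the first two identities your argument coincides with the paper's: direct evaluation of the Gaussian integral for the first, then differentiation in $x$ and antidifferentiation for the second (the paper integrates from $-\infty$ rather than fixing the constant at $x=0$, but this is cosmetic).

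For the third identity you take a genuinely different route. You differentiate in $x$ again, compute $\mathbb{E}[|\tilde Z|^3\phi(x|\tilde Z|)]=\frac{2}{\pi(1+x^2)^2}$, and then invoke the antiderivative of $(1+x^2)^{-2}$. The paper instead integrates by parts in $z$ using $z\phi(z)=-\phi'(z)$, which reduces $\mathbb{E}[\tilde Z^2\Phi(x|\tilde Z|)]$ directly to a linear combination of the first two identities already established, with no new antiderivative needed. Your approach has the virtue of uniformity (the same trick three times) but requires remembering $\int(1+x^2)^{-2}\,\mathrm{d}x$; the paper's approach trades that for an integration by parts and immediately recycles the earlier results.
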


\begin{proof}
Consider the first identity:
\BE\label{Eqn:Aux_f1}
\begin{split}
\mathbb{E}\left[|\tilde Z|\cdot \phi\left( x|\tilde Z| \right) \right]  &=\int_{-\infty}^{\infty} |z|\phi(x\,|z|)\phi(z)\mr{d}z\\
& \overset{(a)}{=} 2\int_{0}^{\infty}z\phi(x\,z)\phi(z)\mr{d}z\\
& \overset{(b)}{=}\frac{1}{\pi}\int_{0}^{\infty}z \exp\left[ -(1+x^2)\frac{z^2}{2} \right] \mr{d}z\\
&=\frac{1}{\pi}\frac{1}{1+x^2},
\end{split}
\EE
where (a) is from the symmetry of $\phi$ and (b) from the definition $\phi(x)=1/\sqrt{2\pi}e^{-x^2/2}$. Further, 
\BE
\begin{split}
\frac{\mr{d}}{\mr{d}x}\mathbb{E}\left[\Phi\left( x|\tilde Z| \right) \right] &=\frac{\mr{d}}{\mr{d}x}\int_{-\infty}^{\infty} \Phi(x\,|z|)\phi(z)\mr{d}z=\frac{\mr{d}}{\mr{d}x}\int_{0}^{\infty} 2\Phi(x\,z)\phi(z)\mr{d}z \\
&=\int_{0}^{\infty} 2 \frac{\mr{d}}{\mr{d}x}\Phi(x\,z)\phi(z)\mr{d}z=\int_{0}^{\infty} 2 z\phi(x\,z)\phi(z)\mr{d}z\\
&=\frac{1}{\pi}\frac{1}{1+x^2},
\end{split}
\EE
where the last equality is from \eqref{Eqn:Aux_f1}. Hence,
\BE\label{Eqn:Aux_f2}
\mathbb{E}\left[\Phi\left( x|\tilde Z| \right) \right]   = \int_{-\infty}^x \frac{1}{\pi}\frac{1}{1+t^2}\mr{d}t\\
=\frac{1}{\pi}\mr{arctan}(x)+\frac{1}{2}.
\EE

Finally, the third identity in \eqref{Eqn:Aux_f} can be derived as follows:
\BE
\begin{split}
\mathbb{E}\left[\tilde{Z}^2\cdot \Phi\left( x|\tilde Z| \right) \right] &= \int_{-\infty}^{\infty}z^2\Phi(x\,|z|)\phi(z)\mr{d}z\\
&= \int_{0}^{\infty}z^2\Phi(xz)\phi(z)\mr{d}z\\
&\overset{(a)}{=}-2\int_0^{\infty}z \Phi(xz)\mr{d}\phi(z)\\
&=-2\left\{ z\Phi(xz)\phi(z)\big|_{0}^{\infty} - \int_0^{\infty}\phi(z)\left[ \Phi(xz)+xz\phi(xz)\mr{d}z \right] \right\}\\
&=2 \int_0^{\infty}\phi(z)\Phi(xz)\mr{d}z+x\cdot2\int_0^{\infty}z\phi(xz)\phi(z)\mr{d}z\\
&\overset{(b)}{=}\frac{1}{\pi}\mr{arctan}(x)+\frac{1}{2} + \frac{1}{\pi}\frac{x}{1+x^2},
\end{split}
\EE
where (a) is from the identity $\phi'(z)=z\phi(z)$ and (b) from our previously derived identities in \eqref{Eqn:Aux_f1} and \eqref{Eqn:Aux_f2}.
\end{proof}

\subsection{Complex-valued $\ampa$}\label{Sec:SE_derivations_complex}

From Definition \ref{Def:SE_map_complex}, the SE equations are given by
\BE\label{Eqn:psi2_complex_repeat}
\begin{split}
\psi_1(\alpha,\sigma^2) &=2\cdot \mathbb{E}\left[ \partial_z g(p,Y) \right]\\
&=\mathbb{E}\left[ \frac{\bar{Z} P}{|Z|\, |P|} \right],\\
\psi_2(\alpha,\sigma^2;\delta,\sigma^2_w) &=4\cdot\mathbb{E}\left[ \left|g(P,Y)\right|^2 \right]\\
&=4\cdot\mathbb{E}\left[ \left( |Z| -|P|+W \right)^2\right]\\
&=\underbrace{4\cdot\mathbb{E}\left[ \left( |Z| -|P| \right)^2\right] }_{\psi_2(\alpha,\sigma^2;\delta)}+4\sigma^2_w.
\end{split}
\EE
In the above, $Z\sim\mathcal{CN}(0,1/\delta)$, $P=\alpha Z + \sigma B$ where $B\sim\mathcal{CN}(0,1/\delta)$ is independent of $Z$, and $Y=|Z|+W$ where $W\sim\mathcal{CN}(0,\sigma^2_w)$ independent of both $Z$ and $B$. We first consider a special case $\sigma^2=0$ ($\alpha\neq0$). When $\sigma=0$, we have $P=\alpha Z+\sigma B=\alpha Z$, and therefore
\[
\begin{split}
\psi_1(\alpha,0) &=\mathbb{E}\left[ \frac{\alpha \bar{Z}Z}{\alpha |Z|\, |Z|} \right]=1,\\
\psi_2(\alpha,0;\delta,\sigma^2_w) &=4\cdot\mathbb{E}\left[ \left( |Z| -|\alpha Z| \right)^2\right]+4\sigma^2_w =\frac{4}{\delta}\left(1-|\alpha| \right)^2+4\sigma^2_w.
\end{split}
\]
We next turn to the general case where $\sigma^2\neq 0$. Later, we will see that our formulas derived for positive $\sigma^2$ covers the special case $\sigma^2=0$ as well. Lemma~\ref{Lem:psi_phase_invariance} can simplify our derivations. 
\begin{lemma}\label{Lem:psi_phase_invariance}
$\psi_1$ and $\psi_2$ in \eqref{Eqn:psi2_complex_repeat} have the following properties (for any $\alpha\in\mathbb{C}\backslash0$ and $\sigma^2\ge0$):
\begin{enumerate}
\item [(i)] $\psi_1(\alpha,\sigma^2)=\psi_1(|\alpha|,\sigma^2)\cdot e^{\mr{i}\theta_{\alpha}}$, with $e^{\mr{i}\theta_{\alpha}}$ being the phase of $\alpha$;
\item[(ii)] $\psi_2(\alpha,\sigma^2;\delta)=\psi_2(|\alpha|,\sigma^2;\delta)$.
\end{enumerate}
\end{lemma}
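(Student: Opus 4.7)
The plan is to exploit the circular symmetry of $B$ and its independence from $Z$ in order to factor the phase of $\alpha$ out of the auxiliary variable $P$. Writing $\alpha=|\alpha|e^{\mr{i}\theta_\alpha}$, I would first re-express
\[
P = \alpha Z + \sigma B = e^{\mr{i}\theta_\alpha}\bigl(|\alpha|\, Z + \sigma\, e^{-\mr{i}\theta_\alpha} B\bigr).
\]
Since $B\sim\mathcal{CN}(0,1/\delta)$ is circularly symmetric, the rotated variable $B'\Mydef e^{-\mr{i}\theta_\alpha}B$ has the same distribution as $B$ and remains independent of $Z$ (and of $W$, in the noisy version). Letting $P'\Mydef |\alpha|Z + \sigma B'$, it follows that the joint law of $(Z,P,W)$ coincides with that of $(Z,e^{\mr{i}\theta_\alpha}P',W)$, and that $P'$ is exactly the random variable entering the definitions of $\psi_1(|\alpha|,\sigma^2)$ and $\psi_2(|\alpha|,\sigma^2;\delta)$.

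Given this distributional identity, both claims reduce to one-line computations. For part (i), I plug the representation into the definition of $\psi_1$ and pull the deterministic scalar $e^{\mr{i}\theta_\alpha}$ out of the expectation:
\[
\psi_1(\alpha,\sigma^2) = \mathbb{E}\!\left[\frac{\bar{Z}\,P}{|Z|\,|P|}\right] = \mathbb{E}\!\left[\frac{\bar{Z}\,e^{\mr{i}\theta_\alpha} P'}{|Z|\,|P'|}\right] = e^{\mr{i}\theta_\alpha}\,\mathbb{E}\!\left[\frac{\bar{Z}\,P'}{|Z|\,|P'|}\right] = e^{\mr{i}\theta_\alpha}\,\psi_1(|\alpha|,\sigma^2).
\]
For part (ii), the modulus eats the phase, $|e^{\mr{i}\theta_\alpha}P'|=|P'|$, and therefore
\[
\psi_2(\alpha,\sigma^2;\delta) = 4\,\mathbb{E}\!\left[(|Z|-|P|)^2\right] = 4\,\mathbb{E}\!\left[(|Z|-|P'|)^2\right] = \psi_2(|\alpha|,\sigma^2;\delta).
\]

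There is no serious obstacle here; the only point that deserves a moment of care is checking that the substitution $B\mapsto e^{-\mr{i}\theta_\alpha}B$ genuinely preserves the joint distribution of the driving variables, which follows at once from the rotational invariance of $\mathcal{CN}(0,1/\delta)$ together with the independence of $B$ from both $Z$ and $W$. The degenerate case $\sigma^2=0$ (where $B$ drops out entirely) is already handled by direct computation in the text immediately preceding the lemma, so the argument above covers every case in the statement.
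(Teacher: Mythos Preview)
Your proof is correct and follows essentially the same approach as the paper: both exploit the rotational invariance of the circularly symmetric Gaussian to absorb the phase $e^{\mr{i}\theta_\alpha}$, the only cosmetic difference being that the paper rotates $P$ directly (setting $\tilde{P}=e^{-\mr{i}\theta_\alpha}P$ and noting $\tilde{P}\mid Z\sim\mathcal{CN}(|\alpha|Z,\sigma^2/\delta)$) while you rotate $B$ first and then rebuild $P'$.
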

\begin{proof}
Note that for $\psi_1$ and $\psi_2$ defined in \eqref{Eqn:psi2_complex_repeat}, we have $P|Z\sim\mathcal{CN}(\alpha Z,\sigma^2/\delta)$. Consider the random variable $\tilde{P}\Mydef P\cdot e^{-\mr{i}\theta_{\alpha}}$. Based on the rotational invariance of circularly-symmetric Gaussian, we have $\tilde{P}|Z\sim\mathcal{CN}(|\alpha|Z;\sigma^2/\delta)$.
Hence,
\[
\begin{split}
\psi_1(\alpha,\sigma^2)&=\mathbb{E}\left[ \frac{\bar{Z} P}{|Z|\, |P|} \right] =e^{\mr{i}\theta_{\alpha}}\cdot \mathbb{E}\left[ \frac{\bar{Z} \tilde{P}}{|Z|\, |\tilde{P}|} \right]=e^{\mr{i}\theta_{\alpha}}\cdot\psi_1(|\alpha|,\sigma^2).
\end{split} 
\]
The proof of $\psi_2(\alpha,\sigma^2;\delta)=\psi_2(|\alpha|,\sigma^2;\delta)$ follows from a similar argument: the joint distribution of $|Z|$ and $|P|$ does not depend on $\theta_{\alpha}$, and thus $\psi_2(\alpha,\sigma^2)=4\mathbb{E}\left[ \left( |Z| -|P| \right)^2\right]$ does not depend on $\theta_{\alpha}$.
\end{proof}

Note that Lemma~\ref{Lem:psi_phase_invariance} also holds for $\alpha=0$ if we define $\angle 0=0$. 
\begin{remark}
In the following, we will derive $\psi_1$ and $\psi_2$ for the case where \textit{$\alpha$ is real and nonnegative}. The results for complex-valued $\alpha$ can be easily derived from those for nonnegative $\alpha$, based on Lemma~\ref{Lem:psi_phase_invariance}.
\end{remark}

We can also write $\psi_1$ as
\[
\psi_1(\alpha,\sigma^2)=\mathbb{E}\left[ \frac{\bar{Z} P}{|Z|\, |P|} \right]=\mathbb{E}[e^{\mr{i}(\theta_p-\theta_z)}].
\]
Note that $\theta_p-\theta_z$ is the phase of an auxiliary variable $\hat{P}\Mydef e^{-\mr{i}\theta_z}P = \alpha |Z| + \sigma e^{-\mr{i}\theta_z}B$. Further, from the rotational invariance, conditioned on $|Z|$, $\hat{P}$ is distributed as $\hat{P}\sim\mathcal{CN}(\alpha |Z|,\sigma^2/\delta)$. Hence, the expectation of its phase can be calculated as 
\BE\label{Eqn:E_theta_PZ}
\begin{split}
\mathbb{E}\left[e^{\mr{i}(\theta_p-\theta_z)}\big|\, |Z|\right]=&\int_0^{2\pi}\int_0^{\infty}e^{\mr{i}\theta}\cdot\frac{1}{\pi\sigma^2/\delta}\exp\left(-\frac{\left| re^{\mr{i}\theta}-\alpha |Z| \right|^2}{\sigma^2/\delta}\right)\cdot r\mr{d}r\mr{d}\theta\\
=& \frac{1}{\pi\sigma^2/\delta}\exp\left(-\frac{\alpha^2|Z|^2}{\sigma^2/\delta}\right) \cdot\int_{0}^{2\pi}\int_0^{\infty}re^{\mr{i}{\theta}}\cdot\frac{1}{\pi\sigma^2/\delta}\exp\Bigg(-\frac{r^2-2\alpha |Z|\cos\theta r}{\sigma^2/\delta}\Bigg)\mr{d}r\mr{d}{\theta}\\ 
=& \frac{1}{\pi\sigma^2/\delta}\exp\left(-\frac{\alpha^2|Z|^2}{\sigma^2/\delta}\right) \cdot\int_{0}^{2\pi}\int_0^{\infty}r\cos\theta\cdot\frac{1}{\pi\sigma^2/\delta}\exp\Bigg(-\frac{r^2-2\alpha |Z|\cos\theta r}{\sigma^2/\delta}\Bigg)\mr{d}r\mr{d}{\theta}\\ 
&+\mr{i}\frac{1}{\pi\sigma^2/\delta}\exp\left(-\frac{\alpha^2|Z|^2}{\sigma^2/\delta}\right) \cdot\int_{0}^{2\pi}\int_0^{\infty}r\sin\theta\cdot\frac{1}{\pi\sigma^2/\delta}\exp\Bigg(-\frac{r^2-2\alpha |Z|\cos\theta r}{\sigma^2/\delta}\Bigg)\mr{d}r\mr{d}{\theta}\\
=&2\int_0^{\frac{\pi}{2}} \frac{\alpha |Z|}{\sqrt{\pi}\sqrt{\sigma^2/\delta}}\cos^2\theta\exp\left( -\frac{\alpha^2|Z|^2\sin^2\theta}{\sigma^2/\delta} \right)\mr{d}\theta,
\end{split}
\EE
where the last step follow the following two identities together with some straightforward manipulations:
\BS\label{Eqn:identities_E_theta}
\begin{align}
&\int_0^{2\pi}\int_0^{\infty}r\cos\theta\exp\left( -\frac{r^2-2\alpha|Z|\cos\theta r}{\sigma^2/\delta} \right)\mr{d}r\mr{d}\theta= \frac{2\alpha\sigma\sqrt{\pi}}{\sqrt{\delta}}\int_0^{\frac{\pi}{2}}\cos^2\theta\exp\left(\frac{\alpha^2|Z|^2\cos^2\theta}{\sigma^2/\delta}\right)\mr{d}\theta,\\
&\int_0^{2\pi}\int_0^{\infty}r\sin\theta\exp\left( -\frac{r^2-2\alpha|Z|\cos\theta r}{\sigma^2/\delta} \right)\mr{d}r\mr{d}\theta=0.
\end{align}
\ES
The above identities are proved in Lemma~\ref{Lem:Aux_1} in Appendix~\ref{Sec:auxiliary}.
Using \eqref{Eqn:E_theta_PZ} and noting that $Z\sim\mathcal{CN}(0,1/\delta)$, we further average our result over $|Z|$:
\BE\label{Eqn:E_Z2_theta}
\begin{split}
\mathbb{E}\left[e^{\mr{i}(\theta_p-\theta_z)} \right] &=\mathbb{E}\left\{ 2\int_0^{\frac{\pi}{2}} \frac{\alpha |Z|}{\sqrt{\pi}\sqrt{\sigma^2/\delta}}\cos^2\theta\exp\left( -\frac{\alpha^2|Z|^2\sin^2\theta}{\sigma^2/\delta} \right)\mr{d}\theta\right\}\\
&\overset{(a)}{=} \int_0^{\infty}2\delta r\exp\left(-\delta r^2\right)\cdot \left(2\int_0^{\frac{\pi}{2}} \frac{\alpha r}{\sqrt{\pi}\sqrt{\sigma^2/\delta}}\cos^2\theta\exp\left( -\frac{\alpha^2r^2\sin^2\theta}{\sigma^2/\delta} \right)\mr{d}\theta\right)\mr{d}r\\
&=\frac{4\alpha\delta^{3/2}}{\sqrt{\pi}\sigma}\int_0^{\frac{\pi}{2}}\cos^2\theta\mr{d}\theta\int_0^{\infty}r^2\exp\left(-\delta\left(1+\frac{\alpha^2\sin^2\theta}{\sigma^2}\right)r^2\right)\mr{d}r\\
&\overset{(b)}{=}\frac{\alpha}{\sigma}\int_0^{\frac{\pi}{2}}\cos^2\theta\left(1+\frac{\alpha^2\sin^2\theta}{\sigma^2}\right)^{-\frac{3}{2}}\mr{d}\theta\\
&\overset{(c)}{=}\frac{\alpha}{\sigma}\int_0^{\frac{\pi}{2}}\frac{\sin^2\theta}{\left( 1+\frac{\alpha^2}{\sigma^2}\sin^2\theta \right)^{\frac{1}{2}}}\mr{d}\theta\\
&=\int_0^{\frac{\pi}{2}} \frac{ \alpha \,\sin^2\theta }{(\alpha^2\sin^2\theta +\sigma^2)^{\frac{1}{2}}} \mr{d}\theta,
\end{split}
\EE
where step (a) follows since the density of $|Z|$ is $f_{|Z|}(r)=\int_0^{2\pi}\delta/\pi\exp(-\delta r^2)r\mr{d}\theta=2\delta r\exp(-\delta r^2)$, and step (b) follows from the identity $\int_0^{\infty} r^2\exp(-ar^2)\mr{d}r=\sqrt{\pi}/4\cdot a^{-3/2}$, and (c) is derived in \eqref{Lem:Aux_2_c}. \\

We next derive $\psi_2(\alpha,\sigma^2;\delta)$. From \eqref{Eqn:psi2_complex_repeat}, we have
\[
\begin{split}
\psi_2(\alpha,\sigma^2;\delta)&=4\mathbb{E}\left[\left(|Z|-|P|\right)^2\right]\\
&=4\left(\frac{1+\alpha^2+\sigma^2}{\delta}-2\cdot\mathbb{E}\left\{|ZP|\right\}\right),
\end{split}
\]
where the last step is from $Z\sim\mathcal{CN}(0,1/\delta)$ and $P\sim\mathcal{CN}(0,(\alpha^2+\sigma^2)/\delta)$. We next calculate $\mathbb{E}[|ZP|]$. Again, conditioned on $|Z|$, $P$ is distributed as ${P}\sim\mathcal{CN}(\alpha|Z|,\sigma^2/\delta)$. We first calculate $\mathbb{E}[|P|\big|\, |Z|]$:
\BE\label{Eqn:E_P_abs}
\begin{split}
\mathbb{E}\left[|P|\big|\, |Z|\right]&=\int_{\mathbb{C}}|P| \frac{1}{\pi \sigma^2/\delta}\exp\left( -\frac{\big|P - \alpha |Z|\big|^2}{\sigma^2/\delta} \right)\mr{d}P\\
&= \int_0^{2\pi}\int_0^{\infty} r\frac{1}{\pi\sigma^2/\delta}\exp\left(-\frac{|re^{\mr{i}\theta}-\alpha|Z||^2}{\sigma^2/\delta}\right)\cdot r\mr{d}r\mr{d}\theta\\
&=\frac{1}{\pi\sigma^2/\delta}\int_0^{2\pi}\exp\left( -\frac{\alpha^2|Z|^2\sin^2\theta}{\sigma^2/\delta} \right)\mr{d}\theta\int_0^{\infty}r^2\exp\left(-\frac{(r-\alpha|Z|\cos\theta)^2}{\sigma^2/\delta}\right)\mr{d}r\\
&= \frac{2}{\sqrt{\pi\sigma^2/\delta}}\int_0^{\frac{\pi}{2}}\left(\alpha^2|Z|^2\cos^2\theta+\frac{\sigma^2}{2\delta}\right)\exp\left( -\frac{\alpha^2|Z|^2\sin^2\theta}{\sigma^2/\delta} \right)\mr{d}\theta  ,
\end{split}
\EE
where in the last step we used the following indentity
\[
\int_{0}^{\infty}r^2\exp\left(-\frac{(r-m)^2}{v}\right)\mr{d}r =\frac{mv}{2}\exp\left(-\frac{m^2}{v}\right)+\sqrt{v\pi}\left(m^2+\frac{v}{2}\right)\Phi\left(\sqrt{\frac{2}{v}}\cdot m\right),\quad \forall m\in\mathbb{R}, v\in\mathbb{R}_+
\]
and some manipulations similar to those in \eqref{Eqn:Aux_1_3}. Following the same procedure as that in \eqref{Eqn:E_Z2_theta}, we further calculate $\mathbb{E}[|ZP|]$ as:
\BE\label{Eqn:E_ZP}
\begin{split}
\mathbb{E}[|ZP|]&=\int_0^{\infty}r\cdot 2r \delta \exp\left(-\delta r^2\right)\cdot \left( \frac{2}{\sqrt{\pi\sigma^2/\delta}}\int_0^{\frac{\pi}{2}}\left(\alpha^2r^2\cos^2\theta+\frac{\sigma^2}{2\delta}\right)\exp\left( -\frac{\alpha^2r^2\sin^2\theta}{\sigma^2/\delta} \right)\mr{d}\theta\right)\mr{d}r\\
&=\int_0^{\frac{\pi}{2}}\int_0^{\infty}\frac{4\delta^{3/2}}{\sqrt{\pi}\sigma}\left(\alpha^2\cos^2\theta\cdot r^4 +\frac{\sigma^2}{2\delta}\cdot r^2\right)\exp\left(- \delta\left(1+\frac{\alpha^2\sin^2\theta}{\sigma^2}\right)r^2\right)\mr{d}r\mr{d}\theta\\
&=\frac{3\alpha^2}{2\sigma\delta}\int_0^{\frac{\pi}{2}}\cos^2\theta\left(1+\frac{\alpha^2}{\sigma^2}\sin^2\theta\right)^{-\frac{5}{2}}\mr{d}\theta+\frac{\sigma}{2\delta}\int_0^{\frac{\pi}{2}}\left(1+\frac{\alpha^2}{\sigma^2}\sin^2\theta\right)^{-\frac{3}{2}}\mr{d}\theta,
\end{split}
\EE
where in the last step we used the following identities: $\int_0^{\infty} r^4\exp(-ar^2)\mr{d}r=3\sqrt{\pi}/8\cdot a^{-5/2}$ and $\int_0^{\infty} r^2\exp(-ar^2)\mr{d}r=\sqrt{\pi}/4\cdot a^{-3/2}$. Finally, using \eqref{Eqn:E_ZP} we have
\[
\begin{split}
\psi_2(\alpha,\sigma^2;\delta)&=4\left(\frac{1+\alpha^2+\sigma^2}{\delta}-2\cdot\mathbb{E}\left\{|Z||P|\right\}\right)\\
&\overset{(a)}{=}4\left\{\frac{1+\alpha^2+\sigma^2}{\delta}-2\left[\frac{3\alpha^2}{2\sigma\delta}\int_0^{\frac{\pi}{2}}\cos^2\theta\left(1+\frac{\alpha^2}{\sigma^2}\sin^2\theta\right)^{-\frac{5}{2}}\mr{d}\theta+\frac{\sigma}{2\delta}\int_0^{\frac{\pi}{2}}\left(1+\frac{\alpha^2}{\sigma^2}\sin^2\theta\right)^{-\frac{3}{2}}\mr{d}\theta\right]\right\}\\
&=\frac{4}{\delta}\left\{1+\alpha^2+\sigma^2-\frac{\sigma}{2}\left[\frac{3\alpha^2}{\sigma^2}\int_0^{\frac{\pi}{2}}\cos^2\theta\left(1+\frac{\alpha^2}{\sigma^2}\sin^2\theta\right)^{-\frac{5}{2}}\mr{d}\theta+\int_0^{\frac{\pi}{2}}\left(1+\frac{\alpha^2}{\sigma^2}\sin^2\theta\right)^{-\frac{3}{2}}\mr{d}\theta\right]\right\}\\
&\overset{(b)}{=}\frac{4}{\delta}\Bigg(1+\alpha^2+\sigma^2-\sigma\int_0^{\frac{\pi}{2}}\frac{1+2\frac{\alpha^2}{\sigma^2}\sin^2\theta}{\left(1+\frac{\alpha^2}{\sigma^2}\sin^2\theta\right)^{\frac{1}{2}}}\mr{d}\theta\Bigg)\\
&=\frac{4}{\delta}\Bigg(1+\alpha^2+\sigma^2-\int_0^{\frac{\pi}{2}}\frac{2\alpha^2\sin^2\theta+\sigma^2}{\left(\alpha^2\sin^2\theta+\sigma^2\right)^{\frac{1}{2}}}\mr{d}\theta\Bigg),
\end{split}
\]
where (a) is from \eqref{Eqn:E_ZP}, and the derivations of step (b) is more involved and are given in Lemma~\ref{Lem:Aux_2}.

\subsection{Real-valued $\ampa$}\label{Sec:SE_derivations_real}
For the real-valued case, the SE maps are given by
\BE\label{Eqn:SE_maps_real_original}
\begin{split}
\psi_1(\alpha,\sigma^2) &=  \mathbb{E}[\partial_z  g(P,Y)],\\
\psi_2(\alpha,\sigma^2;\delta,\sigma^2_w) &=\mathbb{E}\left[ g^2(P,Y) \right],
\end{split}
\EE
where$Z\sim\mathcal{N}(0,1/\delta)$, $P=\alpha Z + \sigma B$ where $B\sim\mathcal{N}(0,1/\delta)$ is independent of $Z$, and $Y=|Z|+W$ where $W\sim\mathcal{N}(0,\sigma^2_w)$ independent of both $Z$ and $B$. Substituting $g(p,y)=y\cdot \mr{sign}(p)-p$ into \eqref{Eqn:SE_maps_real_original} yields
\BE\label{Eqn:SE_maps_real_original2}
\begin{split}
\psi_1(\alpha,\sigma^2) &= \mathbb{E}\left[\partial_z |Z|\cdot \mr{sign}(P)\right]\\
&=\mathbb{E}\left[\mr{sign}(ZP)\right],\\
\psi_2(\alpha,\sigma^2) &=\mathbb{E}\left[ \left( |Z| - |P| +W\right)^2 \right]\\
&= \underbrace{\mathbb{E}\left[ \left( |Z| - |P| \right)^2 \right] }_{\psi_2(\alpha,\sigma^2)}+\sigma^2_w.
\end{split}
\EE
Further $\mathbb{E}\left[ \left( |Z| - |P| \right)^2 \right] =\frac{1}{\delta}(\alpha^2+\sigma^2+1)-2\mathbb{E}[|ZP|]$. It remains to derive the following terms: $\mathbb{E}\left[\mr{sign}(ZP) \right]$ and $\mathbb{E}[|ZP|]$. We first consider $\mathbb{E}\left[\mr{sign}(ZP) \right]$. Similar to the derivations in Section \eqref{Sec:SE_derivations_complex}, we will first calculate the expectation conditioned on $Z$. Note that conditioned on $Z$, we have $P|Z\sim\mathcal{N}(\alpha Z,\sigma^2/\delta)$ and 
\BE\label{Eqn:S}
S\Mydef ZP|Z\sim\mathcal{N}(\alpha Z^2,\sigma^2 Z^2/\delta).
\EE
We then have
\[
\begin{split}
\mathbb{E}\left[\mr{sign}(ZP)\big| Z\right] &=2\mr{Pr}(S>0) -1\\
 &=2\Phi\left( \frac{\alpha }{\sigma}|Z|\sqrt{\delta} \right) -1\\
&=2\Phi\left( \frac{\alpha }{\sigma}|\tilde{Z}| \right) -1,
\end{split}
\]
where $\Phi(\cdot)$ denotes the CDF function of a standard Gaussian random variable and $\tilde{Z}\Mydef Z\cdot\sqrt{\delta}\sim\mathcal{N}(0,1)$. We further average $\mathbb{E}_{\cdot |Z}[\mr{sign}(ZP)]$ over $Z$:
\BE\label{eq:expectedZP}
\begin{split}
\mathbb{E}\left[ \mr{sign}(ZP) \right]&=\mathbb{E}\left[ \mathbb{E}_{\cdot |Z}[\mr{sign}(ZP)] \right]\\
&=\mathbb{E}\left[2\Phi\left( \frac{\alpha }{\sigma}|\tilde{Z}| \right) \right]-1\\
&=\frac{2}{\pi}\mr{arctan}\left( \frac{\alpha}{\sigma} \right),
\end{split}
\EE
where the last step is due to the identity derived in \eqref{Eqn:Aux_f}. We next derive $\mathbb{E}[|ZP|]$. Conditioned on $Z$, $|ZP|$ is the magnitude of a Gaussian random variable (see \eqref{Eqn:S}), and its mean is given by \cite[(3)]{Leone1961}
\[
\begin{split}
\mathbb{E}\left[|ZP|\big| Z\right] &= 2\frac{\sigma|Z|}{\sqrt{\delta}}\cdot\phi\left( \frac{\alpha Z^2}{|Z|\sigma /\sqrt{\delta}} \right) +\alpha Z^2\left( 1-2\Phi\left( -\frac{\alpha Z^2}{|Z|\sigma /\sqrt{\delta}} \right) \right)\\
&= 2\frac{\sigma|Z|}{\sqrt{\delta}}\cdot\phi\left( \frac{\alpha Z^2}{|Z|\sigma /\sqrt{\delta}} \right) +\alpha Z^2\left( 2\Phi\left( \frac{\alpha Z^2}{|Z|\sigma /\sqrt{\delta}} \right) -1\right)\\
&=\frac{1}{\delta}\cdot\left[2\sigma\cdot|\tilde{Z}|\phi\left( \frac{\alpha |\tilde{Z}|}{\sigma} \right) +\alpha \cdot\tilde{Z}^2\left( 2\Phi\left( \frac{\alpha |\tilde{Z}|}{\sigma } \right) -1\right)\right].
\end{split}
\]
Again, in the last step we defined $\tilde{Z}\Mydef\sqrt{\delta}Z$. Averaging the above equality over $|\tilde{Z}|$ yields
\BE\label{Eqn:ZP_real}
\begin{split}
\mathbb{E}[|ZP|]&=\frac{1}{\delta}\cdot\mathbb{E}\left[2\sigma\cdot|\tilde{Z}|\phi\left( \frac{\alpha |\tilde{Z}|}{\sigma} \right) +\alpha \cdot\tilde{Z}^2\left( 2\Phi\left( \frac{\alpha |\tilde{Z}|}{\sigma } \right) -1\right)\right]\\
&\overset{(a)}{=} \frac{1}{\delta}\left\{ \frac{1}{\pi}\frac{2\sigma}{1+\frac{\alpha^2}{\sigma^2}}+2\alpha\cdot\left[ \frac{1}{\pi}\mr{arctan}\left(\frac{\alpha}{\sigma}\right)+\frac{1}{2} + \frac{1}{\pi}\frac{\alpha/\sigma}{1+\frac{\alpha^2}{\sigma^2}} \right] -\alpha \right\}\\
&=\frac{1}{\delta}\left\{ \frac{1}{\pi}\left(\frac{2\sigma^3+2\alpha^2\sigma}{\alpha^2+\sigma^2}\right)+\frac{2\alpha}{\pi}\mr{arctan}\left(\frac{\alpha}{\sigma}\right) \right\}\\
&=\frac{1}{\delta}\left\{ \frac{2\sigma}{\pi}+\frac{2\alpha}{\pi}\mr{arctan}\left(\frac{\alpha}{\sigma}\right) \right\},
\end{split}
\EE
where (a) is derived using the identities in \eqref{Eqn:Aux_f}.
Finally, combining \eqref{Eqn:SE_maps_real_original2}, \eqref{eq:expectedZP} and \eqref{Eqn:ZP_real}, and after some calculations, we finally obtain the following
\[
\begin{split}
\psi_1(\alpha,\sigma^2) &=\frac{2}{\pi}\mr{arctan}\left(\frac{\alpha}{\sigma}\right),\\
\psi_2(\alpha,\sigma^2;\delta,\sigma^2_w) &=\underbrace{ \frac{1}{\delta}\left[ \alpha^2+\sigma^2 +1-\frac{4\sigma}{\pi}-\frac{4\alpha}{\pi}\mr{arctan}\left(\frac{\alpha}{\sigma}\right)  \right] }_{\psi_2(\alpha,\sigma^2;\delta)} +\sigma^2_w
\end{split}
\]

\section{Continuity of the partial derivative $\frac{\partial \psi_2(\alpha, \sigma^2)}{\partial \sigma^2}$ at $(\alpha, \sigma^2) = (1,0)$} \label{sec:proofscontinuity}
Note that in the proof of Lemma \ref{lem:psi2}-(i) we showed that the $\lim_{(\alpha, \sigma^2) \rightarrow (1,0)} \frac{\partial \psi_2(\alpha, \sigma^2)}{\partial \sigma^2} = \frac{2}{\delta}$. Our goal here is to show that the derivative exists at $(\alpha, \sigma^2) = (1,0)$ and it is equal to $\frac{2}{\delta}$.

\subsection{Proof of the main claim}
Our goal in this section is to show that $\left. \frac{\partial \psi_2 (\alpha, \sigma^2)}{\partial \sigma^2} \right|_{(1, 0)} = \frac{2}{\delta}$. From the definition of the partial derivative, we have
\begin{eqnarray}
\left. \frac{\partial \psi_2 (\alpha, \sigma^2)}{\partial \sigma^2} \right|_{(1, 0)} &=& \lim_{\sigma^2 \rightarrow 0} \frac{1}{\sigma^2} (\psi_2(1, \sigma^2) - \psi_2(1,0)) \nonumber \\
&=&\lim_{\sigma^2 \rightarrow 0} \frac{4}{\delta \sigma^2} (1+\sigma^2 +1 - \int_{0}^{\pi/2} \frac{2 \sin^2 \theta + \sigma^2}{(\sin^2 \theta + \sigma^2)^{\frac{1}{2}}}d \theta -2 + \int_{0}^{\pi/2} 2 \sin \theta d\theta) \nonumber \\
&=&\lim_{\sigma^2 \rightarrow 0} \frac{4}{\delta \sigma^2} (\sigma^2 - \int_{0}^{\pi/2} \frac{2 \sin^2 \theta + \sigma^2}{(\sin^2 \theta + \sigma^2)^{\frac{1}{2}}}  d\theta+2)
\end{eqnarray}
Define $m \triangleq 1/\sigma^2$. Then,
\begin{eqnarray}
\left. \frac{\partial \psi_2 (\alpha, \sigma^2)}{\partial \sigma^2} \right|_{(1, 0)} &=& \lim_{m \rightarrow \infty} \frac{4m }{\delta} (\frac{1}{m} - \int_{0}^{\pi/2} \frac{2\sqrt{m} \sin^2 \theta + 1/\sqrt{m}}{(m \sin^2 \theta + 1)^{\frac{1}{2}}} d\theta +2) \nonumber \\
&\overset{(a)}{=}&  \lim_{m \rightarrow \infty} \frac{4m }{\delta} \left(\frac{1}{m} -2 \frac{(m+1) E(\frac{m}{m+1}) - K(\frac{m}{m+1} )}{\sqrt{m(m+1)}} - \frac{1}{\sqrt{m(m+1)}} K \left(\frac{m}{m+1}\right)+2 \right) \nonumber\\
&=&  \lim_{m \rightarrow \infty} \frac{4m }{\delta} \left(\frac{1}{m} -2 \frac{(m+1) E(\frac{m}{m+1}) }{\sqrt{m(m+1)}} + \frac{1}{\sqrt{m(m+1)}} K \left(\frac{m}{m+1}\right)+2 \right).
\end{eqnarray}
To obtain Equality (a) we have used \eqref{Eqn:integral_identities}. By employing Lemma \ref{Lem:elliptic} (i) we have\begin{eqnarray}
\lefteqn{ \lim_{m \rightarrow \infty} \frac{4m }{\delta} \left(\frac{1}{m} -2 \frac{(m+1) E(\frac{m}{m+1}) }{\sqrt{m(m+1)}} + \frac{1}{\sqrt{m(m+1)}} K \left(\frac{m}{m+1}\right)+2 \right)} \nonumber \\
&=& \lim_{m \rightarrow \infty} \frac{4m }{\delta} \left(\frac{1}{m} -2 \frac{(m+1)}{\sqrt{m(m+1)}} \left(1 + \frac{1}{2} \frac{\log 4 \sqrt{m+1}}{m+1} - \frac{1}{4 (m+1)}\right) + \frac{1}{\sqrt{m(m+1)}} \log 4 \sqrt{m+1}+2 \right) \nonumber \\
&=& \lim_{m \rightarrow \infty} \frac{4m }{\delta} \left(\frac{1}{m} -2 \frac{(m+1)}{\sqrt{m(m+1)}} \left(1 - \frac{1}{4 (m+1)}\right) +2 \right) \nonumber \\
&=& \lim_{m \rightarrow \infty} \frac{4m }{\delta} \left(\frac{1}{m} -2 \frac{(m+1)}{\sqrt{m(m+1)}}+2  + \frac{1}{2 \sqrt{m(m+1)}}\right)  \nonumber \\
 &=&\lim_{m \rightarrow \infty} \frac{4m }{\delta} \left(\frac{1}{m} -2 \frac{(m+1)}{\sqrt{m(m+1)}}+2  \right)   +  \lim_{m \rightarrow \infty} \frac{4m }{\delta} \left(\frac{1}{2 \sqrt{m(m+1)}}\right)   = 0 + \frac{2}{\delta}. 
\end{eqnarray}
Again we emphasize that we have also shown in the proof of Lemma \ref{lem:psi2} that $\lim_{(\alpha, \sigma^2) \rightarrow (1, 0)} \frac{\partial \psi_2(\alpha, \sigma^2)}{\partial \sigma^2} = \frac{2}{\delta}$. Hence, $\frac{\partial \psi_2(\alpha, \sigma^2)}{\partial \sigma^2}$ is continuous at $(\alpha, \sigma^2)= (1,0)$. 

%


\section{Asymptotic analysis of real-valued $\ampa$}

\subsection{Proof of Theorem \ref{The:PT_real}}\label{Sec:SE_conver_proof}
The proof of Theorem \ref{The:PT_real} is in parallel to that for Theorem~\ref{Theo:PhaseTransition_complex}. For this reason, we will only report the discrepancies. For intuition and more discussions, please refer to Section~\ref{ssec:proofTheo:PhaseTransition_complex}.

\subsubsection{Roadmap of the proof}
Again, we define $F_1(\sigma^2)$ to be the non-negative fixed point of $\psi_1$ and $F_2(\alpha,\delta)$ to be the fixed point of $\psi_2$, where $\psi_1$ and $\psi_2$ are now defined in \eqref{Eqn:map_expression_real}. Different from the complex-valued case, $\psi_2$ now has a unique fixed point. Properties of $\psi_1$ and $\psi_2$ are detailed in Section \ref{ssec:psi12_real}. Similar to complex-valued case, $F_1^{-1}(\alpha)$ and $F_2(\alpha;\delta)$ satisfy the following property:
 \begin{lemma}\label{Lem:F1_F2_real}
 If $\delta>\deltaAMP=\frac{\pi^2}{4}-1$, then $F_1^{-1}(\alpha)>F_2(\alpha;\delta)$ for $\alpha\in(0,1)$.
 \end{lemma}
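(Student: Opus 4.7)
The plan is to mirror the strategy used for the complex-valued counterpart (Lemma~\ref{Lem:F1_F2_complex}, proved in Section~\ref{proof:lemmadominationF_1}), exploiting the closed-form expressions \eqref{Eqn:map_expression_real} to obtain a single-variable trigonometric inequality rather than an elliptic-integral one.

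First, I would use the real-valued analog of Lemma~\ref{lem:psi2}, which states that for $\delta>2$ the unique fixed point $F_2(\alpha;\delta)$ is globally attracting on the relevant interval. This turns the target inequality into
\BE
\psi_2\!\left(\alpha,F_1^{-1}(\alpha);\delta\right) < F_1^{-1}(\alpha), \quad \forall\,\alpha\in(0,1). \nonumber
\EE
Since $\psi_2(\alpha,\sigma^2;\delta)$ is strictly decreasing in $\delta$ by inspection of \eqref{Eqn:map_expression_real_b}, it suffices to establish the inequality at the threshold $\delta=\deltaAMP=\pi^2/4-1$. Next, because $F_1(\sigma^2)$ is defined by $\alpha=\frac{2}{\pi}\arctan(\alpha/\sigma)$, I parametrize via $s\Mydef\sqrt{F_1^{-1}(\alpha)}/\alpha=\cot(\pi\alpha/2)$, so that $\arctan(\alpha/\sigma)=\pi\alpha/2$. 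Substituting into \eqref{Eqn:map_expression_real_b} collapses one transcendental term and yields
\BE
\psi_2\!\left(\alpha,\alpha^2 s^2;\deltaAMP\right) \;=\; \frac{1}{\deltaAMP}\!\left[\alpha^2(s^2-1)+1-\frac{4\alpha s}{\pi}\right]. \nonumber
\EE

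After the further change of variable $u\Mydef \pi\alpha/2\in(0,\pi/2)$, so that $\alpha=2u/\pi$ and $s=\cot u$, the reformulated inequality $\psi_2<\alpha^2 s^2$ becomes, upon clearing denominators,
\BE
\varphi(u) \;\Mydef\; (8-\pi^2)\,u^2\cot^2 u \;-\; 8\,u\cot u \;-\; 4u^2 \;+\; \pi^2 \;<\;0, \qquad u\in\left(0,\tfrac{\pi}{2}\right). \nonumber
\EE
I expect this clean formulation to be the right vehicle: $\varphi$ is a quadratic in the bounded quantity $h(u)\Mydef u\cot u$ (which decreases from $1$ to $0$ as $u$ traverses $(0,\pi/2)$) with coefficients depending only polynomially on $u$, so its analysis is far lighter than the elliptic-integral manipulations of Section~\ref{proof:lemmadominationF_1}.

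The hard part will be proving the strict inequality $\varphi(u)<0$ on the open interval, because $\varphi$ vanishes at both endpoints: the limits $h(0^+)=1$ and $h(\pi/2^-)=0$ give $\varphi(0^+)=\varphi(\pi/2^-)=0$, which reflects precisely that $\deltaAMP$ is the tangency threshold at which $F_1^{-1}$ and $F_2(\cdot;\deltaAMP)$ touch at $\alpha=0$ (and trivially at $\alpha=1$). My plan to handle the interior is to combine (i) explicit Taylor expansions near the endpoints, which give $\varphi(u)\sim\frac{2}{3}(\pi^2-10)u^2<0$ as $u\to 0^+$ and $\varphi(u)\sim[4+2\pi^2-\pi^4/4](\pi/2-u)^2<0$ as $u\to\pi/2^-$, with (ii) a derivative analysis in the middle of the interval. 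For step (ii), I would rewrite $\varphi$ as a downward-opening quadratic in $y=h(u)$ (the leading coefficient $8-\pi^2$ is negative) whose vertex lies outside $[0,1]$, so that $\varphi$ is monotone in $y$ for each fixed $u$; combined with the monotonic dependence of $h(u)$ on $u$, this reduces the verification to checking $\varphi<0$ on a short list of subintervals, done by evaluating rigorous bounds at finitely many breakpoints -- the same sub-interval strategy used in Case~I of Section~\ref{proof:lemmadominationF_1}. If that fails to close cleanly, the fallback is to differentiate $\varphi$ twice and exploit the concavity/convexity of $u\cot u$ and $u^2\cot^2 u$, but I expect the subinterval argument to suffice.
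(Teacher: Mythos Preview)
Your proposal is correct and follows essentially the same route as the paper: reduce to $\delta=\deltaAMP$ by monotonicity of $\psi_2$ in $\delta$, use global attractiveness of $F_2$ to rewrite the claim as $\psi_2(\alpha,F_1^{-1}(\alpha);\deltaAMP)<F_1^{-1}(\alpha)$, and parametrize by the fixed-point relation to get a single-variable inequality. Your variable $u=\pi\alpha/2$ and the paper's $t=\tan(\pi\alpha/2)$ are related by $t=\tan u$, and your $\varphi(u)<0$ is literally the paper's inequality $G(t)>1-\deltaAMP$ after multiplying through by $4u^2\cot^2 u$.

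The only substantive difference is in how the final inequality is closed. The paper proves that $G(t)$ is strictly increasing on $(0,\infty)$ (so only the endpoint $t=0$ is tight), reducing via $G'$ to $G_1(t)+G_2(t)>1$ with $G_1$ increasing and $G_2$ decreasing; a subinterval check handles $t\le 0.75$, and for $t\ge 0.75$ it invokes the sharp bound $\arctan t<8t/(3+\sqrt{25+256t^2/\pi^2})$ to finish analytically. Your plan instead attacks $\varphi<0$ directly, using the (correct) second-order Taylor coefficients at both endpoints together with a subinterval bound in the middle via the quadratic structure in $y=u\cot u$. Both work; the paper's monotonicity route is marginally cleaner because only one endpoint is tight, avoiding the need to control the $u\to\pi/2$ tail separately.

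One small correction: the real-valued analog you need is Lemma~\ref{lem:psi2_real}(ii), which requires only $\delta>1$, not $\delta>2$. Since $\deltaAMP=\pi^2/4-1\approx 1.47<2$, citing the $\delta>2$ threshold from the complex case would leave a spurious gap at exactly the value of $\delta$ you reduce to.
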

 This lemma is proved in Section \ref{ssec:proofF1F2relation}. We will later use this lemma to show that when $\delta>\deltaAMP=\frac{\pi^2}{4}-1$ the state evolution converges to the desired fixed point $(\alpha,\sigma^2)=(1,0)$ for all initialization as long as $\alpha_0\neq0$. This means that AMP.A recovers the signal perfectly as long as the initial estimate is not orthogonal to the true signal.

Our next step is to analyze the dynamics of $\ampa$ for $\delta> \deltaAMP$. The following lemma implies that we only need to focus on the region where $\alpha\in[0,1]$.
\begin{lemma}
Let $\{\alpha_t\}_{t\ge1}$ and $\{\sigma^2_t\}_{t\ge1}$ be two sequences generated according to \eqref{Eqn:SE_real}. Then for any $\alpha_0\ge0$ and $\sigma^2_0\in\mathbb{R}_+$, we have $\alpha_t\in[0,1]$ for any $t\ge1$.
\end{lemma}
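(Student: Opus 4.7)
The plan is to establish this by a short induction that relies only on the closed-form expression for $\psi_1$ in \eqref{Eqn:map_expression_real_a} and the manifest non-negativity of $\psi_2$. The inductive invariant will be that at each step $t\ge 1$, the pair $(\alpha_t,\sigma^2_t)$ lies in $[0,1]\times\mathbb{R}_+$, which is precisely the set on which one iterate of the SE map produces an $\alpha$-value in $[0,1]$.

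First I would invoke the explicit formula $\psi_1(\alpha,\sigma^2)=\frac{2}{\pi}\arctan(\alpha/\sigma)$. For any $\alpha\ge 0$ and $\sigma\ge 0$ (with the standard convention $\arctan(+\infty)=\pi/2$ at $\sigma=0$, $\alpha>0$), the ratio $\alpha/\sigma\in[0,+\infty]$, so $\arctan(\alpha/\sigma)\in[0,\pi/2]$ and therefore
\[
\psi_1(\alpha,\sigma^2)=\frac{2}{\pi}\arctan(\alpha/\sigma)\in[0,1].
\]
Applied at $t=0$, this immediately gives $\alpha_1=\psi_1(\alpha_0,\sigma^2_0)\in[0,1]$, since by hypothesis $\alpha_0\ge 0$ and $\sigma^2_0\in\mathbb{R}_+$.

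Next, from Definition~\ref{Def:SE_map_real} one has $\psi_2(\alpha,\sigma^2;\delta,\sigma^2_w)=\mathbb{E}[(|Z|-|P|+W)^2]\ge 0$, so $\sigma^2_{t+1}\ge 0$ whenever the update is well-defined. Combining the two observations, if $\alpha_t\in[0,1]\subset[0,\infty)$ and $\sigma^2_t\ge 0$ for some $t\ge 1$, then $(\alpha_t,\sigma^2_t)$ lies in the domain on which $\psi_1$ takes values in $[0,1]$, hence $\alpha_{t+1}\in[0,1]$, and likewise $\sigma^2_{t+1}\ge 0$. Induction then yields $\alpha_t\in[0,1]$ for every $t\ge 1$.

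The only thing worth remarking on is the degenerate edge case $(\alpha_t,\sigma^2_t)=(0,0)$, where $\psi_1$ and $\psi_2$ are not directly defined by Definition~\ref{Def:SE_map_real}; this cannot arise for $t\ge 1$ under the hypothesis $\sigma_0^2\in\mathbb{R}_+$ interpreted as strictly positive, because then $\sigma_0>0$ forces $\alpha_1=\frac{2}{\pi}\arctan(\alpha_0/\sigma_0)$ and $\sigma_1^2$ to be well-defined, and a simple inspection of \eqref{Eqn:map_expression_real_b} shows $\sigma_1^2=0$ only at the fixed point $(\alpha_0,\sigma_0^2)=(1,0)$, which is already excluded. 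Hence there is no substantive obstacle: the statement is essentially an immediate consequence of the boundedness of $\arctan$.
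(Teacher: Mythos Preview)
Your proposal is correct and takes essentially the same approach as the paper: the paper simply cites Lemma~\ref{lem:psi1_real}(ii), namely $0<\psi_1(\alpha,\sigma^2)<1$ for $\alpha>0$, $\sigma^2>0$, and says the result is an immediate consequence. Your argument just makes this explicit via the boundedness of $\arctan$ in \eqref{Eqn:map_expression_real_a}; the extra induction step and the edge-case discussion about $(0,0)$ are harmless but not really needed, since the bound $\psi_1\in[0,1]$ already settles $\alpha_t\in[0,1]$ for every $t\ge1$ in one stroke.
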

This lemma is a direct consequence of Lemma \ref{lem:psi1_real}-ii proved in Section \ref{ssec:psi12_real}. Hence, we skip its proof. Similar to \eqref{Eqn:left_bound}, the following function characterizes the lower boundary of the region that $(\alpha_t,\sigma^2_t)$ ($\forall t\ge1$) can fall into. 
\begin{definition}\label{def:L_realvaleued}
For any $\delta>0$ and $\alpha\in[0,1]$, define
\BE\label{Eqn:L_real}
\begin{split}
L(\alpha;\delta) \Mydef \frac{1}{\delta}\left\{1-\left[\frac{2}{\pi}\cos\left(\frac{\pi}{2}\alpha\right) +\alpha\sin\left(\frac{\pi}{2}\alpha\right)\right]^2\right\}.
\end{split}
 \EE
 \end{definition}
 
For the intuition about $L$ the reader may refer to Section \ref{ssec:proofTheo:PhaseTransition_complex}. As in the complex-valued signals case, the following  properties of this function play critical roles in the dynamics of the SE: 

\vspace{3pt}
\begin{lemma}\label{Lem:monotonicity_L_real}
$L(\alpha;\delta)$ defined in \eqref{Eqn:L_real} is a strictly decreasing function of $\alpha\in(0,1)$.
\end{lemma}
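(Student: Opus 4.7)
The plan is to show strict monotonicity by a direct calculation. Write
\[
L(\alpha;\delta) = \frac{1}{\delta}\left[1 - h^2(\alpha)\right], \qquad h(\alpha) \Mydef \frac{2}{\pi}\cos\!\left(\tfrac{\pi\alpha}{2}\right) + \alpha \sin\!\left(\tfrac{\pi\alpha}{2}\right),
\]
so it suffices to prove that $h^2$ is strictly increasing on $(0,1)$. The plan is to verify the two ingredients $h(\alpha)>0$ and $h'(\alpha)>0$ on $(0,1)$ separately, and then conclude $(h^2)'(\alpha) = 2 h(\alpha) h'(\alpha) > 0$.

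First I would differentiate $h$ term by term. The chain rule gives
\[
h'(\alpha) = -\sin\!\left(\tfrac{\pi\alpha}{2}\right) + \sin\!\left(\tfrac{\pi\alpha}{2}\right) + \tfrac{\pi\alpha}{2}\cos\!\left(\tfrac{\pi\alpha}{2}\right) = \tfrac{\pi\alpha}{2}\cos\!\left(\tfrac{\pi\alpha}{2}\right),
\]
where the first two terms cancel. For $\alpha \in (0,1)$ we have $\tfrac{\pi\alpha}{2}\in(0,\tfrac{\pi}{2})$, so both $\alpha$ and $\cos(\tfrac{\pi\alpha}{2})$ are strictly positive, giving $h'(\alpha)>0$.

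Next I would observe $h(0) = 2/\pi > 0$. Since $h'>0$ on $(0,1)$, the function $h$ is strictly increasing on $[0,1)$ and in particular $h(\alpha) > 2/\pi > 0$ for every $\alpha \in (0,1)$. Combining these two facts yields $(h^2)'(\alpha) = 2h(\alpha) h'(\alpha) > 0$ on $(0,1)$, so $h^2$ is strictly increasing there, and hence $L(\alpha;\delta) = \delta^{-1}(1-h^2(\alpha))$ is strictly decreasing.

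I do not anticipate any real obstacles: the entire argument hinges on the fortuitous cancellation of the $\sin(\tfrac{\pi\alpha}{2})$ terms in $h'(\alpha)$, after which positivity of $h$ and $h'$ on $(0,1)$ is immediate from the elementary fact that $\sin$ and $\cos$ are positive on the open first quadrant. The only point worth flagging is the value at the endpoint $\alpha = 1$: there $h(1) = 1$, so $L(1;\delta) = 0$ as expected, and $h'(1) = 0$ only makes $(h^2)'$ vanish at the boundary, which is consistent with strict monotonicity on the open interval $(0,1)$ claimed in the statement.
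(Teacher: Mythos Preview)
Your proof is correct and is exactly the kind of direct derivative computation the paper has in mind; the paper itself skips the proof with the remark ``This is straightforward to see and hence the proof is skipped.'' Your cancellation in $h'(\alpha)=\tfrac{\pi\alpha}{2}\cos\!\big(\tfrac{\pi\alpha}{2}\big)$ and the positivity argument are precisely the straightforward route.
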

This is straightforward to see and hence the proof is skipped.

 \begin{lemma}\label{Lem:F1_F2_real2}
 If $\delta>\deltaAMP=\frac{\pi^2}{4}-1$, then $F_1^{-1}(\alpha)> L(\alpha;\delta)$ for any $\alpha\in(0,1)$
  \end{lemma}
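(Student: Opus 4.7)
The plan is to mirror the contradiction argument used for the complex-valued analog, Lemma~\ref{Lem:F1_L}. First I will establish a real-valued counterpart of Lemma~\ref{Lem:left_bound}, namely the pointwise lower bound $\psi_2(\alpha,\sigma^2;\delta) \ge L[\psi_1(\alpha,\sigma^2);\delta]$ for all $\alpha>0,\ \sigma^2>0$. Then I will combine this with the fixed-point properties of $\psi_1$ and $\psi_2$, with Lemma~\ref{Lem:F1_F2_real}, and with the monotonicity of $L$ (Lemma~\ref{Lem:monotonicity_L_real}) to derive a contradiction from the assumption $F_1^{-1}(\hat\alpha) \le L(\hat\alpha;\delta)$.

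To prove the lower bound, I would solve the constrained minimization $\min \psi_2(\hat\alpha,\hat\sigma^2;\delta)$ over $(\hat\alpha,\hat\sigma^2)$ with $\psi_1(\hat\alpha,\hat\sigma^2)=C$ fixed. Because $\psi_1(\hat\alpha,\hat\sigma^2)=\tfrac{2}{\pi}\arctan(\hat\alpha/\hat\sigma)$ only depends on the ratio $s\Mydef \hat\alpha/\hat\sigma$, the constraint fixes $s=\tan(\tfrac{\pi}{2}C)$. Plugging $\hat\sigma=\hat\alpha/s$ into \eqref{Eqn:map_expression_real_b} turns $\psi_2$ into a quadratic in $\hat\alpha$ with positive leading coefficient; the unconstrained minimizer is $\hat\alpha^\star=\bigl(\tfrac{2}{\pi}\cos(\tfrac{\pi}{2}C)+C\sin(\tfrac{\pi}{2}C)\bigr)\sin(\tfrac{\pi}{2}C)\ge 0$, which is feasible, and substitution plus the identity $1/(1+\cot^2 t)=\sin^2 t$ simplifies the minimum value to exactly $L(C;\delta)$ in Definition~\ref{def:L_realvaleued}.

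With this bound in hand the main argument proceeds by contradiction. Assume $L(\hat\alpha;\delta) \ge F_1^{-1}(\hat\alpha)$ at some $\hat\alpha\in(0,1)$, and pick $\hat\sigma^2\in[F_1^{-1}(\hat\alpha),L(\hat\alpha;\delta)]$. Since $F_1$ is strictly decreasing, $\hat\alpha\ge F_1(\hat\sigma^2)$, so the global attractiveness of the nonzero fixed point of $\psi_1$ gives $\psi_1(\hat\alpha,\hat\sigma^2)\le\hat\alpha$. By Lemma~\ref{Lem:F1_F2_real}, $\hat\sigma^2\ge F_1^{-1}(\hat\alpha)>F_2(\hat\alpha;\delta)$, and global attractiveness of $F_2$ gives $\psi_2(\hat\alpha,\hat\sigma^2;\delta)<\hat\sigma^2$. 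Chaining these with Lemma~\ref{Lem:monotonicity_L_real} yields $L[\psi_1(\hat\alpha,\hat\sigma^2);\delta]\ge L(\hat\alpha;\delta)\ge\hat\sigma^2>\psi_2(\hat\alpha,\hat\sigma^2;\delta)$, contradicting the lower bound from step~1 and completing the proof.

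The main obstacle I anticipate is not the minimization itself (it is a clean trigonometric calculation) but the two companion monotonicity and attractiveness facts for real-valued $\psi_1,\psi_2$ invoked in the contradiction: strict monotonicity of $F_1$ on $[0,1]$ so that $F_1^{-1}$ is well defined, and strict global attractiveness of $F_2(\alpha;\delta)$ on an interval containing both candidate values of $\hat\sigma^2$, when $\delta>\deltaAMP=\tfrac{\pi^2}{4}-1$. These mirror the corresponding claims in Lemmas~\ref{lem:psi1} and~\ref{lem:psi2} for the complex case and should be provable by the same concavity plus derivative-bound strategy, but the threshold constants change (the critical $\sigma^2$ for $\partial_\alpha\psi_1|_{\alpha=0}=1$ is now $1$ rather than $\pi^2/16$, and the critical $\delta$ for the $\psi_2$-attractor analysis becomes $\deltaAMP=\pi^2/4-1$), so the bookkeeping must be redone carefully against formulas \eqref{Eqn:map_expression_real_a}--\eqref{Eqn:map_expression_real_b}.
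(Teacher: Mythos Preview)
Your proof is correct and matches the paper's intended approach: the paper skips this proof, noting only that it parallels the complex-valued case, and your argument is exactly the real-valued transcription of the contradiction in Lemma~\ref{Lem:F1_L}, built on the constrained minimization from Lemma~\ref{Lem:left_bound}. The monotonicity and attractiveness facts you flag as potential obstacles are already established in the paper as Lemmas~\ref{lem:psi1_real}, \ref{lem:psi2_real}, and~\ref{Lem:2_real} (minor correction to your last paragraph: the critical $\sigma^2$ for $\partial_\alpha\psi_1|_{\alpha=0}=1$ in the real case is $4/\pi^2$, not $1$).
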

\vspace{3pt}

We skip the proofs of this Lemma. The arguments are similar to Lemma \ref{Lem:F1_F2_complex} and the calculations are straightforward too. Similar to Definition \ref{Def:region_real}, we divide $\left\{(\alpha,\sigma^2):\alpha\in(0,1], \sigma^2\ge0\right\}$ into four subregions.
 
 \begin{definition}\label{Def:region_real}
We divide $\left\{(\alpha,\sigma^2):\alpha\in(0,1], \sigma^2\ge0\right\}$ into the following four sub-regions:
 \BE\label{Eqn:regions_real}
\begin{split}
\mathcal{R}_0&\Mydef\left\{(\alpha,\sigma^2)\big|0<\alpha\le1,\frac{4}{\pi^2}< \sigma^2 <\infty\right\},\\
\mathcal{R}_1&\Mydef\left\{(\alpha,\sigma^2)\big|0<\alpha\le1,F_1^{-1}(\alpha)< \sigma^2 \le\frac{4}{\pi^2}\right\},\\
\mathcal{R}_{2a} &\Mydef\left\{(\alpha,\sigma^2)\big|0<\alpha\le1,L(\alpha)\le\sigma^2 \le F_1^{-1}(\alpha)\right\},\\
 \mathcal{R}_{2b} &\Mydef\left\{(\alpha,\sigma^2)\big|0<\alpha\le1,0 \le \sigma^2 <L(\alpha;\delta)\right\}.
\end{split}
\EE
\end{definition}
Note that there are two differences between Definition \ref{Def:region_real} and Definition \ref{Def:regions}. First, the upper limit of $\sigma^2$ for $\mathcal{R}_1$ is changed from $\frac{\pi^2}{16}$ to $\frac{4}{\pi^2}$. Second, in Definition \ref{Def:regions}, $\sigma^2<\sigma^2_{\max}=\max\{1,\delta/4\}$ for $\mathcal{R}_0$, but in Definition \ref{Def:region_real}, the value of $\sigma^2$ for $\mathcal{R}_2$ is not upper bounded. Our next lemma shows that for any $(\alpha_0, \sigma_0^2)\in\mathcal{R}$, the states of the dynamical system \eqref{Eqn:SE_real} will eventually move to $\mathcal{R}_1$ or $\mathcal{R}_{2a}$. 

\begin{lemma}\label{Lem:regionII_IV_real}
Suppose that $\delta>\deltaAMP$. Let $\{\alpha_t\}_{t\ge1}$ and $\{\sigma^2_t\}_{t\ge1}$ be the  sequences generated according to \eqref{Eqn:SE_real} from any $\alpha_0>0$ and $\sigma_0\in\mathbb{R_+}$. 
\begin{enumerate}
\item[(i)] Starting from $t\ge1$, $(\alpha_t,\sigma^2_t)$ cannot be in $\mathcal{R}_{2b}$ for any $\alpha_0\neq0$ and $\sigma_0^2\ge0$. 
\item[(ii)] Let $(\alpha_0,\sigma_0^2)$ be an arbitrary point in $\mathcal{R}_0$. Then, there exists a finite number $T\ge1$ such that $(\alpha_T,\sigma_T^2)\in\mathcal{R}_1\cup\mathcal{R}_{2a}$.
\end{enumerate}
\end{lemma}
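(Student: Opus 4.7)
\textbf{Proof proposal for Lemma \ref{Lem:regionII_IV_real}.}

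The plan is to mirror the structure of the complex-valued Lemma \ref{Lem:regionII_IV}, with the function $L(\alpha;\delta)$ from Definition \ref{def:L_realvaleued} playing the role it did in the complex-valued case. Part (i) will follow immediately from a real-valued analogue of Lemma \ref{Lem:left_bound}, i.e.\ from an inequality of the form
\[
\psi_2(\alpha,\sigma^2;\delta)\ \ge\ L\bigl[\psi_1(\alpha,\sigma^2);\delta\bigr],\qquad \forall\,\alpha>0,\ \sigma^2>0.
\]
(Such an inequality is precisely what the definition of $L$ in \eqref{Eqn:L_real} is built for: one minimizes $\psi_2(\hat\alpha,\hat\sigma^2;\delta)$ over those $(\hat\alpha,\hat\sigma^2)$ with $\psi_1(\hat\alpha,\hat\sigma^2)$ held fixed.) Once this is in hand, I apply it to $(\alpha_{t-1},\sigma_{t-1}^2)$ to get $\sigma_t^2\ge L(\alpha_t;\delta)$, so by Definition \ref{Def:region_real} the pair $(\alpha_t,\sigma_t^2)$ cannot lie in $\mathcal{R}_{2b}$ for any $t\ge 1$.

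For Part (ii), I first note that $\alpha_t>0$ is preserved along the trajectory: by the real-valued form of $\psi_1$ in \eqref{Eqn:map_expression_real_a}, $\psi_1(\alpha,\sigma^2)$ has the same sign as $\alpha$ and never vanishes on $\alpha>0$. Hence, by induction, $\alpha_t\in(0,1]$ for every $t\ge 1$ (combining $\alpha_0>0$ with Lemma \ref{lem:psi1_real} (i)--(ii) and the bound $\psi_1\le 1$). Next I use the characterization $F_1^{-1}(0)=4/\pi^2$ (from the fixed-point equation $\alpha=\tfrac{2}{\pi}\arctan(\alpha/\sigma)$ at $\alpha\to 0^+$) together with the monotonicity of $F_1^{-1}$ and Lemma \ref{Lem:F1_F2_real} to conclude that, throughout $\mathcal{R}_0$,
\[
\sigma_t^2\ >\ \frac{4}{\pi^2}\ =\ F_1^{-1}(0)\ \ge\ F_1^{-1}(\alpha_t)\ >\ F_2(\alpha_t;\delta).
\]
Since $F_2(\alpha;\delta)$ is the globally attracting fixed point of $\psi_2(\alpha,\cdot\,;\delta)$ (the real-valued analogue of Lemma \ref{lem:psi2} (iv), which in the real case is easier because the monotonicity in $\sigma^2$ above $F_2$ is clean), the inequality $\sigma_t^2>F_2(\alpha_t;\delta)$ forces $\sigma_{t+1}^2=\psi_2(\alpha_t,\sigma_t^2;\delta)<\sigma_t^2$. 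So the sequence $\{\sigma_t^2\}$ is strictly decreasing as long as the iterates remain in $\mathcal{R}_0$.

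To conclude, I argue by contradiction that $\{\sigma_t^2\}$ leaves $\mathcal{R}_0$ in finitely many steps. If not, the sequence would stay in $\mathcal{R}_0$ forever and, being strictly decreasing and bounded below by $4/\pi^2$, would converge to some $\bar\sigma^2\ge 4/\pi^2$; passing to the limit along a convergent subsequence of $\alpha_t$ (extracted from the compact $[0,1]$) yields a point $(\bar\alpha,\bar\sigma^2)\in\overline{\mathcal R_0}$ at which $\psi_2(\bar\alpha,\bar\sigma^2;\delta)=\bar\sigma^2$, i.e.\ $\bar\sigma^2=F_2(\bar\alpha;\delta)$. But Lemma \ref{Lem:F1_F2_real} combined with $F_1^{-1}(\bar\alpha)\le 4/\pi^2$ gives $F_2(\bar\alpha;\delta)<F_1^{-1}(\bar\alpha)\le 4/\pi^2\le\bar\sigma^2$, a contradiction. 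Hence there is a first $T\ge 1$ with $\sigma_T^2\le 4/\pi^2$, and by Part (i) this forces $(\alpha_T,\sigma_T^2)\in\mathcal{R}_1\cup\mathcal{R}_{2a}$.

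The one step that is not a direct transcription of the complex-valued proof, and which I expect to be the main technical obstacle, is the real-valued counterpart of Lemma \ref{Lem:left_bound} used in Part (i): I must verify that the minimum of $\psi_2(\hat\alpha,\hat\sigma^2;\delta)$ under the constraint $\psi_1(\hat\alpha,\hat\sigma^2)=C$ is exactly $L(C;\delta)$ as given in \eqref{Eqn:L_real}. This amounts to a calculus exercise: using $\psi_1(\hat\alpha,\hat\sigma^2)=\tfrac{2}{\pi}\arctan(\hat\alpha/\hat\sigma)$, the constraint fixes the ratio $s=\hat\sigma/\hat\alpha$ (explicitly $s=\cot(\pi C/2)$), after which $\psi_2$ becomes a quadratic in $\hat\alpha$ whose minimum is attained at $\hat\alpha_{\min}=\tfrac{2}{\pi}\cos(\pi C/2)+C\sin(\pi C/2)$, reproducing the formula in \eqref{Eqn:L_real}. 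Once this bound is established, the rest of the proof is a routine adaptation of Section \ref{ssec:proofLemma8}.
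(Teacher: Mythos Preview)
Your proposal is correct and follows exactly the route the paper intends: the paper itself says the proof ``is very similar to that of Lemma~\ref{Lem:regionII_IV} and therefore skipped,'' and your outline is precisely that transcription, including the real-valued analogue of Lemma~\ref{Lem:left_bound} for Part~(i) and the strict-decrease/contradiction argument for Part~(ii). One small slip: the expression $\tfrac{2}{\pi}\cos(\tfrac{\pi}{2}C)+C\sin(\tfrac{\pi}{2}C)$ you write is not the minimizer $\hat\alpha_{\min}=b/(2a)$ of the quadratic but rather $\sqrt{b^2/(4a)}$ (the true minimizer carries an extra factor $\sin(\tfrac{\pi}{2}C)$); this does not affect the final formula for $L(C;\delta)$, which depends only on the minimum value $1-b^2/(4a)$.
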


The proof of Lemma \ref{Lem:regionII_IV_real} is very similar to that of Lemma \ref{Lem:regionII_IV} and therefore skipped here. Finally, we complete the proof by proving the following lemma.

\begin{lemma}\label{Lem:regionI_III_real}
Suppose that $\delta>\deltaAMP$. If $(\alpha_{t_0},\sigma^2_{t_0})$ is in $\mathcal{R}_1\cup\mathcal{R}_{2a}$ at time $t_0$ (where $t_0\ge0$), and $\{\alpha_t\}_{t\ge t_0}$ and $\{\sigma^2_t\}_{t\ge t_0}$ are obtained via the SE in \eqref{Eqn:map_expression_real}, then
\begin{enumerate}
\item[(i)] $(\alpha_t,\sigma^2_t)$ remains in $\mathcal{R}_1\cup\mathcal{R}_{2a}$ for all $t>t_0$;
\item[(ii)] $(\alpha_t,\sigma^2_t)$ converges:
\[
\lim_{t\to\infty} \alpha_t = 1\quad{and}\quad\lim_{t\to\infty} \sigma^2_t = 0.
\]
\end{enumerate}
\end{lemma}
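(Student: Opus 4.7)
The plan is to mirror the argument used for the complex-valued case in Section~\ref{Sec:proof_lem_regionI_III}, exploiting the fact that the real-valued SE maps $\psi_1(\alpha,\sigma^2)=\frac{2}{\pi}\arctan(\alpha/\sigma)$ and $\psi_2(\alpha,\sigma^2;\delta)=\frac{1}{\delta}\bigl[\alpha^2+\sigma^2+1-\frac{4\sigma}{\pi}-\frac{4\alpha}{\pi}\arctan(\alpha/\sigma)\bigr]$ have a more transparent structure than their complex counterparts. Define, for $(\alpha,\sigma^2)\in\mathcal{R}_1\cup\mathcal{R}_{2a}$, the same auxiliary bounds
\[
B_1(\alpha,\sigma^2)\Mydef\min\{\alpha,F_1(\sigma^2)\},\qquad B_2(\alpha,\sigma^2)\Mydef\max\{\sigma^2,F_1^{-1}(\alpha)\},
\]
and set $\tilde\alpha_{t+1}=B_1(\alpha_t,\sigma_t^2)$, $\tilde\sigma_{t+1}^2=B_2(\alpha_t,\sigma_t^2)$. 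The strategy is first to prove the two monotonicity inequalities
\begin{equation}\label{eq:planineqs}
\psi_1(\alpha,\sigma^2)\ge B_1(\alpha,\sigma^2),\qquad \psi_2(\alpha,\sigma^2;\delta)\le B_2(\alpha,\sigma^2),\qquad\forall(\alpha,\sigma^2)\in\mathcal{R}_1\cup\mathcal{R}_{2a},
\end{equation}
and then to run the same ``sandwich'' argument as in the complex case to conclude convergence of the auxiliary sequence, hence of $(\alpha_t,\sigma_t^2)$.

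To establish~\eqref{eq:planineqs} I would split according to whether $(\alpha,\sigma^2)\in\mathcal{R}_1$ or $\mathcal{R}_{2a}$. For $(\alpha,\sigma^2)\in\mathcal{R}_1$ the inequality $\psi_1(\alpha,\sigma^2)>F_1(\sigma^2)$ is immediate from the strong global attractiveness of $F_1$ recorded in the real-valued analog of Lemma~\ref{lem:psi1}(iii) (which follows from concavity and monotonicity of $\frac{2}{\pi}\arctan(\cdot/\sigma)$), while $\psi_2(\alpha,\sigma^2;\delta)<\sigma^2$ follows from Lemma~\ref{Lem:F1_F2_real} (which gives $\sigma^2>F_1^{-1}(\alpha)>F_2(\alpha;\delta)$) combined with the global attractiveness of $F_2$ for $\psi_2$. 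For $(\alpha,\sigma^2)\in\mathcal{R}_{2a}$, the bound $\psi_1(\alpha,\sigma^2)>\alpha$ follows again by global attractiveness of $F_1$ since $\sigma^2<F_1^{-1}(\alpha)$. The remaining inequality $\psi_2(\alpha,\sigma^2;\delta)\le F_1^{-1}(\alpha)$ is the real-valued counterpart of Lemma~\ref{Lem:RegionIII_bound}.

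This last inequality will be the main obstacle. The plan is to reduce it to checking the maximum of $\psi_2(\alpha,\cdot;\delta)$ over the closed interval $[L(\alpha;\delta),F_1^{-1}(\alpha)]$ (using that $\sigma_t^2\ge L(\alpha_t;\delta)$ automatically for $t\ge1$, which is the analog of Lemma~\ref{Lem:left_bound} and is essentially built into Definition~\ref{def:L_realvaleued}). Once the maximum is localized to the two endpoints plus the interior critical points of $\sigma^2\mapsto\psi_2(\alpha,\sigma^2;\delta)$, the endpoint $\sigma^2=F_1^{-1}(\alpha)$ is handled by the global attractiveness of $F_2$ together with Lemma~\ref{Lem:F1_F2_real}, while the endpoint $\sigma^2=L(\alpha;\delta)$ is dealt with by reducing to the critical value $\delta=\deltaAMP=\frac{\pi^2}{4}-1$ via monotonicity of $\psi_2(\alpha,L(\alpha;\delta);\delta)$ in $\delta$ (the real-valued analog of Lemma~\ref{Lem:monotonicity_delta}). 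For interior critical points one uses $\frac{\partial \psi_2}{\partial \sigma^2}=\frac{1}{\delta}\bigl(1-\frac{4}{\pi}\cdot\frac{\alpha^2/\sigma}{\alpha^2+\sigma^2}\cdot\sigma\bigr)$, which has at most a simple structure in the variable $s=\sigma/\alpha$, so the analog of the elliptic-integral gymnastics in Section~\ref{proof:lemmadominationF_1} collapses to a one-variable calculus check.

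Given~\eqref{eq:planineqs}, part~(i) follows verbatim from the complex-case reasoning: since $F_1$ is strictly decreasing, $(\psi_1,\psi_2)\in\mathcal{R}_1\cup\mathcal{R}_2$, and Lemma~\ref{Lem:regionII_IV_real}(i) rules out $\mathcal{R}_{2b}$ after the first iteration. For part~(ii), I would show $\tilde\alpha_{t+1}\ge\tilde\alpha_t$ and $\tilde\sigma_{t+1}^2\le\tilde\sigma_t^2$ with strict improvement within any two consecutive iterations (the equality case in~\eqref{eq:planineqs} happens only on the curve $\sigma^2=F_1^{-1}(\alpha)$, and a single $\psi$-step leaves this curve). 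A continuity/compactness argument on the ratios $\tilde\alpha_{t+2}/\tilde\alpha_t$ and $\tilde\sigma_{t+2}^2/\tilde\sigma_t^2$, identical to the one used at the end of Section~\ref{Sec:proof_lem_regionI_III}, then forces $\tilde\alpha_t\to1$ and $\tilde\sigma_t^2\to0$; sandwiching yields $\alpha_t\to1$ and $\sigma_t^2\to0$, completing the proof.
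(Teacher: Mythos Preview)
Your proposal is correct and follows essentially the same route as the paper: the paper's proof of Lemma~\ref{Lem:regionI_III_real} states explicitly that it mirrors the complex-valued Lemma~\ref{Lem:regionI_III}, with the only nontrivial new ingredient being the bound $\psi_2(\alpha,\sigma^2;\delta)<F_1^{-1}(\alpha)$ on $\mathcal{R}_{2a}$ (the real analog of Lemma~\ref{Lem:RegionIII_bound}), which the paper proves via the same endpoint-localization and reduction to $\delta=\deltaAMP$ that you outline. One small correction: your displayed expression for $\partial\psi_2/\partial\sigma^2$ is off---the correct formula is $\frac{1}{\delta}\bigl(1-\tfrac{2}{\pi}\tfrac{\sigma}{\alpha^2+\sigma^2}\bigr)$ (cf.\ \eqref{Eqn:psi2_dif_real}), and the paper also needs a separate easy argument for large $\delta$ (roughly $\delta\gtrsim 4.87$) where the interior critical point may fall inside $[L,F_1^{-1}]$, handled there by the crude bound $\psi_2\le 1/\delta$.
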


The proof of this lemma is presented in Section \ref{ssec:prooflemmaconvergenceR1R3_real}. 
\subsubsection{Properties of $\psi_1$ and $\psi_2$}\label{ssec:psi12_real}
In this section, we discuss several properties of $\psi_1$ and $\psi_2$.

\begin{lemma}\label{lem:psi1_real} 
$\psi_1\left(\alpha,\sigma^2\right)$ in \eqref{Eqn:map_expression_real_a} has the following properties (for $\alpha\ge0$): 
\begin{enumerate}
\item[(i)] $\psi_1\left(\alpha,\sigma^2\right)$ is a concave and strictly increasing function of $\alpha>0$, for any given $\sigma^2>0$.
\item[(ii)] $0<\psi_1(\alpha,\sigma^2)<1$, for $\alpha>0$ and $\sigma^2>0$.
\item[(iii)] If $\sigma^2 < 4/\pi^2$, then there are two nonnegative solutions to $\alpha=\psi_1(\alpha,\sigma^2)$: $\alpha=0$ and $\alpha=F_1(\sigma^2)>0$. Further, $F_1(\sigma^2)$ is strongly globally attracting.
On the other hand, if $\sigma^2 \ge 4/\pi^2$ then $\alpha =0$ is the unique nonnegative fixed point and it is strongly globally attracting.
 
\end{enumerate}
\end{lemma}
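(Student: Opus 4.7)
My plan is to mirror the proof strategy of Lemma \ref{lem:psi1} (the complex-valued counterpart), which is considerably simpler here because $\psi_1$ has the closed form $\psi_1(\alpha,\sigma^2)=\tfrac{2}{\pi}\arctan(\alpha/\sigma)$.

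For part (i), I would simply differentiate twice with respect to $\alpha$:
\[
\frac{\partial\psi_1}{\partial\alpha}=\frac{2}{\pi}\cdot\frac{\sigma}{\sigma^2+\alpha^2}>0,\qquad \frac{\partial^2\psi_1}{\partial\alpha^2}=-\frac{2}{\pi}\cdot\frac{2\sigma\alpha}{(\sigma^2+\alpha^2)^2}<0
\]
for any $\alpha>0,\sigma^2>0$. Strict positivity of the first derivative gives strict monotonicity, and strict negativity of the second derivative gives strict concavity. Part (ii) is immediate from the range of $\arctan$: for $\alpha>0,\sigma>0$ we have $\arctan(\alpha/\sigma)\in(0,\pi/2)$, so $\psi_1\in(0,1)$.

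For part (iii), the argument is entirely parallel to Lemma \ref{lem:psi1}-(iii). Define $\Psi_1(\alpha,\sigma^2)\Mydef\psi_1(\alpha,\sigma^2)-\alpha$. Clearly $\Psi_1(0,\sigma^2)=0$, so $\alpha=0$ is always a fixed point. By concavity of $\psi_1$ in $\alpha$, the derivative $\partial_\alpha\Psi_1$ is strictly decreasing in $\alpha$ on $(0,\infty)$. The key quantity is
\[
\left.\frac{\partial \psi_1}{\partial\alpha}\right|_{\alpha=0}=\frac{2}{\pi\sigma},
\]
which exceeds $1$ iff $\sigma<2/\pi$, i.e., $\sigma^2<4/\pi^2$. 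When $\sigma^2\ge 4/\pi^2$, monotonicity of $\partial_\alpha\Psi_1$ gives $\partial_\alpha\Psi_1(\alpha,\sigma^2)<0$ for all $\alpha>0$, so $\Psi_1<0$ on $(0,\infty)$ and $\alpha=0$ is the unique non-negative fixed point. When $\sigma^2<4/\pi^2$, $\partial_\alpha\Psi_1$ is strictly positive at $\alpha=0$, while as $\alpha\to\infty$ it tends to $-1$ (since $\partial_\alpha\psi_1\to 0$); by the intermediate value theorem and strict monotonicity of $\partial_\alpha\Psi_1$, $\Psi_1$ has a unique interior maximum followed by a unique additional zero $F_1(\sigma^2)>0$. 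Strong global attractiveness of this fixed point then follows, as in the complex case, from $\psi_1$ being strictly increasing in $\alpha$ together with the sign pattern of $\Psi_1$ just established.

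I do not anticipate any significant obstacle: unlike in the complex case, where $\psi_1$ was defined through an elliptic-type integral and derivatives had to be rewritten carefully, here every quantity of interest is an elementary function of $\alpha/\sigma$, so all the inequalities can be verified by direct calculation.
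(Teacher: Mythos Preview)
Your proposal is correct and follows exactly the approach the paper indicates: the paper omits the proof, stating only that ``the proof strategy is similar to the one given in Section~\ref{ssec:psi1_2prop}'' and that ``the calculations are straightforward,'' which is precisely what you carry out using the explicit form $\psi_1(\alpha,\sigma^2)=\tfrac{2}{\pi}\arctan(\alpha/\sigma)$.
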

\begin{proof}
The proof strategy is similar to the one given in Section \ref{ssec:psi1_2prop}. Also, the calculations are straightforward. Hence, to save some space we skip the proof of this lemma. 
\end{proof}

\begin{lemma}\label{lem:psi2_real}
$\psi_2\left(\alpha,\sigma^2;\delta\right)$ has the following properties: 
\begin{enumerate}
\item[(i)] If $\delta<1$, then $\sigma^2=0$ is a locally unstable fixed point to $\sigma^2=\psi_2\left(\alpha,\sigma^2;\delta\right)$ for any $\alpha>0$, meaning that
\[
\frac{\partial \psi_2(\alpha,\sigma^2;\delta)}{\partial\sigma^2}\Big|_{\sigma^2=0}>1.
\]
\item [(ii)] For any $\delta>1$, $\sigma^2=\psi_2\left(\alpha,\sigma^2;\delta\right)$ has a unique fixed point, denoted as $F_2(\alpha;\delta)$, in $\sigma^2\in[0,\infty)$ for any $\alpha\in[0,1]$. Further, the fixed point is weakly globally attracting in $\sigma^2\in[0,\infty)$.

\item[(iii)] For any $\delta\ge0$, $\psi_2(\alpha,\sigma^2;\delta)$ is an increasing function of $\sigma^2\ge0$ if
\BE\label{Eqn:alpha_ast_def_real}
\alpha>\alpha_{\ast}=\frac{1}{\pi}.
\EE
Further, in this case $F_2(\alpha;\delta)$ is strongly globally attracting in $\sigma^2\in[0,\infty)$.

\end{enumerate}
\end{lemma}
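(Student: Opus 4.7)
The plan is to derive a single clean closed-form expression for $\partial \psi_2(\alpha,\sigma^2;\delta)/\partial \sigma^2$ from the formula \eqref{Eqn:map_expression_real_b} and then read all three parts off this identity, mirroring the structure of the complex-valued Lemma~\ref{lem:psi2}. Direct differentiation, using $\partial \sigma/\partial(\sigma^2) = 1/(2\sigma)$ together with $\partial \arctan(\alpha/\sigma)/\partial(\sigma^2) = -\alpha/[2\sigma(\sigma^2+\alpha^2)]$, gives after cancelation
\begin{equation} \label{Eqn:psi2_real_deriv_plan}
\frac{\partial \psi_2(\alpha,\sigma^2;\delta)}{\partial \sigma^2} \;=\; \frac{1}{\delta}\left(1 - \frac{2\sigma}{\pi(\alpha^2 + \sigma^2)}\right), \qquad \sigma^2 > 0,
\end{equation}
which will serve as the workhorse of the whole proof.

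For part (i), the right-hand side of \eqref{Eqn:psi2_real_deriv_plan} tends to $1/\delta$ as $\sigma \to 0^+$ whenever $\alpha > 0$, and a short Taylor expansion at $\sigma = 0$ using $\arctan(\alpha/\sigma)=\pi/2-\arctan(\sigma/\alpha)$ together with $\arctan(\sigma/\alpha)=\sigma/\alpha-\sigma^3/(3\alpha^3)+O(\sigma^5)$ verifies that the two-sided derivative of $\psi_2(\alpha,\cdot\,;\delta)$ at the point $\sigma^2=0$ equals $1/\delta$ as well. Hence the derivative there exceeds one iff $\delta < 1$. For part (iii), \eqref{Eqn:psi2_real_deriv_plan} is strictly positive precisely when $\pi\sigma^2 - 2\sigma + \pi\alpha^2 > 0$; the discriminant $4-4\pi^2\alpha^2$ of this quadratic in $\sigma$ is non-positive exactly when $\alpha \ge 1/\pi$, so $\psi_2(\alpha,\cdot\,;\delta)$ is non-decreasing on $\sigma^2\ge 0$ for $\alpha \ge \alpha_{\ast} \Mydef 1/\pi$, and strictly increasing for $\alpha > \alpha_{\ast}$. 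Combined with part (ii), this monotonicity yields strong global attractiveness of $F_2(\alpha;\delta)$ by exactly the argument used at the end of Lemma~\ref{lem:psi2}(v): once $\psi_2$ is monotone in $\sigma^2$ and $F_2$ is a fixed point, then for $\sigma^2 > F_2$ one has $F_2 = \psi_2(F_2) < \psi_2(\sigma^2) < \sigma^2$, with the right inequality coming from part (ii), and symmetrically from below.

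For part (ii), the key observation is that when $\delta > 1$ the identity \eqref{Eqn:psi2_real_deriv_plan} gives $\partial_{\sigma^2}\psi_2 \le 1/\delta < 1$ for every $\sigma^2\ge 0$ and every $\alpha\in[0,1]$, so $\Psi_2(\alpha,\sigma^2;\delta)\Mydef \psi_2(\alpha,\sigma^2;\delta)-\sigma^2$ is strictly decreasing in $\sigma^2$. The boundary values are easy: $\psi_2(\alpha,0;\delta)=(1-\alpha)^2/\delta$ (using $\arctan(\alpha/\sigma)\to\pi/2$ as $\sigma\to 0^+$ for $\alpha>0$, and interpreting $\alpha\arctan(\alpha/\sigma)$ as $0$ when $\alpha=0$), so $\Psi_2(\alpha,0;\delta)\ge 0$; and $\Psi_2(\alpha,\sigma^2;\delta)\sim(1/\delta-1)\sigma^2\to-\infty$ as $\sigma^2\to\infty$. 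The intermediate value theorem plus strict monotonicity then deliver a unique zero $F_2(\alpha;\delta)$, and weak global attractiveness is immediate from the sign of $\Psi_2$ on $[0,F_2)$ versus $(F_2,\infty)$. The main obstacle is really only the careful endpoint bookkeeping: verifying two-sided differentiability of $\psi_2$ at $\sigma^2 = 0$ for part (i), and making sure the degenerate case $\alpha=0$ (where the $\arctan$ term is vacuous) is covered in part (ii). Unlike the complex-valued Lemma~\ref{lem:psi2}, which needed elliptic-integral identities and a case split into $\delta > 4$ versus $2 < \delta \le 4$, the proof here should be significantly shorter because \eqref{Eqn:psi2_real_deriv_plan} already yields the uniform bound $\partial_{\sigma^2}\psi_2 \le 1/\delta$ whenever $\delta > 1$.
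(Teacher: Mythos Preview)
Your proposal is correct and follows essentially the same approach as the paper: both derive the closed-form derivative $\partial_{\sigma^2}\psi_2=\frac{1}{\delta}\bigl(1-\frac{2\sigma}{\pi(\alpha^2+\sigma^2)}\bigr)$ and read all three parts directly off it, with part (ii) hinging on the uniform bound $\partial_{\sigma^2}\psi_2\le 1/\delta<1$ for $\delta>1$ and the boundary values $\Psi_2(\alpha,0;\delta)=(1-\alpha)^2/\delta\ge 0$, $\Psi_2\to-\infty$. Your Taylor-expansion check at $\sigma^2=0$ in part (i) and your discriminant argument in part (iii) are minor cosmetic variants of the paper's computations (the paper simply evaluates the derivative at $\sigma^2=0$ and, for (iii), maximizes $\frac{2}{\pi}\sigma-\sigma^2$ over $\sigma$), not a different route.
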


\begin{proof} Recall from \eqref{Eqn:map_expression_real_b} that $\psi_2(\alpha,\sigma^2;\delta)$ is defined as
\[
\psi_2(\alpha,\sigma^2;\delta)=
\frac{1}{\delta}\left[\alpha^2+\sigma^2+1-\frac{4\sigma}{\pi} -\frac{4\alpha}{\pi}\mr{arctan}\left( \frac{\alpha}{\sigma} \right)\right]. 
\]

\textit{Proof of (i):} The partial derivative of $\psi_2$ w.r.t. $\sigma^2$ is 
\BE\label{Eqn:psi2_dif_real}
\begin{split}
\frac{\partial \psi_2(\alpha,\sigma^2;\delta)}{\partial \sigma^2}=\frac{1}{\delta}\left( 1-\frac{2}{\pi}\frac{\sigma}{\alpha^2+\sigma^2} \right).
\end{split}
\EE
The claims follows from the following fact:
\[
\begin{split}
\frac{\partial \psi_2(\alpha,\sigma^2;\delta)}{\partial \sigma^2}\Big|_{\sigma^2=0}=\frac{1}{\delta},\quad\forall \alpha>0.
\end{split}
\]

\textit{Proof of (ii):}
From \eqref{Eqn:psi2_dif_real}, we see that the following holds for any $\alpha\ge0$ and $\delta>0$:
\[
\frac{\partial \psi_2(\alpha,\sigma^2;\delta)}{\partial \sigma^2}<1,\quad\forall\sigma^2>0.
\]
Hence, the function $\Psi_2(\alpha,\sigma^2;\delta)=\psi_2(\alpha,\sigma^2;\delta)-\sigma^2$ is strictly decreasing on $\sigma^2\in\mathbb{R}_{+}$. Since $\Psi_2(\alpha,0;\delta)=\frac{1}{\delta}(\alpha-1)^2\ge0$ and $\Psi_2(\alpha,\infty;\delta)=-\infty$ for $\delta>1$ (which is easy to show from the definition of $\psi_2$), it follows that there exists a unique fixed point, denoted as $F_2(\alpha;\delta)$, to the following equation:
\[
\psi_2(\alpha,\sigma^2;\delta)-\sigma^2=0.
\]
Further, using similar arguments as those in the proof of Lemma \ref{lem:psi2}, we can prove that $F_2(\alpha;\delta)$ is globally attracting in $\sigma^2\in[0,\infty)$.

\textit{Proof of (iii):} When $\psi_2(\alpha,\sigma^2;\delta)$ is an increasing function of $\sigma^2$ in $[0,\infty)$, we have
\[
\frac{\partial \psi_2(\alpha,\sigma^2;\delta)}{\partial \sigma^2}=1-\frac{2}{\pi}\frac{\sigma}{\alpha^2+\sigma^2}>0,\quad\forall \sigma^2\ge0.
\]
or
\[
\alpha^2 >\frac{2}{\pi}\sigma-\sigma^2,\quad\sigma^2\ge0.
\]
It is easy to show that the maximum of the RHS over $\sigma^2\ge0$ is $\frac{1}{\pi^2}$. Hence, $\psi_2(\alpha,\sigma^2)$ is a strictly increasing function of $\sigma^2$ in $[0,\infty)$ if $\alpha>\frac{1}{\pi}$.
\end{proof}

\subsubsection{Properties of $F_1$ and $F_2$}

In this section we derive the main properties of the functions $F_1$ and $F_2$.  

\begin{lemma}\label{Lem:2_real}
The following hold for $F_1(\sigma^2)$ and $F_2(\alpha;\delta)$ (for $\delta>1$):
\begin{enumerate}
\item[(i)] $F_1(0)=1$ and $\lim_{\sigma^2\rightarrow \frac{4}{\pi^2}^{-}}F_1(\sigma^2)=0$. Further, by defining $F_1(\frac{4}{\pi^2})=0$, we have $F_{1}(\sigma^2)$ is continuous on $\left[0, \frac{4}{\pi^2}\right]$ and strictly decreasing in $\left(0,\frac{4}{\pi^2}\right)$;
\item[(ii)] $F_2(0;\delta)=\left(\frac{-\frac{2}{\pi} +\sqrt{ \frac{4}{\pi^2} +\delta - 1 }}{\delta - 1}\right)^2$ and $F_2(1;\delta)=0$.
\end{enumerate}
\end{lemma}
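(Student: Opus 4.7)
My plan mirrors the proof of Lemma \ref{Lem:2} for the complex-valued case, exploiting the much simpler closed-form expressions for $\psi_1$ and $\psi_2$ in \eqref{Eqn:map_expression_real}. In both parts I will use the fact that $F_1$ and $F_2$ are defined implicitly as fixed-point curves of $\psi_1$ and $\psi_2$ respectively, so boundary values reduce to solving scalar equations and monotonicity reduces to a routine implicit-differentiation argument using the analytic properties of $\psi_1,\psi_2$ established in Lemma \ref{lem:psi1_real} and Lemma \ref{lem:psi2_real}.

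For part (i), the boundary value $F_1(0)=1$ follows from the observation that $\psi_1(\alpha,0)=\frac{2}{\pi}\arctan(\infty)=1$ for any $\alpha>0$, so the unique non-zero fixed point of $\alpha=\psi_1(\alpha,0)$ is $\alpha=1$. The other boundary limit $F_1(\sigma^2)\to 0$ as $\sigma^2\to 4/\pi^2$ comes from Lemma \ref{lem:psi1_real} (iii): $4/\pi^2$ is precisely the threshold at which $\partial_\alpha\psi_1(\alpha,\sigma^2)\big|_{\alpha=0}=\frac{2}{\pi\sigma}$ equals $1$, so the non-zero fixed point collides with $\alpha=0$ there. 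For strict monotonicity on $(0,4/\pi^2)$, I will differentiate the identity $F_1(\sigma^2)=\psi_1(F_1(\sigma^2),\sigma^2)$ to obtain
\[
F_1'(\sigma^2)=\frac{\partial_2\psi_1(F_1(\sigma^2),\sigma^2)}{1-\partial_1\psi_1(F_1(\sigma^2),\sigma^2)}.
\]
The denominator is positive by the concavity of $\psi_1$ in $\alpha$ (Lemma \ref{lem:psi1_real} (i)), combined with the fact that $\partial_\alpha\psi_1|_{\alpha=0}<1$ for $\sigma^2<4/\pi^2$, and the numerator is negative since $\arctan(\alpha/\sigma)$ is strictly decreasing in $\sigma$. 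Continuity at the endpoints then follows from the continuity of $\psi_1$ and the monotonicity just established.

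For part (ii), both boundary values are obtained by substituting $\alpha=0$ and $\alpha=1$ directly into the fixed-point equation $\sigma^2=\psi_2(\alpha,\sigma^2;\delta)$ using the closed form \eqref{Eqn:map_expression_real_b}. At $\alpha=0$, the equation reduces to the quadratic
\[
(\delta-1)\sigma^2+\tfrac{4}{\pi}\sigma-1=0,
\]
whose only nonnegative root gives the claimed formula for $F_2(0;\delta)$; the other root is discarded by positivity. At $\alpha=1$, using $\lim_{\sigma\to 0^+}\arctan(1/\sigma)=\pi/2$, the choice $\sigma^2=0$ satisfies the equation, and Lemma \ref{lem:psi2_real} (ii) guarantees this is the unique nonnegative fixed point, so $F_2(1;\delta)=0$.

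The only mild obstacle is verifying that $F_1$ extends continuously to $\sigma^2=4/\pi^2$ with value $0$, which is not entirely automatic from Lemma \ref{lem:psi1_real} since at that point the two fixed points of $\psi_1$ coalesce. I will handle this by a direct $\epsilon$-argument: for $\sigma^2$ slightly below $4/\pi^2$, concavity of $\psi_1$ in $\alpha$ forces the non-zero fixed point to lie within an interval whose length shrinks with $\frac{2}{\pi\sigma}-1$, so $F_1(\sigma^2)\to 0$ from above. Everything else is routine calculus and is essentially identical in structure to Section \ref{proof:lemmadominationF_1} and Lemma \ref{Lem:2}.
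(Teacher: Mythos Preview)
Your approach is exactly the one the paper intends (it simply refers the reader back to the proof of Lemma~\ref{Lem:2}), and both parts go through as you outline. There is, however, one slip in your justification for the sign of the denominator $1-\partial_1\psi_1(F_1(\sigma^2),\sigma^2)$: you assert that $\partial_\alpha\psi_1\big|_{\alpha=0}<1$ for $\sigma^2<4/\pi^2$, but in fact $\partial_\alpha\psi_1\big|_{\alpha=0}=\tfrac{2}{\pi\sigma}>1$ in that range---this is precisely the condition under which the non-zero fixed point $F_1(\sigma^2)$ exists (Lemma~\ref{lem:psi1_real}~(iii)). The correct argument is that since $\psi_1(\cdot,\sigma^2)$ is concave and its graph crosses the diagonal from above to below at $\alpha=F_1(\sigma^2)$, the slope there must be strictly less than $1$, so the denominator is positive. (The paper's own proof of Lemma~\ref{Lem:2} contains the same misstatement when it references \eqref{Eqn:psi1_local_stability}, so you are in good company.) With this correction the rest of your argument is fine.
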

\begin{proof}
The proof is similar to the proof of Lemma \ref{Lem:2}.
\end{proof}

\subsubsection{Proof of Lemma \ref{Lem:F1_F2_real}}\label{ssec:proofF1F2relation}

It is straightforward to show that $F_2(\alpha;\delta)$ is a decreasing function of $\delta$ for any $\alpha\in[0,1]$. Hence, we only need to prove the lemma for the case where $\delta = \deltaAMP$. Based on the same arguments detailed in Section \ref{proof:lemmadominationF_1}, it suffices to prove the following inequality:
\BE\label{Eqn:F1_F2_real}
\psi_2\left(\alpha,F_1^{-1}(\alpha);\deltaAMP\right)<F_1^{-1}(\alpha),\quad \forall\alpha\in(0,1).
\EE
We make the following variable change:
\[
t=g^{-1}(\alpha),
\]
where $g:(0,\infty)\mapsto(0,1)$ is defined as (with some abuse of notations)
 \BE\label{Eqn:phi1_real}
 \begin{split}
g(t)&\Mydef \frac{2}{\pi}\mr{arctan}\left(t\right).
\end{split}
\EE 
Based on this re-parameterization, \eqref{Eqn:F1_F2_real} becomes
\BE\label{Eqn:F1_F2_real2}
 \psi_2\left( g(t), \frac{g^2(t)}{t^2};\deltaAMP \right) < \frac{g^2(t)}{t^2},\quad\forall t>0.
\EE
Substituting the definition of $\psi_2$ in \eqref{Eqn:map_expression_real_b} into \eqref{Eqn:F1_F2_real2} and after some straightforward calculations, it can be shown that \eqref{Eqn:F1_F2_real2} is implied by the following:
\BE\label{Eqn:G_real}
G(t)\Mydef t^2+\frac{4}{\pi}\frac{t}{g(t)}-\frac{t^2}{g^2(t)}>1-\deltaAMP,\quad\forall t>0.
\EE
From \eqref{Eqn:phi1_real} and \eqref{Eqn:G_real} and noting $\deltaAMP=\frac{\pi^2}{4}-1$, we can verify that $\lim_{t\to0_+}G(t)=1-\deltaAMP$. Consequently, it suffices to prove that $G(t)$ is strictly increasing on $(0,\infty)$. To this end, we calculate $G'(t)$:
\BE\label{Eqn:G_prime}
\begin{split}
G'(t)&=2t+\frac{4}{\pi}h'(t)-2h(t)\cdot h'(t) \\
&=2h(t) h'(t)\cdot\left(\frac{t}{h(t)h'(t)}+\frac{2}{\pi}\frac{1}{h(t)} -1 \right),
\end{split}
\EE
where for convenience we defined
\BE\label{Eqn:h_def}
h(t)\Mydef \frac{t}{g(t)}=\frac{\pi}{2}\frac{t}{\mr{arctan}(t)}.
\EE
We first note that $h'(t)>0$:
\BE\label{Eqn:h_prime}
\begin{split}
h'(t)=\frac{\pi}{2}\cdot\frac{\mr{arctan}(t) -\frac{t}{1+t^2}}{\mr{arctan}^2(t)}>0,\quad t>0,
\end{split}
\EE
where the inequality follows since $\mr{arctan}(t)-\frac{t}{1+t^2}$ is strictly increasing on $(0,\infty)$ and $[\mr{arctan}(t)-\frac{t}{1+t^2}]|_{t=0}=0$. Hence, to prove $G'(t)>0$, we only need to prove that (cf.~\eqref{Eqn:G_prime})
\BE\label{Eqn:F1_F2_ineq}
\underbrace{\frac{t}{h(t)h'(t)}}_{G_1(t)}+\underbrace{\frac{2}{\pi}\frac{1}{h(t)}}_{G_2(t)} -1>0,\quad\forall t>0.
\EE
Similar to the treatment in Section \ref{proof:lemmadominationF_1}, we consider two different cases: (1) $0<t\le0.75$ and (2) $t\ge0.75$.
\begin{itemize}
\item[(i)] Case I: $0<t\le0.75$. From \eqref{Eqn:h_prime}, $h(t)$ is a strictly increasing function of $t>0$, and thus $G_2(t)=\frac{2}{\pi h(t)}$ is strictly decreasing.

We next show that $G_1(t)$ is an increasing function of $t>0$. The derivative of $G_1(t)$ is given by:
\[
\begin{split}
G_1'(t) &\overset{(a)}{=} \left( \frac{4}{\pi^2}\frac{\mr{arctan}^3(t)}{\mr{arctan}(t)-\frac{t}{1+t^2}} \right)' \\
&=\frac{ 4 }{\pi^2}\cdot \frac{ \mr{arctan}^2(t) \left[(3+t^2)\mr{arctan}(t)-3t\right] }{\left[  t-(1+t^2)\mr{arctan}(t)\right]^2  }\\
&\overset{(b)}{>}0,
\end{split}
\]
where (a) is from \eqref{Eqn:F1_F2_ineq}, \eqref{Eqn:h_def} and \eqref{Eqn:h_prime}, and (b) is a consequence of the following facts: (i) $[(3+t^2)\mr{arctan}(t)-3t]_{t=0}=0$, (ii) $[(3+t^2)\mr{arctan}(t)-3t]'=2t\left(\mr{arctan}(t)-\frac{t}{1+t^2}\right)>0$ (similar to \eqref{Eqn:h_prime}).

The following proof is based on the idea introduced in Section \ref{proof:lemmadominationF_1}: since $G_1(t)$ is an increasing function and $G_2(t)$ is a decreasing function, the following holds for any $c_2>c_1>0$:
\[
G_1(c_1)+G_2(c_2)-1>0\Longrightarrow G_1(t)+G_2(t)-1>0,\quad \forall t\in[c_1,c_2].
\]
We verified that $G_1(c_1)+G_2(c_2)-1>0$ holds for a sequence of intervals: $[c_1,c_2]=[0,0.32]$, $[c_1,c_2]=[0.32,0.45]$, $[c_1,c_2]=[0.45,0.55]$, $[c_1,c_2]=[0.55,0.64]$, $[c_1,c_2]=[0.64,0.7]$, $[c_1,c_2]=[0.7,0.75]$. Altogether, we proved $G_1(t)+G_2(t)-1>0$ for $t\in(0,0.75]$.

\item[(ii)]
Case II: $t\ge0.75$. From the definitions in \eqref{Eqn:F1_F2_ineq}, \eqref{Eqn:h_def} and \eqref{Eqn:h_prime}, and based on some calculations not shown here, we write the LHS of \eqref{Eqn:F1_F2_ineq} as
\BE\label{Eqn:F1_F2_ineq2}
\begin{split}
G_1(t)+G_2(t) -1& =\frac{4}{\pi^2} \cdot \underbrace{\frac{(t^3 + t) \cdot \mr{arctan}^3 (t) + (t^2
  + 1)\mr{arctan}^2 (t) - \mr{arctan} (t) \cdot t}{(t^3 + t)\mr{arctan}
  (t) - t^2} }_{R(t)} - 1.
\end{split}
\EE
From $\lim_{t\to\infty}\mr{arctan}(t)=\pi/2$, it is easy to see that 
\[
\lim_{t\to\infty}G_1(t)+G_2(t) -1 = 0.
\]
Hence, to prove $G_1(t)+G_2(t) -1>0$ for $t\ge0.75$, it suffices to show that $R(t)$ in \eqref{Eqn:F1_F2_ineq2} is strictly decreasing on $[0.75,\infty)$. To this end, we calculate $R'(t)$ below:
\[
R'(t) =  \frac{(t^4 - 1) \mr{arctan}^3 (t) + t^3 + 3 (t^3 + t) \mr{arctan}^2 (t) - 3
   (t^4 + t^2) \mr{arctan} (t)}{ t^2(1+t^2)\left[ t-(1+t^2)\mr{arctan}(t) \right]^2 }\Mydef\frac{N(t)}{D(t)}.
\]
Since $D(t)>0$, we have
\[
R'(t)<0\Longleftrightarrow N(t)<0.
\]
To this end, it can be shown that
\[
N'(t) =4 t^2 \cdot \mr{arctan} (t) \cdot [ t \cdot
  \mr{arctan}^2 (t) + 3 \cdot \mr{arctan} (t) - 3 t ] .
\]
Hence, to prove $N'(t)<0$ for $t\ge0.75$, we only need to prove
\[
t \cdot \mr{arctan}^2 (t) + 3 \cdot \mr{arctan} (t) - 3 t<0,\quad \forall t\ge0.75,
\]
which is equivalent to proving
\[
\mr{arctan}(t)<\frac{-3+\sqrt{9+12t^2}}{2t},\quad \forall t\ge0.75.
\]
It is proved in \cite[Theorem 3]{Zhu2008} that
\[
\mr{arctan}(t)< \frac{8 t}{3 + \sqrt{25 + \frac{256}{\pi^2} t^2}},\quad \forall t>0.
\]
Hence, it suffices to prove
\[
\frac{8 t}{3 + \sqrt{25 + \frac{256}{\pi^2} t^2}}< \frac{-3+\sqrt{9+12t^2}}{2t}=\frac{6 t}{3 + \sqrt{9 + 12 t^2}},\quad \forall t\ge0.75.
\] 
which, after some straightforward manipulations, reduces to 
\[
  3 + 4 \sqrt{9 + 12 t^2}-3 \sqrt{25 +\frac{256}{\pi^2} t^2}<0,\quad \forall t>0.75.
\]
We can verify that the above inequality holds for $t=0.75$. We complete our proof by showing that the LHS of the above inequality is decreasing in $t\in[0.75,\infty)$:
\[
\begin{split}
\left(  3 + 4 \sqrt{9 + 12 t^2}-3 \sqrt{25 +\frac{256}{\pi^2} t^2}\right)' 
&=48t\cdot\left(\frac{1}{\sqrt{9+12t^2}} -\frac{1}{\sqrt{\frac{25\pi^4}{256}+\pi^2t^2}} \right)\\
&=48t\cdot\frac{ (\pi^2-12)t^2+\frac{25\pi^4}{256}-9 }{T_1^2T_2+T_1T_2^2}\\
&<0,\quad \forall t>0.75,
\end{split}
\]
where $T_1\Mydef \sqrt{9+12t^2}$ and $T_2\Mydef \sqrt{\frac{25\pi^4}{256}+\pi^2t^2}$, and the last inequality can be easily proved since $(\pi^2-12)t^2+\frac{25\pi^4}{256}-9<0 $ is a strictly decreasing function of $t$ and $ [(\pi^2-12)t^2+\frac{25\pi^4}{256}-9]_{t=0.75}<0 $.

\end{itemize}

\subsubsection{Proof of Lemma \ref{Lem:regionI_III_real}} \label{ssec:prooflemmaconvergenceR1R3_real}
\begin{itemize}

\item \textbf{Preliminaries}

\begin{lemma}\label{lem:LhatL_real}
For any $\alpha>0$ and $\delta>0$, $L(\alpha;\delta)$ satisfies 
 \BE\label{Eqn:Lhat_real}
L(\alpha, \delta) \geq \hat{L}(\alpha;\delta) \Mydef \frac{1}{\delta}\left( 1-\frac{4}{\pi^2}-\alpha^2 \right).
 \EE
 \end{lemma}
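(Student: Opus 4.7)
The plan is to reduce the claimed inequality to a one-line application of the Cauchy--Schwarz inequality. Since $\delta>0$, multiplying both sides of \eqref{Eqn:Lhat_real} by $\delta$ and substituting the closed-form expression for $L(\alpha;\delta)$ from Definition \ref{def:L_realvaleued} shows that the lemma is equivalent to
\[
\left[\frac{2}{\pi}\cos\!\left(\frac{\pi}{2}\alpha\right)+\alpha\sin\!\left(\frac{\pi}{2}\alpha\right)\right]^{2}\;\le\;\frac{4}{\pi^{2}}+\alpha^{2}.
\]

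I would then set $a=\tfrac{2}{\pi}$, $b=\alpha$, and $\theta=\tfrac{\pi}{2}\alpha$, so that the left-hand side is $(a\cos\theta+b\sin\theta)^{2}$. By the Cauchy--Schwarz inequality (applied to the vectors $(a,b)$ and $(\cos\theta,\sin\theta)$), or equivalently by writing $a\cos\theta+b\sin\theta=\sqrt{a^{2}+b^{2}}\,\sin(\theta+\varphi)$ for a suitable phase $\varphi$, we obtain
\[
(a\cos\theta+b\sin\theta)^{2}\;\le\;(a^{2}+b^{2})(\cos^{2}\theta+\sin^{2}\theta)\;=\;a^{2}+b^{2}\;=\;\frac{4}{\pi^{2}}+\alpha^{2},
\]
which is exactly the desired inequality. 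Dividing by $\delta$ yields $L(\alpha;\delta)\ge\hat{L}(\alpha;\delta)$.

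There is essentially no technical obstacle here: the only content of the lemma is that $\hat{L}$ is obtained from $L$ by replacing $\bigl(\tfrac{2}{\pi}\cos(\tfrac{\pi}{2}\alpha)+\alpha\sin(\tfrac{\pi}{2}\alpha)\bigr)^{2}$ with the Cauchy--Schwarz upper bound $\tfrac{4}{\pi^{2}}+\alpha^{2}$, and the motivation for introducing $\hat{L}$ is precisely that the closed-form polynomial bound $\hat{L}$ is easier to manipulate in the subsequent arguments (as already exploited for the complex-valued case in Lemma \ref{Lem:L_bound}). The proof is thus a short one-step reduction followed by Cauchy--Schwarz, and I would present it with no more than two or three displayed equations.
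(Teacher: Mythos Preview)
Your proposal is correct and essentially identical to the paper's own proof: both reduce the inequality to $\bigl[\tfrac{2}{\pi}\cos(\tfrac{\pi}{2}\alpha)+\alpha\sin(\tfrac{\pi}{2}\alpha)\bigr]^{2}\le \tfrac{4}{\pi^{2}}+\alpha^{2}$ and then apply Cauchy--Schwarz to the vectors $(\tfrac{2}{\pi},\alpha)$ and $(\cos(\tfrac{\pi}{2}\alpha),\sin(\tfrac{\pi}{2}\alpha))$.
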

 \begin{proof}
 According to Definition \ref{def:L_realvaleued} we have 
 \BE
\begin{split}
L(\alpha;\delta) \Mydef \frac{1}{\delta}\left\{1-\left[\frac{2}{\pi}\cos\left(\frac{\pi}{2}\alpha\right) +\alpha\sin\left(\frac{\pi}{2}\alpha\right)\right]^2\right\}.
\end{split}
 \EE
Then, the inequality $ \hat{L}(\alpha;\delta) \leq L(\alpha;\delta) $ is equivalent to
 \[
 \left[\cos\left(\frac{\pi}{2}\right),\sin\left(\frac{\pi}{2}\right)\right]\left[\frac{2}{\pi},\alpha\right]^\UT\le \sqrt{ \frac{4}{\pi^2} +\alpha^2},
 \]
 which is clear from the Cauchy-Schwartz Inequality. 
 \end{proof} 

 \begin{lemma}\label{lem:psi2deltaAMP}
For any $\alpha\in[0,1]$, $\psi_2(\alpha,\sigma^2;\deltaAMP)$ in \eqref{Eqn:map_expression_real} is an increasing function of $\sigma^2$ in $\sigma^2\in[L(\alpha;\deltaAMP),\infty)$.
 \end{lemma}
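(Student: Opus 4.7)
\textbf{Proof proposal for Lemma \ref{lem:psi2deltaAMP}.}

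The plan is to parallel the argument used in Lemma~\ref{Lem:monotonicity_Psi2_G1} for the complex case, but exploiting the fact that the real-valued derivative $\partial_{\sigma^2}\psi_2$ has a very transparent closed form. From \eqref{Eqn:psi2_dif_real},
\[
\frac{\partial\psi_2(\alpha,\sigma^2;\delta)}{\partial\sigma^2}
=\frac{1}{\delta}\cdot\frac{\alpha^2+\sigma^2-\tfrac{2\sigma}{\pi}}{\alpha^2+\sigma^2},
\]
so the sign of the derivative is controlled by the quadratic $q(\sigma)\Mydef \sigma^2-\tfrac{2\sigma}{\pi}+\alpha^2$. If $\alpha\ge \alpha_{\ast}=1/\pi$ then $q\ge 0$ for all $\sigma\ge 0$, so Lemma~\ref{lem:psi2_real}~(iii) directly applies and $\psi_2$ is already monotonically increasing in $\sigma^2$ on the whole of $[0,\infty)\supset[L(\alpha;\deltaAMP),\infty)$. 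The substantive case is therefore $0\le \alpha<1/\pi$, which I treat below.

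For $\alpha\in[0,1/\pi)$, the two real roots of $q(\sigma)=0$ are
\[
\sigma_{1}(\alpha)=\tfrac{1}{\pi}-\sqrt{\tfrac{1}{\pi^2}-\alpha^2},\qquad
\sigma_{2}(\alpha)=\tfrac{1}{\pi}+\sqrt{\tfrac{1}{\pi^2}-\alpha^2},
\]
and $\partial_{\sigma^2}\psi_2(\alpha,\sigma^2;\delta)\ge 0$ exactly on $[0,\sigma_1^2(\alpha)]\cup[\sigma_2^2(\alpha),\infty)$. Thus to establish monotone increase of $\psi_2$ on $[L(\alpha;\deltaAMP),\infty)$ it suffices to prove
\begin{equation}\label{Eqn:reduction_real}
L(\alpha;\deltaAMP)\ge \sigma_2^2(\alpha),\qquad \alpha\in[0,1/\pi].
\end{equation}
By Lemma~\ref{lem:LhatL_real}, $L(\alpha;\deltaAMP)\ge \hat L(\alpha;\deltaAMP)=\frac{1}{\deltaAMP}(1-\tfrac{4}{\pi^2}-\alpha^2)$, so \eqref{Eqn:reduction_real} follows once one shows
\begin{equation}\label{Eqn:reduction_real2}
\hat L(\alpha;\deltaAMP)\ge \sigma_2^2(\alpha),\qquad \alpha\in[0,1/\pi].
\end{equation}

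The inequality \eqref{Eqn:reduction_real2} is straightforward algebra. Using $\deltaAMP=\pi^2/4-1$, a direct computation (setting $u=\sqrt{1/\pi^2-\alpha^2}\in[0,1/\pi]$ and expanding) reduces \eqref{Eqn:reduction_real2} to
\[
\frac{2}{\pi}\Bigl(\tfrac{1}{\pi}-u\Bigr) \;\ge\; \frac{\alpha^2(8-\pi^2)}{\pi^2-4}.
\]
Since $\pi^2>8$, the right-hand side is non-positive, while the left-hand side is non-negative (with equality precisely at $\alpha=0$, where $u=1/\pi$). This verifies \eqref{Eqn:reduction_real2}, hence \eqref{Eqn:reduction_real}, and completes the argument.

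I do not expect any step here to be a serious obstacle: the only mild subtlety is the case split at $\alpha=\alpha_{\ast}=1/\pi$, which corresponds exactly to the appearance of real critical points of $q$. The worst-case comparison is tight at $\alpha=0$, where both $L(0;\deltaAMP)$ and $\sigma_2^2(0)$ equal $4/\pi^2$; this tightness is also responsible for the very choice of constant $\deltaAMP=\pi^2/4-1$ that fixes the phase-transition threshold in the real-valued analysis.
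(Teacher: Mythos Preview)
Your proof is correct and follows essentially the same approach as the paper: both split into $\alpha\ge 1/\pi$ (trivial by Lemma~\ref{lem:psi2_real}(iii)) and $\alpha<1/\pi$, reduce to the comparison $\hat L(\alpha;\deltaAMP)\ge\sigma_2^2(\alpha)$ via Lemma~\ref{lem:LhatL_real}, and verify the resulting algebraic inequality by observing one side is $\ge 0$ and the other $\le 0$ thanks to $\pi^2>8$. The only cosmetic difference is in how the final inequality is rearranged before invoking the sign argument.
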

 \begin{proof}
In Lemma \ref{lem:psi2_real}, we proved that $\psi_2$ is strictly increasing on $\sigma^2>0$ for $\alpha\ge1/\pi$. Hence, we only need to consider the case $\alpha<\alpha_{\ast}=\frac{1}{\pi}$. From the expression of $\psi_2$ in \eqref{Eqn:map_expression_real}, it is straightforward to see that $\psi_2$ is increasing on $\sigma^2\in[\sigma^2_2(\alpha),\infty)$ (for $\alpha<1/\pi$), where
\[
\sigma^2_2(\alpha)\Mydef\left(  \frac{1}{\pi}+\sqrt{ \frac{1}{\pi^2} -\alpha^2}\right)^2.
\]
Lemma \ref{lem:LhatL_real} shows that $\hat{L}(\alpha;\delta)$ is a lower bound of $L(\alpha;\delta)$ for any $\alpha\in(0,1)$. Hence, it suffices to prove that
 \BE\label{Eqn:psi2_L_mono_real_1}
 \hat{L}(\alpha;\deltaAMP)=\frac{1}{\deltaAMP}\left(1-\frac{4}{\pi^2} -\alpha^2\right) \ge \sigma^2_2(\alpha),\quad\alpha\in\left[0,\pi^{-1}\right].
 \EE
Noting $\deltaAMP=\frac{\pi^2}{4}-1$, it can be shown that to prove \eqref{Eqn:psi2_L_mono_real_1} it suffices to prove
\[
\begin{split}
\frac{1}{\pi} +\frac{\pi}{2}\frac{\pi^2-8}{\pi^2-4}\alpha^2 &\ge \sqrt{ \frac{1}{\pi^2} -\alpha^2},\quad\alpha\in[0,\pi^{-1}].
\end{split}
\]
which holds since the LHS is lower bounded by $1/\pi$ while the RHS is upper bounded by $1/\pi$.
 \end{proof}

  \begin{lemma}\label{Lem:mono_delta_real}
$\psi_2(\alpha,L(\alpha, \delta);\delta)$ is a decreasing function of $\delta>0$ for any $\alpha>0$. 
 \end{lemma}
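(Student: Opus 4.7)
The plan is to mirror the complex-valued analog Lemma~\ref{Lem:monotonicity_delta}. First, observe that $L(\alpha;\delta)$ again splits as $L(\alpha;\delta) = \bar{\sigma}^2(\alpha)/\delta$ with
\[
\bar{\sigma}^2(\alpha) \Mydef 1 - \left[\tfrac{2}{\pi}\cos\left(\tfrac{\pi\alpha}{2}\right) + \alpha\sin\left(\tfrac{\pi\alpha}{2}\right)\right]^2
\]
independent of $\delta$. Hence it suffices to prove the stronger claim that for every fixed $\bar{\sigma}^2 > 0$ and $\alpha > 0$, the map $\delta \mapsto \psi_2(\alpha, \bar{\sigma}^2/\delta; \delta)$ is non-increasing. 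Setting $\gamma \Mydef 1/\delta$ and $H(\gamma) \Mydef \psi_2(\alpha, \gamma\bar{\sigma}^2; 1/\gamma)$, this reduces to showing $H'(\gamma) \geq 0$.

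Computing $H'(\gamma)$ directly from \eqref{Eqn:map_expression_real_b}, substituting $\sigma \Mydef \sqrt{\gamma}\bar{\sigma}$, and using the identity $\arctan(\alpha/\sigma) = \pi/2 - \arctan(\sigma/\alpha)$, one obtains after rearrangement
\[
H'(\gamma) = (\alpha-1)^2 + 2\sigma^2 - \frac{6\sigma}{\pi} + \frac{4\alpha}{\pi}\arctan\!\left(\frac{\sigma}{\alpha}\right) + \frac{2\alpha^2\sigma}{\pi(\alpha^2+\sigma^2)}.
\]
Setting $t \Mydef \sigma/\alpha$ and collecting powers of $\alpha$, this becomes
\[
H'(\gamma) = 2\bigl[\alpha^2(t^2 + \tfrac{1}{2}) - \alpha(1 + p(t)/\pi) + \tfrac{1}{2}\bigr], \qquad p(t) \Mydef 3t - 2\arctan(t) - \frac{t}{1+t^2}.
\]
Regarded as a quadratic in $\alpha$ with positive leading coefficient, non-negativity for all $\alpha \in \mathbb{R}$ is equivalent to the discriminant inequality $(1 + p(t)/\pi)^2 \leq 2t^2 + 1$ for every $t \geq 0$.

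To establish this last inequality, set $q(t) \Mydef \pi(\sqrt{2t^2+1} - 1) - p(t)$; it suffices to show $q(t) \geq 0$. One checks $q(0) = 0$ and, since $p'(t) = t^2(5 + 3t^2)/(1+t^2)^2$ satisfies $p'(0) = 0$, also $q'(0) = 0$. Strict positivity of $q'(t)$ for $t > 0$ is equivalent (after squaring two positive quantities) to the algebraic inequality $4\pi^2(1+t^2)^4 > t^2(5+3t^2)^2(1+2t^2)$, which, under the substitution $u \Mydef t^2$, expands to
\[
4\pi^2 + (16\pi^2 - 25)\,u + (24\pi^2 - 80)\,u^2 + (16\pi^2 - 69)\,u^3 + (4\pi^2 - 18)\,u^4 > 0.
\]
Since $\pi^2 > 9/2$ gives $4\pi^2 - 18 > 0$, all five coefficients are strictly positive, so the polynomial is positive for every $u \geq 0$, establishing $q'(t) > 0$ and hence $q(t) \geq 0$.

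The main conceptual step is discovering the quadratic-in-$\alpha$ reformulation and its discriminant reduction; everything else collapses to a nonnegative-coefficient polynomial check, free of transcendental content. Note that the discriminant bound is saturated exactly at $t = 0$, which corresponds to $\sigma = 0$, i.e., $\alpha = 1$ (since $\bar{\sigma}^2(1) = 0$); this is the only point where monotonicity in $\delta$ becomes non-strict, consistent with $\psi_2(1, 0; \delta) \equiv 0$ and with the fixed-point behaviour recorded in Theorem~\ref{lem:global_real}.
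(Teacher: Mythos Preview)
Your proof is correct and follows the same overall strategy as the paper: write $L(\alpha;\delta)=\bar\sigma^2/\delta$, differentiate $\psi_2(\alpha,\gamma\bar\sigma^2;\gamma^{-1})$ in $\gamma=1/\delta$, recognize the result as a quadratic in $\alpha$ with positive leading coefficient, and prove non-negativity via the discriminant bound $(1+p(t)/\pi)^2\le 2t^2+1$. Up to this point the two proofs are essentially identical (the paper uses $\beta=1/\sqrt\delta$ instead of $\gamma$, but arrives at the same quadratic).

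The genuine difference is in how the discriminant inequality is established. The paper first checks positivity of $1+p(t)/\pi$ separately, then bounds $p(t)$ from above using $\arctan(s)>s/(1+s^2)$, reduces to $\tfrac{3}{\pi}\tfrac{s}{1+s^2}<\tfrac{2}{1+\sqrt{1+2s^2}}$, and finishes by a monotonicity/limit argument. You instead differentiate $q(t)=\pi(\sqrt{2t^2+1}-1)-p(t)$, square two positive quantities, and reduce everything to a single quartic in $u=t^2$ with all five coefficients manifestly positive. Your route is cleaner: it eliminates the transcendental $\arctan$ entirely and replaces the ad hoc estimates with a one-line coefficient check. One small point worth making explicit: your reduction from the squared discriminant bound to $q(t)\ge0$ tacitly uses $1+p(t)/\pi\ge0$; this is immediate from $p'(t)=t^2(5+3t^2)/(1+t^2)^2\ge0$ and $p(0)=0$, but stating it would close the only logical gap.
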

 \begin{proof}
 Note that we can represent $L(\alpha, \delta)$ as $\frac{1}{\delta}\bar{\sigma}^2$, where $\bar{\sigma}^2$ is a number that does not depend on $\delta$. Hence, we will prove that  $\psi_2\left(\alpha,\frac{1}{\delta}\bar{\sigma}^2;\delta\right)$ is a decreasing function of $\delta$ for any fixed $\alpha>0$ and $\bar{\sigma}^2>0$. From the definition of $\psi_2$ in \eqref{Eqn:map_expression_real_b}, we have
 \[
 \begin{split}
\psi_2\left(\alpha,\frac{1}{\delta}\bar{\sigma}^2;\delta\right) &=\frac{1}{\delta}\left[ \alpha^2+\frac{1}{\delta}\bar{\sigma}^2 +1-\frac{4\bar{\sigma}}{\pi\sqrt{\delta}}-\frac{4\alpha}{\pi}\mr{arctan}\left(\frac{\alpha\sqrt{\delta}}{\bar{\sigma}}\right)  \right] \\
&\overset{(a)}{=}\frac{1}{\delta}\left[ (\alpha-1)^2+\frac{1}{\delta}\bar{\sigma}^2 -\frac{4\bar{\sigma}}{\pi\sqrt{\delta}}+\frac{4\alpha}{\pi}\mr{arctan}\left(\frac{\bar{\sigma}}{\alpha\sqrt{\delta}}\right)  \right]\\
&\overset{(b)}{=}(\alpha-1)^2\beta^2+ \alpha^2\bar{s}^2\beta^4 -\frac{4\bar{s} \alpha }{\pi}\beta^3+\frac{4\alpha}{\pi}\mr{arctan}\left(\beta \bar{s}\right)\beta^2,
 \end{split}
 \]
 where (a) follows from the identity $\mr{arctan}\left(\frac{1}{s}\right)=\frac{\pi}{2}-\mr{arctan}(s)$, and in (b) we introduced the following definitions:
 \[
 \beta\Mydef \frac{1}{\sqrt{\delta}}\quad\text{and}\quad\bar{s}\Mydef \frac{\bar{\sigma}}{\alpha}.
 \]
 We then calculate the derivative of $\psi_2\left(\alpha,\frac{1}{\delta}\bar{\sigma}^2;\delta\right)=\psi_2\left(\alpha,\beta^2\bar{\sigma}^2;\beta^{-2}\right)$ w.r.t. $\beta$:
 \[
 \begin{split}
 \frac{\partial \psi_2\left(\alpha,\beta^2\bar{\sigma}^2;\beta^{-2}\right)}{\partial\beta}
 &=\beta\left[  2(\alpha-1)^2 + 4\alpha^2\bar{s}^2\beta^2-\frac{12\bar{s}\alpha}{\pi}\beta+ \frac{8\alpha}{\pi}\mr{arctan}\left(\beta \bar{s}\right)+\frac{4\alpha\beta}{\pi}\frac{\bar{s}}{1+\beta^2\bar{s}^2}\right]\\
 &=2\beta\left[  (\alpha-1)^2 + 2\alpha^2{s}^2-\frac{6{s}\alpha}{\pi}+ \frac{4\alpha}{\pi}\mr{arctan}\left({s}\right)+\frac{2\alpha}{\pi}\frac{s}{1+s^2}\right],
 \end{split}
 \]
 where in the last step we defined $s\Mydef\beta\bar{s}$.
It suffices to prove that
 \[
 (\alpha-1)^2 + 2\alpha^2{s}^2-\frac{6{s}\alpha}{\pi}+ \frac{4\alpha}{\pi}\mr{arctan}\left({s}\right)+\frac{2\alpha}{\pi}\frac{s}{1+s^2}>0,
 \]
 or
 \[
(1+2s^2)\alpha^2+\left[ \frac{4}{\pi}\mr{arctan}(s)+\frac{2s}{\pi(1+s^2)}-\frac{6s}{\pi} -2\right]\alpha+1>0.
 \]
 We prove by showing that the discriminant of the above quadratic function (of $\alpha$) is negative:
 \[
 \left[ 2+\frac{6s}{\pi}-\frac{4}{\pi}\mr{arctan}(s)-\frac{2s}{\pi(1+s^2)} \right]^2 - 4 (1+2s^2)<0.
 \]
We next prove that the following two inequalities hold:
\BE\label{Eqn:lem_mono_delta_real1}
2+ \frac{6s}{\pi}-\frac{4}{\pi}\mr{arctan}(s)-\frac{2s}{\pi(1+s^2)}  > 0,
\EE
and
\BE\label{Eqn:lem_mono_delta_real2}
2+\frac{6s}{\pi}-\frac{4}{\pi}\mr{arctan}(s)-\frac{2s}{\pi(1+s^2)}-2\sqrt{1+2s^2}<0.
\EE
First, \eqref{Eqn:lem_mono_delta_real1} follows from the following facts: (i) $2>\frac{4}{\pi}\mr{arctan}(s)$ and (ii) $3>1/(1+s^2)$.  We rewrite \eqref{Eqn:lem_mono_delta_real2} as
\BE\label{Eqn:lem_mono_delta_real3}
\begin{split}
\frac{3s}{\pi}-\frac{2}{\pi}\mr{arctan}(s)-\frac{s}{\pi(1+s^2)}<\sqrt{1+2s^2}-1=\frac{2s^2}{1+\sqrt{1+2s^2}}.
\end{split}
\EE
Using $\mr{arctan}(s)>s/(1+s^2)$ (see \eqref{Eqn:h_prime}), we can upper bound the LHS by
\[
\begin{split}
\frac{3s}{\pi}-\frac{2}{\pi}\mr{arctan}(s)-\frac{s}{\pi(1+s^2)} <\frac{3}{\pi}\frac{s^3}{1+s^2}.
\end{split}
\]
Hence, to prove \eqref{Eqn:lem_mono_delta_real3}, it is sufficient to prove
\[
\frac{3}{\pi}\frac{s^3}{1+s^2} < \frac{2s^2}{1+\sqrt{1+2s^2}},
\]
or 
\[
\frac{3}{\pi}\frac{s}{1+s^2} < \frac{2}{1+\sqrt{1+2s^2}},
\]
which holds since (i) LHS is an increasing function of $s$ while the RHS is a decreasing function, and (ii) equality holds when $s\to\infty$.

 \end{proof}

 \begin{lemma}\label{Lem:RegionIII_bound_real}
 For any $(\alpha,\sigma^2)\in\mathcal{R}_{2a}$ and $\delta\ge\deltaAMP=\frac{\pi^2}{4}-1$, we have $\psi_2(\alpha,\sigma^2;\delta)<F_1^{-1}(\alpha)$, where $\mathcal{R}_{2a}$ is defined in \eqref{Eqn:regions_real}.
 \end{lemma}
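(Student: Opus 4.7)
The plan is to follow the three-case strategy used for the complex-valued analog, Lemma~\ref{Lem:RegionIII_bound}, with the real-valued thresholds $\alpha_\ast = 1/\pi$ (from Lemma~\ref{lem:psi2_real}-(iii)) and $\deltaAMP = \pi^2/4-1$. Since $(\alpha,\sigma^2) \in \mathcal{R}_{2a}$ means $\sigma^2 \in \mathcal{D}_\alpha \Mydef [L(\alpha;\delta), F_1^{-1}(\alpha)]$, I would reduce to showing $\max_{\sigma^2\in\mathcal{D}_\alpha} \psi_2(\alpha,\sigma^2;\delta) < F_1^{-1}(\alpha)$ for all $\alpha \in (0,1)$ and $\delta \ge \deltaAMP$, split according to whether $\alpha \ge 1/\pi$, and when $\alpha < 1/\pi$ further split by the size of $\delta$. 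In Case~(i), $\alpha \in [1/\pi, 1]$, Lemma~\ref{lem:psi2_real}-(iii) makes $\psi_2(\alpha,\cdot;\delta)$ strictly increasing on $[0,\infty)$, so the max over $\mathcal{D}_\alpha$ is at $\sigma^2 = F_1^{-1}(\alpha)$; combining $F_1^{-1}(\alpha) > F_2(\alpha;\delta)$ from Lemma~\ref{Lem:F1_F2_real} with the weak global attractiveness of $F_2$ in Lemma~\ref{lem:psi2_real}-(ii) then yields $\psi_2(\alpha, F_1^{-1}(\alpha); \delta) < F_1^{-1}(\alpha)$.

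Case~(ii) handles $\alpha \in (0, 1/\pi)$ with $\delta \in [\deltaAMP, \pi^2-5]$. Setting $\tau = \sigma^2$, the stationary condition $\partial_\tau\psi_2 = \frac{1}{\delta}\left(1 - \frac{2\sqrt{\tau}}{\pi(\alpha^2+\tau)}\right) = 0$ shows that $\psi_2(\alpha,\cdot;\delta)$ has a local max at $\sigma_1^2(\alpha) = \big(\tfrac{1}{\pi} - \sqrt{\tfrac{1}{\pi^2}-\alpha^2}\big)^2$ and a local min at $\sigma_2^2(\alpha) = \big(\tfrac{1}{\pi} + \sqrt{\tfrac{1}{\pi^2}-\alpha^2}\big)^2$. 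I would verify $\hat L(\alpha;\delta) \ge \sigma_1^2(\alpha)$ uniformly over $\alpha \in [0, 1/\pi]$ whenever $\delta \le \pi^2-5$: reparametrizing by $v = 1/\pi^2 - \alpha^2 \in [0,1/\pi^2]$ turns this into the one-variable inequality $(1-5/\pi^2+v)/\delta \ge 1/\pi^2 - 2\sqrt{v}/\pi + v$, whose difference in $v$ has strictly positive derivative on $(0, 1/\pi^2]$, so the binding case is $v = 0$, yielding exactly $\delta \le \pi^2 - 5$. Combined with $L \ge \hat L$ (Lemma~\ref{lem:LhatL_real}), this forces the maximum of $\psi_2$ on $\mathcal{D}_\alpha$ to sit at one of the two endpoints. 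I would then reduce to $\delta = \deltaAMP$ via Lemma~\ref{Lem:mono_delta_real} (left endpoint) and the trivial $1/\delta$ scaling at fixed $\sigma^2$ (right endpoint); at $\delta = \deltaAMP$, Lemma~\ref{lem:psi2deltaAMP} makes $\psi_2(\alpha,\cdot;\deltaAMP)$ increasing on $[L(\alpha;\deltaAMP), \infty)$, so its max is at $F_1^{-1}(\alpha)$, and Case~(i)'s argument closes the inequality.

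Case~(iii), $\alpha \in (0, 1/\pi)$ and $\delta > \pi^2 - 5$, is handled by a crude bound. On $\mathcal{R}_{2a}$, $\sigma \le \sqrt{F_1^{-1}(0)} = 2/\pi$ and $\arctan(\alpha/\sigma) \ge 0$, so
\[
\psi_2(\alpha,\sigma^2;\delta) \le \frac{1}{\delta}\Big[\alpha^2 + \big(\sigma - \tfrac{2}{\pi}\big)^2 - \tfrac{4}{\pi^2} + 1\Big] \le \frac{1 + 1/\pi^2}{\delta}.
\]
Since $F_1^{-1}$ is strictly decreasing, $F_1^{-1}(\alpha) > F_1^{-1}(1/\pi) = 1/(\pi\tan(1/2))^2 \approx 0.340$, while $(1+1/\pi^2)/(\pi^2-5) \approx 0.226$, so $\psi_2 < F_1^{-1}(\alpha)$ whenever $\delta > \pi^2 - 5$. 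The main obstacle I anticipate is the endpoint-maximum verification in Case~(ii); once the threshold $\pi^2 - 5 \approx 4.87$ is identified, it comfortably exceeds the $\approx 3.24$ threshold needed for the crude bound in Case~(iii), so the three cases splice together seamlessly over all $\delta \ge \deltaAMP$.
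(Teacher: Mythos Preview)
Your proposal is correct and follows the same three-case strategy as the paper (split at $\alpha_\ast=1/\pi$, then for $\alpha<1/\pi$ split by a $\delta$-threshold near $4.87$), with Case~(i) identical and Case~(ii) reduced to $\delta=\deltaAMP$ via Lemma~\ref{Lem:mono_delta_real} and Lemma~\ref{lem:psi2deltaAMP} just as the paper does. The only tactical differences are that (a) in Case~(ii) you verify $L\ge\sigma_1^2$ through the lower bound $\hat L$ and the $v$-reparametrization, yielding the clean threshold $\pi^2-5$, whereas the paper uses $L(\alpha;\delta)\ge L(\pi^{-1};\delta)$ and $\sigma_1^2(\alpha)<\pi^{-2}$ to get the numerically indistinguishable $\delta_\ast\approx4.876$; and (b) in Case~(iii) you use the more elementary crude bound $(1+\pi^{-2})/\delta$ obtained by dropping the $\arctan$ term and completing the square, while the paper works harder to prove the sharper $\psi_2\le 1/\delta$ on $\mathcal{R}_{2a}$---both bounds comfortably clear $F_1^{-1}(\pi^{-1})\approx0.340$ at the relevant $\delta$, so nothing is lost.
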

 \begin{proof}
 The proof is similar to that of Lemma \ref{Lem:RegionIII_bound}. We consider three different cases:
 \begin{enumerate}
 \item [(i)] $\alpha\in[\pi^{-1},1]$ and $\delta\in[\deltaAMP,\infty]$.
 \item [(ii)] $\alpha\in[0,\pi^{-1})$ and $\delta\in[\deltaAMP,\delta_*]$.
 \item [(iii)] $\alpha\in[0,\pi^{-1}]$ and $\delta\in[\delta_*,\infty)$,
 \end{enumerate}
 where $\delta_* = \frac{  1-\left[\frac{2}{\pi}\cos(0.5)+\frac{1}{\pi}\sin(0.5)\right]^2 }{\frac{1}{\pi^2}}\approx 4.87$.
 
 \textit{Case (i):} In Lemma \ref{lem:psi2_real}, we proved that $\psi_2$ is strictly increasing on $\sigma^2>0$ for $\alpha\ge1/\pi$. Since in $\mathcal{R}_{2a}$ $\sigma^2 < F_1^{-1} (\alpha)$, the proof of 
 \[
 \psi_2(\alpha,\sigma^2;\delta)<F_1^{-1}(\alpha)
 \]
 on $\mathcal{R}_{2a}$ reduces to the proof of 
 \[
 \max_{\sigma^2 < F_1^{-1} (\alpha)} \psi_2(\alpha,\sigma^2;\delta) = \psi_2(\alpha, F_1^{-1} (\alpha); \delta) <F_1^{-1}(\alpha).
 \]
 The last equality is clear from the global attractiveness of $F_2(\alpha)$ in $\psi_2$ that is proved in Lemma \ref{lem:psi2_real}-ii and the fact that $F_2(\alpha) < F_1^{-1} (\alpha)$ that is proved in Lemma \ref{Lem:F1_F2_real}.

 \textit{Case (ii):} As shown in \eqref{Eqn:psi2_dif_real} we have
 \BE\label{Eqn:psi2_dif_real}
\begin{split}
\frac{\partial \psi_2(\alpha,\sigma^2;\delta)}{\partial \sigma^2}=\frac{1}{\delta}\left( 1-\frac{2}{\pi}\frac{\sigma}{\alpha^2+\sigma^2} \right).
\end{split}
\EE
 Hence, $\psi_2$ has two stationary points if $\alpha\in[0,\pi^{-1})$:
 \[
 \begin{split}
 \sigma^2_1(\alpha) & = \left(\frac{1}{\pi}-\sqrt{\frac{1}{\pi^2} -\alpha^2 }\right)^2,\\
  \sigma^2_2(\alpha) & = \left(\frac{1}{\pi}+\sqrt{\frac{1}{\pi^2} -\alpha^2 }\right)^2,
 \end{split}
\]
where $\sigma^2_1(\alpha)$ is a local maximum and $\sigma^2_2(\alpha)$ is a local minimum. Then, the maximum of $\psi_2$ over $\sigma^2\in[L(\alpha;\delta),F_1^{-1}(\alpha)]$ can only happen at either $L(\alpha;\delta)$ or $F_1^{-1}(\alpha)$ if the following holds:
\[
L(\alpha;\delta) \ge  \sigma^2_1(\alpha),\quad\forall \alpha\in[0,\pi^{-1}).
\]
Since $L(\alpha;\delta)$ is a decreasing function of $\alpha$ (which can be confirmed with a straightforward calculation of the derivative), then the following holds for $\alpha<\pi^{-1}$:
\[
L(\alpha;\delta) \ge L(\pi^{-1};\delta)=\frac{1}{\delta}\left\{ 1-\left[\frac{2}{\pi}\cos(0.5)+\frac{1}{\pi}\sin(0.5)\right]^2 \right\}\approx\frac{0.494}{\delta}.
\]
Further, $\sigma_1^2(\alpha)$ is an increasing function of $\alpha$ and is upper bounded by
\[
\sigma^2_1(\alpha)<\frac{1}{\pi^2},\quad\forall \alpha\in[0,\pi^{-1})
\]
Hence, $L(\alpha;\delta)\ge \sigma_1^2(\alpha)$ when
\[
\delta <\frac{  1-\left[\frac{2}{\pi}\cos(0.5)+\frac{1}{\pi}\sin(0.5)\right]^2 }{\frac{1}{\pi^2}}=\delta^*\approx 4.87.
\]
Now, suppose that $\delta<\delta^*$. Then, proving that $\psi_2(\alpha,\sigma^2;\delta)<F_1^{-1}(\alpha)$ is equivalent to proving:
\[
\max \{\psi_2(\alpha,L(\alpha;\delta);\delta), \psi_2(\alpha,F_1^{-1}(\alpha);\delta)\} < F_1^{-1}(\alpha).
\]

The rest of the argument is similar to the ones used in the proof of Lemma \ref{Lem:RegionIII_bound}. Since according to Lemma \ref{Lem:mono_delta_real} $\psi_2(\alpha,L(\alpha;\delta);\delta)$ is a decreasing function of $\delta$, and trivially $\psi_2(\alpha,F_1^{-1}(\alpha);\delta)\}$ is a decreasing function of $\delta$ we need to prove that 
\begin{equation}\label{eq:almostfinaleq}
\max \{\psi_2(\alpha,L(\alpha;\delta_{\rm AMP});\delta_{\rm AMP}), \psi_2(\alpha,F_1^{-1}(\alpha_{\rm AMP});\delta_{\rm AMP})\} \leq F_1^{-1}(\alpha).
\end{equation}
Also, since according to Lemma \ref{lem:psi2deltaAMP}, we have $\max \{\psi_2(\alpha,L(\alpha;\delta_{\rm AMP});\delta_{\rm AMP}), \psi_2(\alpha,F_1^{-1}(\alpha_{\rm AMP});\delta_{\rm AMP})\} = \psi_2(\alpha,F_1^{-1}(\alpha_{\rm AMP});\delta_{\rm AMP})$,  \eqref{eq:almostfinaleq} simplifies to:
\[
\psi_2(\alpha,F_1^{-1}(\alpha_{\rm AMP});\delta_{\rm AMP}) \leq F_1^{-1}(\alpha),
\]
which is a simple implication of the global attractiveness of $F_2(\alpha)$ in $\psi_2$ that is proved in Lemma \ref{lem:psi2_real}-ii.

\textit{Case (iii):} Since $F_1(\sigma^2)$ is the solution of $\alpha = \psi_1(\alpha, \sigma^2)=\frac{2}{\pi}\mr{arctan}(\alpha/\sigma)$, we can show that $F_1^{-1}(\alpha)=\alpha^2\cdot\cot^2\left(\frac{\pi}{2}\alpha\right)$. Since $F_1^{-1}(\alpha)$ is a decreasing function, we have
\BE\label{eq:lbF_1minus1realthirdcase}
F_1^{-1}(\alpha)>F_1^{-1}(\pi^{-1})\approx 0.339,\quad\alpha\in[0,\pi^{-1}).
\EE
Further, if the following holds for $\alpha\in[0,\pi^{-1})$ we would have proved that $\psi_2(\alpha,\sigma^2;\delta)<0.25$ when $\delta>4$:
\BE\label{eq:indeedfinaleq}
\psi_2(\alpha,\sigma^2;\delta) \le \frac{1}{\delta},\quad\forall (\alpha,\sigma^2)\in\mathcal{R}_{2a}.
\EE
Noting that $F_1^{-1}(\alpha)>0.339>0.25>1/\delta$ for $\alpha\in[0,\pi^{-1}),\delta>4$. Comparing this result with \eqref{eq:lbF_1minus1realthirdcase} proves that
\[
\psi_2(\alpha,\sigma^2;\delta)<F_1^{-1}(\alpha),\quad\forall \alpha\in[0,\pi^{-1}),\delta>4.
\]
Finally, we prove \eqref{eq:indeedfinaleq}. Since $\psi_2(\alpha,\sigma^2)=\frac{1}{\delta}(\alpha^2+\sigma^2+1-\frac{4\sigma}{\pi}-\frac{4\alpha}{\pi}\mr{atan}\left(\frac{\alpha}{\sigma})\right)$, we only need to prove
\[
\alpha^2+\sigma^2 -\frac{4\sigma}{\pi}-\frac{4\alpha}{\pi}\mr{atan}\left(\frac{\alpha}{\sigma}\right)\le0,\quad\forall\alpha\in[0,\pi^{-1}),(\alpha,\sigma^2)\in\mathcal{R}_{2a}
\]
which is equivalent to
\[
\alpha\cdot\frac{\alpha}{\sigma}+\sigma -\frac{4}{\pi}-\frac{4}{\pi}\frac{\alpha}{\sigma}\mr{atan}\left(\frac{\alpha}{\sigma}\right)\le0,\quad\forall\alpha\in[0,\pi^{-1}),(\alpha,\sigma^2)\in\mathcal{R}_{2a},
\]
Since $\alpha<1$, it suffices to prove
\[
\frac{\alpha}{\sigma}+\sigma -\frac{4}{\pi}-\frac{4}{\pi}\frac{\alpha}{\sigma}\mr{atan}\left(\frac{\alpha}{\sigma}\right)\le0,\quad\forall(\alpha,\sigma^2)\in\mathcal{R}_{2a}.
\]
Simple differentiation shows that the maximum of the function $f(x)=x-\frac{4}{\pi}x\cdot\mr{atan}(x)$ happens at $x_{\ast}$ where $\frac{4}{\pi}\cdot \mr{atan}(x_{\ast})=1-\frac{4}{\pi}\cdot\frac{x_{\ast}}{1+x_{\ast}^2}$ ($x_*\approx0.44$) and hence
\[
\begin{split}
x-\frac{4}{\pi}x\cdot\mr{atan}(x) &\le x_{\ast}-\frac{4}{\pi}x_{\ast}\cdot\mr{atan}(x_{\ast})=\frac{4}{\pi}\frac{x_{\ast}^2}{1+x_{\ast}^2}\approx\frac{4}{\pi}\cdot0.17<\frac{2}{\pi}.
\end{split}
\]
Using the above inequality, we obtain
\[
\begin{split}
\frac{\alpha}{\sigma}+\sigma -\frac{4}{\pi}-\frac{4}{\pi}\frac{\alpha}{\sigma}\mr{atan}\left(\frac{\alpha}{\sigma}\right) &<\sigma-\frac{2}{\pi}<0,\\
\end{split}
\]
where the last inequality is due to the fact that $(\alpha,\sigma^2)\in\mathcal{R}_{2a}$ and hence $\sigma^2\le F_1^{-1}(0)=\left(\frac{2}{\pi}\right)^2$.
 \end{proof}
 

\item \textbf{Main part}
The proof is similar to that of Lemma \ref{Lem:regionI_III}. The only noticeable difference is the proof for the following inequality (cf.~\eqref{Eqn:lem_regionI_III_7b})
\BE
\psi_2(\alpha;\sigma^2) < F_1^{-1}(\alpha),\quad\forall (\alpha,\sigma^2)\in\mathcal{R}_{2a},
\EE
where $\mathcal{R}_{2a}$ is now defined in Definition \ref{Def:region_real}. We have dedicated Lemma \ref{Lem:RegionIII_bound_real} to the proof of the above inequality, which is in parallel to Lemma \ref{Lem:RegionIII_bound} for the complex-valued case.

\end{itemize}

\subsection{Proof of Theorem \ref{lem:global_real}}\label{ssec:prooflemmaglobalminimum}
The proof is similar to that of Lemma \ref{Lem:fixed_point}. Hence, we only focus on the discrepancies. 

\subsubsection{$\delta>\deltaGlobal$}
\begin{lemma}\label{Lem:F1_F2_real_local}
Suppose that $\delta>\deltaGlobal=1+\frac{4}{\pi^2}$. Then, there exists an $\epsilon>0$ such that the following holds:
\BE\label{Eqn:F1_F2_real_local}
F_1^{-1}(\alpha)>F_2(\alpha;\delta),\quad\forall \alpha\in(1-\epsilon,1).
\EE
\end{lemma}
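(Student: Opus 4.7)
The plan is to mimic the strategy used for the complex-valued analogue in Lemma~\ref{Lem:F1_F2_complex_local}, but now reusing the real-valued reparametrization $t = g^{-1}(\alpha)$ introduced in the proof of Lemma~\ref{Lem:F1_F2_real}, where $g(t) = \frac{2}{\pi}\arctan(t)$ maps $(0,\infty)$ bijectively onto $(0,1)$, with $\alpha \to 1$ corresponding to $t \to \infty$. As in that proof, the global attractiveness of $F_2(\alpha;\delta)$ (Lemma~\ref{lem:psi2_real}-(ii)) lets us replace the desired inequality $F_1^{-1}(\alpha) > F_2(\alpha;\delta)$ by $\psi_2(\alpha, F_1^{-1}(\alpha);\delta) < F_1^{-1}(\alpha)$, and after substituting $\alpha = g(t)$ and $F_1^{-1}(\alpha) = g^2(t)/t^2$ into the closed form of $\psi_2$ in \eqref{Eqn:map_expression_real_b}, the inequality reduces (via the exact same algebra leading to \eqref{Eqn:G_real}) to
\[
G(t) \;\Mydef\; t^2 + \frac{4}{\pi}\,\frac{t}{g(t)} - \frac{t^2}{g^2(t)} \;>\; 1-\delta.
\]

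The next step is to evaluate $G(t)$ as $t\to\infty$. Using $\arctan(t) = \tfrac{\pi}{2} - \tfrac{1}{t} + O(t^{-3})$, one obtains the expansions $g(t) = 1 - \tfrac{2}{\pi t} + O(t^{-3})$, $1/g(t) = 1 + \tfrac{2}{\pi t} + \tfrac{4}{\pi^2 t^2} + O(t^{-3})$, and $1/g^2(t) = 1 + \tfrac{4}{\pi t} + \tfrac{12}{\pi^2 t^2} + O(t^{-3})$. Substituting these into $G(t)$, the $t^2$ and $t$ terms cancel and the constant contribution is $\tfrac{8}{\pi^2} - \tfrac{12}{\pi^2} = -\tfrac{4}{\pi^2}$, so
\[
\lim_{t\to\infty} G(t) \;=\; -\frac{4}{\pi^2} \;=\; 1 - \deltaGlobal.
\]
This is the analogue of the calculation at $x=1$ in the complex-valued proof, where the limit also equals $1-\deltaGlobal$ (for the corresponding complex value $\deltaGlobal=2$).

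From here the conclusion is immediate: when $\delta > \deltaGlobal$ we have $1-\delta < 1-\deltaGlobal = \lim_{t\to\infty} G(t)$, so by continuity of $G$ there exists $T<\infty$ with $G(t) > 1-\delta$ for every $t \ge T$. Translating back through $\alpha = g(t)$, this gives $F_1^{-1}(\alpha) > F_2(\alpha;\delta)$ for all $\alpha \in (g(T),1)$, which is the claim with $\epsilon \Mydef 1 - g(T) > 0$.

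The only non-trivial step is the asymptotic expansion producing the exact constant $-4/\pi^2 = 1-\deltaGlobal$; this is the real-valued counterpart of the derivative computation that produced $\gamma - \tfrac{1}{2}$ at $x=1$ in the complex case, and like that computation it is purely mechanical but must be carried through with enough terms to capture the cancellation of the two leading divergent contributions. Everything else is a direct reuse of the reductions already established for Lemma~\ref{Lem:F1_F2_real}.
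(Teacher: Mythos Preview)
Your proof is correct and follows essentially the same idea as the paper: both reduce the claim to the reparametrized inequality from the proof of Lemma~\ref{Lem:F1_F2_real} and then carry out a local analysis near the endpoint $\alpha=1$. The only presentational difference is that you work directly with $G(t)$ from \eqref{Eqn:G_real} and compute $\lim_{t\to\infty}G(t)=-4/\pi^2=1-\deltaGlobal$, while the paper passes to the reciprocal variable $s=1/t$, rewrites the inequality as $s\,\arctan(s^{-1})>R(s)$ for an explicit $R$, and compares Taylor expansions at $s=0$ to isolate the coefficient $\frac{1}{\pi}-\frac{\pi}{4}(\delta-1)$, which is negative precisely when $\delta>\deltaGlobal$. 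Your route is slightly more economical since it reuses $G$ verbatim; the paper's route makes the critical $s^3$ coefficient explicit, which gives a marginally sharper picture of how the gap opens, but both arrive at the same threshold by the same mechanism.
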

\begin{proof}
\eqref{Eqn:F1_F2_real_local} can be re-parameterized as
\[
 \psi_2\left( g\left(s^{-1}\right), s^2\cdot g^2\left(s^{-1}\right);\delta \right) < s^2\cdot g^2\left(s^{-1}\right),\quad\forall s\in(0,\xi),
\]
where $g(x)\Mydef \frac{2}{\pi}\mr{arctan}(x)$, $s= \cot(\frac{\pi}{2}\alpha)$ and $\xi = \tan (\frac{\pi}{2} \epsilon)$.
\[
s\cdot \mr{arctan}\left(\frac{1}{s}\right) > \underbrace{  \frac{-s^2 + s\sqrt{\frac{\pi^2}{4} +\left(1+\frac{\pi^2}{4}(\delta-1)\right)s^2 } }{ 1 + (\delta-1)s^2 } }_{R(s)} ,\quad s\in(0,\xi).
\]
Taylor expansions of the LHS and the RHS are respectively given by
\[
\begin{split}
s\cdot \mr{arctan}\left(\frac{1}{s}\right) &=\frac{\pi s}{2} - s^2 + \frac{s^4}{3}+O(s^6),\\
R(s) &= \frac{\pi s}{2} - s^2+\left(\frac{1}{\pi } -\frac{\pi}{4} (\delta - 1) \right)s^3 + (\delta-1)s^4+O(s^5)
\end{split}
\]
Then,
\[
\delta > 1+\frac{4}{\pi^2}\Longrightarrow \frac{1}{\pi } -\frac{\pi}{4} (\delta - 1)<0,
\]
and in this case there exists a constant $\xi>0$ such that
\[
s\cdot \mr{arctan}\left(\frac{1}{s}\right) > R(s),\quad \forall s\in(0,\xi).
\]
\end{proof}
Since the rest of the proof is exactly similar to the proof of Lemma \ref{Lem:fixed_point} for the sake of brevity we skip it here. 

\subsubsection{$\delta<\deltaGlobal$}
It is straightforward to use an argument similar to the one presented in Section \ref{Sec:complex_local_converse} and show that there exists a neighborhood of $(\alpha, \sigma^2)= (1,0)$ in which $\psi_2(\alpha, \sigma^2) - \sigma^2>0$. Hence, the state evolution moves away from $(0,1)$. 

\section{Proofs of Theorems \ref{thm:noisesens_comp} and \ref{thm:noisesens_real} }\label{Sec:proof_noise} 
In light of Lemma \ref{Lem:psi_phase_invariance_0}, we assume that $\alpha_0\ge0$ throughout this Appendix.
\subsection{Discussion}
The goal of this section is to prove Theorems \ref{thm:noisesens_comp} and \ref{thm:noisesens_real}. The strategy is similar to the proof of Theorem \ref{Theo:PhaseTransition_complex}. We first construct the functions $F_1^{-1}$ and $F_2$. Then, we show that these two functions will intersect at exactly one point when $\delta> \deltaAMP$. Finally, we discuss the dynamics of the state evolution and show that $(\alpha_t, \sigma_t^2)$ converge to the intersection of $F_1^{-1}$ and $F_2$. However, there are a few differences that make the proof of the noisy case more challenging:

\begin{enumerate}
\item Recall that in the noiseless case, the curve $F_1^{-1}$ is entirely above $F_2$ (except for the fixed point $(1,0)$) if $\delta>\deltaAMP$. See the plot in Fig.~\ref{fig:regions_I&III}. On the other hand, when there is some noise, the curve $F_2$ will move up a little bit (while $F_1^{-1}$ is unchanged) and will cross $F_1$ at a certain $\alpha_{\star}\in(0,1)$. As shown in Fig.~\ref{Fig:G}, $F_1^{-1}$ is above $F_2$ for $\alpha<\alpha_{\star}$ and is below $F_2$ when $\alpha>\alpha_{\star}$.

\item In the noisy setting the dynamic of SE becomes more challenging. In fact $(\alpha_t, \sigma_t^2)$ can move in any direction around the fixed point. That makes the proof of convergence of $(\alpha_t, \sigma_t^2)$ more complicated. 

\item In the noiseless setting the location of the fixed point of SE was $(\alpha, \sigma^2)= (1,0)$. This is not the case for the noisy settings where the location of the fixed point depends on the noise variance. 
\end{enumerate}

In the sections below we go over the entire proof, but will skip the parts that are similar to the proof of the noiseless setting which was discussed in Section \ref{ssec:proofTheo:PhaseTransition_complex}. 
\subsection{Complex-valued case}
\subsubsection{Preliminaries}

In the noisy setting, $\psi_1(\alpha;\sigma^2)$ remains unchanged, and $\psi_2(\alpha,\sigma^2;\delta)$ is replaced by $\psi_2(\alpha,\sigma^2;\delta,\sigma^2_w)$ below:
\BS\label{Eqn:SE_fixed_noisy_complex1}
\begin{align}
\psi_2(\alpha,\sigma^2;\delta,\sigma^2_w) &= \psi_2(\alpha,\sigma^2;\delta) +4\sigma^2_w\\
&=\frac{4}{\delta}\left\{\alpha^2+\sigma^2+1- \alpha \left[\phi_1\left(\frac{\sigma}{\alpha}\right)+\phi_3\left(\frac{\sigma}{\alpha}\right)\right]\right\}+4\sigma^2_w,
\label{Eqn:SE_fixed_noisy_complex1b}
\end{align}
\ES
where 
\BE\label{Eqn:phi1_phi3}
\begin{split}
\phi_1(s)& \Mydef \int_0^{\frac{\pi}{2}}\frac{\sin^2\theta}{\left(\sin^2\theta +s^2 \right)^{\frac{1}{2}}}\mr{d}\theta,\\
\phi_3(s)&\Mydef \int_0^{\frac{\pi}{2}}\left(\sin^2\theta +s^2 \right)^{\frac{1}{2}}\mr{d}\theta.
\end{split}
\EE
Before we proceed to the analysis of $\psi_1, \psi_2, F_1,$ and $F_2$, we list a few identities for $\phi_1$ and $\phi_3$ which will be used in our proofs later. 
\begin{lemma}
$\phi_1$ and $\phi_3$ satisfy the following properties:
\BE\label{Eqn:noisy_identities}
\begin{split}
\phi_1(s) &= \frac{(1+s^2)E\left(\frac{1}{1+s^2}\right)-s^2K\left(\frac{1}{1+s^2}\right)}{\sqrt{1+s^2}},\\
\phi_3(s) &= \sqrt{ 1 + s^2 }E\left(\frac{1}{1+s^2}\right),\\
\phi_1(0) &=1,\\
\left.\frac{\mr{d}\phi_1(s)}{\mr{d}s^2}  s^2\right|_{s=0}&=\left.\frac{s^2(E-K)}{2\sqrt{1+s^2}}\right|_{s=0}=0,\\
\left.\frac{\mr{d}\phi_1(s)\phi_3(s)}{\mr{d}s^2} \right|_{s=0}&= \left.\frac{1}{2}\left(  \frac{(1+s^2)E^2 -s^2K^2}{1+s^2} \right)^2\right|_{s=0}=\frac{1}{2},
\end{split}
\EE
where $E$ and $K$ are shorthands for $E\left(\frac{1}{1+s^2}\right)$ and $K\left(\frac{1}{1+s^2}\right)$ respectively in the last two identities.
\end{lemma}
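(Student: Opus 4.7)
My plan is to tackle the five identities in order, reducing everything to manipulations of complete elliptic integrals that are already in hand.

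The closed forms for $\phi_1$ and $\phi_3$ are elliptic reductions of the defining integrals. For $\phi_1(s)$, factoring $1/s$ out of the integrand puts it into the standard shape $\int_0^{\pi/2}\sin^2\theta\,(1+m\sin^2\theta)^{-1/2}\mr{d}\theta$ with $m=1/s^2$; identity (i) of \eqref{Eqn:integral_identities} then yields the stated formula directly, since $m/(1+m)=1/(1+s^2)$. For $\phi_3(s)$, the cleanest route is to split $(\sin^2\theta+s^2)^{1/2}=(\sin^2\theta+s^2)/(\sin^2\theta+s^2)^{1/2}$, which gives $\phi_3(s)=\phi_1(s)+s^2\int_0^{\pi/2}(\sin^2\theta+s^2)^{-1/2}\mr{d}\theta$. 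The remaining integral equals $s^{-1}K(-1/s^2)$, which by Lemma~\ref{Lem:elliptic}(iii) becomes $K(1/(1+s^2))/\sqrt{1+s^2}$; adding this to the closed form of $\phi_1$ causes the $s^2K$ contributions to cancel, leaving $\sqrt{1+s^2}\,E(1/(1+s^2))$. The value $\phi_1(0)=1$ is then immediate from the closed form (or directly from the definition, using $\int_0^{\pi/2}\sin\theta\,\mr{d}\theta=1$), once we observe that $s^2K(1/(1+s^2))\to 0$.

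For the two derivative identities I would differentiate under the integral sign and push the resulting integrals through \eqref{Eqn:integral_identities}. For the fourth identity, $\mr{d}\phi_1/\mr{d}s^2=-\tfrac{1}{2}\int_0^{\pi/2}\sin^2\theta\,(\sin^2\theta+s^2)^{-3/2}\mr{d}\theta$; identity (ii) of \eqref{Eqn:integral_identities} with $m=1/s^2$ collapses this to $(E-K)/(2\sqrt{1+s^2})$, so multiplying by $s^2$ gives the intermediate expression displayed in the statement. For the fifth identity, the closed forms give $\phi_1\phi_3=(1+s^2)E^2-s^2KE$; differentiating with respect to $t=s^2$ by way of the chain rule through $x=1/(1+t)$ (so $\mr{d}x/\mr{d}t=-x^2$), together with the derivative formulas $E'(x)=(E-K)/(2x)$ and $K'(x)=(E-(1-x)K)/(2x(1-x))$ from Lemma~\ref{Lem:elliptic}(iv), produces a sequence of cancellations that reduces to $\tfrac{1}{2}\cdot[(1+s^2)E^2-s^2K^2]/(1+s^2)$.

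The only nontrivial obstacle is the logarithmic singularity of $K$ at $1$: every ``clean'' limit at $s=0$ relies on pairing a factor of $s^2$ against a factor of $K$ or $K^2$, and naive substitution of $s=0$ is illegitimate. The resolution is the asymptotic $K(1-\epsilon)=\log(4/\sqrt{\epsilon})+O(\epsilon\log(1/\epsilon))$ from Lemma~\ref{Lem:elliptic}(i) with $\epsilon=s^2/(1+s^2)$, which yields $s^2K=O(s^2\log(1/s))\to 0$ and $s^2K^2=O(s^2\log^2(1/s))\to 0$ as $s\to 0$. These estimates validate the claimed limits $0$ and $\tfrac{1}{2}$, and in the same way confirm $\phi_1(0)=1$ from the closed form.
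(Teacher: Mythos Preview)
Your proposal is correct and follows exactly the approach the paper indicates: the paper simply states that ``the proof of this lemma is a simple application of the identities we derived in Section~\ref{ssec:ellipticintegrals}, and is hence skipped,'' and you have supplied precisely those details via \eqref{Eqn:integral_identities} and Lemma~\ref{Lem:elliptic}. One small note: your intermediate formula for $\mr{d}(\phi_1\phi_3)/\mr{d}s^2$, namely $\tfrac{1}{2}\,[(1+s^2)E^2-s^2K^2]/(1+s^2)$, is the correct derivative (the displayed exponent~$2$ on the bracket in the statement appears to be a typo), and since the bracketed quantity equals~$1$ at $s=0$ this has no bearing on the limiting value~$\tfrac{1}{2}$.
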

The proof of this lemma is a simple application of the identities we derived in Section \ref{ssec:ellipticintegrals}, and is hence skipped. 

Our next lemma summarizes the main properties of $\psi_1, \psi_2, F_1$ and $F_2$ in the noisy phase retrieval problem.

\begin{lemma}\label{Lem:noisy}
Let $\tilde{\sigma}^2_{\max}\Mydef\sigma^2_{\max}+4\sigma^2_w$, where $\sigma^2_{\max}=\max\{1,4/\delta\}$. For any $\delta>\deltaAMP$, there exists $\epsilon>0$ such that when $0<\sigma^2_w<\epsilon$ the following statements hold simultaneously:
\begin{itemize}
\item [(a)] For $0\le\alpha\le1$, we have $\psi_2(\alpha,\sigma^2;\delta,\sigma^2_w)\le \tilde{\sigma}^2_{\max}$, $\forall\sigma^2\in[0,\tilde{\sigma}^2_{\max}]$.
\item [(b)] For $0\le\alpha\le1$, $\sigma^2=\psi_2(\alpha,\sigma^2;\delta)+4\sigma^2_w$ admits a unique globally attracting fixed point, denoted as $F_2(\alpha;\delta,\sigma^2_w)$, in $\sigma^2\in[0,\tilde{\sigma}^2_{\max}]$. Further, if $\alpha\ge\alpha_{\ast}$ (note that $\alpha_{\ast}\approx0.53$ is defined in \eqref{Eqn:alpha_ast_def}), then $F_2(\alpha;\delta,\sigma^2_w)$ is strongly globally attractive. Finally, $F_2(\alpha;\delta,\sigma^2_w)$ is a continuous function of $\sigma^2_w$. 
\item [(c)] The equation $F_1^{-1}(\alpha)=F_2(\alpha;\delta,\sigma^2_w)$ has a unique nonzero solution in $\alpha\in[0,1]$. Let $\alpha_{\star}(\delta,\sigma^2_w)$ be that unique solution. Then, $F_1^{-1}(\alpha)>F_2(\alpha;\delta,\sigma^2_w)$ for $0\le\alpha<\alpha_{\star}(\delta,\sigma^2_w)$ and $F_1^{-1}(\alpha)<F_2(\alpha;\delta,\sigma^2_w)$ for $\alpha_{\star}(\delta,\sigma^2_w)<\alpha\le1$.
\item [(d)] There exists $\hat{\alpha}(\delta,\sigma^2_w)$, such that $F_2(\alpha;\delta,\sigma^2_w)$ is strictly decreasing on $\alpha\in(0,\hat{\alpha}(\delta,\sigma^2_w))$ and strictly increasing on $(\hat{\alpha}(\delta,\sigma^2_w),1)$. Further, ${\alpha}_{\star}(\delta,\sigma^2_w)<\hat{\alpha}(\delta,\sigma^2_w)<1$.
\item [(e)] Define $L(\alpha;\delta,\sigma^2_w)\Mydef L(\alpha;\delta)+4\sigma^2_w$, where $L(\alpha;\delta)$ is defined in \eqref{Eqn:left_bound}. Then, $L(\alpha;\delta,\sigma^2_w)<F_1^{-1}(\alpha)$ for all $\alpha\in(0,\alpha_{\ast}]$, where $\alpha_*\approx0.53$ is defined in \eqref{Eqn:alpha_ast_def}. 
\item [(f)] For any $\alpha\in(0,\alpha_{\ast}]$ and $\sigma^2\in[L(\alpha;\delta,\sigma^2_w),F_1^{-1}(\alpha)]$, we have $\psi_2(\alpha,\sigma^2;\delta,\sigma^2_w)\Mydef\psi_2(\alpha,\sigma^2;\delta)+4\sigma^2_w < F_1^{-1}(\alpha)$.

\item[(g)] $F_2(1;\delta,\sigma^2_w) < F_1^{-1}(\alpha_{\ast})$.
\end{itemize}

\end{lemma}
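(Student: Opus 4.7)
The overarching strategy is a perturbation argument: since $\psi_2(\alpha,\sigma^2;\delta,\sigma_w^2)=\psi_2(\alpha,\sigma^2;\delta)+4\sigma_w^2$ differs from its noiseless counterpart only by the additive constant $4\sigma_w^2$, each of (a)--(g) is the noisy analogue of a strict inequality or monotonicity property proved in the noiseless case in Lemmas \ref{lem:psi2}, \ref{Lem:2}, \ref{Lem:F1_F2_complex}, \ref{Lem:F1_L}, \ref{Lem:monotonicity_Psi2_G1}, and \ref{Lem:RegionIII_bound}. The plan is to fix a single $\epsilon>0$ small enough that all finitely many strict inequalities used below remain strict once $\sigma_w^2<\epsilon$, and then verify the statements in order.

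Parts (a) and (b) follow almost directly from the noiseless shape analysis since $\partial_{\sigma^2}\psi_2$ is identical in both settings. For (a), the strict bound $\psi_2(\alpha,\sigma^2;\delta)\le 4/\delta<\sigma_{\max}^2$ of Lemma \ref{lem:psi2}(iii) leaves room for an $O(\sigma_w^2)$ shift. For (b), existence and uniqueness of $F_2(\alpha;\delta,\sigma_w^2)$ on $[0,\tilde\sigma_{\max}^2]$ come from verifying that the shifted function $\Psi_2+4\sigma_w^2$ has the same sign-change structure as analyzed in Lemma \ref{lem:psi2}(iv)--(v); strong attractiveness for $\alpha\ge\alpha_{\ast}$ uses only $\partial_{\sigma^2}\psi_2<1$, which is unaffected by a constant; and continuity of $F_2$ in $\sigma_w^2$ is the implicit function theorem at a hyperbolic fixed point. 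Parts (e) and (f) are analogous perturbations of Lemma \ref{Lem:F1_L} and Lemma \ref{Lem:RegionIII_bound}: on the compact set $\alpha\in[0,\alpha_{\ast}]$ the noiseless gaps $F_1^{-1}-L>0$ and $F_1^{-1}-\psi_2>0$ are bounded below by a positive constant, so adding $4\sigma_w^2$ preserves the strict inequality for $\sigma_w^2<\epsilon$.

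For (c), (d), and (g), the crux is the local behaviour near $\alpha=1$. Using the identities collected in \eqref{Eqn:noisy_identities} together with the continuity of $\partial_{\sigma^2}\psi_2$ at $(1,0)$ (which equals $2/\delta<1$ for $\delta>\deltaAMP$), I would use the implicit function theorem to obtain
\[
F_2(\alpha;\delta,\sigma_w^2)=\frac{4(1-\alpha)^2+4\delta\sigma_w^2}{\delta-2}+o\bigl((1-\alpha)^2+\sigma_w^2\bigr),
\]
while $F_1^{-1}(\alpha)\sim (1-\alpha)/|\log(1-\alpha)|$ as $\alpha\to 1$ (obtained by inverting the closed form $F_1^{-1}(\alpha)=\alpha^2[\phi_1^{-1}(\alpha)]^2$ via the logarithmic asymptotic $1-\phi_1(s)\sim s^2\log(1/s)$, which in turn comes from the $K(1-s^2)\sim\log(4/s)$ identity in Lemma \ref{Lem:elliptic}(i)). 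Since $F_1^{-1}$ dominates $F_2$ to leading order as $\alpha\to 1$ in the noiseless case but $F_2(1;\delta,\sigma_w^2)\approx\frac{4\delta}{\delta-2}\sigma_w^2>0=F_1^{-1}(1)$ once noise is on, the intermediate value theorem combined with Lemma \ref{Lem:F1_F2_complex} (to rule out additional crossings on the compact set $[0,\alpha_{\ast}]$ by continuity) gives a unique crossing $\alpha_{\star}(\delta,\sigma_w^2)\in(\alpha_{\ast},1)$, proving (c). For (d), the noiseless $F_2(\alpha;\delta)$ is strictly decreasing on $[0,1]$ with $F_2(1;\delta)=0$ (Lemma \ref{Lem:2}(ii) plus a derivative computation via implicit differentiation of $\sigma^2=\psi_2(\alpha,\sigma^2;\delta)$); adding the positive constant $4\sigma_w^2$ forces $F_2(\alpha;\delta,\sigma_w^2)$ to stay bounded away from zero while preserving strict monotonicity on any compact subinterval of $[0,1)$, so a unique turning point $\hat\alpha(\delta,\sigma_w^2)$ must exist, and $\alpha_{\star}<\hat\alpha$ drops out of the leading-order expansions just derived. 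Part (g) is immediate from (b): $F_2(1;\delta,\sigma_w^2)\to F_2(1;\delta)=0$ as $\sigma_w^2\to 0$, while $F_1^{-1}(\alpha_{\ast})>0$ is a fixed positive number.

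The main obstacle will be (c), (d): one must show that the noiseless bound $F_1^{-1}>F_2$ of Lemma \ref{Lem:F1_F2_complex} is preserved uniformly on compact subsets of $[0,1)$ even after the perturbation, while simultaneously controlling the precise rate at which $F_2$ and $F_1^{-1}$ approach zero near $\alpha=1$ so that the perturbed curves cross \emph{exactly once}. The logarithmic correction in $F_1^{-1}(\alpha)$ is delicate: it means the crossing point $\alpha_\star$ satisfies $1-\alpha_\star$ of order $\sigma_w^2|\log\sigma_w^2|$ rather than the naively expected $\sigma_w^2$, and the uniqueness of both $\alpha_\star$ and $\hat\alpha$ must be established by careful monotonicity accounting rather than a one-line Taylor comparison.
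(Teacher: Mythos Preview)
Your treatment of (a), (b), (e), (f), (g) is essentially the paper's: these are strict noiseless inequalities on compact sets, so an additive $4\sigma_w^2$ perturbation preserves them for small $\sigma_w^2$.

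For (c), your asymptotic-comparison approach is different from the paper's and, as you suspected, unnecessarily delicate. The paper sidesteps the logarithmic correction entirely by reparametrizing via $s=\phi_1^{-1}(\alpha)$ (so $\alpha=\phi_1(s)$, $F_1^{-1}(\alpha)=s^2\phi_1^2(s)$) and rewriting $F_1^{-1}(\alpha)=F_2(\alpha;\delta,\sigma_w^2)$ as $T(s^2,\delta,\sigma_w^2)=0$ where
\[
T(s^2,\delta,\sigma_w^2)=\Bigl(1-\tfrac{4}{\delta}\Bigr)\phi_1^2(s)s^2+\tfrac{4}{\delta}\phi_1(s)\phi_3(s)-\Bigl(\tfrac{4}{\delta}+4\sigma_w^2\Bigr).
\]
This has the crucial additive structure $T(s^2,\delta,\sigma_w^2)=T(s^2,\delta,0)-4\sigma_w^2$. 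The noiseless Lemma~\ref{Lem:F1_F2_complex} says $T(s^2,\delta,0)>0$ for all $s>0$, and a direct computation using \eqref{Eqn:noisy_identities} gives $T(0,\delta,0)=0$ and $\partial_{s^2}T|_{s=0}=1-2/\delta>0$. Hence $T(\cdot,\delta,0)$ is strictly increasing on $[0,\eta]$ and bounded below by a positive constant on $[\eta,\infty)$; subtracting $4\sigma_w^2$ therefore creates exactly one root. Your argument, by contrast, only rules out extra crossings on $[0,\alpha_\ast]$ and then invokes asymptotics near $\alpha=1$---but you never control the intermediate region $[\alpha_\ast,1-\eta]$ or establish monotonicity of $F_1^{-1}-F_2$ in the asymptotic regime, so uniqueness is not actually proved.

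For (d) there is a genuine gap. Your sentence ``adding the positive constant $4\sigma_w^2$ forces $F_2$ to stay bounded away from zero while preserving strict monotonicity on any compact subinterval of $[0,1)$, so a unique turning point $\hat\alpha$ must exist'' does not establish either existence or uniqueness of a turning point: a function that is strictly decreasing on every $[0,1-\eta]$ and positive at $1$ can simply be strictly decreasing on all of $[0,1]$. The mechanism that creates the turning point is different. The paper computes, via implicit differentiation,
\[
\frac{\partial F_2}{\partial\alpha}\ \text{has the sign of}\ \partial_1\psi_2(\alpha,F_2)=\frac{8}{\delta}\bigl(\alpha-h(s)\bigr),\qquad s=\sqrt{F_2}/\alpha,
\]
where $h(s)=\int_0^{\pi/2}\frac{\sin^4\theta+\tfrac{3}{2}s^2\sin^2\theta}{(\sin^2\theta+s^2)^{3/2}}\,d\theta$ is a decreasing map $[0,\infty)\to[0,1]$ with $h(0)=1$. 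In the noiseless case $F_2(1)=0$ so $s\to 0$ and $h(s)\to 1$ as $\alpha\to 1$, keeping $\alpha<h(s)$ throughout. In the noisy case $F_2(1;\delta,\sigma_w^2)>0$, so $s$ stays bounded away from $0$ near $\alpha=1$, hence $h(s)<1$ and $\alpha>h(s)$ there: this is what forces $\partial_\alpha F_2>0$ near $\alpha=1$. Uniqueness of $\hat\alpha$ and the ordering $\alpha_\star<\hat\alpha$ then follow from the same reparametrization trick as in (c), applied to the curve $G(\alpha)=[\alpha\,h^{-1}(\alpha)]^2$, together with the elementary inequality $h(s)>\phi_1(s)$ which gives $G(\alpha)>F_1^{-1}(\alpha)$. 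None of this structure is visible in your sketch.
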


\begin{proof} In the following, we will prove that each part of the lemma holds when $\sigma^2_w$ is smaller than a constant. Hence, the statements hold simultaneously when $\sigma^2_w$ is smaller than the minimum of those constants. 

\textit{Part (a):} In Lemma \ref{lem:psi2}-(iii) we proved that, for the noiseless setting, $\psi_2(\alpha;\sigma^2;\delta)\le \sigma^2_{\max}$ for $\sigma^2\in[0,\sigma^2_{\max}]$. If fact, it is easy to verify that our proof can be strengthened to $\psi_2(\alpha;\sigma^2;\delta)\le \sigma^2_{\max}$ for $\sigma^2\in[0,2]$, see \eqref{Eqn:Lemma1_2}. Note that $\sigma^2_{\max}=\max\{1,4/\delta\}\le 4/\deltaAMP\approx1.6$. Hence, $\psi_2(\alpha;\sigma^2;\delta)\le \sigma^2_{\max}$ for $\sigma^2\in[0,\tilde{\sigma}_{\max}^2]=\sigma^2_{\max}+4\sigma^2_w$ when $\sigma^2_w$ is small. Further, $\psi_2(\alpha;\sigma^2;\delta,\sigma^2_w)=\psi_2(\alpha;\sigma^2;\delta)+4\sigma^2_w$, and hence $\psi_2(\alpha;\sigma^2;\delta,\sigma^2_w)\le \tilde{\sigma}_{\max}^2$ for $\sigma^2\in[0,\tilde{\sigma}_{\max}^2]$.

\vspace{5pt}
\textit{Part (b):} The claim is a consequence of three facts: (i) $\psi_2(\alpha,\sigma^2;\delta,\sigma^2_w)\le \sigma^2$ at $\sigma^2=\tilde{\sigma}^2_{\max}$; (ii) $\frac{\partial\psi_2(\alpha,\sigma^2;\delta,\sigma^2_w)}{\partial \sigma^2}<1$ when $\sigma^2\in[0,\tilde{\sigma}^2_{\max}]$, and (iii) if $\alpha\ge\alpha_{\ast}$, then $\frac{\partial\psi_2(\alpha,\sigma^2;\delta,\sigma^2_w)}{\partial \sigma^2}>0$ for any $\sigma^2\ge0$. Fact (i) has been proved in part (a) of this lemma. For Fact (ii), recall that in \eqref{Eqn:psi2_dif} we have proved  $\frac{\partial\psi_2(\alpha,\sigma^2;\delta)}{\partial \sigma^2}<1$ when $\sigma^2\in[0,{\sigma}^2_{\max}]$. Again, similar to part (a) of this lemma, we can argue that the result actually holds for $\sigma^2\in[0,\tilde{\sigma}^2_{\max}]$. We prove Fact (ii) by further noting $\psi_2(\alpha,\sigma^2;\delta,\sigma^2_w)=\psi_2(\alpha,\sigma^2;\delta)+4\sigma^2_w$ and hence $\frac{\partial\psi_2(\alpha,\sigma^2;\delta,\sigma^2_w)}{\partial \sigma^2}=\frac{\partial\psi_2(\alpha,\sigma^2;\delta)}{\partial \sigma^2}$. Fact (iii) follows from Lemma \ref{lem:psi2}-(v) and the fact that $\frac{\partial\psi_2(\alpha,\sigma^2;\delta,\sigma^2_w)}{\partial \sigma^2}=\frac{\partial\psi_2(\alpha,\sigma^2;\delta)}{\partial \sigma^2}$.

We now show that $F_2(\alpha;\delta,\sigma^2_w)$ is a continuous function of $\sigma^2_w$. Let $x$ be an arbitrary constant in $(0,\epsilon)$. Suppose that $\lim_{\sigma^2_w\to x^-}F_2(\alpha;\delta,\sigma^2_w)=y_1$ and $\lim_{\sigma^2_w\to x^+}F_2(\alpha;\delta,\sigma^2_w)=y_2$, where $y_1,y_2\in[0,\tilde{\sigma}^2_{\max}]$ and $y_1\neq y_2$. Since $F_2$ is the fixed point of $\psi_2$, we then have $y_1=\psi_2(\alpha,y_1;\delta)+4x$ and $y_2=\psi_2(\alpha,y_2;\delta)+4x$, which leads to $y_1-\psi_2(\alpha,y_1;\delta)=y_2-\psi_2(\alpha,y_2;\delta)$. However, we have shown in Lemma \ref{lem:psi2} that $\tilde{\Psi}_2(\alpha,\sigma^2;\delta)\Mydef \sigma^2-\psi_2(\alpha,\sigma^2;\delta)-C$ is a strictly increasing function of $\sigma^2$ in $[0,\tilde{\sigma}^2_{\max}]$, and hence for any $C\in\mathbb{R}$ there cannot be two solutions to $\tilde{\Psi}_2(\alpha,\sigma^2;\delta)=0$.
This leads to contradiction.

\vspace{5pt}
\textit{Part (c):} 
It is more convenient to introduce a variable change:
\[
s\Mydef \phi_1^{-1}(\alpha)\quad\text{and}\quad s_{\star}(\delta,\sigma^2_w) = \phi_1^{-1}\left(\alpha_{\star}(\delta,\sigma^2_w)\right).
\]
As have been argued in Section \ref{proof:lemmadominationF_1}, $F_1^{-1}(\alpha)\le F_1^{-1}(0)=\pi^2/16<\tilde{\sigma}^2_{\max}$. Then, by the global attractiveness of $F_2(\alpha;\delta,\sigma^2_w)$ (part (b) of this lemma) and noting that $\phi_1:[0,\infty]\mapsto[0,1]$ is a decreasing function, our claim can be equivalently refomulated as
\BE\label{Eqn:noise_sensitivity_1}
\psi_2\left( \phi_1(s),s^2\phi_1^2(s);\delta \right ) +4\sigma^2_w> s^2\phi_1^2,\quad\forall s\in[0,s_{\star}(\delta,\sigma^2_w)),
\EE
and
\[
\psi_2\left( \phi_1(s),s^2\phi_1^2(s);\delta \right ) +4\sigma^2_w< s^2\phi_1^2,\quad\forall s>s_{\star}(\delta,\sigma^2_w).
\]
From the definition of $\psi_2$ in \eqref{Eqn:SE_fixed_noisy_complex1} and after straightforward manipulations, we can write \eqref{Eqn:noise_sensitivity_1} into 
\BE\label{Eqn:SE_fixed_noisy_complex3}
T(s^2, \delta,\sigma_w^2 )<0,\quad\forall s\in[0,s_{\star}(\delta,\sigma^2_w))\quad\text{and}\quad T(s^2, \delta,\sigma_w^2 )>0,\quad\forall s>s_{\star}(\delta,\sigma^2_w),
\EE
where
\BE\label{Eqn:SE_fixed_noisy_complex3aa}
T(s^2, \delta,\sigma_w^2 ) \triangleq \left(1-\frac{4}{\delta}\right)\phi^2_1(s)s^2 + \frac{4}{\delta}\phi_1(s)\phi_3(s)-\left( \frac{4}{\delta}+4\sigma^2_w \right).
\EE
From \eqref{Eqn:SE_fixed_noisy_complex3}, we have
\BE\label{Eqn:SE_fixed_noisy_complex3b}
\frac{\partial T(s^2, \sigma_w^2 )}{\partial s^2} =  \left(1-\frac{4}{\delta}\right)  \left(\phi_1^2(s) + 2\phi_1(s)\frac{\mr{d}\phi_1(s)}{\mr{d}s^2}s^2 \right) + \frac{4}{\delta} \frac{\mr{d}\phi_1(s)\phi_3(s)}{\mr{d}s^2}.
\EE
Applying the identities listed in \eqref{Eqn:noisy_identities}, we obtain
\[
\left.\frac{\partial T(s^2, \sigma_w^2 )}{\partial s^2}\right|_{s=0}=1-\frac{2}{\delta}>0.
\]
Further, $\frac{\partial T(s^2, \sigma_w^2 )}{\partial s^2}$ is a continuous function at $s^2=0$, and thus there exists $\epsilon>0$ such that 
\[
\frac{\partial T(s^2, \sigma_w^2 )}{\partial s^2}>0,\quad\forall s^2\in[0,\epsilon].
\]
The above result shows that $T(s^2, \sigma_w^2 )$ is monotonically increasing in $s^2\in[0,\epsilon]$. Further, from \eqref{Eqn:SE_fixed_noisy_complex3aa} we have
\[
T(s^2, \delta,\sigma_w^2 )  = T(s^2, \delta,0) - 4 \sigma^2_w. 
\]
It is straightforward to show that $T(0,\delta,\sigma^2_w)=-\sigma^2_w<0$. Hence, $T(s^2,\delta,\sigma^2_w)=0$ has a unique solution if the following holds:
\[
\inf_{s^2\ge\epsilon}\ T(s^2, \delta,\sigma^2_w)>0,
\]
or equivalently
\BE\label{Eqn:noise_sensitivity_2}
4\sigma^2_w<\inf_{s^2\ge\epsilon}\ T(s^2, \delta,0).
\EE
Lemma \ref{Lem:F1_F2_complex} proves that $F_1^{-1}(\alpha)>F_2(\alpha;\delta)$ for $\alpha\in(0,1)$ for any $\delta>\deltaAMP$, which, after re-parameterization implies that $T(s^2,\delta,0)>0$ for $s>0$ if $\delta>\deltaAMP$. Hence, $\inf_{s^2\ge\epsilon}\ T(s^2, \delta,0)$ is strictly positive, and there exists sufficiently small $\sigma^2_w$ such that \eqref{Eqn:noise_sensitivity_2} holds.

\vspace{5pt}
\textit{Part (d):} 
From the fixed point equation $F_2=\psi_2(\alpha,F_2;\delta,\sigma^2_w)$ where ($F_2$ denotes $F_2(\alpha;\delta,\sigma^2_w)$), we can derive the following (cf.~\eqref{Eqn:F1_prime}) 
\[
\left(1- \partial_2 \psi_2(\alpha,F_2;\delta,\sigma^2_w)\right) \cdot \frac{\mr{d}F_2(\alpha;\delta,\sigma^2_w)}{\mr{d}\alpha} =  \partial_1 \psi_2(\alpha,F_2;\delta,\sigma^2_w) .
\]
Similar to the proof of part (b), $1-\partial_2 \psi_2(\alpha,F_2;\delta,\sigma^2_w)>0$ when $\sigma^2_w$ is sufficiently small.  Hence, proving $ \partial_1 \psi_2(\alpha,F_2;\delta,\sigma^2_w)<0$ is simplified to proving that there exists $\hat{\alpha}(\delta,\sigma^2_w)$ such that
\BS\label{Eqn:noise_sensitivity_3}
\BE
\partial_1 \psi_2(\alpha,F_2;\delta,\sigma^2_w) < 0,\quad\forall \alpha\in\left(0,\hat{\alpha}(\delta,\sigma^2_w)\right),
\EE
and
\BE
\partial_1 \psi_2(\alpha,F_2;\delta,\sigma^2_w) > 0,\quad\forall \alpha\in\left(\hat{\alpha}(\delta,\sigma^2_w),\,1\right).
\EE
\ES
From \eqref{Eqn:map_expression_complex} and after some calculations, we obtain the following
\BE\label{Eqn:noise_sensitivity_h}
\begin{split}
 \frac{\partial\psi_2(\alpha,\sigma^2;\delta,\sigma^2_w)}{\partial \alpha} &=\frac{4}{\delta}\left( 2\alpha-\int_0^{\frac{\pi}{2}}\frac{2\alpha^3\sin^4\theta +3\alpha\sigma^2\sin^2\theta }{  ( \alpha^2\sin^2\theta +\sigma^2 )^{\frac{3}{2}}  } \mr{d}\theta\right)\\
 &=\frac{4}{\delta}\bigg(2\alpha- 2\underbrace{\int_0^{\frac{\pi}{2}}\frac{\sin^4\theta +\frac{3}{2}s^2\sin^2\theta }{  ( \sin^2\theta +s^2 )^{\frac{3}{2}}  } \mr{d}\theta}_{h(s)} \bigg),
 \end{split}
\EE
where $s\Mydef \sigma/\alpha$. Then, we can reformulate \eqref{Eqn:noise_sensitivity_3} as
\[
\alpha<h\left( \frac{ \sqrt{F_2(\alpha;\delta,\sigma^2_w)} }{\alpha} \right),\quad\forall \alpha\in\left(0,\hat{\alpha}(\delta,\sigma^2_w)\right),
\]
and
\[
\alpha>h\left( \frac{ \sqrt{F_2(\alpha;\delta,\sigma^2_w)} }{\alpha} \right),\quad\forall \alpha\in\left(\hat{\alpha}(\delta,\sigma^2_w),\,1\right).
\]

\begin{figure}[!htbp]
\centering
\includegraphics[width=.55\textwidth]{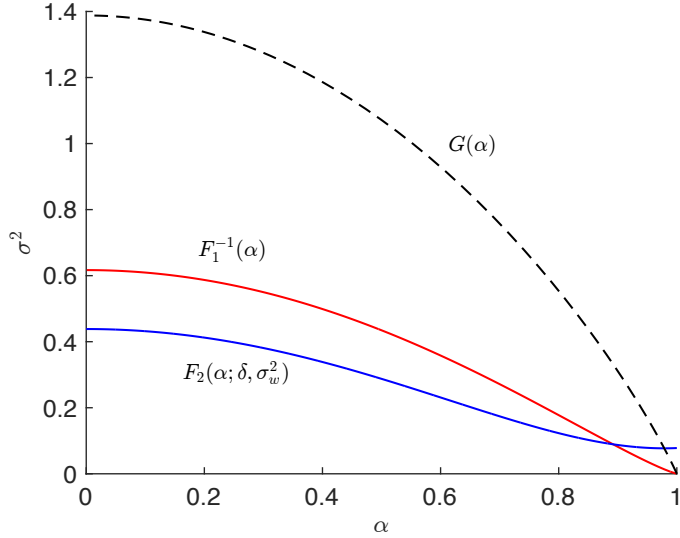}
\caption{Depiction of $F_1^{-1}(\alpha)$, $F_2(\alpha;\delta,\sigma^2_w)$ and $G(\alpha)$. $\alpha_{\star}(\delta,\sigma^2_w)$: solution to $F_1^{-1}(\alpha)=F_2(\alpha;\delta,\sigma^2_w)$. $\hat{\alpha}(\delta,\sigma^2_w)$: solution to $G^{-1}(\alpha)=F_2(\alpha;\delta,\sigma^2_w)$. }
\label{Fig:G}
\end{figure}

From the definition given in \eqref{Eqn:noise_sensitivity_h}, it is easy to show that $h:\mathbb{R}_+\mapsto[0,1]$ is a decreasing function. Then, the above inequality can be further simplified to
\BS\label{Eqn:noise_sensitivity_4}
\BE
F_2(\alpha;\delta,\sigma^2_w) < \left[\alpha\cdot h^{-1}(\alpha)\right]^2\Mydef G(\alpha),\quad\forall  \alpha\in\left(0,\hat{\alpha}(\delta,\sigma^2_w)\right),
\EE
and
\BE
F_2(\alpha;\delta,\sigma^2_w) > \left[\alpha\cdot h^{-1}(\alpha)\right]^2= G(\alpha),\quad\forall  \alpha\in\left(\hat{\alpha}(\delta,\sigma^2_w),\,1\right).
\EE
\ES
Similar to \eqref{Eqn:noise_sensitivity_1} and \eqref{Eqn:SE_fixed_noisy_complex3}, \eqref{Eqn:noise_sensitivity_4} can be re-parameterized as
\BE\label{Eqn:noise_sensitivity_5}
\psi_2\left( h(s),s^2\phi_1^2(s);\delta \right ) +4\sigma^2_w> s^2h^2,\quad\forall s<\hat{s}(\delta,\sigma^2_w),
\EE
and
\BE
\psi_2\left( \phi_1(s),s^2\phi_1^2(s);\delta \right ) +4\sigma^2_w< s^2h^2,\quad\forall s>\hat{s}(\delta,\sigma^2_w),
\EE
where $\hat{s}(\delta,\sigma^2_w)\Mydef h^{-1}\left( \hat{\alpha}(\delta,\sigma^2_w) \right)$. We skip the proof for \eqref{Eqn:noise_sensitivity_5} since it is very similar to the proof of part (c) of this lemma. (Note that to apply the above re-parameterization (which is based on the global attractiveness of $F_2$, i.e., part (b) of this lemma), we need to ensure $G(\alpha)<\tilde{\sigma}^2_{\max}$. This can be seen from the fact that $G(\alpha)\le G(0)=(3\pi/8)^2\approx 1.38$ while $ \tilde{\sigma}^2_{\max}+4\sigma_w^2$ and $ \sigma^2_{\max}=\max\{1,4/\delta\}>\max\{1,4/\deltaAMP\}\approx1.6$.)

Finally, to show $\hat{\alpha}(\delta,\sigma^2_w)>\alpha_{\star}(\delta,\sigma^2_w)$, we will prove that $G(\alpha)>F_1^{-1}(\alpha)$ for $\alpha\in[0,1)$. See the plot in Fig.~\ref{Fig:G}. Since $G(\alpha)=[\alpha\cdot h^{-1}(\alpha)]^2$ and $F_1^{-1}(\alpha)=[\alpha\cdot \phi_1^{-1}(\alpha)]^2$, we only need to prove $h^{-1}(\alpha)>\phi_1^{-1}(\alpha)$. Noting that both $\phi_1$ and $h$ are monotonically decreasing functions, it suffices to prove $h(s)>\phi_1(s)$ for $s>0$, which directly follows from their definitions (cf. \eqref{Eqn:noise_sensitivity_h} and \eqref{Eqn:phi}):
\[
\begin{split}
h(s) -\phi_1(s)&=\int_0^{\frac{\pi}{2}}\frac{\sin^4\theta +\frac{3}{2}s^2\sin^2\theta }{  ( \sin^2\theta +s^2 )^{\frac{3}{2}}  } \mr{d}\theta - \int_0^{\frac{\pi}{2}}\frac{\sin^2\theta}{\left(\sin^2\theta +s^2 \right)^{\frac{1}{2}}}\mr{d}\theta\\
&=\int_0^{\frac{\pi}{2}}\frac{\frac{1}{2}s^2\sin^2\theta }{  ( \sin^2\theta +s^2 )^{\frac{3}{2}}  }>0,\quad \forall s>0.
\end{split}
\]

\vspace{5pt}
\textit{Part (e):} First note that $L(\alpha;\delta,\sigma^2_w)=L(\alpha;\delta)+4\sigma^2_w$. Hence, the proof for the claim is straightforward if the inequality $L(\alpha;\delta)<F_1^{-1}(\alpha)$ is strict for $\alpha\le\alpha_{\ast}$. This is the case since Lemma \ref{Lem:F1_L} shows that $L(\alpha;\delta)\le F_1^{-1}(\alpha)$ for $\alpha\le1$, but equality only happends at $\alpha=1$. 

\vspace{5pt}
\textit{Part (f):} In Lemma \ref{Lem:RegionIII_bound}, we have proved the following result in the case of $\sigma^2_w=0$:
\[
\psi_2(\alpha,\sigma^2;\delta) < F_1^{-1}(\alpha),\quad \forall 0\le \alpha\le\alpha_{\ast},\ L(\alpha;\delta)<\sigma^2<F_1^{-1}(\alpha).
\]
(In fact, the above inequality holds for $\alpha$ up to one.) In the noisy case, $\psi_2$ increases a little bit: $\psi_2(\alpha,\sigma^2;\delta,\sigma^2_w)=\psi_2(\alpha,\sigma^2;\delta)+4\sigma^2_w$. Hence, when $\sigma^2_w$ is sufficiently small, we still have
\BE\label{Eqn:noisy_lemma_g1}
\psi_2(\alpha,\sigma^2;\delta,\sigma^2_w) < F_1^{-1}(\alpha),\quad \forall 0 \le \alpha\le\alpha_{\ast},\ L(\alpha;\delta)<\sigma^2<F_1^{-1}(\alpha).
\EE
Clearly, the inequality in \eqref{Eqn:noisy_lemma_g1} also holds for $L(\alpha;\delta,\sigma^2_w)<\sigma^2<F_1^{-1}(\alpha)$, since $L(\alpha;\delta,\sigma^2_w)=L(\alpha;\delta)+4\sigma^2_w>L(\alpha;\delta)$. 

\vspace{5pt}
\textit{Part (g):} Note that $F_1^{-1}(\alpha_{\ast})\approx F_1^{-1}(0.53) >0$ does not depend on $\sigma^2_w$. Further, $F_2(1;\delta,0)=0$ and $F_2(1;\delta,\sigma^2_w)$ is a continuous function of $\sigma^2_w$. Hence, $F_2(1;\delta,\sigma^2_w)<F_1^{-1}(\alpha_{\ast})$ for small enough $\sigma^2_w$.

\end{proof}

\subsubsection{Convergence of the SE}

Our next lemma proves that the state evolution still converges to the desired fixed point for $0<\alpha_0\le1$ and $\sigma^2_0\le1$ if $\delta>\deltaAMP$.


\begin{figure}[!htbp]
\centering
\subfloat{\includegraphics[width=.5\textwidth]{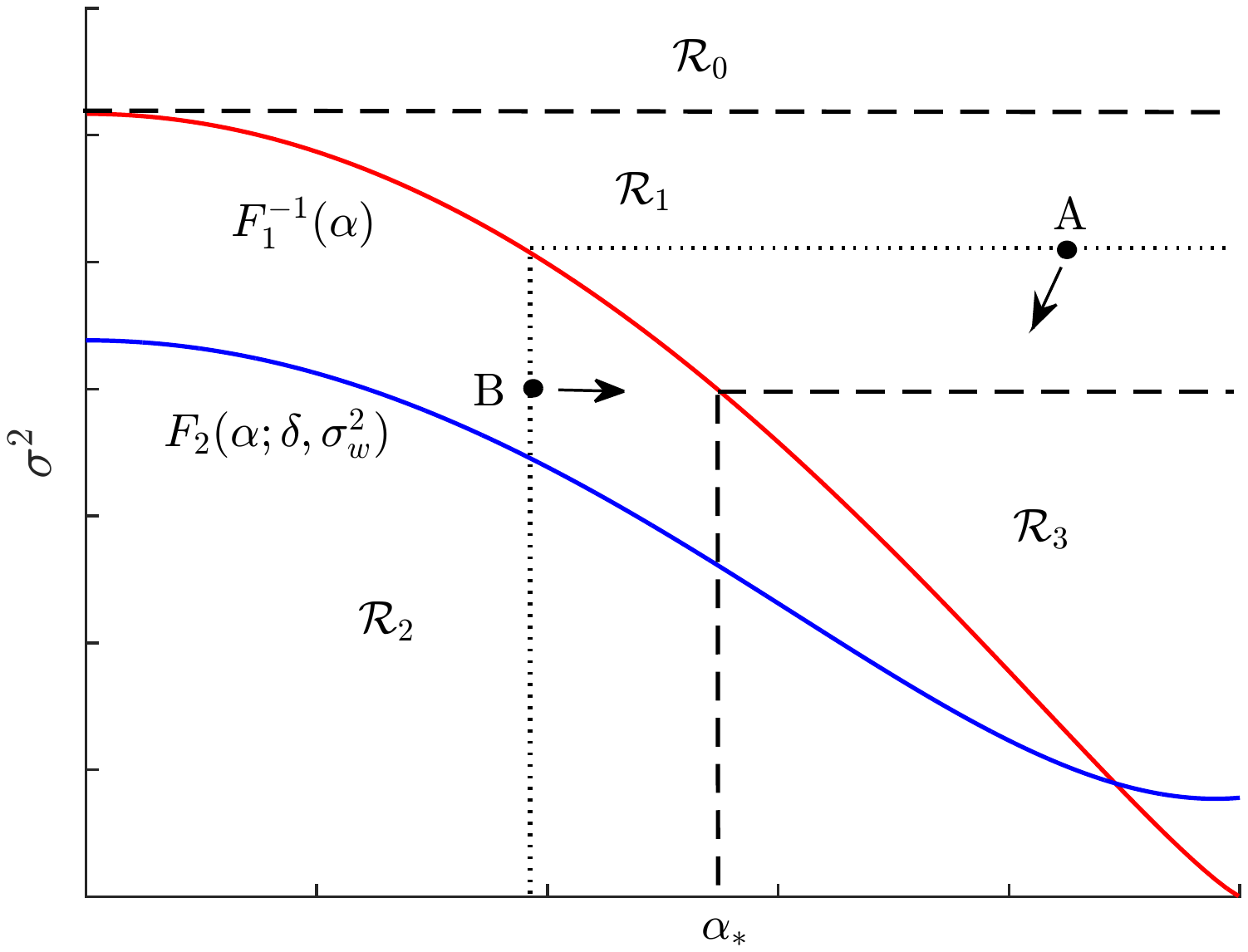}}
\subfloat{\includegraphics[width=.4\textwidth]{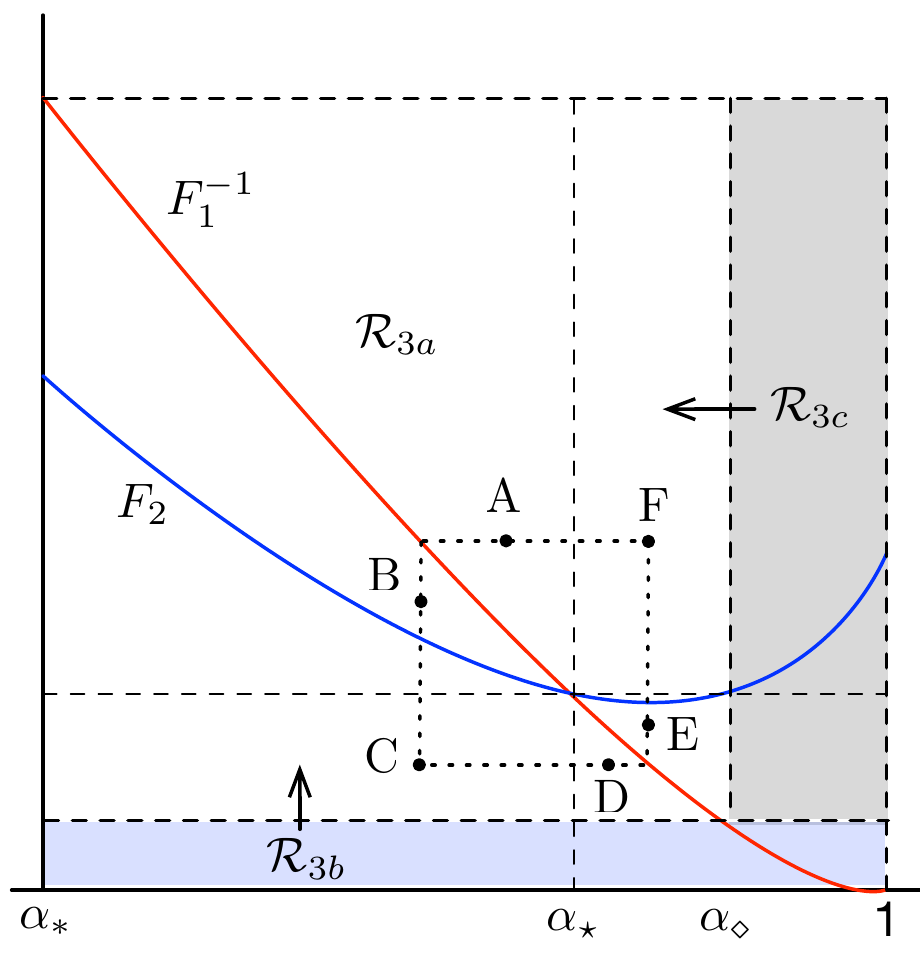}}
\caption{Dynamical behavior the state evolution in the low noise regime. \textbf{left:} points in $\mathcal{R}_1$ and $\mathcal{R}_2$ will eventually move to $\mathcal{R}_3$. Here, $\alpha_{\ast}\approx0.53$. \textbf{Right:} Illustration of $\mathcal{R}_3$. Points in $\mathcal{R}_{3b}$ and $\mathcal{R}_{3c}$ will eventually move to $\mathcal{R}_{3a}$. For points in $\mathcal{R}_{3a}$ (marked A, B, C, D, E, F), we can form a small rectangular region that bounds the remaining trajectory. Note that the lower and right bounds for A and B (and also the upper and left bounds for D and E) are given by $\sigma^2_{\star}$ and $\alpha_{\star}$ respectively.}
\label{Fig:regions_noisy}
\end{figure}

\begin{lemma}\label{Lem:SE_converge_noisy}
Let $\{\alpha_t\}_{t\ge1}$ and $\{\sigma^2_t\}_{t\ge1}$ be two state sequences generated according to \eqref{Eqn:SE_complex} from $\alpha_0$ and $\sigma^2_0$. Let $\epsilon$ be the constant required in Lemma \ref{Lem:noisy}. Then, the following holds for any $\delta>\deltaAMP$, $0<\sigma^2_w<\epsilon$, and $0<\alpha_0\le1$ and $\sigma^2_0\le1$:
\[
\lim_{t\to\infty} \alpha_t=\alpha_{\star}(\delta,\sigma^2_w)\quad\text{and}\quad \lim_{t\to\infty} \sigma^2_t=\sigma^2_{\star}(\delta,\sigma^2_w),
\]
where $\alpha_{\star}(\delta,\sigma^2_w)$ is the unique positive solution to $F_1^{-1}(\alpha)=F_2(\alpha;\delta,\sigma^2_w)$ and $\sigma^2_{\star}(\delta,\sigma^2_w)=F_1^{-1}(\alpha_{\star}(\delta,\sigma^2_w))$.
\end{lemma}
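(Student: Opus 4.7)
The plan is to adapt the three-stage strategy used for Theorem~\ref{Theo:PhaseTransition_complex} (confinement, then escape from the ``far'' region $\mathcal{R}_0$, then monotone convergence inside $\mathcal{R}_1\cup\mathcal{R}_2$), modified to accommodate the fact that now $F_1^{-1}$ and $F_2(\,\cdot\,;\delta,\sigma_w^2)$ cross at the single interior point $\alpha_\star$ rather than only at $(1,0)$. I would first apply part~(a) of Lemma~\ref{Lem:noisy} to argue that after one iteration $(\alpha_t,\sigma_t^2)\in[0,1]\times[0,\tilde\sigma_{\max}^2]$, replacing the role that Lemma~\ref{Lem:region_bound} played in the noiseless case. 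Then I would partition this box exactly as in Definition~\ref{Def:regions}, using $F_1^{-1}(\alpha)$ and the horizontal line $\sigma^2=\pi^2/16$, and rerun the argument of Lemma~\ref{Lem:regionII_IV}: since $\sigma_t^2>\pi^2/16>F_1^{-1}(\alpha_t)\ge F_2(\alpha_t;\delta,\sigma_w^2)$ in $\mathcal{R}_0$ (using Lemma~\ref{Lem:noisy}(c) together with $\alpha_\star<1$ so that $\pi^2/16=F_1^{-1}(0)>F_2(0;\delta,\sigma_w^2)$ for small $\sigma_w^2$), the global attractiveness from Lemma~\ref{Lem:noisy}(b) forces $\sigma_{t+1}^2<\sigma_t^2$, so in finitely many iterations we reach $\mathcal{R}_1\cup\mathcal{R}_2$.

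Inside the sub-box $\{0<\alpha\le\alpha_\star\}\cap(\mathcal{R}_1\cup\mathcal{R}_2)$ I would use the bounding technique of Lemma~\ref{Lem:regionI_III}: define $B_1(\alpha,\sigma^2)=\min\{\alpha,F_1(\sigma^2)\}$ and $B_2(\alpha,\sigma^2)=\max\{\sigma^2,F_1^{-1}(\alpha)\}$ and set $\tilde\alpha_{t+1}=B_1(\alpha_t,\sigma_t^2)$, $\tilde\sigma_{t+1}^2=B_2(\alpha_t,\sigma_t^2)$. The inequalities $\psi_1(\alpha,\sigma^2)\ge B_1$ and $\psi_2(\alpha,\sigma^2;\delta,\sigma_w^2)\le B_2$ on this sub-box follow from the same case split as in Section~\ref{Sec:proof_lem_regionI_III}, except that in region~$\mathcal{R}_{2a}$ the key estimate $\psi_2(\alpha,\sigma^2;\delta,\sigma_w^2)<F_1^{-1}(\alpha)$ now comes from Lemma~\ref{Lem:noisy}(f) (whose validity on $[0,\alpha_\star]$ uses Lemma~\ref{Lem:noisy}(e) so that $L<F_1^{-1}$ there, and the fact that Lemma~\ref{Lem:RegionIII_bound} continues to hold up to an $O(\sigma_w^2)$ perturbation). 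This drives the iterate monotonically toward $(\alpha_\star,\sigma_\star^2)$ as in the noiseless proof.

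The new difficulty, and the main obstacle, is handling iterates that land with $\alpha_t>\alpha_\star$: there $F_2>F_1^{-1}$, the signs in the bounding inequalities flip, and the auxiliary monotone sequences above no longer work. I plan to exploit the geometric picture suggested by Figure~\ref{Fig:regions_noisy}: use Lemma~\ref{Lem:noisy}(d) to note that $F_2(\alpha;\delta,\sigma_w^2)$ is strictly increasing on $(\hat\alpha,1)\supset(\alpha_\star,1)$ (since $\hat\alpha>\alpha_\star$) and bounded above by $F_2(1;\delta,\sigma_w^2)<F_1^{-1}(\alpha_\star)$ by Lemma~\ref{Lem:noisy}(g). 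Combined with the strong global attractiveness of $F_2$ on $\alpha>\alpha_*\approx 0.53$ guaranteed by Lemma~\ref{lem:psi2}(v) (valid since $\alpha_\star>\alpha_*$ for small $\sigma_w^2$ by continuity and the noiseless limit $\alpha_\star\to 1$), this lets me build, for each iterate in the ``wrong'' side $\alpha>\alpha_\star$, a rectangular trap with corners determined by $\alpha_\star$, $\sigma_\star^2$, and the projections of the current iterate onto the curves $F_1^{-1}$ and $F_2$. Each application of $(\psi_1,\psi_2)$ sends the rectangle to a strictly smaller one, giving a contraction toward $(\alpha_\star,\sigma_\star^2)$.

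Finally, I would stitch the pieces together: every trajectory enters $\mathcal{R}_1\cup\mathcal{R}_2$ in finite time, then either the monotone bounding argument (for $\alpha\le\alpha_\star$) or the shrinking-rectangle argument (for $\alpha>\alpha_\star$) applies, possibly alternately. The main technical subtlety will be ensuring that the trajectory cannot escape to arbitrarily large $\alpha$ oscillations; this will require a continuity/perturbation argument that for $\sigma_w^2$ below some threshold the domain of attraction of $(\alpha_\star,\sigma_\star^2)$ contains the whole strip, which should follow from the noiseless result Theorem~\ref{Theo:PhaseTransition_complex} by continuity of $\psi_1,\psi_2$ in $\sigma_w^2$ and a compactness argument on $[0,1]\times[0,\tilde\sigma_{\max}^2]$.
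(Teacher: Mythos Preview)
Your overall three--stage strategy (confinement via Lemma~\ref{Lem:noisy}(a), finite--time escape from $\mathcal{R}_0$, then a bounding/rectangle argument near the fixed point) is exactly what the paper does. There is, however, a concrete error in the part you flag as the ``main obstacle.'' You write that Lemma~\ref{Lem:noisy}(d) gives $F_2$ strictly increasing on $(\hat\alpha,1)\supset(\alpha_\star,1)$ ``since $\hat\alpha>\alpha_\star$.'' The inclusion goes the other way: $\hat\alpha>\alpha_\star$ means $(\hat\alpha,1)\subset(\alpha_\star,1)$. In fact $F_2$ is strictly \emph{decreasing} on $(\alpha_\star,\hat\alpha)$ and only turns increasing on $(\hat\alpha,1)$, so $F_2(\alpha)<\sigma_\star^2$ for $\alpha$ just to the right of $\alpha_\star$. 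Your rectangular trap, as described (using $F_2$ increasing and $F_2(\alpha)\ge\sigma_\star^2$ on the right side), uses the wrong inequality and will not close.

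The paper repairs this by introducing the point $\alpha_\diamond\in(\hat\alpha,1)$ at which $F_2$ returns to the level $\sigma_\star^2$, and by choosing the dividing line at $\alpha_\ast\approx0.53$ (the monotonicity threshold from Lemma~\ref{lem:psi2}(v)) rather than at $\alpha_\star$. The ``final'' region is then the fixed rectangle $\mathcal{R}_3=[\alpha_\ast,1]\times[0,F_1^{-1}(\alpha_\ast)]$, which contains $(\alpha_\star,\sigma_\star^2)$ in its interior and in which \emph{both} $F_1$ and $F_2$ are strongly globally attracting. Inside $\mathcal{R}_3$ one further funnels (in finite time) into the sub--rectangle $\mathcal{R}_{3a}=[\alpha_\ast,\alpha_\diamond]\times(F_1^{-1}(\alpha_\diamond),F_1^{-1}(\alpha_\ast)]$, where the key ordering $F_1^{-1}(\alpha)>F_2(\alpha)>\sigma_\star^2$ for $\alpha<\alpha_\star$ and $F_1^{-1}(\alpha)<F_2(\alpha)<\sigma_\star^2$ for $\alpha_\star<\alpha<\alpha_\diamond$ holds. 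The two--sided bounds $B_1^{\mathrm{low/up}}=\min/\max\{\alpha_\star,\alpha,F_1(\sigma^2)\}$ and $B_2^{\mathrm{low/up}}=\min/\max\{\sigma_\star^2,\sigma^2,F_1^{-1}(\alpha)\}$ then give a single, symmetric shrinking--rectangle argument covering both sides of $\alpha_\star$ at once, eliminating the ``possibly alternately'' glue step you were worried about. Note also that Lemma~\ref{Lem:noisy}(e),(f) are stated only for $\alpha\le\alpha_\ast$, not for $\alpha\le\alpha_\star$; the paper uses them only to push the iterate from $\mathcal{R}_1\cup\mathcal{R}_2$ (where $\alpha\le\alpha_\ast$) into $\mathcal{R}_3$, not to converge all the way to the fixed point.
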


\begin{proof}
From Lemma \ref{Lem:noisy}-(a), when $\sigma^2_w$ is small enough, $(\alpha_t,\sigma^2_t)\in\mathcal{R}$ for all $t\ge1$, where $\mathcal{R}\Mydef\{(\alpha,\sigma^2)|0<\alpha\le1,\,0\le\sigma^2\le\tilde{\sigma}^2_{\max}\}$, where $\tilde{\sigma}^2_{\max}=\max\{1,4/\delta\}+4\sigma^2_w$. We divide $\mathcal{R}$ into several regions and discuss the dynamical behaviors of the state evolution for different regions separately. Specifically, we define
\BE\label{Eqn:regions_noisy}
\begin{split}
\mathcal{R}_0 &\Mydef \left\{ (\alpha,\sigma^2)|0<\alpha\le1,\  \pi^2/16<\sigma^2 \le\tilde{\sigma}^2_{\max} \right\},\\
\mathcal{R}_1 &\Mydef \left\{ (\alpha,\sigma^2)| F_1^{-1}(\alpha_{\ast})\le \sigma^2\le \pi^2/16, \ F_1(\sigma^2) \le\alpha \le 1 \right\},\\
\mathcal{R}_2 &\Mydef \left\{ (\alpha,\sigma^2)|0<\alpha\le\alpha_{*}, \ 0\le\sigma^2 < F_1^{-1}(\alpha)\right\},\\
\mathcal{R}_3 &\Mydef \left\{ (\alpha,\sigma^2)|\alpha_{*}\le\alpha\le1, \ 0\le \sigma^2 < F_1^{-1}(\alpha_{*})\right\},
\end{split}
\EE
where $\alpha_{\ast}\approx0.53$ was defined in \eqref{Eqn:alpha_ast_def}. Notice that $\alpha_{\star}(\delta,0)=1$, and therefore it is guaranteed that $\alpha_{\star}(\delta,\sigma^2_w)>\alpha_{\ast}$ for small enough $\sigma^2_w$. See Fig.~\ref{Fig:regions_noisy} for illustration. To prove the lemma, we will prove the following arguments: 
 \begin{itemize}
 
\item[(i)] If $(\alpha_{t_0},\sigma^2_{t_0})\in\mathcal{R}_0$, then there exists a finite $T_1\ge1$ such that $(\alpha_{t_0+T_1},\sigma^2_{t_0+T_1})\in\mathcal{R}\backslash\mathcal{R}_0$.

 \item[(ii)] If $(\alpha_{t_0},\sigma^2_{t_0})\in\mathcal{R}_1\cup\mathcal{R}_2$ for $t_0\ge1$ (i.e., after one iteration), then there exists a finite $T_2\ge1$ such that $(\alpha_{t_0+T_2},\sigma^2_{t_0+T_2})\in\mathcal{R}_3$. 

 \item[(iii)] We show that if $(\alpha_{t_0},\sigma^2_{t_0})\in\mathcal{R}_3$ for $t_0\ge0$, then $(\alpha_{t},\sigma^2_{t})\in\mathcal{R}_3$ for all $t>t_0$, and $(\alpha_{t},\sigma^2_{t})$ converges to $(\alpha_{\star},\sigma^2_{\star})$. 
 
 \end{itemize}
 
 The proof of (i) is similar to that of Lemma \ref{Lem:regionII_IV} and therefore omitted here.
 
 \vspace{5pt}
\textit{Proof of (ii):} Following the proof of Lemma \ref{Lem:regionI_III}, we argue that if $(\alpha_{t},\sigma^2_{t})\in\mathcal{R}_1\cup\mathcal{R}_2$ then the following holds
 \BE\label{Eqn:B1_B2_noisy}
\alpha_{t+1}\ge B_1(\alpha_t,\sigma^2_t)\quad\text{and}\quad  \sigma^2_{t+1}\ge B_2(\alpha_t,\sigma^2_t),
 \EE
 where $B_1(\alpha_t,\sigma^2_t)= \min\left\{ \alpha_t,F_1(\sigma^2_t) \right\}$ and $B_2(\alpha_t,\sigma^2_t)= \max\left\{ \sigma^2_t,F_1^{-1}(\alpha_t) \right\}$. Then, it is easy to show that $(\alpha_{t+1},\sigma^2_{t+1})\in\mathcal{R}_1\cup\mathcal{R}_2\cup\mathcal{R}_3$. Applying this recursively, we see that  $(\alpha,\sigma^2)$ either moves to $\mathcal{R}_3$ at a certain time or stays in $\mathcal{R}_1\cup\mathcal{R}_2$. We next prove that the latter case cannot happen. Suppose that $(\alpha_t,\sigma^2_t)\in\mathcal{R}_1\cup\mathcal{R}_2$ for $t\ge t_0$. If this is the case, then it can be shown that
\BE\label{Eqn:B1_B2_noisy2}
B_1(\alpha_t,\sigma^2_t) \le B_1(\alpha_{t+1},\sigma^2_{t+1}) \quad\text{and}\quad B_2(\alpha_t,\sigma^2_t) \ge B_2(\alpha_{t+1},\sigma^2_{t+1}),\quad\forall t>t_0.
\EE
On the other hand, since we assume $(\alpha_t,\sigma^2_t)\in\mathcal{R}_1\cup\mathcal{R}_2$ for $t\ge t_0$, $B_1$ is upper bounded by $\alpha_{\ast}$ and $B_2$ lower bounded by $F_1^{-1}(\alpha_{\ast})$. Hence, this means the sequences $B_1$ and $B_2$ converges to $\alpha_{\ast}$ and $F_1^{-1}(\alpha_{\ast})$, respectively. This cannot happen since there is no fixed point in $\mathcal{R}_1\cup\mathcal{R}_2$.

The proof for \eqref{Eqn:B1_B2_noisy} and \eqref{Eqn:B1_B2_noisy2} are basically the same as those for the noiseless counterparts and hence skipped here. Please refer to the proof of Lemma \ref{Lem:regionI_III}. We only need to show that some of the key inequalities used in the proof of Lemma \ref{Lem:regionI_III} still hold in the noisy case, which have been listed in Lemma \ref{Lem:noisy} (e) and (f).

\vspace{5pt}
\textit{Proof of (iii):} Lemma \ref{Lem:noisy}-(c), (d) and (g) imply that $F_2<F_1^{-1}(\alpha_{\ast})$ for all $\alpha\in[\alpha_{\ast},1]$. Then, based on the strong global attractiveness of $F_1$ and $F_2$, it is easy to show that if $(\alpha_{t_0},\sigma^2_{t_0})\in\mathcal{R}_3$ then $(\alpha_{t},\sigma^2_{t})\in\mathcal{R}_3$ for all $t\ge t_0$. We have proved in Lemma \ref{Lem:noisy}-(d) that $F_2$ is a decreasing function of $\alpha$ on $[0,\hat{\alpha}]$ and increasing on $[\hat{\alpha},1]$, where $\alpha_{\star}<\hat{\alpha}<1$. Then, the maximum of $F_2$ on $[\alpha_{\star},1]$ can only happen at either $\alpha_{\star}$ or $1$. We assume that the latter case happens; it will be clear that our proof for the former case is a special case of the proof for the latter one.  See the right panel of Fig.~\ref{Fig:regions_noisy}.

As discussed above, we assume that $F_2(1;\delta,\sigma^2_w)>F_2(\alpha_{\star};\delta,\sigma^2_w)$. Hence, by Lemma \ref{Lem:noisy}-(d), there exists a unique number $\alpha_{\diamond}\in(\alpha_{\star},1)$ such that $F_2(\alpha_{\diamond};\delta,\sigma^2_w)=F_2(\alpha_{\star};\delta,\sigma^2_w)$. See the plot in the right panel of Fig.~\ref{Fig:regions_noisy}. We further divide $\mathcal{R}_3$ into four regions:
\[
\begin{split}
\mathcal{R}_{3a} &\Mydef \left\{ (\alpha,\sigma^2)|\alpha_{\ast}\le \alpha\le\alpha_{\diamond},\  F_1^{-1}(\alpha_{\diamond})<\sigma^2 \le F_1^{-1}(\alpha_{\ast}) \right\},\\
\mathcal{R}_{3b} &\Mydef \left\{ (\alpha,\sigma^2)|\alpha_{\ast}\le \alpha\le1,\  0\le\sigma^2 < F_1^{-1}(\alpha_{\diamond}) \right\},\\
\mathcal{R}_{3c} &\Mydef \left\{ (\alpha,\sigma^2)|\alpha_{\diamond}<\alpha\le1, \  F_1^{-1}(\alpha_{\diamond})\le\sigma^2 < F_1^{-1}(\alpha_{*})\right\}.
\end{split}
\]
Based on the strong global attractiveness of $F_1$ and $F_2$ (and similar to the proof of part (i) of this lemma), we can show the following:
\begin{itemize}
\item if $(\alpha_{t_0},\sigma^2_{t_0})\in\mathcal{R}_{3a}$, then $(\alpha_{t_0+1},\sigma^2_{t_0+1})$ can only be in $\mathcal{R}_{3a}$;

\item if $(\alpha_{t_0},\sigma^2_{t_0})\in\mathcal{R}_{3b}$, then $(\alpha_{t_0+1},\sigma^2_{t_0+1})$ can be in $\mathcal{R}_{3a}$, $\mathcal{R}_{3b}$ or $\mathcal{R}_{3c}$;

\item if $(\alpha_{t_0},\sigma^2_{t_0})\in\mathcal{R}_{3c}$, then $(\alpha_{t_0+1},\sigma^2_{t_0+1})$ can be in $\mathcal{R}_{3c}$ or $\mathcal{R}_{3a}$.

\end{itemize}
Putting things together, and similar to the treatment of $\mathcal{R}_0$, it can be shown that there exists a finite $T_3$ such that $(\alpha_{t},\sigma^2_{t})\in\mathcal{R}_{3a}$ for all $t\ge t_0+T_3$.  

It only remains to prove that if $(\alpha_{t'},\sigma^2_{t'})\in\mathcal{R}_{3a}$ at a certain $t'\ge0$, then $\{(\alpha_t,\sigma^2_t)\}_{t\ge t'}$ converges to $(\alpha_{\star},\sigma^2_{\star})$. To this end, define
\[
\begin{split}
B_1^{\mr{low}}(\alpha,\sigma^2) &\Mydef\min\left\{ \alpha_{\star},\alpha, F_1(\sigma^2)\right\},\\
B_1^{\mr{up}}(\alpha,\sigma^2) &\Mydef\max\left\{ \alpha_{\star},\alpha, F_1(\sigma^2)\right\},\\
B_2^{\mr{low}}(\alpha,\sigma^2) &\Mydef\min\left\{ \sigma^2_{\star},\sigma^2, F_1^{-1}(\alpha)\right\}=F_1^{-1}\left(B_1^{\mr{up}}(\alpha,\sigma^2)\right),\\
B_2^{\mr{up}}(\alpha,\sigma^2) &\Mydef\max\left\{ \sigma^2_{\star},\sigma^2, F_1^{-1}(\alpha)\right\}=F_1^{-1}\left(B_1^{\mr{low}}(\alpha,\sigma^2)\right).
\end{split}
\]
See examples depicted in Fig.~\ref{Fig:regions_noisy}. Using the strong global attractiveness of $F_1$ and $F_2$ and noting that $F_1^{-1}(\alpha)>F_2(\alpha)>\sigma^2_{\star}$ for $\alpha\in[\alpha_{\ast},\alpha_{\star})$ and $F_1^{-1}(\alpha)<F_2(\alpha)<\sigma^2_{\star}$ for $\alpha\in(\alpha_{\star},\alpha_{\diamond})$, it can be proved that
\[
\begin{split}
B_1^{\mr{low}}(\alpha_t,\sigma^2_t)& \le \alpha_{t+1} \le B_1^{\mr{up}}(\alpha_t,\sigma^2_t),\\
B_2^{\mr{low}}(\alpha_t,\sigma^2_t)& \le \sigma^2_{t+1} \le B_2^{\mr{up}}(\alpha_t,\sigma^2_t).
\end{split}
\]
Further, the sequences $\{B_1^{\mr{low}}(\alpha_t,\sigma^2_t)\}_{t\ge t'}$ and $\{B_2^{\mr{low}}(\alpha_t,\sigma^2_t)\}_{t\ge t'}$ are monotonically non-decreasing and $\{B_1^{\mr{up}}(\alpha_t,\sigma^2_t)\}_{t\ge t'}$ and $\{B_2^{\mr{up}}(\alpha_t,\sigma^2_t)\}_{t\ge t'}$ are monotonically non-increasing. Also, $B_1^{\mr{low}}$ and $B_2^{\mr{low}}$ are upper bounded by $\alpha_{\star}$ and $\sigma^2_{\star}$, and $B_1^{\mr{up}}$ and $B_2^{\mr{up}}$ are lowered bounded by $\alpha_{\star}$ and $\sigma^2_{\star}$. Together with some arguments about the strict monotonicity of $\{B_1^{\mr{low}}(\alpha_t,\sigma^2_t)\}_{t\ge t'}$ and $\{B_2^{\mr{low}}(\alpha_t,\sigma^2_t)\}_{t\ge t'}$ (see discussions below \eqref{Eqn:lem_regionI_III_2.2}), we have 
\[
\begin{split}
\lim_{t\to\infty}B_1^{\mr{low}}(\alpha_t,\sigma^2_t) &=\lim_{t\to\infty}B_1^{\mr{up}}(\alpha_t,\sigma^2_t)=\alpha_{\star},\\
\lim_{t\to\infty}B_2^{\mr{low}}(\alpha_t,\sigma^2_t) &=\lim_{t\to\infty}B_2^{\mr{up}}(\alpha_t,\sigma^2_t)=\sigma^2_{\star},
\end{split}
\]
which implies that $\lim_{t\to\infty}\alpha_{t+1}=\alpha_{\star}$ and $\lim_{t\to\infty}\sigma^2_{t+1}=\sigma^2_{\star}$.
We skip the proofs for the above statements since similar arguments have been repeatedly used in this paper.
\end{proof}

\subsubsection{Proof of Theorem \ref{thm:noisesens_comp}}
According to Lemma \ref{Lem:SE_converge_noisy}, we know that $(\alpha_t, \sigma_t^2)$ converges to the unique fixed point of the state evolution equation. We now analyze the location of this fixed point and further derive the noise sensitivity. Applying a variable change $s\Mydef\sigma/\alpha$, 
we obtain the following equations for this unique fixed point:
\BS\label{Eqn:SE_fixed_noisy_complex2}
\begin{align}
\alpha &=\phi_1(s), \label{Eqn:SE_fixed_noisy_complex2a}\\
\sigma^2&=  \frac{4}{\delta}\left\{\alpha^2+\sigma^2+1- \alpha \left[\phi_1\left(s\right)+\phi_3\left(s\right)\right]\right\}+4\sigma^2_w,\label{Eqn:SE_fixed_noisy_complex2b}
\end{align}
\ES
where $\phi_1$ and $\phi_3$ are defined in \eqref{Eqn:phi1_phi3}.
Using \eqref{Eqn:SE_fixed_noisy_complex2a} and $\sigma^2 = \alpha^2 s^2=\phi_1^2(s) s^2$, and after some algebra, we can write \eqref{Eqn:SE_fixed_noisy_complex2b} as
\BE\label{eq:TeqFPs}
T(s^2, \sigma_w^2 ) \triangleq \left(1-\frac{4}{\delta}\right)\phi^2_1(s)s^2 + \frac{4}{\delta}\phi_1(s)\phi_3(s)-\left( \frac{4}{\delta}+4\sigma^2_w \right)=0.
\EE
Differentiating with respect to $s^2$ yields
\BE
\frac{\partial T(s^2, \sigma_w^2 )}{\partial s^2} =  \left(1-\frac{4}{\delta}\right)  \left(\phi_1^2(s) + 2\phi_1(s)\frac{\mr{d}\phi_1(s)}{\mr{d}s^2}s^2 \right) + \frac{4}{\delta} \frac{\mr{d}\phi_1(s)\phi_3(s)}{\mr{d}s^2}.
\EE
Using the identities listed in \eqref{Eqn:noisy_identities}, we have
\[
\left. \frac{\partial T(s^2, \sigma_w^2 )}{\partial s^2}\right|_{s=0}=1-\frac{2}{\delta}.
\]
Also, it is straightforward to see that $\frac{\partial T(s^2,\sigma^2_w)}{\partial \sigma^2_w}=-4$.
Note that we have an implicit relation between $s^2$ and $\sigma_w^2$, and by the implicit function theorem we have
\[
\begin{split}
 \lim_{\sigma_w^2 \rightarrow 0}  \frac{\mr{d}s^2}{\mr{d}\sigma_w^2}&=-\lim_{s^2 \rightarrow 0} \left( \frac{\partial T(s^2,\sigma^2_w)}{\partial s^2} \right)^{-1}\frac{\partial T(s^2,\sigma^2_w)}{\partial \sigma^2_w}=\frac{4}{1-\frac{2}{\delta}}.
\end{split}
\]
Further, $s$ is a continuously differentiable function of $\sigma_w^2$. Hence, by the mean value theorem we know that 
\[
 \frac{s^2}{\sigma_w^2} = \left. \frac{\mr{d}s^2}{\mr{d}\sigma_w^2} \right|_{\tilde{\sigma}_w^2},
\]
where $0 \leq \tilde{\sigma}_w \leq \sigma_w$. By taking $\lim_{\sigma_w^2 \rightarrow 0}$ from both sides of the above equality we have
\[
\lim_{\sigma_w^2 \rightarrow 0} \frac{s^2}{\sigma_w^2} = \lim_{\tilde{\sigma}_w \rightarrow 0} \left. \frac{\mr{d}s^2}{\mr{d}\sigma_w^2} \right|_{\tilde{\sigma}_w^2} = -\lim_{s^2 \rightarrow 0} \left( \frac{\partial T(s^2,\sigma^2_w)}{\partial s^2} \right)^{-1}\frac{\partial T(s^2,\sigma^2_w)}{\partial \sigma^2_w}=\frac{4}{1-\frac{2}{\delta}}.
\]

To derive the noise sensitivity, we notice that
\[
\begin{split}
\mr{AMSE}(\sigma^2_w,\delta) &= (\alpha-1)^2 + \sigma^2\\
&=\left[\phi_1(s)-1\right]^2 + s^2 \phi_1^2(s).
\end{split}
\]
As shown in \eqref{Eqn:noisy_identities}, $\phi_1(s)$ can be expressed using elliptic integrals as:
\[
\begin{split}
\phi_1(s) &= \sqrt{1+s^2}E\left(\frac{1}{1+s^2}\right) - \frac{s^2}{\sqrt{1+s^2}}K\left(\frac{1}{1+s^2}\right).
\end{split}
\]
From Lemma \ref{Lem:elliptic}-(i), $E(1-\epsilon)=1+O(\epsilon\log\epsilon^{-1})$, hence $\sqrt{1+s^2}E\left(\frac{1}{1+s^2}\right)=1+O(s^2\log s^{-1})$. Further, since $K(1-\epsilon)=O(\log\epsilon^{-1})$, we have $\frac{s^2}{\sqrt{1+s^2}}K\left(\frac{1}{1+s^2}\right)=O(s^2\log s^{-1})$. Therefore, $\phi_1(s)-1=O(s^2\log s^{-1} )$.
Hence, $\lim_{s^2\to0}\frac{\left[\phi_1(s)-1\right]^2}{s^2}=0$ and so
\[
\lim_{s^2\to0} \frac{\mr{AMSE}(\sigma^2_w,\delta)}{s^2}=\lim_{s^2\to0} \frac{\left[\phi_1(s)-1\right]^2}{s^2}+\phi_1^2(s) = 1.
\]
Finally,
\[
\begin{split}
\lim_{\sigma^2_w\to0}\frac{\mr{AMSE}(\sigma^2_w,\delta)}{\sigma^2_w} &= \lim_{s^2\to0}\frac{\mr{AMSE}(\sigma^2_w,\delta)}{s^2} \cdot \lim_{\sigma_w^2 \rightarrow 0} \frac{s^2}{\sigma_w^2} \\
&=\frac{4}{1 -\frac{2}{\delta}}.
\end{split}
\]

\subsection{Proof of Theorem \ref{thm:noisesens_real}}
In the noisy setting, the state evolution of AMP.A becomes
\BS\label{Eqn:SE_fixed_noisy_real}
\begin{align}
\psi_1(\alpha,\sigma^2) &= \frac{2}{\pi}\mr{arctan}\left(\frac{\alpha}{\sigma}\right), \\
\psi_2(\alpha,\sigma^2;\delta,\sigma^2_w) &= \frac{1}{\delta}\left[\alpha^2+\sigma^2 + 1 -\frac{4\sigma}{\pi} -\frac{4\alpha}{\pi}\mr{arctan}\left(\frac{\alpha}{\sigma}\right)\right] +\sigma^2_w.
\end{align}
\ES
Similar to the complex-valued case, the SE of real-valued AMP.A still converges to the nonzero fixed point, as stated in Lemma \ref{Lem:SE_converge_noisy_real} below. We skip the proof since it is very similar to the proof of Lemma \ref{Lem:SE_converge_noisy}.
\begin{lemma}\label{Lem:SE_converge_noisy_real}
Let $\{\alpha_t\}_{t\ge1}$ and $\{\sigma^2_t\}_{t\ge1}$ be two state sequences generated according to \eqref{Eqn:SE_real} from $\alpha_0>0$ and $\sigma^2_0<\infty$. Then, for any $\delta>\deltaAMP$ the following holds for sufficiently small $\sigma^2_w$:
\[
\lim_{t\to\infty} \alpha_t=\alpha_{\star}(\delta,\sigma^2_w)\quad\text{and}\quad \lim_{t\to\infty} \sigma^2_t=\sigma^2_{\star}(\delta,\sigma^2_w),
\]
where $\alpha_{\star}(\delta,\sigma^2_w)$ is the unique positive solution to $F_1^{-1}(\alpha)=F_2(\alpha;\delta,\sigma^2_w)$ and $\sigma^2_{\star}(\delta,\sigma^2_w)=F_1^{-1}(\alpha_{\star}(\delta,\sigma^2_w))$.
\end{lemma}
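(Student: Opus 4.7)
The plan is to mirror the structure of the proof of Lemma \ref{Lem:SE_converge_noisy} in the complex-valued setting, adapting each step to the real-valued maps $\psi_1(\alpha,\sigma^2)=\tfrac{2}{\pi}\arctan(\alpha/\sigma)$ and $\psi_2(\alpha,\sigma^2;\delta,\sigma_w^2)=\psi_2(\alpha,\sigma^2;\delta)+\sigma_w^2$ from \eqref{Eqn:SE_fixed_noisy_real}. The key observation is that for $\sigma_w^2=0$, Theorem \ref{The:PT_real} already supplies global convergence to $(1,0)$ from any $\alpha_0>0$ and $\sigma_0^2<\infty$; thus it suffices to argue that turning on a small noise variance only slightly perturbs the nullcline geometry and shifts the unique globally attracting fixed point from $(1,0)$ to $(\alpha_\star,\sigma_\star^2)$.

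First I would establish a real-valued analog of Lemma \ref{Lem:noisy}. Since $\psi_2(\alpha,\sigma^2;\delta,\sigma_w^2)$ differs from the noiseless $\psi_2(\alpha,\sigma^2;\delta)$ by the constant shift $\sigma_w^2$, and Lemma \ref{lem:psi2_real}-(ii) gives $\partial\psi_2/\partial\sigma^2<1$ on $[0,\infty)$ for $\delta>1$, the fixed-point equation $\sigma^2=\psi_2(\alpha,\sigma^2;\delta,\sigma_w^2)$ admits a unique globally attracting solution $F_2(\alpha;\delta,\sigma_w^2)$, continuous in $\sigma_w^2$. For $\alpha\ge\alpha_\ast=1/\pi$, Lemma \ref{lem:psi2_real}-(iii) strengthens this to strong global attractiveness. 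The root $\alpha_\star(\delta,\sigma_w^2)$ of $F_1^{-1}(\alpha)=F_2(\alpha;\delta,\sigma_w^2)$ exists and is unique for small enough $\sigma_w^2$: re-parameterizing as in Lemma \ref{Lem:F1_F2_real} via $s=g^{-1}(\alpha)$, the noiseless function $G(s)-(1-\deltaAMP)$ studied in \eqref{Eqn:G_real} has positive derivative at $s=0$, so its perturbation by $-\sigma_w^2$ crosses zero at exactly one point for small noise. Real-valued analogs of parts (d)-(g) of Lemma \ref{Lem:noisy}---the monotonicity of $F_2(\alpha;\delta,\sigma_w^2)$ near the fixed point, the bound $L(\alpha;\delta,\sigma_w^2)\Mydef L(\alpha;\delta)+\sigma_w^2<F_1^{-1}(\alpha)$ on $(0,\alpha_\ast]$ using Lemma \ref{lem:LhatL_real}, and the inequality $\psi_2(\alpha,\sigma^2;\delta,\sigma_w^2)<F_1^{-1}(\alpha)$ on the relevant slab using Lemma \ref{Lem:RegionIII_bound_real}---follow by continuity from their strict noiseless versions plus the small-$\sigma_w^2$ slack.

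Next I would partition the admissible region into four sub-regions analogous to \eqref{Eqn:regions_noisy}: an upper band $\mathcal{R}_0=\{\sigma^2>4/\pi^2\}$, two intermediate regions $\mathcal{R}_1,\mathcal{R}_2$ separated by the curve $F_1^{-1}$, and a basin $\mathcal{R}_3=\{\alpha_\ast\le\alpha\le1,\,\sigma^2<F_1^{-1}(\alpha_\ast)\}$ containing the new fixed point. Then I would prove three transition statements: (i) any orbit starting in $\mathcal{R}_0$ exits in finitely many steps, by the argument of Lemma \ref{Lem:regionII_IV_real}-(ii) applied to the perturbed $\psi_2$ (still satisfying $\psi_2-\sigma^2<0$ uniformly on $\mathcal{R}_0$ for small noise); (ii) orbits in $\mathcal{R}_1\cup\mathcal{R}_2$ reach $\mathcal{R}_3$ in finite time, via the sandwich bounds $\alpha_{t+1}\ge B_1(\alpha_t,\sigma_t^2)$, $\sigma_{t+1}^2\le B_2(\alpha_t,\sigma_t^2)$ carried over from Lemma \ref{Lem:regionI_III_real}, together with a no-stationary-point argument ruling out trapping in $\mathcal{R}_1\cup\mathcal{R}_2$; (iii) once in $\mathcal{R}_3$, the iterates stay there and converge to $(\alpha_\star,\sigma_\star^2)$, using the strong global attractiveness of $F_2$ on $\alpha\ge\alpha_\ast$ combined with monotone sandwich sequences analogous to the $B_1^{\mathrm{low}},B_1^{\mathrm{up}},B_2^{\mathrm{low}},B_2^{\mathrm{up}}$ construction at the end of the proof of Lemma \ref{Lem:SE_converge_noisy}.

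The main obstacle will be handling the wider initialization set. Unlike the complex case, which restricts to $\sigma_0^2\le 1$, here $\sigma_0^2$ may be arbitrarily large, so $\mathcal{R}_0$ must extend to $+\infty$ rather than being truncated at some $\tilde\sigma_{\max}^2$. Fortunately the formula \eqref{Eqn:map_expression_real_b} shows $\psi_2(\alpha,\sigma^2;\delta,\sigma_w^2)=\frac{1}{\delta}\sigma^2+O(\sigma)$ for large $\sigma$, and for $\delta>\deltaAMP>1$ there exists $M$ such that $\psi_2(\alpha,\sigma^2;\delta,\sigma_w^2)<\sigma^2-c$ uniformly on $\{\sigma^2>M\}$ for a positive $c$; this reduces the unbounded case to the bounded one after finitely many iterations. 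A secondary technical point is a uniformity check: all of the ``sufficiently small $\sigma_w^2$'' thresholds arising from the various perturbation arguments (uniqueness of $\alpha_\star$, strict inequalities in the analog of Lemma \ref{Lem:noisy}, and $F_2(1;\delta,\sigma_w^2)<F_1^{-1}(\alpha_\ast)$) must be taken simultaneously, which is harmless since there are finitely many. Once the real-valued version of Lemma \ref{Lem:noisy} is in place, the three-step region argument copies over from the complex proof essentially verbatim.
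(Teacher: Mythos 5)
Your proposal is correct and follows essentially the same route as the paper, which for this lemma simply states that the argument is analogous to the complex-valued Lemma~\ref{Lem:SE_converge_noisy} and omits the details. Your sketch correctly fills in the two places that genuinely require adaptation—the real-valued analog of Lemma~\ref{Lem:noisy} and the extra step showing that arbitrarily large $\sigma_0^2$ is pulled into a bounded region in finitely many iterations—and the region-by-region convergence argument carries over as you describe.
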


Now we can prove Theorem \ref{thm:noisesens_real}. First note that $\mr{AMSE}(\sigma^2_w,\delta)=(\alpha-1)^2+\sigma^2$, where with slight abuse of notation $\alpha$ and $\sigma^2$ denote the solution of \eqref{Eqn:SE_fixed_noisy_real} (which are also functions of $\sigma^2_w$ and $\delta$), i.e.,
\BS\label{Eqn:SE_fixed_noisy_real2}
\begin{align}
\alpha&= \frac{2}{\pi}\mr{arctan}\left(\frac{\alpha}{\sigma}\right), \label{Eqn:SE_fixed_noisy_real2_a}\\
\sigma^2&= \frac{1}{\delta}\left[\alpha^2+\sigma^2 + 1 -\frac{4\sigma}{\pi} -\frac{4\alpha}{\pi}\mr{arctan}\left(\frac{\alpha}{\sigma}\right)\right] +\sigma^2_w. \label{Eqn:SE_fixed_noisy_real2_b}
\end{align}
\ES
Using \eqref{Eqn:SE_fixed_noisy_real2_a} and with simple manipulations we can rewrite \eqref{Eqn:SE_fixed_noisy_real2_b} as 
\BE\label{Eqn:SE_fixed_noisy_real3}
(\delta-1)\sigma^2 +\alpha^2+\frac{4\sigma}{\pi}-1-\delta\sigma^2_w=0.
\EE
We make the following variable change:
\[
s\Mydef \frac{\sigma}{\alpha}.
\]
From \eqref{Eqn:SE_fixed_noisy_real2_a} and the definition of $s$, we have
\BE\label{Eqn:SE_fixed_noisy_real4}
\alpha = \frac{2}{\pi}\mr{arctan}(s^{-1})\quad\text{and}\quad\sigma = \frac{2}{\pi}\mr{arctan}(s^{-1})\cdot s.
\EE
Substituting \eqref{Eqn:SE_fixed_noisy_real4} into \eqref{Eqn:SE_fixed_noisy_real3} yields
\BE\label{Eqn:SE_fixed_noisy_real5}
T(s^2, \sigma_w^2 ) \triangleq \left[(\delta-1)s^2+1\right]\cdot \mr{arctan}^2(s^{-1}) + 2 \cdot s\cdot \mr{arctan}(s^{-1})-\frac{\pi^2}{4}(1+\delta\sigma^2_w)=0.
\EE
We have
\[
\begin{split}
\frac{\partial T(s^2, \sigma_w^2)}{\partial s^2} &=\frac{1}{2s}  \left(2s (\delta-1)  \mr{arctan}^2\left(s^{-1}\right) -2 \left[(\delta-1)s^2+1\right] \frac{\mr{arctan}\left(s^{-1}\right)}{1+s^2} +  2 \mr{arctan}\left(s^{-1}\right) - \frac{2s}{1+s^2} \right)  \\
&=(\delta-1) \cdot \mr{arctan}^2\left(s^{-1}\right) - \mr{arctan}\left(s^{-1}\right) \frac{(\delta-1)s}{1+s^2} + \mr{arctan}\left(s^{-1}\right)\frac{s}{1+ s^2}  -\frac{1}{1+s^2},\\
\frac{\partial T(s^2, \sigma_w^2)}{\partial \sigma_w^2}&=-\frac{\pi^2}{4}\delta.
\end{split}
\]
Note that we have an implicit relation between $s^2$ and $\sigma_w^2$. By the implicit function theorem we have
\[
\begin{split}
\frac{\mr{d}s^2}{\mr{d}\sigma_w^2}&=-\frac{\partial T(s^2,\sigma^2_w)}{\partial \sigma_w^2}\left( \frac{\partial T(s^2,\sigma^2_w)}{\partial s^2} \right)^{-1}\\
&=\frac{\frac{\pi^2}{4}\delta}{(\delta-1) \cdot \mr{arctan}^2\left(s^{-1}\right) - \mr{arctan}\left(s^{-1}\right) \frac{(\delta-1)s}{1+s^2} + \mr{arctan}\left(s^{-1}\right)\frac{s}{1+ s^2}  -\frac{1}{1+s^2}}.
\end{split}
\]
Furthermore, from \eqref{Eqn:SE_fixed_noisy_real5}, we see that $s^2=0$ when $\sigma^2_w=0$ and hence
\[
 \left. \frac{\mr{d}s^2}{\mr{d}\sigma_w^2}\right|_{\sigma^2_w=0}=\frac{\frac{ \pi^2 }{4}\delta}{ \frac{\pi^2}{4} (\delta-1) -1}=\frac{\delta}{\delta -\left(1+\frac{4}{\pi^2}\right)},
\]
where we defined $\mr{arctan} (s^{-1}) =\pi/2$ at $s=0$. Now it is straightforward to use the mean value theorem to prove that
\[
\lim_{\sigma_w^2 \rightarrow 0} \frac{s^2}{\sigma_w^2} =  \left. \frac{\mr{d}s^2}{\mr{d}\sigma_w^2}\right|_{\sigma^2_w=0}=\frac{\delta}{\delta -\left(1+\frac{4}{\pi^2}\right)}.
\]

Further, notice that
\[
\begin{split}
\mr{AMSE}(\sigma^2_w,\delta) &= (\alpha-1)^2 + \sigma^2\\
&= \left[\frac{2}{\pi}\mr{arctan}(s^{-1})-1\right]^2 + \left[\frac{2}{\pi}\mr{arctan}(s^{-1})\cdot s\right]^2,
\end{split}
\]
and it is straightforward to show that
\[
\lim_{s^2\to0}\frac{\mr{AMSE}(\sigma^2_w,\delta)}{s^2} = 1+\frac{4}{\pi^2}.
\]
Hence,
\[
\begin{split}
\lim_{\sigma^2_w\to0}\frac{\mr{AMSE}(\sigma^2_w,\delta)}{\sigma^2_w} &= \lim_{s^2\to0}\frac{\mr{AMSE}(\sigma^2_w,\delta)}{s^2} \cdot \lim_{\sigma_w^2 \rightarrow 0} \frac{s^2}{\sigma_w^2} \\
&=\left( 1+\frac{4}{\pi^2}\right)\cdot\frac{\delta}{\delta -\left(1+\frac{4}{\pi^2}\right)},
\end{split}
\]
which proves Theorem \ref{thm:noisesens_real} by noting that $\deltaGlobal=1+4/\pi^2$.

\section{Spectral initialization}\label{sec:spectral}

\subsection{Initialization}\label{Sec:init_a}

As shown in Section \ref{Sec:asym_framework}, to achieve successful reconstruction, the initial estimate $\bm{x}^{0}$ cannot be orthogonal to the true signal $\bm{x}_{\ast}$, namely,
\BE\label{Eqn:nonorthogonal}
\alpha_0=\lim_{n\rightarrow \infty} \frac{1}{n} \bm{x}_{\ast}^{\UH}\bm{x}^{0} \neq 0. 
\EE
In many important
applications (e.g., astronomic imaging
and crystallography \cite{millane1990phase}), the signal is known to be real and nonnegative. In such cases, the following initialization of $\ampa$ meets the non-orthogonality requirement:  
$$\bm{x}^0=\rho\mathbf{1}, \quad \rho\neq0.$$ 
(At the same time, we set $g(\bm{p}^{-1},\bm{y})=\mathbf{0}$.)

However, note that finding initializations that  satisfy \eqref{Eqn:nonorthogonal} is not straightforward in general settings. For instance, the above initialization may not work for generic complex-valued signals. Also, random initialization does not necessarily work either, since asymptotically speaking a random vector will be orthogoanl to $ \bm{x}_{\ast}$. 
One promising direction to alleviate this issue is the spectral initialization method that was introduced in \cite{Eetrapalli2013} for phase retrieval and subsequently studied in  \cite{CaLiSo15,ChenCandes17,Wang2016,Lu17,Mondelli2017}.  Specifically, the ``direction'' of the signal is estimated by the principal eigenvector $\bm{v}$ ($\|\bm{v}\|^2=n$) \footnote{For the spectral method proposed in \cite{Mondelli2017}, the eigenvalues can be negative and the  eigenvector associated with the largest eigenvalue (not the largest eigenvalue in magnitude) is picked. } of the following matrix:
\BE \label{Eqn:data_matrix}
\bm{D}\Mydef \bm{A}^\UH\mr{diag}\{\mathcal{T}(y_1),\ldots,\mathcal{T}(y_m)\}\bm{A},
\EE
where $\mathcal{T}:\mathbb{R}_{+}\to(-\infty,\tau_{\mr{\max}}]$ is a nonlinear processing function, and $\mr{diag}\{a_1,\ldots,a_m\}$ denotes a diagonal matrix with diagonal elements given by $\{a_1,\ldots,a_m\}$. The exact asymptotic performance of the spectral method was characterized in \cite{Lu17} under some regularity assumptions on $\mathcal{T}$. (In particular, the support of $\mathcal{T}$) The analysis in \cite{Lu17} reveals a phase transition phenomenon: the spectral estimate is not orthogonal to the signal vector $\bm{x}_*$ (i.e., \eqref{Eqn:nonorthogonal} holds) if and only if $\delta$ is larger than a threshold $\delta_{\mr{weak}}$. Later, \cite{Mondelli2017} derived the optimal nonlinear processing function $\mathcal{T}$ (in the sense of minimizing $\delta_{\mr{weak}}$) and showed that the minimum weak threshold is $\delta_{\mr{weak}}=1$ for the complex-valued model. 

The above discussions suggest that the spectral method can provide the required non-orthogonal initialization for $\ampa$. However, the naive combination of the spectral estimate with $\ampa$ will not work: performance of the $\ampa$ that is initialized with the spectral method will not follow the state evolution. This is due to the fact that $\bm{x}^0$ is heavily dependent on the matrix $\bm{A}$ and violates the assumptions of SE.  A trivial remedy is data splitting, i.e, we generate initialization and apply $\ampa$ on two separate sets of measurements \cite{Eetrapalli2013}. However, this simple solution is sub-optimal in terms of sample complexity. To avoid such loss, we propose the following modification to the spectral initialization method, that we call decoupled spectral initialization:

\textbf{Decoupled spectral initialization:} Let $\delta>2$. Set $\bm{v}$ to be the eigenvector of $\bm{D}$ corresponding to the largest eigenvalue defined in \eqref{Eqn:data_matrix}. Let $\bm{x}^0=\rho \cdot \bm{v}$, where $\rho$ is a fixed number which will be discussed later. Define
      \BE \label{Eqn:spectral_p0}
      \bm{p}^0=\left(1-2\tau\mathcal{T}(\bm{y})\right)\circ \bm{Ax}^{0},
      \EE
      where $\circ$ denotes entry-wise product and $\tau$ is the unique solution of \footnote{The uniqueness of solution in \eqref{Eqn:fixed_point_tau_spectral} and \eqref{Eqn:PCA_solution1_b_rewrite} is guaranteed for our choice of $\mathcal{T}(y)$ in \eqref{Eqn:trimming_opt}\cite{Lu17,Mondelli2017}. For the noisy case, we assume that the variance of the noise is known so that  \eqref{Eqn:fixed_point_tau_spectral} and \eqref{Eqn:PCA_solution1_b_rewrite} can be calculated offline. }
 \BE \label{Eqn:fixed_point_tau_spectral}
\varphi_1(\delta,\tau)=\frac{1}{\delta},\quad \tau\in(0,\tau^{\star}),
      \EE
and $\tau^{\star}$ is the unique solution of
\BE\label{Eqn:PCA_solution1_b_rewrite}
\varphi_2(\delta,\tau^{\star})=\frac{1}{\delta},\quad \tau^{\star}\in(0,\tau_{\max}),
\EE
where
\BS\label{Eqn:PCA_auxiliary}
\begin{align}
\varphi_1(\delta,\tau) &\Mydef \mathbb{E}\left[(\delta\, |Z|^2-1)\frac{2\tau\mathcal{T}(Y)}{1-2\tau\mathcal{T}(Y)}\right] ,\\
\varphi_2(\delta,\tau) &\Mydef \mathbb{E}\left[\left( \frac{2\tau\mathcal{T}(Y)}{1-2\tau\mathcal{T}(Y)} \right)^2\right].
\end{align}
\ES

The expectations above are over $Z\sim\mathcal{CN}(0,1/\delta)$ and $Y=|Z|+W$, where $W\sim\mathcal{CN}(0,\sigma^2_w)$ is independent of $Z$.
\\[3pt]
Now we use $\bm{x}^0$ and $\bm{p}^0$ as the initialization for $\ampa$. So far, we have not discussed how we can set $\rho$ and $\mathcal{T}$. In this paper, we use the following $\mathcal{T}(y)$ derived by \cite{Mondelli2017}:
\BE\label{Eqn:trimming_opt}
\mathcal{T}(y)\Mydef \frac{\delta{y}^2-1}{\delta{y}^2+\sqrt{\delta}-1}.
\EE

Note that our initial estimate is given by $\bm{x^0}=\rho\cdot\bm{v}$ (where $\|\bm{v}\|=\sqrt{n}$). Recall from Theorem \ref{Theo:PhaseTransition_complex} that we require $0<|\alpha_0|<1$ and $0\le\sigma^2_0<1$ for $\delta>\deltaAMP$. To satisfy this condition, we can simply set $\rho=\|\bm{y}\|/\sqrt{n}$, which is an accurate estimate of $\|\bm{x}_{\ast}\|/\sqrt{n}$ in the noiseless setting \cite{Lu17}\footnote{Or one can always choose $\rho$ to be small enough. However, this might slow down the convergence rate.}. Under this choice, we have $|\alpha_0|^2+\sigma^2_0=\rho^2=1$. Hence, as long as $\alpha_0\neq0$, we have $0<|\alpha_0|<1$ and $0\le\sigma^2_0<1$.

In summary, our initialization in \eqref{Eqn:spectral_p0} intuitively satisfies ``enough independency'' requirement such that the SE for $\ampa$ still holds. We have clarified this intuition in Section \ref{Sec:spectral_intuition}. Our numerical experiments (see below) suggest that the intuition is correct. Our empirical finding is summarized below.
\begin{finding}\label{Find:1}
Let $\bm{x}^{0}$ and $\bm{p}^{0}$ be generated according to \eqref{Eqn:spectral_p0}, and $\{\bm{x}^t\}_{t\ge1}$ and $\{\bm{p}^t\}_{t\ge1}$ generated by the $\ampa$ algorithm as described in \eqref{Eqn:AMP_complex}. The AMSE converges to  
\[ 
\lim_{n \rightarrow \infty} \frac{1}{n}\|\bm{x}^t -e^{\mr{i}\theta_t} \bm{x}_*\|_2^2 = \left( 1-|\alpha_t|\right)^2 + \sigma^2_t,
\]
where $\theta_{t}=\angle (\bm{x}_*^\UH,\bm{x}_t)$, $\{|\alpha_t|\}_{t\ge1}$ and $\{\sigma^2_t\}_{t\ge1}$ are generated according to \eqref{Eqn:SE_complex} and
\BE
\begin{split}
|\alpha_0|^2 \ =\ \frac{1-\delta\varphi_2(\delta,\tau)}{1+\delta\varphi_3(\delta,\tau)} \quad \qand \quad \sigma^2_0\ =\ 1-|\alpha_0|^2,
\end{split}
\EE
where $\tau$ is the solution to \eqref{Eqn:spectral_p0} and $\varphi_3$ are defined as ($\varphi_2$ is defined in \eqref{Eqn:PCA_auxiliary})
\BE\label{Eqn:PCA_auxiliary2}
\begin{split}
\varphi_3(\delta,\tau) & \Mydef \mathbb{E}\left[(\delta |Z|^2-1)\left(\frac{2\tau\mathcal{T}(Y)}{1-2\tau\mathcal{T}(Y)}\right)^2\right],
\end{split}
\EE
where $Y=|Z|+W$.
 \end{finding}
 
\begin{figure}
\begin{center}
\includegraphics[width=.45\textwidth]{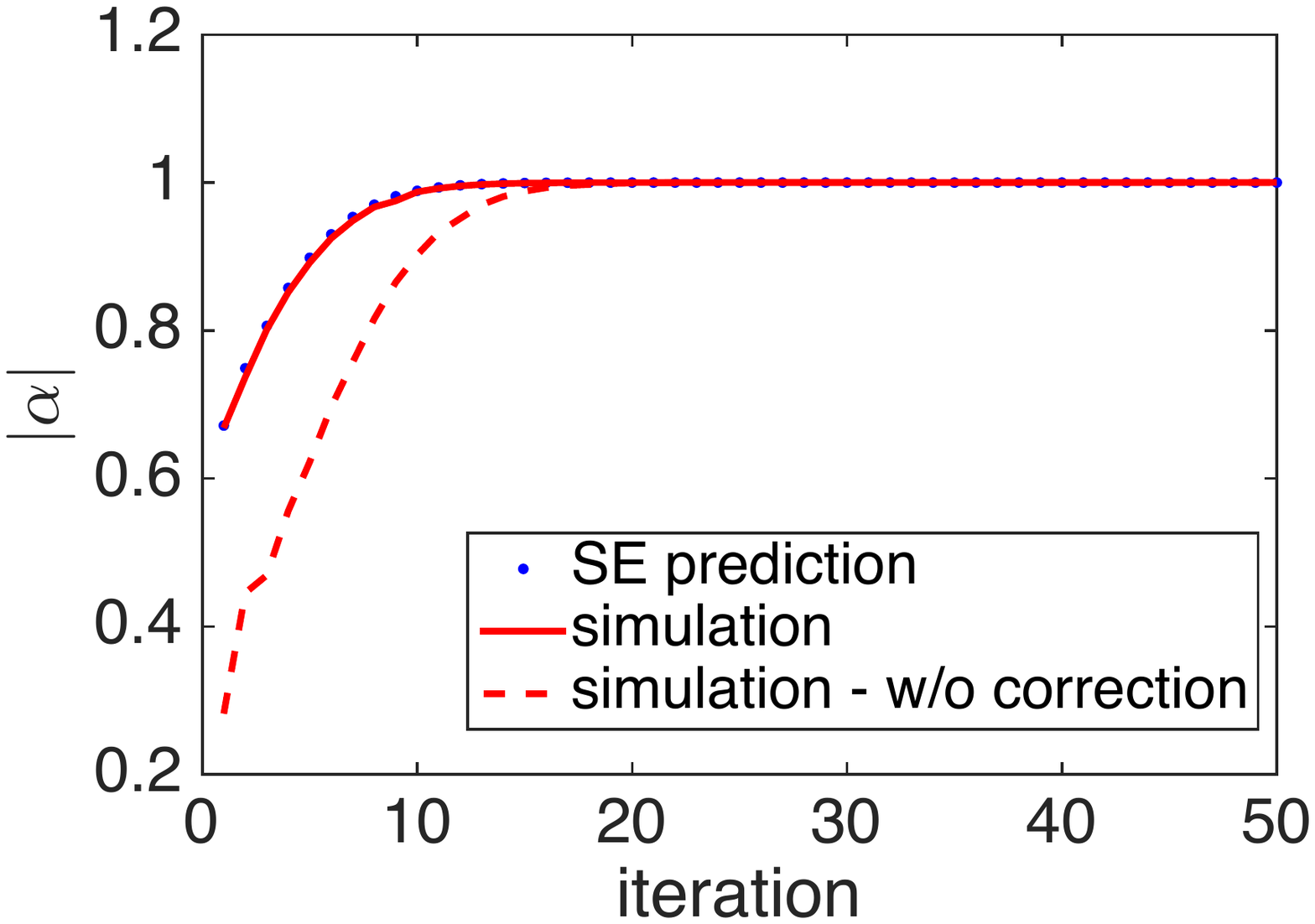}
\includegraphics[width=.45\textwidth]{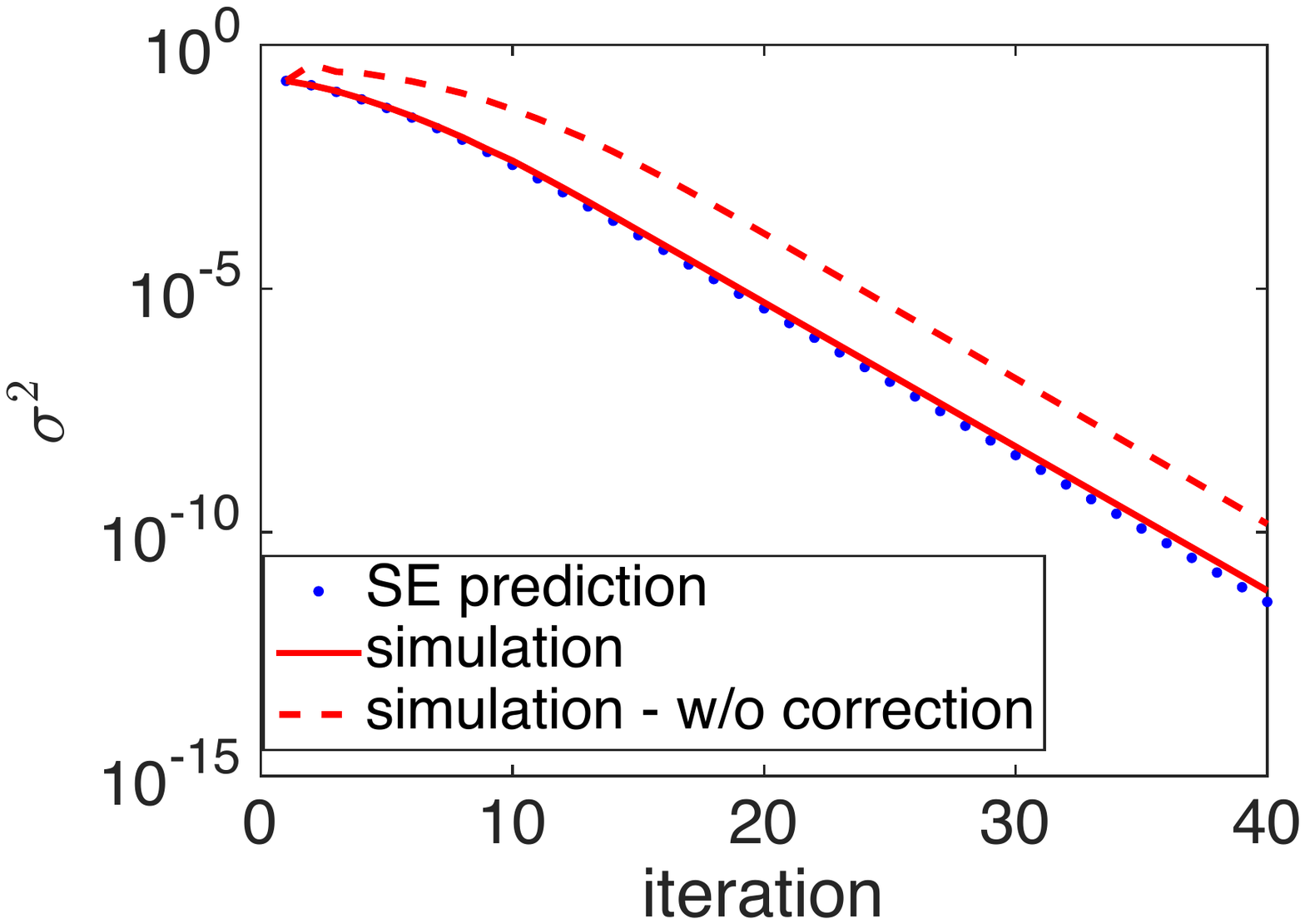}
\caption{State evolution prediction for $\ampa$ with spectral initialization in the noiseless setting. \textbf{Left:} predicted and simulated results of $|\alpha|$. \textbf{Right:} predicted and simulated results of $\sigma^2$. The solid curves show the simulation results for the proposed initialization, and the dashed curves show the results for a naive approach without the proposed correction (namely, we set $\bm{p}^0=\bm{Ax}^0$). In these experiments, $n=5000$ and $m=20000$. The optimal $\mathcal{T}$ in \eqref{Eqn:trimming_opt} is employed.}\label{Fig:SE_accuracy}
\end{center}
\end{figure}

Fig.~\ref{Fig:SE_accuracy} shows a numerical example. The true signal is generated as $\bm{x}_*\sim\mathcal{CN}(\mathbf{0},\bm{I})$. We measure the following two quantities (averaged over 10 runs):
\[
\hat{\alpha}_t =  \frac{\bm{x}_*^\UH \bm{x}^t}{\|\bm{x}_*\|^2}\quad \text{and} \quad \hat{\sigma}^2_t = \frac{\|\bm{x}^t -\hat{\alpha}_t\bm{x}_* \|^2}{\|\bm{x}_*\|^2}.
\] 
We expect $\hat{\alpha}_t$ and $\hat{\sigma}^2_t$ to converge to their deterministic counterparts $\alpha_t$ and $\sigma_t^2$ (as described in Finding \ref{Find:1}). Indeed, Fig.~\ref{Fig:SE_accuracy} shows that the match between the simulated $\hat{\alpha}_t$ and $\hat{\sigma}^2_t$ (solid curves) and the SE predictions (dotted curves) is precise. For reference, we also include the simulation results for the ``blind approach'' where the spectral initialization is incorporated into $\ampa$ without applying the proposed correction (i.e., we use $\bm{p}^0=\bm{Ax}^0$ instead of \eqref{Eqn:spectral_p0}). From Fig.~\ref{Fig:SE_accuracy}, we see that this blind approach deviates significantly from the SE predictions. Note that the blind approach still recovers the signal correctly for the current experiment. However, we found that (results are not shown here) the blind approach can perform rather poorly for other popular choices of $\mathcal{T}$ (such as the orthogonality-promoting method proposed in \cite{Wang2016}).

\subsection{Intuition of our initialization} \label{Sec:spectral_intuition}
Note that in conventional $\ampa$, we set initial $g(\bm{p}^{-1},\bm{y})=\mathbf{0}$ and therefore $\bm{p}^0=\bm{Ax}^0$. Hence, our modification in \eqref{Eqn:spectral_p0} appears to be a rescaling procedure of $\bm{p}^0$. Note that solving the principle eigenvector of $\bm{D}$ in \eqref{Eqn:data_matrix} is equivalent to the following optimization problem:
\BE\label{Eqn:rewrite_goal}
\bm{v}\ =\ \underset{\|\bm{x}\|=\sqrt{n}}{\mr{argmin}}\ -\sum_{a=1}^m \mathcal{T}(y_a)\cdot \big| (\bm{Ax})_a \big|^2.
\EE
Following the derivations proposed in \cite{Rangan11}, we obtain the following approximate message passing algorithm for spectral method (denote as $\amps$): 
\BS\label{Eqn:AMP.S}
\begin{align}
\hat{\tau}^{t} &=\frac{1}{\delta}\frac{1}{ \mr{div}_p(h_{t-1}) } \cdot \frac{\sqrt{n}}{\|\hat{\bm{r}}_{t-1}\|},\label{Eqn:limiteqn1}\\
\hat{\bm{p}}^t &=\bm{A}\hat{\bm{x}}^t  -\frac{1}{\delta} \frac{h\left( \hat{\bm{p}}^{t-1},\bm{y},\hat{\tau}^{t-1}\right) }{\mr{div}_p(h_{t-1}) }\cdot \frac{\sqrt{n}}{\|\hat{\bm{r}}_{t-1}\|} ,\label{Eqn:limiteqn2}\\
\hat{\bm{r}}^t &= \hat{\bm{x}}^t - \frac{\bm{A}^\UH h\left(\hat{ \bm{p} }^{t},\bm{y}, \hat{\tau}^{t}\right) }{\mr{div}_p(h_{t-1})},\\
\hat{\bm{x}}^{t+1} &= -\frac{\sqrt{n}}{\|\hat{\bm{r}}_t\|}\cdot \hat{\bm{r}}^t ,
\end{align}
\ES
where we defined:
\BE
\begin{split}
h(\hat{p},y,\hat{\tau}) &\Mydef \frac{2\mathcal{T}(y)}{1-2\hat{\tau} \mathcal{T}(y)}\cdot \hat{p}.\nonumber
\end{split}
\EE
The optimizer $\bm{v}$ of \eqref{Eqn:rewrite_goal} can be regarded as the limit of the estimate $\hat{\bm{x}}^t$ under correct initialization of $\amps$. Note that $\amps$ acts as a proxy and we do not intend to use it for the eigenvector calculations. (There are standard numerical recipes for that purpose.) But, the correction term used in \eqref{Eqn:spectral_p0} is suggested by the Onsager correction term in AMP.S. To see that let $\hat{\bm{p}}^{\infty} $, $\hat{\bm{x}}^{\infty} $, $\tauh^{\infty}$ represent the limits of $\hat{\bm{p}}^{t} $, $\hat{\bm{x}}^{t} $, $\tauh^{t}$ respectively. Then, from \eqref{Eqn:limiteqn1} and \eqref{Eqn:limiteqn2}, we obtain the following equation
\BE\label{Eqn:AMP.S_tau_fix}
\begin{split}
\hat{\bm{p}}^{\infty}& \overset{(a)}{=}\bm{A}\hat{\bm{x}}^{\infty}  - \tauh^{\infty} h\left( \hat{\bm{p}}^{\infty},\bm{y},\hat{\tau}^{\infty}\right),\\
&\overset{(b)}{=}\bm{A}\hat{\bm{x}}^{\infty}  - \underbrace{\tauh^{\infty}\frac{2\mathcal{T}(\bm{y})}{1-2 \tauh^{\infty} \mathcal{T}(\bm{y})}\circ \hat{\bm{p}}^{\infty}}_{\text{Onsager term}}
\end{split}
\EE
By solving \eqref{Eqn:AMP.S_tau_fix}, we obtain \eqref{Eqn:spectral_p0} with rescaling of $\frac{\|\bm{y}\|}{\sqrt{n}}$ (since $\hat{\bm{x}}^{\infty}=\sqrt{n}\bm{v}$ and $\bm{x}^0=\|\bm{y}\|\bm{v}$). Further, \eqref{Eqn:fixed_point_tau_spectral} and \eqref{Eqn:PCA_solution1_b_rewrite} that determine the value of $\tauh^{\infty}$ can be simplified through solving the fix point of the following state evolution of $\amps$: 
\BS \label{Eqn:PCA_SE_fix2}
\begin{align}
\alphah &=\frac{\alphah\, \varphi_1(\delta,\tauh) }{\sqrt{  \alphah^2\,\varphi_1^2(\delta,\tauh) +\frac{1}{\delta}\varphi_2(\delta,\tauh) + \frac{\alphah^2}{\delta}\varphi_3(\delta,\tauh)}  },  \label{Eqn:PCA_fix_alpha} \\
1 &=  \frac{1}{\delta} \frac{1}{  \sqrt{  \alphah^2\,\varphi_1^2(\delta,\tauh) +\frac{1}{\delta}\varphi_2(\delta,\tauh)+ \frac{\alphah^2}{\delta}\varphi_3(\delta,\tauh) }},
\end{align}
\ES
where $\varphi_1,\varphi_2$ are defined in \eqref{Eqn:PCA_auxiliary} and $\varphi_3$ is defined in \eqref{Eqn:PCA_auxiliary2}.

\end{document}